\tikzset{hardd/.style={teal,thick}} %
\tikzset{softd/.style={teal, dashed,thick}} %
\tikzset{datad/.style={black,densely dotted}} %
\tikzset{alld/.style={black,thick}} %
\tikzset{expnode/.style={
    asymmetrical rectangle,rounded corners,draw,fill=white,inner sep=2.5pt,
    font=\footnotesize,blur shadow={shadow xshift=0ex,shadow yshift=-.2ex}
  }} %
\tikzset{blknode/.style={
    asymmetrical rectangle,sharp corners,draw,violet,fill=violet!10,
    font=\footnotesize %
  }} %
\tikzset{dummynode/.style={fill=none,draw=none,asymmetrical rectangle,general shadow/.style=,blur shadow={shadow opacity=0}}} %
\tikzset{nmute/.style={opacity=.4,blur shadow={shadow opacity=.4}}} %
\tikzset{expn/.style={
    circle,fill=white,draw,minimum size=5pt,inner sep=0pt,outer sep=0pt,
    font=\footnotesize,blur shadow={shadow xshift=0ex,shadow yshift=-.2ex},
    label={[font=\footnotesize]left:#1}
  }} %
\tikzset{expn/.default={}}
\tikzset{mexpn/.style={opacity=.4,
    circle,fill=white,draw,minimum size=5pt,inner sep=0pt,outer sep=0pt,
    font=\footnotesize,blur shadow={shadow xshift=0ex,shadow yshift=-.2ex,shadow opacity=.4},
    label={[font=\footnotesize,opacity=.4]left:#1}
  }} %
\tikzset{mexpn/.default={}}
\tikzset{rmexpn/.style={opacity=.4,
    circle,fill=white,draw,minimum size=5pt,inner sep=0pt,outer sep=0pt,
    font=\footnotesize,blur shadow={shadow xshift=0ex,shadow yshift=-.2ex,shadow opacity=.4},
    label={[font=\footnotesize,opacity=.4]right:#1}
  }} %
\tikzset{rmexpn/.default={}}
\tikzset{mexpndummy/.style={opacity=.4,
    circle,fill=white,minimum size=5pt,inner sep=0pt,outer sep=0pt,
    label={[font=\footnotesize,opacity=.4]left:#1}
  }} %
\tikzset{mexpndummy/.default={}}
\colorlet{lightred}{red!30} %
\def\arcr{\@arraycr}
\newcommand{\showDOI}[1]{\unskip}
\providecommand{\customgenericname}{}
\newcommand{\newcustomtheorem}[2]{%
  \newenvironment{#1}[1]
  {%
   \renewcommand\customgenericname{#2}%
   \renewcommand\theinnercustomgeneric{##1}%
   \innercustomgeneric
  }
  {\endinnercustomgeneric}
}
\newcommand{\figref}[1]{Fig.~\ref{#1}}
\newcommand{\secref}[1]{Sec.~\ref{#1}} %
\newcommand{\thmref}[1]{Theorem~\ref{#1}}
\newcommand{\lemref}[1]{Lemma~\ref{#1}}
\newcommand{\colref}[1]{Corollary~\ref{#1}}
\newcommand{\Specsharp}{%
	{\settoheight{\dimen0}{C}Spec\kern-.05em \resizebox{!}{\dimen0}{\raisebox{\depth}{\#}}}}
\newcommand{\Csharp}{%
	{\settoheight{\dimen0}{C}C\kern-.05em \resizebox{!}{\dimen0}{\raisebox{\depth}{\#}}}}
\newcommand{\fun}[1]{\operatorname{#1}}
\newcommand{\DOM}{\fun{dom}}
\newcommand{\CODOM}{\fun{cod}}
\definecolor{blue-violet}{rgb}{0.54, 0.17, 0.89}
\definecolor{ccomment}{HTML}{006400}
\definecolor{depmap}{HTML}{00007B}
\definecolor{dark-cyan}{HTML}{135579}
\definecolor{magenta}{HTML}{a8264f}
\lstdefinelanguage{Neutral}%
{morekeywords={abstract,%
  case,catch,char,class,%
  def,else,extends,final,finally,for,%
  if,import,implicit,%
  match,module,%
  new,null,%
  object,override,%
  package,private,protected,public,%
  for,public,return,super,%
  this,throw,trait,try,type,%
  val,var,%
  with,while,%
  yield,%
  let,end,%
	in,fun,alloc,inc%
  },%
  mathescape=true,%
  sensitive,%
  keywordstyle={\color{black}\bf\ttfamily},%
  commentstyle=\color{OliveGreen},%
  escapebegin=\color{OliveGreen},
  morecomment=[l]//,%
  morecomment=[s]{/*}{*/},%
  morecomment=[s][\color{darkgray}]{@}{\ },%
  morestring=[b]",%
  morestring=[b]',%
  showstringspaces=false%
}[keywords,comments,strings]%
\lstdefinelanguage{OOPSLA21}%
{morekeywords={abstract,%
  case,catch,char,class,%
  def,else,extends,final,finally,for,%
  if,import,implicit,%
  match,module,%
  new,null,%
  object,override,%
  package,private,protected,public,%
  for,public,return,super,%
  this,throw,trait,try,type,%
  val,var,%
  with,while,%
  yield,%
  let,end,%
	in,fun,alloc,inc%
  },%
  mathescape=true,%
  sensitive,%
  keywordstyle={\color{magenta}\bf\ttfamily},%
  commentstyle=\color{magenta},%
  escapebegin=\color{magenta},
  morecomment=[l]//,%
  morecomment=[s]{/*}{*/},%
  morecomment=[s][\color{magenta}]{@}{\ },%
  morestring=[b]",%
  morestring=[b]',%
  showstringspaces=false%
}[keywords,comments,strings]%
\lstdefinelanguage{PolyRT}%
{morekeywords={abstract,%
  case,catch,char,class,%
  def,else,extends,final,finally,for,%
  if,import,implicit,%
  match,module,%
  new,null,%
  object,override,%
  package,private,protected,public,%
  for,public,return,super,%
  this,throw,trait,try,type,%
  val,var,%
  with,while,%
  yield,%
  let,end,%
	in,fun,alloc,inc%
  },%
  mathescape=true,%
  sensitive,%
  keywordstyle={\color{dark-cyan}\bf\ttfamily},%
  commentstyle=\color{dark-cyan},%
  escapebegin=\color{dark-cyan},%
  morecomment=[l]//,%
  morecomment=[s]{/*}{*/},%
  morecomment=[s][\color{dark-cyan}]{@}{\ },%
  morestring=[b]",%
  morestring=[b]',%
  showstringspaces=false%
}[keywords,comments,strings]%
\lstdefinelanguage{Scala}%
{morekeywords={abstract,%
    case,catch,char,class,%
    def,else,extends,final,finally,for,%
    if,import,implicit,%
    match,module,%
    new,null,%
    object,override,%
    package,private,protected,public,%
    for,public,return,super,%
    this,throw,trait,try,type,%
    val,var,%
    with,while,%
    yield,%
    let, end, @track,%
    ref, Ref, move, swap,%
    println,%
    until%
  },%
  numbers=none, %
  mathescape=true,%
  sensitive,%
  commentstyle=\color{depmap},
  moredelim=**[is][\color{red}]{<}{>},%
  morecomment=[l]//,%
  morecomment=[l]/,%
  morecomment=[s][\color{ccomment}]{/*}{*/},%
  morestring=[b]",%
  morestring=[b]',%
  showstringspaces=false%
}[keywords,comments,strings]%
\newcommand{\langg}{\irlang}
\newcommand{\ts}[1][]{\ensuremath{\ifthenelse{\isempty{#1}}{\,\vdash\,}{\,\vdash_{#1}\,}}}
\newcommand{\EMM}{\ensuremath{\scalerel*{\mathsf{M}}{a}}}
\newcommand{\tsM}{\ts[\EMM]}
\newcommand{\GEE}{\ensuremath{\scalerel*{\mathsf{G}}{a}}}
\newcommand{\oldlang}{\ensuremath{\lambda^{*}}\xspace}
\newcommand{\directlang}{\ensuremath{\lambda^{*}_{\varepsilon}}\xspace}
\newcommand{\maybelang}{\directlang}
\DeclareRobustCommand{\mnflang}{\ensuremath{\lambda^{*}_{\EMM}}\xspace}
\DeclareRobustCommand{\irlang}{\ensuremath{\lambda^{*}_{\GEE}}\xspace}
\newcommand{\langa}{\irlang}
\newcommand{\Type}[1]{\ensuremath{\mathsf{#1}}}
\newcommand{\Var}{\Type{Var}}
\newcommand{\TRef}{\Type{Ref}}
\newcommand{\tref}{\text{\textbf{\textsf{ref}}}}
\newcommand{\tlet}{\text{\textbf{\textsf{let}}}}
\newcommand{\tlets}{\ensuremath{\text{\textbf{\textsf{let}}}_{\mathsf{s}}}}
\newcommand{\tin}{\text{\textbf{\textsf{in}}}}
\newcommand{\TUnit}{\Type{Unit}}
\newcommand{\tunit}{\mathsf{unit}}
\newcommand{\Loc}{\Type{Loc}}
\newcommand{\V}[1]{\ensuremath{\mathtt{#1}}}
\newcommand{\ty}[2][]{\ensuremath{\ifthenelse{\isempty{#1}}{#2}{#2^{\,#1}}}}
\newcommand{\flt}{\ensuremath{\varphi}}
\newcommand{\cx}[2][]{\ensuremath{\ifthenelse{\isempty{#1}}{#2}{#2^{\,#1}}}}
\newcommand{\csx}[3][]{\ensuremath{\ifthenelse{\isempty{#1}}{#3\mid #2}{{\color{gray!50}[}#3\mid#2{\color{gray!50}]}^{\,#1}}}}
\providecommand{\G}{G} %
\renewcommand{\G}[1][]{\cx[#1]{\Gamma}}
\newcommand{\GS}[1][]{\csx[#1]{\Gamma}{\Sigma}}
\newcommand{\cdsx}[4][]{\ensuremath{\ifthenelse{\isempty{#1}}{#3\mid #2\has #4}{{\color{gray!50}[}#3\mid#2{\color{gray!50}]}^{\,#1}\has #4 }}}
\newcommand{\GSD}[1][]{\cdsx[#1]{\Gamma}{\Sigma}{\DELTA}}
\newcommand{\qbot}{\ensuremath{\varnothing}}
\newcommand{\qfresh}{\ensuremath{\vardiamondsuit}}
\newcommand{\subq}{\ensuremath{\subseteq}}
\newcommand{\qlub}{\ensuremath{\cup}}
\newcommand{\qglb}{\ensuremath{\cap}}
\newsavebox{\SMALLSTAR}
\savebox{\SMALLSTAR}{\(\raisebox{.25ex}{\(\qfresh\)}\)}
\newcommand{\starred}[1]{}
\newsavebox{\OVRLP}
\savebox{\OVRLP}{$\raisebox{.37ex}[0pt][0pt]{$\mathrlap{\hspace{.415ex}\scaleobj{.5}{\vardiamondsuit}}$}\cap$}
\newcommand{\overlap}{\cap}
\newcommand{\qsat}[1]{\ensuremath{#1\mathord{*}}}
\newcommand{\WF}[1]{\ensuremath{#1\ \mathsf{ok}}}
\newcommand{\reaches}{\ensuremath{\mathrel{\leadsto}}}
\newcommand{\BOX}[1]{\fbox{$\strut #1$}}
\newcommand{\FV}{\ensuremath{\operatorname{fv}}}
\newcommand{\vgap}{\vspace{7pt}}
\newcommand{\rulename}[1]{(\textsc{#1})} %
\colorlet{mute}{teal}
\colorlet{eff}{magenta}
\newcommand{\mute}[1]{{\color{mute}#1}}
\definecolor{light-gray}{gray}{0.92}
\definecolor{dark-gray}{gray}{0.5}
\definecolor{light-yellow}{HTML}{faeeaa}
\definecolor{light-pink}{HTML}{faf2f7}
\newcommand{\HLBox}[2][teal!12]{\ensuremath{\mathchoice%
  {\setlength{\fboxsep}{.5ex}\colorbox{#1}{$\displaystyle#2$}}%
  {\setlength{\fboxsep}{.5ex}\colorbox{#1}{$\textstyle#2$}}%
  {\setlength{\fboxsep}{.5ex}\colorbox{#1}{$\scriptstyle#2$}}%
  {\setlength{\fboxsep}{.5ex}\colorbox{#1}{$\scriptscriptstyle#2$}}}}%
\newcommand{\HLCode}[2][teal!12]{\setlength{\fboxsep}{.5ex}\colorbox{#1}{#2}}
\colorlet{lightred}{red!30}
\newcommand{\FX}[1]{\ensuremath{{\color{eff}#1}}}
\newcommand{\EPS}[1][]{\ifthenelse{\isempty{#1}}{\FX{\bm{\varepsilon}}}{\FX{\bm{\varepsilon_{#1}}}}}
\newcommand{\EPSS}[1][]{\ifthenelse{\isempty{#1}}{\FX{\qsat{\bm{\varepsilon}}}}{\FX{\qsat{\bm{\varepsilon_{#1}}}}}}
\newcommand{\EPSPR}[1][]{\ifthenelse{\isempty{#1}}{\FX{\bm{\varepsilon'}}}{\FX{\bm{\varepsilon'_{#1}}}}}
\newcommand{\EPSSPR}[1][]{\ifthenelse{\isempty{#1}}{\FX{\qsat{\bm{\varepsilon'}}}}{\FX{\qsat{\bm{\varepsilon'_{#1}}}}}}
\newcommand{\PURE}{\FX{\boldsymbol{\varnothing}}}
\newcommand{\EFFSEQ}{\ensuremath{\mathbin{\FX{\boldsymbol{\rhd}}}}}
\newcommand{\DELTA}{\ensuremath{\mute{\Delta}}}
\newcommand{\DELTAP}{\ensuremath{\mute{\Delta'}}}
\newcommand{\NODEP}{\ensuremath{\mute{\varnothing}}}
\newcommand{\DEP}[1][]{\ensuremath{\ifthenelse{\isempty{#1}}{\mute{\delta}}{\mute{\delta_{#1}}}}}
\newcommand{\HDEP}[1][]{\ensuremath{\ifthenelse{\isempty{#1}}{\mute{\mathsf{h}}}{\mute{\mathsf{h}_{#1}}}}}
\newcommand{\SDEP}[1][]{\ensuremath{\ifthenelse{\isempty{#1}}{\mute{\mathsf{s}}}{\mute{\mathsf{s}_{#1}}}}}
\newcommand{\HHD}{\ensuremath{\mute{\mathsf{H}}}}
\newcommand{\SSD}{\ensuremath{\mute{\mathsf{S}}}}
\newcommand{\RD}{\ensuremath{\mathsf{r}}}
\newcommand{\WR}{\ensuremath{\mathsf{w}}}
\newcommand{\yields}[2][mute]{\ensuremath{{\color{#1}\ \leadsto #2}}}
\newcommand{\MAPSTO}{\ensuremath{\hstretch{.5}{\mapsto}}}
\newcommand{\pointsto}[1]{\ensuremath{\MAPSTO#1}}
\newcommand{\keyw}[1]{\textsf{\textbf{#1}}\xspace}
\def\Let#1#2#3{\keyw{let}\ {#1} = {#2}\ \keyw{in}\ {#3}}
\newcommand{\Typ}[1]{\textsf{#1}\xspace}
\newcommand{\has}{\ensuremath{\mathbin{\bullet}}}
\newcommand{\erased}[1]{\ensuremath{\ulcorner #1\urcorner}}
\newcommand{\hole}[1]{\ensuremath{[\,#1\,]}}
\newcommand{\CX}[3][black]{\ensuremath{{\color{#1}#2\ifthenelse{\isempty{#3}}{}{\hole{{\color{black}#3}}}}}}
\newcommand{\seq}[1]{\ensuremath{\overline{#1}}}
\newcommand{\bfparagraph}[1]{\paragraph{\textbf{#1}}}
\newcommand{\redsv}{\ensuremath{\mathrel{\longrightarrow_\mathbf{sv}}}}
\newcommand{\redg}{\ensuremath{\mathrel{\longrightarrow_{\GEE}}}}
\newcommand{\redv}{\ensuremath{\mathrel{\longrightarrow_\mathbf{v}}}}
\newcommand\restr[2]{{%
  \left.\kern-\nulldelimiterspace %
  #1 %
  \vphantom{\big|} %
  \right|_{#2} %
}}
\lstdefinelanguage{DOT}%
{morekeywords={val,new},%
  sensitive,%
  morecomment=[l]//,%
  morecomment=[s]{/*}{*/},%
  morestring=[b]",%
  morestring=[b]',%
  showstringspaces=false%
}[keywords,comments,strings]%
\newlength{\trulemargin}
\newlength{\trulewidth}
\newlength{\srulewidth}
\newenvironment{trules}{$\vspace{0.5em}\ba{p{\trulemargin}@{~}p{\trulewidth}@{~}p{\trulemargin}}}{\ea$}
\newenvironment{srules}{$\vspace{0.5em}\ba{p{\trulemargin}@{~}p{\srulewidth}}}{\ea$}
\newcommand{\ba}{\begin{array}}
\newcommand{\ea}{\end{array}}
\newcommand{\ei}{\end{array}}
\newcommand{\bcases}{\left\{\begin{array}{ll}}
\newcommand{\ecases}{\end{array}\right.}
\newcommand{\eg}{{\em e.g.}\xspace}
\newcommand{\ie}{{\em i.e.}\xspace}
\newcommand{\dom}{\mbox{\sl dom}}
\newcommand{\judgement}[2]{{\textsf{\textbf{#1}}} \hfill #2}
\newcommand{\synbracket}[1]{[\![{#1}]\!]}
\newcommand{\DEF}{\stackrel{{\rm def}}{=}}
\newcommand{\equiva}{\ensuremath{\mathrel{\approx_{\text{ctx}}}}}
\newcommand{\equivlog}{\ensuremath{\mathrel{\approx_{\text{log}}}}}
\newcommand{\carrow}{\Rrightarrow}
\newcommand{\fltp}{\ensuremath{\varphi'}}
\newcommand{\mredv}[1]{\ensuremath{\mathrel{\longrightarrow^{{#1}}_\mathbf{v}}}}
\newcommand{\extends}{\ensuremath{\ensuremath{;}}}
\newcommand{\restrict}[2]{({#1}\!\!\downarrow\!\!{#2})}
\newcommand{\restricts}[2]{{#1}\!\downarrow\!{#2}}
\newcommand{\GP}[1][]{\cx[#1]{\Gamma'}}
\newcommand{\DGMV}[1]{\mathcal{V}\synbracket{{#1}}^{\gamma}}
\newcommand{\DGMVE}[2]{\mathcal{V}\synbracket{{#1}}^{{#2}}}
\newcommand{\DGMt}[2]{\mathcal{E}\synbracket{#2}^{\gamma}_{{#1}}}
\newcommand{\DEPS}[3]{{#1}\hookrightarrow^{#2}{#3}}
\newcommand{\csxs}[2][]{\ensuremath{\ifthenelse{\isempty{#1}}{\mid #2}{{\color{gray!50}[}#2{\color{gray!50}]}^{\,#1}}}}
\newcommand{\State}[2]{\ensuremath{{#1}, \, {#2}}}
\newcommand*{\Scale}[2][4]{\scalebox{#1}{$#2$}}%
\newcommand{\UTC}[3]{\ensuremath{(\State{#1}{#2}, {#3})}}
\newcommand{\UG}[1]{G\synbracket{{#1}}}
\newcommand{\WFRS}[3]{({#1}, {#2}):{#3}}
\newcommand{\W}{\text{W}}
\newcommand{\DVTC}[3]{\ensuremath{({#1},{#2}, {#3})}}
\newcommand{\dvalocss}[1]{\text{locs}(\ensuremath{{#1}})}
\newcommand{\dvarslocs}[1]{\text{locs}(\ensuremath{{#1}})}
\newcommand{\dvalq}[3]{\ensuremath{#2 \reaches^{\Scale[0.7]{{#1}}}  {#3}}}
\let\emptyset\varnothing
\newlength{\myskip}
\begin{document}

\title{Graph IRs for Impure Higher-Order Languages (Technical Report)}

\author{Oliver Bra\v{c}evac}
\affiliation{
  \institution{Purdue University}            %
  \state{IN}
  \country{USA}                    %
}
\email{oliver@galois.com}          %

\author{Guannan Wei}
\affiliation{
  \institution{Purdue University}            %
  \city{West Lafayette}
  \state{IN}
  \country{USA}                    %
}
\email{guannanwei@purdue.edu}          %

\author{Songlin Jia}
\affiliation{
  \institution{Purdue University}            %
  \city{West Lafayette}
  \state{IN}
  \country{USA}                    %
}
\email{jia137@purdue.edu}          %

\author{Supun Abeysinghe}
\affiliation{
  \institution{Purdue University}            %
  \country{USA}                    %
}
\email{tabeysin@purdue.edu}          %

\author{Yuxuan Jiang}
\affiliation{
  \institution{Purdue University}            %
  \country{USA}                    %
}
\email{jiang700@purdue.edu}          %

\author{Yuyan Bao}
\affiliation{
  \institution{Augusta University}            %
  \country{USA}                    %
}
\email{yubao@augusta.edu}          %

\author{Tiark Rompf}
\affiliation{
  \institution{Purdue University}            %
  \country{USA}                    %
}
\email{tiark@purdue.edu}          %
\authorsaddresses{}

\lstMakeShortInline[keywordstyle=,%
  flexiblecolumns=false,%
  language=Scala,
  basewidth={0.56em, 0.52em},%
  mathescape=false,%
  basicstyle=\ttfamily]@

\begin{abstract}

This is a companion report for the OOPSLA 2023 paper of the same title, presenting a
detailed end-to-end account of the \langg{} graph IR, at a level of detail beyond a regular
conference paper. Our first concern is adequacy and soundness of \langg{}, which we derive from a
direct-style imperative functional language (a variant of Bao et al.'s
\oldlang-calculus with reachability types and a simple effect system) by a series of type-preserving
translations into a calculus in monadic normalform (MNF). Static reachability types and effects
entirely inform \langg{}'s dependency synthesis. We argue for its adequacy by proving its functional
properties along with \emph{dependency safety} via progress and preservation lemmas with respect to
a notion of call-by-value (CBV) reduction that checks the observed order of effects.

Our second concern is establishing the correctness of \langg{}'s equational rules that drive
compiler optimizations (\eg, DCE, \(\lambda\)-hoisting, etc.), by proving contextual equivalence
using logical relations. A key insight is that the functional properties of dependency synthesis
permit a logical relation on \langg{} in MNF in terms of previously developed logical relations for
the direct-style \oldlang-calculus.

Finally, we also include a longer version of the conference paper's section
on code generation and code motion for \langg{} as implemented in Scala~LMS. \end{abstract}

\maketitle

\tableofcontents
\listoffigures

\begin{figure}[t]
  \adjustbox{scale=0.8,center}{%
    \begin{tikzcd}[sep=small]
      &&&&&& {\begin{smallmatrix} \text{Preservation}\\\text{of Separation}\end{smallmatrix}}\\
      & {\begin{smallmatrix}\text{Direct Style \maybelang}  \\ \text{\Cref{sec:directstyle}}\end{smallmatrix}} && {\begin{smallmatrix}\text{Monadic \mnflang} \\ \text{\Cref{sec:monadic}}\end{smallmatrix}} && {\begin{smallmatrix}\text{Graph IR \irlang} \\ \text{\Cref{sec:monad-norm-with}}\end{smallmatrix}} \\
      \\
      {\begin{smallmatrix} \text{CBV Reduction}\end{smallmatrix}} && {\begin{smallmatrix} \text{Store-Allocated}\\\text{CBV Reduction}\\\text{\Cref{sec:directstyle_letv}}\end{smallmatrix}} && {\begin{smallmatrix} \text{Dependency-checking}\\\textsf{CBV Reduction}\end{smallmatrix}} && {\begin{smallmatrix} \text{Dependency}\\\text{Safety}\end{smallmatrix}}
      \arrow["{\text{Thm.~\ref{thm:progress} \& \ref{thm:soundness}}}"', very thick, color={rgb,255:red,214;green,92;blue,92}, from=2-2, to=4-1]
      \arrow["{\text{Lem.~\ref{lem:type_preservation:translation_backwards}}}"', dotted, curve={height=-12pt}, from=2-4, to=2-2]
      \arrow[color={rgb,255:red,214;green,92;blue,92}, very thick, dashed, from=2-4, to=4-3]
      \arrow["{\text{Lem.~\ref{lem:synth-totality}}}", curve={height=-12pt}, from=2-4, to=2-6]
      \arrow[dotted, double, from=4-5, to=4-3]
      \arrow["{\text{Lem.~\ref{lem:type_preservation:translation}}}", curve={height=-12pt}, from=2-2, to=2-4]
      \arrow["{\text{Lem.~\ref{lem:synth-totality}}}"', curve={height=-12pt}, from=2-6, to=2-4, dotted]
      \arrow["{\text{Thm.~\ref{thm:soundness_letv}}}"', very thick, color={rgb,255:red,214;green,92;blue,92}, from=2-2, to=4-3]
      \arrow[shift left=2, curve={height=-36pt}, dashed, from=2-2, to=2-6]
      \arrow["{\text{Thm.~\ref{thm:graphir:progress} \& \ref{thm:graphir:preservation}}}", very thick, color={rgb,255:red,214;green,92;blue,92}, from=2-6, to=4-5]
      \arrow["{\text{Cor.~\ref{coro:graphir:dep-safety}}}", from=2-6, to=4-7, dashed]
      \arrow["{\text{Cor.~\ref{thm:graphir:preservation_of_separation}}}", from=2-6, to=1-7, dashed]
      \arrow[from=4-3,to=4-1,double,tail reversed]
    \end{tikzcd}
  }
  \caption[Overview of the calculi and their metatheory in this report.]{Overview of the calculi and their metatheory in this report. Thick red arrows indicate type soundness proofs w.r.t.\ a reduction semantics. Black arrows indicate type preservation proofs between calculi.
    Dashed arrows indicate corollaries, and dotted arrows indicate an embedding or erasure. Double arrows indicate an equivalence.}
  \label{fig:overview}
\end{figure}
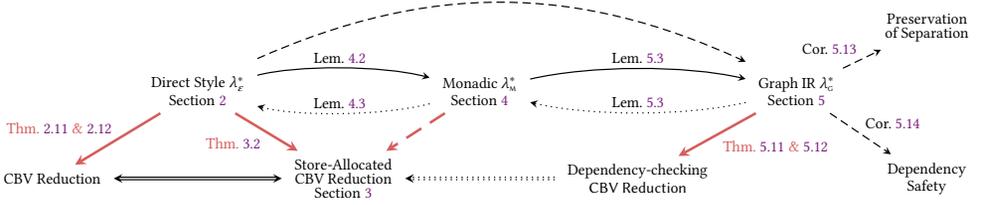
\vspace{-20pt}
\section{Introduction}
This document complements the paper ``Graph IRs for Impure Higher-Order Languages''~\cite{oopsla23} and
presents a detailed end-to-end account of the \langg{} graph IR,  step by step refinining a series of calculi
and the formal development of their metatheory, in a way that goes beyond the space available in a
conference paper. The process begins from the direct-style \directlang-calculus (a variant of
\citet{DBLP:journals/pacmpl/BaoWBJHR21}'s \oldlang-calculus with a simple effect system), proceeds
with \mnflang, a version in monadic normal form (MNF), and ends in the \irlang-calculus, the typed
graph IR in monadic normal form (MNF)\footnote{Monadic normal form~\cite{DBLP:conf/popl/HatcliffD94} is a generalization of ANF~\cite{DBLP:conf/pldi/FlanaganSDF93} and related let-normal forms, where let bindings permit nesting.} with effect dependencies. Each stage constitutes a straightforward refinement which
is provably type/effect/qualifier preserving, and is provably type-safe with respect to a notion of
call-by-value (CBV) reduction, established by standard progress and preservation lemmas.

The key end-to-end safety guarantees are visualized in \Cref{fig:overview}. The result is a sequence of
corollaries establishing key properties, namely that the direct-style language can be translated in
a type-preserving and dependency-synthesizing manner into \langg{}, and that \langg{} is sound
with respect to a dependency-checking call-by-value operational semantics. It follows that synthesized dependencies
correctly reflect the execution order of effects, and \citet{DBLP:journals/pacmpl/BaoWBJHR21}'s
preservation of separation holds for the graph IR, a memory property guaranteeing that reductions
never cause disjoint graph IR computations to become aliased.

Leveraging the type-and-effects safety in the \maybelang-calculus, the equational rules shown in the paper (Section 5.1) are
proved sound by contextual equivalence via logical relations,
building upon a framework developed in parallel with
this report \cite{bao2023logrel}.

We present the corresponding development in detail as follows:

\begin{itemize}
  \vspace{2ex}
  \item \textbf{The direct-style \maybelang-calculus (\Cref{sec:directstyle}):} we introduce the
        syntax and typing rules of our base calculus with reachability types and effects, motivating the
        basic design principles and features of the system. The formalization and proofs very closely
        follow the publicly available Coq mechanizations of the original \oldlang-calculus\footnote{\url{http://github.com/tiarkrompf/reachability}}.
        \maybelang as presented in this report lacks some features of the original \oldlang-calculus
        (e.g., no recursion, no escaping closures,  flat mutable references) which are non-essential to
        understand the core ideas. %

  \item \textbf{The \maybelang-calculus with store-allocated values (\Cref{sec:directstyle_letv}):}
        as a stepping stone towards monadic normal form, we refine the call-by-value operational semantics
        of \maybelang to place both mutable references and immutable introduction forms in the store, and prove the direct-style
        type system sound w.r.t.\ this refined semantics. Notably, substitutions become simpler, because
        they are just renamings of variables to store locations.
  \item \textbf{The monadic normal form \mnflang (\Cref{sec:monadic}):}
        we define a provably type/effect/qualifier-preserving translation of the direct-style \maybelang-calculus into \mnflang which is in monadic normal form (MNF).
        This normal form generalizes the previously used A-normal form (ANF) by permitting nested terms.
        Unlike ANF, reductions preserve the monadic normalform at all times, even under $\beta$-reduction.
        We further establish that \mnflang is a proper sublanguage of \maybelang, by (1) proving that MNF terms can always be assigned the
        same type, effect, and reachability qualifiers in both systems, and (2) proving that reduction with store-allocated values (\Cref{sec:directstyle_letv})
        preserves MNF. Type soundness of \mnflang follows from type soundness of \maybelang as a corollary.
  \item \textbf{The graph IR \irlang with (hard) dependencies (\Cref{sec:monad-norm-with}):}
        we enrich the MNF-calculus \mnflang with effect dependencies in the \irlang-calculus. Dependencies are entirely determined by reachability qualifiers and effects.
        We prove type soundness and preservation of separation with respect to a stricter operational semantics, which establishes
        \emph{dependency safety}: evaluation respects the order of effect dependencies for well-typed graph IR terms, \ie,
        an effectful graph node is executed only if all its dependencies have already been resolved.
        We also prove end-to-end type/effect/qualifier preservation and effect synthesis from the direct-style \maybelang-calculus into the \irlang graph IR.
  \item \textbf{The graph IR \irlang in MNF with hard and soft dependencies (\Cref{sec:extension-with-soft}):}
        we refine the effects of \irlang from mere uses to a read and write distinction, which synthesize into hard and soft dependencies.
  \item \textbf{The equational theory of \maybelang (\Cref{sec:direct-lr}):} We develop logical
        relations over reachability types and effects, which enables reasoning about contextual
        equivalence of \maybelang{} terms. %
  \item \textbf{Optimization rules and equational theory of \langg (\Cref{sec:optimizations}):}
        We prove soundness of the optimization rules in the main paper for the \langg graph IR with hard dependencies
        in terms of contextual equivalence. We leverage the results from \Cref{sec:monadic,sec:extension-with-soft,sec:direct-lr},
        to derive the logical relations argument by appealing to the direct-style system through a "round-trip" translation
        erasing and re-synthesizing hard dependencies. We leave the logical relations argument for the system including
        soft dependencies as future work.

  \item \textbf{Code motion algorithms for \irlang (\Cref{sec:codemotion}):}
        we present the code motion algorithms that transform graphs into trees and emit code.
        A vanilla code motion algorithm is presented first, which is then extended with
        frequency estimation and compact code generation.
\end{itemize}

\clearpage %
\section{The Direct-Style \maybelang{}-Calculus}\label{sec:directstyle}
\begin{figure}\small
\begin{mdframed}
\judgement{Syntax}{\BOX{\maybelang}}\small\vspace{-8pt}
\[\begin{array}{l@{\quad}l@{\quad}l@{\quad}l}
    x,y,z                &\in & \Var                                                                  & \text{Variables} \\
    \ell,w                 &\in & \Loc                                                                  & \text{Locations}   \\

    v                    &::= & c\mid \lambda x.t\mid \ell                                             & \text{Values}\\
    t                    &::= & v\mid\ell\mid x \mid t~t \mid \tref_\ell~t \mid\ !~t \mid t \coloneqq t\mid \tlet~{x=t}~\tin~t  & \text{Terms}\\

    p,q,r,\EPS,\flt      &\in & \mathcal{P}_{\mathsf{fin}}(\Var\uplus\Loc)                            & \text{Qualifiers/Effects/Observations} \\
    S,T,U,V              &::= & B \mid (x: \ty[q]{T}) \to^{\EPS} \ty[q]{T} \mid \TRef~T               & \text{Types} \\

    \Gamma               &::= & \varnothing\mid \Gamma, x : \ty[q]{T}                                 & \text{Typing Environments} \\
    \Sigma               &::=&  \varnothing\mid \Sigma, \ell : \ty[q]{T}                              & \text{Store Typing}\\
\end{array}\]\\
\typicallabel{t-abs}
\judgement{Term Typing}{\BOX{\strut\GS[\flt] \ts t : \ty[q]{T}\ \EPS}} %
\begin{minipage}[t]{.47\linewidth}\vspace{0pt}
  \infrule[t-cst]{\ \\ \ \\
    c \in B
  }{
    \GS[\flt] \ts c : \ty[\qbot]{B}\ \PURE
  }
  \vgap
    \infrule[t-var]{
      x : \ty[q]{T} \in \G\quad\quad x \subq \flt
    }{
      \GS[\flt] \ts x : \ty[x]{T}\ \PURE
    }
    \vgap
    \infrule[t-loc]{
    \ell : \ty[q]{T} \in \Sigma\qquad \ell \subq \flt
    }{
      \GS[\flt]\ts \ell : \ty[\ell]{T}\ \PURE
    }
\vgap
    \infrule[t-abs]{
      \csx[q,x]{\G\ ,\ x: \ty[p]{T}}{\Sigma} \ts t : \ty[r]{U}\ \EPS\\ q\subq \flt
    }{
      \GS[\flt] \vdash \lambda x.t : \ty[q]{\left(x: \ty[p]{T} \to^{\EPS} \ty[r]{U}\right)}\; \PURE
    }
\vgap
\infrule[t-app]{
      \GS[\flt]\vdash t_1 : \ty[q]{\left(x{\,:\,}\ty[\qsat{p}\,{\overlap}\, \qsat{q}]{T} \to^{\FX{\EPS[3]}} \ty[r]{U}\right)}\ \EPS[1]\\
      \GS[\flt]\ts t_2 : \ty[p]{T}\ \EPS[2]\quad\theta = [p/x]\\
      x\notin\FV(U)\quad\FX{\EPS[3]}\subq q,x \quad r\subq\varphi,x
    }{
      \GS[\flt]\ts t_1~t_2 : (\ty[r]{U}\ \EPS[1]\EFFSEQ\EPS[2]\EFFSEQ\FX{\EPS[3]})\theta
    }
\end{minipage}%
\begin{minipage}[t]{.03\linewidth}
\hspace{1pt}%
\end{minipage}%
\begin{minipage}[t]{.5\linewidth}\vspace{0pt}\typicallabel{t-let}
  \infrule[t-let]{
    \GS[\flt]\ts t_1 : \ty[p]{S} \ \EPS[1]\\ %
      \csx[\flt,x]{\G\, ,\, x: \ty[\qsat{p}\cap\qsat{\flt}]{S}}{\Sigma} \ts t_2 : \ty[q]{T}\ \EPS[2]\\
      \theta = [p/x] \quad x \notin\FV{(T)}
}{
  \GS[\flt]\ts \tlet~{x = t_1}~\tin~t_2 : (\ty[q]{T}\ \EPS[1]\EFFSEQ\FX{\EPS[2]})\theta
}
\vgap
  \infrule[t-ref]{
    \GS[\flt]\ts t_1 : \ty[q]{\Typ{Alloc}}\ \EPS[1]\\ \GS[\flt]\ts t_2 : \ty[\qbot]{B}\ \EPS[2]
    }{
      \GS[\flt]\ts \tref_{t_1}~t_2 : \ty[\qbot]{(\TRef~\ty{B})}\ \EPS[1]\EFFSEQ\EPS[2]\EFFSEQ\FX{\bm{q}}
    }
\vgap
    \infrule[t-\(!\)]{
      \GS[\flt]\ts t : \ty[q]{(\TRef~\ty{B})}\ \EPS
    }{
      \GS[\flt]\ts !t : \ty[\qbot]{B}\ \EPS\EFFSEQ\FX{\bm{q}}
    }
\vgap
    \infrule[t-\(:=\)]{
      \GS[\flt]\ts t_1 : \ty[q]{(\TRef~\ty{B})}\ \EPS[1] \\
      \GS[\flt]\ts t_2 : \ty[\qbot]{B}\ \EPS[2]
    }{
      \GS[\flt]\ts t_1 \coloneqq t_2 : \ty[\qbot]{\TUnit}\ \EPS[1]\EFFSEQ\EPS[2]\EFFSEQ\FX{\bm{q}}
    }
\vgap
\infrule[t-sub]{
      \GS[\flt]\ts t : \ty[p]{S}\ \EPS[1] \quad \GS\ts\ty[p]{S}\ \EPS[1] <: \ty[q]{T}\ \EPS[2]\\ q,\EPS[2]\subq\flt
    }{
      \GS[\flt]\ts t : \ty[q]{T}\ \EPS[2]
    }
\end{minipage}

\judgement{Subtyping}{\BOX{\strut\GS \ts q <: q}\ \BOX{\strut \GS\ts\ty{T} <: \ty{T}}\ \BOX{\strut \GS\ts\ty[q]{T}\ \EPS <: \ty[q]{T}\ \EPS}}\\[1ex]
\begin{minipage}[t]{.47\linewidth}\small\vspace{0pt}
  \typicallabel{s-base}
  \infrule[q-sub]{\ \\ p\subq q\subq \dom(\G)\cup\dom(\Sigma)}{\GS\ts p <: q}
  \vgap
  \infrule[s-base]{\ }{
    \GS\ts\ty{B} <: \ty{B}
  }
\vgap
  \infrule[s-ref]{
  }{
    \GS\ts\ty{\TRef~\ty{B}} <: \ty{\TRef~\ty{B}}
  }
\end{minipage}%
\begin{minipage}[t]{.03\linewidth}
\hspace{1pt}
\end{minipage}%
\begin{minipage}[t]{.5\linewidth}\small\vspace{0pt}
  \typicallabel{sqe-sub}
  \infrule[s-fun]{
    \GS\ts\ty[q]{U}\ \PURE <: \ty[o]{S}\ \PURE \\
    \G\, ,\, x : \ty[p]{U}\mid\Sigma\ts \ty[q]{T}\ \EPS[1] <: \ty[r]{V}\ \EPS[2]
  }{
    \GS\ts\ty{(x: \ty[o]{S}) \to^{\EPS[1]} \ty[q]{T}} <: \ty{(x: \ty[p]{U}) \to^{\EPS[2]} \ty[r]{V}}
  }
\vgap
  \infrule[sqe-sub]{
\GS\ts\ty{S} <: \ty{T}\\ \GS\ts p <: q\quad\GS\ts \EPS[1] <: \EPS[2]
  }{
    \GS\ts\ty[p]{S}\ \EPS[1] <: \ty[q]{T}\ \EPS[2]
  }
\end{minipage}

 \caption{The direct-style \maybelang-calculus.}\label{fig:maybe:syntax}
\end{mdframed}
\end{figure}

\begin{figure}
\begin{mdframed}\small
\judgement{Qualifier Substitution}{\BOX{q[x\mapsto p]}}\vspace{-8pt}
\[\begin{array}{l@{\;}c@{\;}ll}
    q[p/x] & = & q\setminus\{x\}\qlub p& x\in q    \\
    q[p/x] & = & q                     & x\notin q
  \end{array}\]
  \judgement{Reachability}{\BOX{{\color{gray}\GS\vdash}\,x \reaches x}\ \BOX{{\color{gray}\GS\vdash}\,\ell \reaches \ell}\ \BOX{{\color{gray}\GS\vdash}\, \qsat{q}}}
  \[\begin{array}{l@{\ \,}c@{\ \,}l@{\qquad\qquad\qquad\ \ }l}
  {\color{gray}\GS\vdash}\, x \reaches y & \Leftrightarrow  & x : T^{q,y}\in\Gamma  & \text{Reachability Relation (Variables)} \\[1.1ex]
  {\color{gray}\GS\vdash}\, \ell\reaches w & \Leftrightarrow  & \ell : T^{q,w} \in\Sigma & \text{Reachability Relation (Locations)} \\[1.1ex]
  {\color{gray}\GS\vdash}\, \qsat{x} & := & \left\{\, y \mid x \reaches^* y\, \right\} & \text{Variable Saturation} \\[1.1ex]
  {\color{gray}\GS\vdash}\, \qsat{\ell} & := & \left\{\, w \mid \ell \reaches^* w\, \right\} & \text{Location Saturation} \\[1.1ex]
  {\color{gray}\GS\vdash}\, \qsat{q} & := &  \bigcup_{x\in q}\qsat{x} \cup \bigcup_{\ell\in q}\qsat{\ell} & \text{Qualifier Saturation}
  \end{array}\]
\judgement{Effects}{\BOX{\EPS\EFFSEQ\EPS}}
\[\begin{array}{l@{\ \,}c@{\ \,}l@{\qquad\qquad\qquad\ \ }l}
  \phantom{\G\vdash\,}\EPS[1]\EFFSEQ\EPS[2] \ \  & := & \EPS[1] \cup \EPS[2] & \quad\ \, \text{Sequential Composition}\\
   \end{array}\]
\caption[Operators on qualifiers and effects.]{Operators on qualifiers and effects. %
  We often leave the context implicit (marked as gray).}\label{fig:saturation_overlap}
\end{mdframed}
\end{figure}

\subsection{Overview}

The \maybelang{}-calculus is a variant of \citeauthor{DBLP:journals/pacmpl/BaoWBJHR21}'s
\oldlang{}-calculus. The original system features an effect system based on
\citet{DBLP:journals/toplas/Gordon21}'s effect quantale framework. For simplicity, we only
consider a stripped-down effect system corresponding to a trivial effect quantale just tracking
whether an effect is induced on reachable variables, effectively making effects just another
qualifier (i.e., a set of variables and locations) in the typing judgment. This effect system is
sufficient for calculating granular effect dependencies. This version also lacks a \(\bot\)
qualifier for untracked values, and recursive \(\lambda\)-abstractions.\footnote{Modulo the
(straightforward) addition of effects, the system presented here closely follows the mechanized
``overlap lazy'' variant found at
\url{https://github.com/tiarkrompf/reachability/tree/main/base/lambda_star_overlap_lazy}.}
To keep the discussion focused and on point, we omitted those features which do not add much
to the discussion of the core ideas apart from additional proof cases.

\subsection{Examples}\label{sec:directstyle:overview}

Here we use a few examples from \citet{DBLP:journals/pacmpl/BaoWBJHR21} to
motivate the direct-style \maybelang{}-calculus as the base system.
The basic idea is to track which other values are reachable from a given expression's result.
For example, the following expression allocates a reference cell of integers and binds
it to @x@:
\begin{lstlisting}
  val x = new Ref(42)       // : Ref[Int]$\tracksetl{x}$ $\FX{\{w\}}$
\end{lstlisting}
Therefore, by reflexivity, the type of @x@ has a qualifier (\ie a set of
variables) that contains itself.
The allocation also induces an effect over an allocator $w$ for
allocating memory resources.
We can further bind @x@ to a new variable @y@, which induces no effect (indicated by the empty set):
\begin{lstlisting}
  val y = x                 // : Ref[Int]$\tracksetl{y}\phantom{x}$ $\PURE$      ← lazy assignment (this work)
                            // : Ref[Int]$\trackset{x,y}$ $\PURE$ ← eager assignment (Bao et al.)
\end{lstlisting}
As in \citet{wei2023polymorphic},
the system considered here is ``lazy'' with respect to qualifier assignment, in the sense that it
assigns minimal qualifiers. For example, @y@ reaches itself, just like @x@ does, but
in this particular context, we can also deduce that @y@ indirectly reaches @x@, since it is an alias.
This provides a certain degree of lightweight, i.e., quantifier-free, reachability polymorphism \cite{wei2023polymorphic}.
We use the notation \lstinline|{y}$^*$ = {x,y}| for the transitively closed reachability set
in a given context (\Cref{fig:saturation_overlap}). In contrast,
\citeauthor{DBLP:journals/pacmpl/BaoWBJHR21} use an ``eager''
qualifier assignment, \eg, they would always assign the transitively closed qualifier.

We refer readers to Section 2 of \citet{DBLP:journals/pacmpl/BaoWBJHR21} for more illustrative
examples of reachability types.
Section 2 of the main paper~\cite{oopsla23} also provides examples
after adapting reachability types to the graph IR.

\subsection{Syntax}\label{sec:directstyle:syntax}

\Cref{fig:maybe:syntax} shows the syntax of \maybelang which is based on the simply-typed
\(\lambda\)-calculus
with mutable references and subtyping.  We denote general term variables by the meta variables
\(x, y, z\),
and reserve \(\ell, w\) for store locations.

Terms consist of constants of base types, variables, functions \(\lambda x.t\), function
applications, reference allocations, dereferences, assignments, and $\tlet$-expressions.

Reachability qualifiers \(p,q,r\) are finite sets of variables and store locations. For readability,
we often drop the set notation for qualifiers and write them down as comma-separated lists of atoms.

We distinguish ordinary types \(T\) from qualified types \(\ty[q]{T}\), where the latter
annotates a qualifier $q$ to an ordinary type $T$. The types consist of base types \(B\) (\eg,
\Type{Int}, \TUnit), dependent function types \((x: \ty[q]{T}) \to^{\EPS} \ty[p]{S}\), where both argument
and return type are qualified. The codomain \(\ty[p]{S}\) may depend on the argument \(x\) in its
qualifier and type. Function types carry an annotation $\EPS$ for its latent effect, which is
a set of variables and locations, akin to qualifiers.

For simplicity, mutable reference types \(\TRef~B\) can only store values of base types.
We could also permit forms of nested references, by adding a flow-sensitive
effect system~\cite{DBLP:journals/pacmpl/BaoWBJHR21}.

An \emph{observation} \(\flt\) is a finite set of variables which is part of the term typing
judgment (\Cref{sec:directstyle:statics}). It specifies which variables and locations in the typing
context \(\Gamma\) and store typing \(\Sigma\) are observable.  The former assigns qualified typing
assumptions to variables.

\subsection{Statics}\label{sec:directstyle:statics}
The term typing judgment $ \strut\GS[\flt] \ts t : \ty[q]{T}\ \EPS $
in \Cref{fig:maybe:syntax} states that term \(t\)
has qualified type \(\ty[q]{T}\) and may induce effect \(\EPS\),
and may only access the typing assumptions of \(\Gamma\)
observable by \(\flt\).
One may think of \(t\) as a computation that incurs effect $\EPS$ and
yields a result value of type \(T\) aliasing no more than \(q\), if it terminates.

Different from \citet{DBLP:journals/pacmpl/BaoWBJHR21}, we internalize
the filter $\flt$ as part the typing relation.
Alternatively, we could formulate the typing judgment without internalizing \(\flt\),
and instead have an explicit context filter operation
\(\G[\flt] := \{x : \ty[q]{T}\in\G \mid q,x \subq \flt \}\)
for restricting the context in subterms, just like \citet{DBLP:journals/pacmpl/BaoWBJHR21} which
loosely takes inspiration from substructural type systems. Internalizing \(\flt\)
(1) makes observability an explicit notion, which facilitates reasoning about
separation and overlap, and (2) greatly simplifies the Coq mechanization.
Context filtering is only needed for term typing, but not for subtyping, so as
to keep the formalization simple.

\subsubsection{Functions and Lightweight Polymorphism}\label{sec:fun-lightweight-poly}
Function typing \rulename{t-abs} implements the observable separation
guarantee, \ie, the body \(t\) can only observe what the function type's
qualifier \(q\) specifies, plus the argument \(x\),
and is otherwise oblivious to anything else in the environment.
We model this by setting the observation to \(q,x,f\) when typing the body.
Thus, its observation \(q\) at least includes the free variables of \(t\).
To ensure well-scopedness, \(q\) must be a subset of the observation \(\flt\) on the outside.
In essence, a function type \emph{implicitly} quantifies over anything that is not
observed by \(q\), achieving a lightweight form of qualifier polymorphism.

\subsubsection{Dependent Application, Separation and Overlap}\label{sec:depend-appl}
Function applications \rulename{t-app} are qualifier-dependent in that the
result qualifier can depend on the argument.

Function applications also establish an \emph{observable separation} between the argument reachable
set $p$ and the function reachable set $q$, as denoted as $\qsat{p} \cap \qsat{q}$. The intersection
between $\qsat{p}$ and $\qsat{q}$ specifies the permitted overlap. We are careful to intersect the
transitive reachability closure (a.k.a.\ saturated version, \Cref{fig:saturation_overlap}) of the
two qualifiers. This is necessary in the lazy reachability assignment, because we might miss
common, indirect overlap between the sets otherwise. If the intersection declared in the function
type is empty, then it means complete separation between the argument and the entities
observed by the function from the environment.

\subsubsection{Qualifier Substitution}\label{sec:qual-subst-growth}
The base substitution operation \(q[p/x]\) of qualifiers for variables is defined in
\Cref{fig:saturation_overlap}, and we use it along with its homomorphic
extension to types in dependent function application.  Substitution replaces
the variable with the given qualifier, if present in the target.

\subsubsection{Effects}
Our effect system is a simple flow-insensitive instantiation of
\citet{DBLP:journals/toplas/Gordon21}'s effect quantale system.
An effect $\EPS$ denotes the set of variables/locations that might be used
during the computation.
For a compound term, the final effect is computed by composing the effects of
sub-terms with the intrinsic effect of this term.
For example, the effect of assignments has two parts: (1) $\EPS[1]$, $\EPS[2]$
the effects of sub-terms, and (2) $\FX{q}$ the variables/locations being
modified. The final effect is obtained by composing these effects.

Although the typing rules presented in \Cref{fig:maybe:syntax} pretend to use the
sequential effect composition operator $\EFFSEQ$, its definition $\cup$ computes an
upper bound of two effects and is \emph{not} flow-sensitive
(\Cref{fig:saturation_overlap}), \ie the composed effect is not sensitive to
the order of composition. This simple instantiation is sufficient for deriving
dependencies (cf.~\Cref{sec:graphir}).

\subsubsection{Mutable References}\label{sec:mutable-refs}

Slightly different from \citet{DBLP:journals/pacmpl/BaoWBJHR21}, the allocation
$\tref_{t_1}~t_2$ \rulename{t-ref} additionally takes a term $t_1$ that has an
\textsf{Alloc} primitive type. An allocation induces an effect over aliases of
$t_1$, which is recorded as the composed term effect.

The typing rule of reference allocation \rulename{t-ref}, read \rulename{t-!},
and write \rulename{t-:=} work with reference types whose inner referent values
are base values. This is sufficient for understanding the core ideas of the graph IR.
It is nevertheless possible to extend the base system with nested references (\eg
allowing storing functions) by a flow-sensitive effect system, as shown by
\citet{DBLP:journals/pacmpl/BaoWBJHR21}.
The subtyping of references is invariant in the referent type.

\subsubsection{Subtyping}\label{sec:subtyping}
We distinguish subtyping between qualifiers \(q\),
ordinary types \(T\),
and qualified types \(\ty[q]{T}\),
where the latter two are mutually dependent. Subtyping is assumed to be well-scoped under the typing
context \(\G\) and store \(\Sigma\),
\ie, types and qualifiers mention only variables/locations bound in \(\G\) and \(\Sigma\),
and so do its typing assumptions.  Qualified subtyping \rulename{sqe-sub} just forwards to the other
two judgments for scaling the type, qualifier, and effect respectively.

\paragraph{Qualifier Subtyping}
Qualifier subtyping includes the subset relation \rulename{q-sub}, which resort to the subset relation
since qualifiers are sets. Since effects are just qualifiers, we use the same subtyping relation
for subeffecting.

\paragraph{Ordinary Subtyping}
Subtyping rules for base types \rulename{s-base}, reference types \rulename{s-ref}, and function
types \rulename{s-fun} are standard modulo qualifiers. Reflexivity and transitivity are both
admissible for subtyping on ordinary and qualified types.
Function
types are contravariant in the domain, and covariant in the codomain and effect, as usual.
Due to dependency in the codomain, we are careful to extend the context with
the smaller argument type.

\subsection{Dynamics}\label{sec:directstyle:dynamics}
\begin{figure}\small
\begin{mdframed}
\judgement{Stores, Evaluation Contexts}{}
\[\begin{array}{l@{\ \ }c@{\ \ }l@{\qquad\qquad\ }l@{\ \ }c@{\ \ }l}
    {\sigma} & ::= & \varnothing \mid\sigma,\,\tlets~{\ell=\tref_{\omega}~v}            & & & \\
    {E}      & ::= & \square \mid E\ t \mid v\ E \mid \tref_E~t\mid \tref_v~E \mid\ !{E} \mid {E} := {t} \mid {v} := {E}\mid \tlet~{x = E}~\tin~t & & & \\
  \end{array}\]
\judgement{Well-Formed Stores}{\BOX{\GS[\flt]\ts \sigma}} %
\begin{mathpar}
  \inferrule{\ }{\csx[\flt]{\G}{\varnothing}\ts\varnothing}\and
    \inferrule{\csx[\flt]{\G}{\Sigma}\ts\sigma\qquad \GS[\flt]\ts v : \ty[\qbot]{B}\ \PURE\qquad\ell \notin\DOM(\Sigma)
     }{
      \csx[\flt]{\G}{\Sigma,\, \ell : \ty[\qbot]{\TRef~B}}\ts\sigma,\,\tlets~\ell = \tref~v
     }
  \end{mathpar}
\judgement{Reduction Rules}{\BOX{\sigma \mid t \redv \sigma\mid t}}
\[\begin{array}{r@{\ \ }c@{\ \ }ll@{\qquad\qquad}r}
  \sigma\mid\CX[gray]{E}{(\lambda x.t)\ v}                                                  & \redv & \sigma\mid\CX[gray]{E}{t[v/x]}                                            & & \rulename{$\beta$} \\[1ex]
  \sigma\mid\CX[gray]{E}{\tlet~{x = v}~\tin~t}                                              & \redv & \sigma\mid\CX[gray]{E}{t[v/x]}                                            & & \rulename{let} \\[1ex]
  \sigma\mid\CX[gray]{E}{\tref_\omega~v}                                                    & \redv & \sigma,\,\tlets~{\ell = \tref_\omega~v}\mid\CX[gray]{E}{\ell}             & & \rulename{ref} \\
                                                                                            &       & \ell \not\in \DOM(\sigma)                                                 & & \\[1ex]
  \sigma,\,\tlets~{\ell = \tref_\omega~v},\,\sigma'\mid\CX[gray]{E}{!\ell}                  & \redv & \sigma,\,\tlets~{\ell = \tref_\omega~v},\,\sigma'\mid\CX[gray]{E}{v}       & & \rulename{deref} \\[1ex]
  \sigma,\,\tlets~{\ell = \tref_\omega~v},\,\sigma'\mid\CX[gray]{E}{\ell := v'}             & \redv & \sigma,\,\tlets~{\ell = \tref_\omega~v'},\,\sigma'\mid\CX[gray]{E}{\tunit} & & \rulename{assign}
  \end{array}\]
\caption{Standard call-by-value reduction for \maybelang.}\label{fig:directstyle:semantics}
\end{mdframed}
\end{figure}
 
The single-step, call-by-value (CBV) for \maybelang (\Cref{fig:directstyle:semantics}) is standard.
To bridge the gap to monadic normal forms later on, we model stores as sequences of mutable let
bindings. One may view computations as syntactic sequences of let bindings (think already evaluated
graph nodes) followed by a redex. Another difference to standard treatments is that reference
allocation takes the allocation-capability variable \(\omega\) as an explicit extra argument.
This design makes the treatment of effectful operations uniform, in the sense that an operation always
induces an effect on some operand.

The allocation capability \(\omega\) is a base constant for allocation of base type \(\Typ{Alloc}\),
we consider an initial store with \(\tlets~{w = \omega}\), and \(w : \ty[\qbot]{\Typ{Alloc}}\).

\subsection{Metatheory}\label{sec:directstyle:metatheory}

The \maybelang-calculus exhibits syntactic type soundness which we prove by standard progress and
preservation properties (\Cref{thm:progress,thm:soundness}). Type soundness implies the preservation
of separation corollary (\Cref{coro:preservation_separation}) as set forth by
\citet{DBLP:journals/pacmpl/BaoWBJHR21} for their \(\lambda^*\)-calculus.
It is a memory property certifying that the results of well-typed \maybelang terms with disjoint
qualifiers indeed never alias.

The metatheory of the \maybelang-calculus closely follows the ``lazy overlap'' variant of $\lambda^*$-calculus,
which has been mechanized.\footnote{The mechanization can be found at
\url{https://github.com/tiarkrompf/reachability/tree/main/base/lambda_star_overlap_lazy}.}
The major difference lies in the addition of a simple effect system, which does not change the metatheory significantly,
other than carrying an extra qualifier for effects in judgments.
Below, we discuss key lemmas required for the type soundness proof.

\subsubsection{Observability Properties} \label{sec:maybe:obs}
Reasoning about substitutions and their interaction with overlap/separation in preservation lemmas
requires that the qualifiers assigned by term typing
are observable. The following lemmas are proved by induction over the respective typing derivations:
\begin{lemma}[Observability Invariant]\label{lem:has_type_filter}
Term typing always assigns observable qualifiers and effects, \ie,
if $\csx[\flt]{\G}{\Sigma} \ts t : \ty[q]{T}\ \EPS$, then $q,\EPS\subq \flt$.
\end{lemma}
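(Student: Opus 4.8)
The plan is to proceed by induction on the derivation of $\csx[\flt]{\G}{\Sigma} \ts t : \ty[q]{T}\ \EPS$, establishing $q, \EPS \subq \flt$ in each case. The leaf rules and the rules whose side conditions already assert observability are immediate. In \rulename{t-cst} both $q$ and $\EPS$ are empty, hence trivially contained in $\flt$. In \rulename{t-var} and \rulename{t-loc} the effect is $\PURE = \emptyset$ and the result qualifier $\{x\}$ (resp.\ $\{\ell\}$) is observable by the explicit premise $x \subq \flt$ (resp.\ $\ell \subq \flt$). In \rulename{t-abs} the effect is again $\PURE$ and $q \subq \flt$ is a direct premise. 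Finally, \rulename{t-sub} carries the side condition $q, \EPS[2] \subq \flt$ verbatim, so the conclusion holds without appeal to the induction hypothesis.

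The store-manipulating rules \rulename{t-ref}, \rulename{t-!}, and \rulename{t-:=} involve no substitution: the result qualifier is $\qbot$ and the effect is a union of the subterms' effects with the operand qualifier $q$ of the reference/operand. The induction hypotheses give $\EPS[i] \subq \flt$ for every subterm effect and $q \subq \flt$ for the operand qualifier, so the composed effect, being their union (recall $\EFFSEQ$ is just $\cup$), is contained in $\flt$, while $\qbot \subq \flt$ is trivial.

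The crux is the pair of rules that substitute the argument for the bound variable, namely \rulename{t-app} and \rulename{t-let}, whose conclusions post-compose a qualifier substitution $\theta = [p/x]$. Here I would first record a small fact about qualifier substitution: if $A \subq B \cup \{x\}$ and $x \notin B$, then $A[p/x] \subq B \cup p$, which follows directly from the definition of $q[p/x]$ in \Cref{fig:saturation_overlap} and its distribution over unions. In \rulename{t-let}, the induction hypothesis on $t_2$ (typed under the extended observation $\flt, x$) yields $q, \EPS[2] \subq \flt \cup \{x\}$; applying the fact with $B := \flt$ gives $q\theta, \EPS[2]\theta \subq \flt \cup p$, and since the induction hypothesis on $t_1$ supplies $p \subq \flt$, we obtain $q\theta, \EPS[2]\theta \subq \flt$; the residual $\EPS[1]\theta = \EPS[1] \subq \flt$ because $x \notin \EPS[1]$. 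The case \rulename{t-app} is analogous: the hypotheses give $\EPS[1], \EPS[2], q \subq \flt$ (from $t_1$) and $p \subq \flt$ (from $t_2$), while the rule's own side conditions $r \subq \flt, x$ and $\EPS[3] \subq q, x$ feed the fact (with $B := \flt$ and $B := q$ respectively), closing the argument for $r\theta$ and $\EPS[3]\theta$.

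The main obstacle I anticipate is bookkeeping rather than conceptual: making precise that the freshly bound variable $x$ does not occur in the ambient observation $\flt$ (so that $\theta$ leaves $\flt$ untouched) nor in the subterm effects/qualifiers already known to lie in $\flt$ (so that $\theta$ acts as the identity on them, e.g.\ $x \notin q$ and $x \notin \EPS[1]$ follow from those sets being $\subq \flt$). Under the usual freshness convention for bound variables this is immediate; in a de Bruijn or locally nameless mechanization it is discharged by the standard well-scopedness invariants maintained alongside the typing judgment. Given those, each substitution case reduces to the one-line qualifier fact above.
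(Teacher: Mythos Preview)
Your proposal is correct and follows exactly the paper's approach: induction on the typing derivation. The paper's proof is a one-liner (``proved by induction over the respective typing derivations''), so your case-by-case unfolding simply spells out what the paper leaves implicit, and the substitution bookkeeping you anticipate for \rulename{t-app} and \rulename{t-let} is precisely the content of those cases.
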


\noindent Well-typed values cannot observe anything about the context beyond their assigned qualifier:
\begin{lemma}[Tight Observability for Values]\label{lem:values_tight}
If \(\ \csx[\flt]{\G}{\Sigma} \ts v : \ty[q]{T}\ \EPS\), then \(\csx[q]{\G}{\Sigma} \ts v : \ty[q]{T}\ \PURE\).
\end{lemma}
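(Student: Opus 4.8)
The plan is to induct on the derivation of $\GS[\flt] \ts v : \ty[q]{T}\ \EPS$. Because the subject is a value, the last rule can only be one of the value-introduction rules \rulename{t-cst}, \rulename{t-loc}, \rulename{t-abs}, or the subsumption rule \rulename{t-sub}: the rule \rulename{t-var} concludes a variable (not a value), while \rulename{t-app}, \rulename{t-let}, \rulename{t-ref}, \rulename{t-!}, and \rulename{t-:=} all have non-value subjects. Since \rulename{t-sub} leaves the subject term unchanged, the induction stays within value typings and bottoms out at an introduction rule. The three introduction cases are immediate, since each already assigns the pure effect $\PURE$, leaving only the observation to adjust: for \rulename{t-cst} the assigned qualifier is $\qbot$ and the rule places no constraint on the filter, so it re-applies verbatim; for \rulename{t-loc} the assigned qualifier is the singleton $\{\ell\}$, and re-applying the rule under observation $\{\ell\}$ reuses the unchanged membership premise and needs only $\ell \subq \{\ell\}$; for \rulename{t-abs} the body is typed under the fixed observation $q, x$ regardless of the ambient filter, so re-applying the rule under observation $q$ needs only $q \subq q$.

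The one case with content is \rulename{t-sub}, where $\GS[\flt]\ts v : \ty[p]{S}\ \EPS[1]$ with $\GS\ts\ty[p]{S}\ \EPS[1] <: \ty[q]{T}\ \EPS[2]$ and $q,\EPS[2]\subq\flt$. The induction hypothesis gives the tight typing $\GS[p]\ts v : \ty[p]{S}\ \PURE$, and inverting the subtyping via \rulename{sqe-sub} yields $\GS\ts S <: T$, $\GS\ts p <: q$ (so $p \subq q$), and $\GS\ts \EPS[1] <: \EPS[2]$. I would like to re-apply \rulename{t-sub} to this tight typing, but applying it at filter $p$ is blocked: \rulename{t-sub} keeps the filter fixed, so it would demand $q \subq p$, the opposite of what inversion provides. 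The observation therefore has to be enlarged from $p$ to $q$ first.

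The real weight of the proof thus falls on an observation-weakening (filter-monotonicity) lemma: if $\GS[\flt]\ts t : \ty[q]{T}\ \EPS$ and $\flt \subq \flt' \subq \dom(\G)\cup\dom(\Sigma)$, then $\GS[\flt']\ts t : \ty[q]{T}\ \EPS$. This is a routine induction on typing derivations, since every rule constrains the filter only positively, each side condition having the shape $\cdot \subq \flt$ (as in \rulename{t-var}, \rulename{t-loc}, \rulename{t-abs}, and \rulename{t-sub}), and the filter used for function bodies in \rulename{t-abs} is pinned to $q, x$ and hence untouched; enlarging the ambient filter can only preserve these conditions. Granting it, I weaken $\GS[p]\ts v : \ty[p]{S}\ \PURE$ to $\GS[q]\ts v : \ty[p]{S}\ \PURE$ using $p \subq q$ (with $q \subq \dom(\G)\cup\dom(\Sigma)$ supplied by \Cref{lem:has_type_filter} applied to the original judgment together with well-scopedness of $\flt$), rebuild the qualified subtyping $\GS\ts \ty[p]{S}\ \PURE <: \ty[q]{T}\ \PURE$ from $S <: T$, $p <: q$, and the trivial $\PURE <: \PURE$ (by \rulename{q-sub}), and close with \rulename{t-sub} under filter $q$, whose side condition $q,\PURE \subq q$ holds. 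The main obstacle is therefore not any individual case but having the observation-weakening lemma in place; once it is, the subsumption case, and hence the whole statement, follow mechanically.
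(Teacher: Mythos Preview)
Your proposal is correct and follows the same inductive approach the paper indicates (``by induction over the respective typing derivations''). Your detailed case analysis is accurate, and your identification of filter weakening as the crux of the \rulename{t-sub} case is right; the paper has exactly this as part of \Cref{lem:weakening} (which includes $\flt' \supseteq \flt$), though it is stated after \Cref{lem:values_tight} in the text---there is no circularity, since the proof of \Cref{lem:weakening} does not rely on \Cref{lem:values_tight}.
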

\noindent It is easy to see that any observation for a function \(\lambda x.t\)  will at least track
the free variables of the body \(t\).

\subsubsection{Weakening and Narrowing Lemmas}\label{sec:maybe:weaken_narrow}
The \maybelang calculus has standard weakening and narrowing lemmas.
\begin{lemma}[Subtyping Weakening]\label{lem:subtyping_weakening}\ \vspace{-5pt}
    \infrule{%
    \csx{\G}{\Sigma}\ts p <: q \qquad  \G' \supseteq \G\qquad \Sigma' \supseteq \Sigma
    }{
        \csx{\G'}{\Sigma'}\ts p <: q
    }
    \infrule{%
    \csx{\G}{\Sigma}\ts S <: T \qquad  \G' \supseteq \G\qquad \Sigma' \supseteq \Sigma
    }{
        \csx{\G'}{\Sigma'}\ts S <: T
    }
    \infrule{%
    \csx{\G}{\Sigma}\ts \ty[p]{S}\ \EPS[1] <: \ty[q]{T}\ \EPS[2] \qquad  \G' \supseteq \G\qquad \Sigma' \supseteq \Sigma
    }{
        \csx{\G'}{\Sigma'}\ts \ty[p]{S}\ \EPS[1] <: \ty[q]{T}\ \EPS[2]
    }
\end{lemma}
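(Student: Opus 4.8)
The plan is to prove all three claims simultaneously by mutual induction on the structure of the three subtyping derivations, proceeding by case analysis on the last rule applied. The three judgments are interdependent — \rulename{s-fun} appeals to qualified subtyping and \rulename{sqe-sub} appeals in turn to ordinary and qualifier subtyping — so a single simultaneous induction is the natural framing. Throughout I would fix the extensions $\Gamma' \supseteq \Gamma$ and $\Sigma' \supseteq \Sigma$ once and for all, and record at the outset the domain monotonicity $\DOM(\Gamma) \cup \DOM(\Sigma) \subq \DOM(\Gamma') \cup \DOM(\Sigma')$, which is the only place the extension hypothesis is used in an essential way.

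For qualifier subtyping the sole rule is \rulename{q-sub}, whose premise is $p \subq q \subq \DOM(\Gamma) \cup \DOM(\Sigma)$; chaining this with the domain-monotonicity observation yields $p \subq q \subq \DOM(\Gamma') \cup \DOM(\Sigma')$, so \rulename{q-sub} re-derives the goal under the extended context. The ordinary-subtyping base cases \rulename{s-base} and \rulename{s-ref} are immediate, since those rules carry no context-dependent premises and re-apply verbatim. For qualified subtyping \rulename{sqe-sub}, I would simply invoke the induction hypotheses on its three premises — ordinary subtyping $S <: T$, qualifier subtyping $p <: q$, and subeffecting $\EPS[1] <: \EPS[2]$ (the last being an instance of qualifier subtyping, since effects are just qualifiers) — and reassemble with \rulename{sqe-sub}.

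The one case that needs a moment's care is \rulename{s-fun}. Its contravariant premise on the domain, $\ty[q]{U}\ \PURE <: \ty[o]{S}\ \PURE$, lives in the same context $\Gamma \mid \Sigma$, so the induction hypothesis applies directly with the given extensions. Its covariant premise on the codomain, however, is typed in the extended context $\Gamma, x : \ty[p]{U} \mid \Sigma$; to apply the induction hypothesis here I need $(\Gamma', x : \ty[p]{U})$ and $\Sigma'$ to extend $(\Gamma, x : \ty[p]{U})$ and $\Sigma$. This holds provided the bound variable $x$ is fresh for $\Gamma'$, which I may assume without loss of generality by the usual variable convention ($\alpha$-renaming $x$ away from $\DOM(\Gamma')$). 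Given this, the induction hypothesis supplies both premises under $\Gamma' \mid \Sigma'$, and \rulename{s-fun} reassembles the goal.

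I expect no genuine obstacle here: the argument is a routine weakening proof, the only subtlety being the bookkeeping for the freshly bound variable in \rulename{s-fun} and the discipline of carrying all three mutually recursive judgments through one simultaneous induction. No substitution or qualifier-saturation reasoning is required, since weakening only grows the context and never alters the qualifiers, effects, or types in play.
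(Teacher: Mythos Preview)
Your proposal is correct and follows essentially the same approach as the paper: the paper notes that qualifier subtyping weakening is trivial from its definition (your \rulename{q-sub} case) and handles the remaining two judgments by mutual induction over the derivations, which is exactly what you spell out case by case. Your treatment is more detailed than the paper's terse proof sketch, but the underlying argument is the same.
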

\begin{proof}
    Weakening on qualifier subtyping trivially follows from its definition. The others are proved by mutual induction over the respective derivations.
\end{proof}
\begin{lemma}[Weakening]\label{lem:weakening}\ \vspace{-8pt}
    \infrule{%
    \csx[\flt]{\G}{\Sigma}\ts t : \ty[q]{T}\ \EPS \qquad \G' \supseteq \G\qquad \Sigma' \supseteq \Sigma\qquad \flt' \supseteq \flt
    }{
        \csx[\flt']{\G'}{\Sigma'}\ts t : \ty[q]{T}\ \EPS
    }
\end{lemma}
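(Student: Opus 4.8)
The plan is to proceed by induction on the derivation of $\GS[\flt]\ts t : \ty[q]{T}\ \EPS$, rebuilding the \emph{same} rule instance over the enlarged data $\Gamma' \supseteq \Gamma$, $\Sigma' \supseteq \Sigma$, $\flt' \supseteq \flt$ and appealing to the induction hypothesis on each subderivation. The one structural fact I set up first is that \emph{saturation is stable under well-scoped extension}: if a qualifier $r$ mentions only $\dom(\Gamma)\cup\dom(\Sigma)$, then $\qsat{r}$ computed in $\Gamma,\Sigma$ coincides with $\qsat{r}$ computed in $\Gamma',\Sigma'$. This holds because $\reaches$ is read off the (unchanged) bindings of the atoms already present, and well-scopedness forbids those bindings from pointing into the freshly added atoms. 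I will also use monotonicity of saturation, $p \subq \flt \Rightarrow \qsat{p}\subq\qsat{\flt}$, which is immediate from its definition as a union.

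The leaf rules are immediate: \rulename{t-cst} is context-independent, and for \rulename{t-var}, \rulename{t-loc} the membership $x{:}\ty[q]{T}\in\Gamma$ (resp.\ $\ell{:}\ty[q]{T}\in\Sigma$) and the side conditions $x\subq\flt$ (resp.\ $\ell\subq\flt$) survive $\Gamma\subseteq\Gamma'$, $\Sigma\subseteq\Sigma'$, $\flt\subq\flt'$. The congruence rules \rulename{t-app}, \rulename{t-ref}, \rulename{t-!}, \rulename{t-:=} follow by applying the IH to each premise under the common enlarged $\Gamma',\Sigma',\flt'$ and reassembling; filter side conditions such as $r\subq\flt,x$ in \rulename{t-app} only become easier as $\flt$ grows, and the saturated domain qualifier $\qsat{p}\cap\qsat{q}$ carried by $t_1$'s function type is recomputed to the same value by saturation stability. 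For \rulename{t-sub} I apply the IH to the typing premise and Subtyping Weakening (\lemref{lem:subtyping_weakening}) to the subtyping premise. In \rulename{t-abs} the body's observation is fixed to $q,x$, hence untouched by enlarging the outer $\flt$; I apply the IH to the body under $\Gamma', x{:}\ty[p]{T}$ (choosing $x$ fresh for $\Gamma'$ by $\alpha$-renaming if necessary) with unchanged body filter, and discharge $q\subq\flt'$ from $q\subq\flt\subq\flt'$.

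The only case that is not pure bookkeeping is \rulename{t-let}, because the let-binder's qualifier $\qsat{p}\cap\qsat{\flt}$ \emph{mentions the filter}: naively weakening to $\flt'$ would record $x{:}\ty[\qsat{p}\cap\qsat{\flt'}]{S}$, a strictly larger and thus super-typed assumption, and widening a context assumption is in general unsound. The resolution is the Observability Invariant (\lemref{lem:has_type_filter}): the first premise $\GS[\flt]\ts t_1:\ty[p]{S}\ \EPS[1]$ forces $p\subq\flt$, so by monotonicity $\qsat{p}\subq\qsat{\flt}$ and therefore $\qsat{p}\cap\qsat{\flt}=\qsat{p}$; since moreover $p\subq\flt\subq\flt'$, we likewise get $\qsat{p}\cap\qsat{\flt'}=\qsat{p}$. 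Both filters thus assign $x$ the \emph{same} qualifier $\qsat{p}$, and by saturation stability this value is independent of whether it is computed in $\Gamma$ or $\Gamma'$. The second premise is then a genuine weakening from $\csx[\flt,x]{\Gamma, x:\ty[\qsat{p}]{S}}{\Sigma}$ to $\csx[\flt',x]{\Gamma', x:\ty[\qsat{p}]{S}}{\Sigma'}$, to which the IH applies, and reassembling with \rulename{t-let} yields the goal.

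I expect \rulename{t-let} to be the crux: the apparent need to widen the let-binder's qualifier as the filter grows, and its collapse to $\qsat{p}$ via the Observability Invariant, is the only place where real reasoning (rather than re-derivation) is required. The remaining delicacy is purely hygienic, namely keeping the bound variable fresh for the enlarged $\Gamma'$ in \rulename{t-abs} and \rulename{t-let}, which I discharge with the usual $\alpha$-conversion convention.
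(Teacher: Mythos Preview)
Your proposal is correct and follows the same approach as the paper, which records the proof merely as ``By induction over the term typing derivation, using \lemref{lem:subtyping_weakening} where appropriate.'' Your careful treatment of \rulename{t-let}---invoking the Observability Invariant to collapse the filter-dependent binder qualifier $\qsat{p}\cap\qsat{\flt}$ to $\qsat{p}$, together with saturation stability under well-scoped context extension---is a sound elaboration of a step the paper's one-line sketch leaves implicit.
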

\begin{proof}
    By induction over the term typing derivation, using \Cref{lem:subtyping_weakening} where appropriate.
\end{proof}
\begin{lemma}[Subtyping Narrowing]\label{lem:subtyping_narrowing}\ \vspace{-8pt}
\infrule{%
\csx{\G, x : \ty[p]{V},\G'}{\Sigma}\ts q <: r \qquad \GS\ts \ty[o]{U}\ \EPS[1] <: \ty[p]{V}\ \EPS[2]
}{
    \csx{\G, x : \ty[o]{U},\G'}{\Sigma}\ts q <: r
}
\infrule{%
\csx{\G, x : \ty[p]{V},\G'}{\Sigma}\ts S <: T \qquad \GS\ts \ty[o]{U}\ \EPS[1] <: \ty[p]{V}\ \EPS[2]
}{
    \csx{\G, x : \ty[o]{U},\G'}{\Sigma}\ts S <: T
}
\infrule{%
\csx{\G, x : \ty[p]{V},\G'}{\Sigma}\ts \ty[q]{S}\ \EPS[3] <: \ty[r]{T}\ \EPS[4] \qquad \GS\ts \ty[o]{U}\ \EPS[1] <: \ty[p]{V}\ \EPS[2]
}{
    \csx{\G, x : \ty[o]{U},\G'}{\Sigma}\ts \ty[q]{S}\ \EPS[3] <: \ty[r]{T}\ \EPS[4]
}
\end{lemma}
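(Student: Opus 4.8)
The plan is to prove all three implications simultaneously by mutual induction on the subtyping derivation in the first premise, splitting on the last rule applied (qualifier subtyping, ordinary-type subtyping, or qualified-type subtyping). The guiding observation, which makes the whole argument go through uniformly, is that \emph{none} of the subtyping rules in \Cref{fig:maybe:syntax} ever inspects the \emph{type} assigned to a context variable: they depend on the context only through its domain (via the well-scopedness side condition $p\subq q\subq\dom(\G)\cup\dom(\Sigma)$ in \rulename{q-sub}) and, in the function case \rulename{s-fun}, through a context extension with the function's bound variable. Since narrowing $x:\ty[p]{V}$ to $x:\ty[o]{U}$ leaves the domain $\dom(\G, x:\ty[p]{V},\G')$ unchanged, every domain-level side condition transfers verbatim.

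The leaf rules are then immediate. For \rulename{q-sub} the premise $p\subq q\subq\dom(\G)\cup\dom(\Sigma)$ holds unchanged in the narrowed context, since narrowing preserves the domain; \rulename{s-base} and \rulename{s-ref} carry no context dependency at all. The qualified-subtyping rule \rulename{sqe-sub} merely forwards to its three components, so I would discharge it by applying the ordinary-type induction hypothesis to its type premise $S<:T$, and the qualifier induction hypothesis to both its qualifier premise $p<:q$ and its sub-effect premise $\EPS[1]<:\EPS[2]$ (effects are qualifiers, hence governed by the very same relation).

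The one case requiring care is \rulename{s-fun}. Its contravariant domain premise is checked in the same context, so the qualified-subtyping induction hypothesis applies directly. Its covariant codomain premise is checked in the context extended by the function's bound variable; after $\alpha$-renaming that bound variable to be fresh with respect to the narrowed $x$, I would apply the qualified-subtyping induction hypothesis in the extended context, which requires the narrowing hypothesis $\GS\ts \ty[o]{U}\ \EPS[1] <: \ty[p]{V}\ \EPS[2]$ to be available there. I obtain it by Subtyping Weakening (\lemref{lem:subtyping_weakening}), since the extended context is a superset of $\GS$. I expect this freshness-and-weakening bookkeeping in the function case to be the only genuine obstacle. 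In fact, because subtyping never consults the bound type, the narrowing hypothesis is ultimately not \emph{consumed} by any rule; it is threaded through (and weakened across the binder) solely to re-establish well-scopedness and to keep the statement uniform with the term-level narrowing lemma that follows.
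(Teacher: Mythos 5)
Your proposal is correct and takes essentially the same approach as the paper, which proves the lemma by mutual induction over the respective subtyping derivations; your elaboration of the key observation (subtyping depends on the context only through its domain, which narrowing preserves) and of the \rulename{s-fun} binder case fills in exactly the details the paper leaves implicit. One minor simplification: since the narrowing hypothesis is stated in the prefix context $\GS$ and the new binder in \rulename{s-fun} is appended to $\G'$, the induction hypothesis applies directly without weakening the narrowing hypothesis across the binder.
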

\begin{proof}
    By mutual induction over the respective derivations.
\end{proof}
\begin{lemma}[Narrowing]\label{lem:narrowing}\ \vspace{-8pt}
    \infrule{%
    \csx[\flt]{\G, x : \ty[{p}]{V},\G'}{\Sigma}\ts t : \ty[q]{T}\ \EPS[1] \qquad \GS\ts \ty[o]{U}\ \EPS[2] <: \ty[p]{V}\ \FX{\EPS[3]}
    }{
        \csx[\flt]{\G, x : \ty[o]{U},\G'}{\Sigma}\ts t : \ty[q]{T}\ \EPS[1]
    }
\end{lemma}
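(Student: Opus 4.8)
The plan is to induct on the derivation of the first premise, which I rename to $\csx[\flt]{\G, z : \ty[p]{V}, \G'}{\Sigma}\ts t : \ty[q]{T}\ \EPS[1]$, using $z$ for the narrowed variable to avoid clashing with the bound variables $x$ of \rulename{t-abs}, \rulename{t-app}, and \rulename{t-let}. Before the case analysis I would invert the second premise: since qualified subtyping is derivable only by \rulename{sqe-sub}, the hypothesis $\GS\ts\ty[o]{U}\ \EPS[2] <: \ty[p]{V}\ \FX{\EPS[3]}$ decomposes into an ordinary subtyping $\GS\ts U <: V$, a qualifier inclusion $o\subq p$, and a sub-effect $\EPS[2]\subq\EPS[3]$. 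The inclusion $o\subq p$ is the crucial fact: replacing $z:\ty[p]{V}$ by $z:\ty[o]{U}$ can only \emph{shrink} the reachability edges leaving $z$ (recall $z\reaches w$ iff $w$ lies in $z$'s qualifier), so every saturation $\qsat{\cdot}$ recomputed in the narrowed context is a subset of its original counterpart. All three component judgments live in $\GS$ and can be transported into any larger context by \lemref{lem:subtyping_weakening}.

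The two load-bearing cases are \rulename{t-var} and \rulename{t-sub}. For \rulename{t-var} applied to $z$, the original derivation yields $\ty[z]{V}\ \PURE$; in the narrowed context the lookup instead produces $z:\ty[z]{U}\ \PURE$, from which I recover the required $\ty[z]{V}\ \PURE$ by one use of \rulename{t-sub}, discharging its subtyping obligation through \rulename{sqe-sub} from the weakened $U <: V$ and reflexivity of the qualifier $\{z\}$; the side condition $z\subq\flt$ is inherited from the original premise. Lookups of variables other than $z$, as well as \rulename{t-cst} and \rulename{t-loc}, are immediate since the touched binding is unchanged. For \rulename{t-sub}, I apply the induction hypothesis to the typing premise and then re-establish its subtyping premise in the narrowed context via \lemref{lem:subtyping_narrowing} — this is precisely why subtyping narrowing is proved first.

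The congruence cases reassemble after recursion. For \rulename{t-abs} I invoke the induction hypothesis on the body, having first weakened the decomposed subtyping hypothesis into the context extended by the function argument; the observation shifts to $q,x$, but the statement is parametric in $\flt$, so this is harmless. The reference rules \rulename{t-ref}, \rulename{t-!}, and \rulename{t-:=} only read off \emph{literal} qualifiers for their induced effects, so they follow directly from the induction hypotheses on the subterms.

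I expect the main obstacle to be the interaction of narrowing with the \emph{context-dependent saturations} in \rulename{t-app} and \rulename{t-let}, the only two rules whose premises mention $\qsat{\cdot}$. In \rulename{t-app} the function's domain qualifier is $\qsat{p}\overlap\qsat{q}$; after narrowing this target shrinks to some $\qsat{p}'\overlap\qsat{q}'\subq\qsat{p}\overlap\qsat{q}$, whereas the induction hypothesis hands me $t_1$ at the \emph{old}, larger domain qualifier. I bridge the gap with one subsumption step, exploiting the contravariance of \rulename{s-fun} in the domain: because the new domain qualifier is the smaller set, the function type with the old domain is a subtype of the one with the new domain, and \rulename{t-app} then applies. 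In \rulename{t-let} the bound variable's qualifier $\qsat{p}\overlap\qsat{\flt}$ shrinks likewise; after using the induction hypothesis to narrow $z$ in the body while keeping $x$ at its old qualifier, I must additionally narrow $x$ down to its smaller qualifier, which is sound because the smaller-qualified type is a subtype. Making this second, nested narrowing go through cleanly is the real technical crux: it forces either strengthening the statement to narrow a \emph{suffix} of the context simultaneously, or an auxiliary monotonicity lemma stating that shrinking one binding's qualifier shrinks all saturations. With that lemma in hand, the remaining bookkeeping is routine.
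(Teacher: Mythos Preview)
Your proposal is correct and follows exactly the paper's approach---induction on the term-typing derivation, invoking \lemref{lem:subtyping_narrowing} in the \rulename{t-sub} case---which the paper states in a single sentence without elaborating the individual cases. Your identification of the saturation-shrinkage complication in \rulename{t-app} and \rulename{t-let}, together with the proposed remedies (contravariant subsumption on the function domain, and either a strengthened induction hypothesis or a saturation-monotonicity lemma for the nested narrowing of the let-bound variable), is accurate and goes beyond the level of detail the paper provides.
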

\begin{proof}
    By induction over the term-typing derivation, using \Cref{lem:subtyping_narrowing} where appropriate.
\end{proof}

\subsubsection{Substitution Lemmas}

We consider type soundness for closed terms and apply ``top-level'' substitutions, \ie, substituting
closed values with qualifiers that do not contain term variables, but only store locations.  The
proof of the substitution lemma  critically relies on the distributivity of substitution and the
qualifier intersection operator for checking overlap, which is required to proceed in the
\rulename{t-app} case:

\begin{lemma}[Top-Level Substitutions Distribute with Overlap]\label{lem:subst_commutes_overlap}\ \vspace{-8pt}
\infrule{ x: \ty[q]{T}\in\G\quad \theta = [p/x] \quad p,q\subq\dom(\Sigma) \quad p \qglb \flt \subq q\quad {r},{r'}\subq\flt\quad r = \qsat{r}\quad r' = \qsat{r'}}{({r} \overlap {r'})\theta = {r}\theta\overlap {r'}\theta}
\end{lemma}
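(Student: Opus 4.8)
The plan is to prove the set identity $(r \overlap r')\theta = r\theta \overlap r'\theta$ directly, by unfolding the definition of qualifier substitution $\theta = [p/x]$ from \Cref{fig:saturation_overlap} and splitting into cases on whether the substituted variable $x$ occurs in $r$ and in $r'$. The only fact about substitution I need is that $s[p/x]$ equals $(s \setminus \{x\}) \cup p$ when $x \in s$ and is the identity on $s$ otherwise; the rest is set algebra, with one containment that genuinely requires the side conditions.

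In the two ``symmetric'' cases the identity reduces to routine set calculation and uses none of the hypotheses. If $x \notin r$ and $x \notin r'$, then $x \notin r \overlap r'$, so all three substitutions act as the identity and both sides equal $r \overlap r'$. If $x \in r$ and $x \in r'$, abbreviate $A = r \setminus \{x\}$ and $B = r' \setminus \{x\}$; then the right-hand side is $(A \cup p) \cap (B \cup p) = (A \cap B) \cup p$, which equals $((r \overlap r') \setminus \{x\}) \cup p = (r \overlap r')\theta$ since $x \in r \overlap r'$.

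The crux is the asymmetric case, say $x \in r$ but $x \notin r'$ (the reverse case is symmetric, because $\overlap$ is symmetric and the hypotheses $r,r' \subq \flt$ and $r = \qsat{r}$, $r' = \qsat{r'}$ are symmetric in $r$ and $r'$). Here $x \notin r \overlap r'$, so the left-hand side is simply $r \overlap r'$, while the right-hand side computes to $\big((r \setminus \{x\}) \cup p\big) \cap r' = (r \overlap r') \cup (p \cap r')$, where I used $x \notin r'$ to drop the removal of $x$. Thus the entire lemma collapses to the containment $p \cap r' \subq r \overlap r'$, i.e. that substituting $p$ for $x$ cannot introduce spurious overlap. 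I expect this to be the main obstacle, and it is exactly where the side conditions earn their keep: without the filter bound, an element of $p \cap r'$ lying outside $r$ would survive on the right-hand side yet be absent on the left, so the identity would fail.

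To discharge $p \cap r' \subq r \overlap r'$ I would chain two observations. First, since $r' \subq \flt$ and $p \qglb \flt \subq q$, we get $p \cap r' \subq p \cap \flt \subq q$. Second, from $x \in r$ and saturation $r = \qsat{r}$ we have $\qsat{x} \subq \qsat{r} = r$, and because $x : \ty[q]{T} \in \G$ gives $x \reaches \ell$ for every $\ell \in q$, we obtain $q \subq \qsat{x} \subq r$. Combining the two, $p \cap r' \subq q \subq r$, and trivially $p \cap r' \subq r'$, hence $p \cap r' \subq r \overlap r'$. This reduces the right-hand side to $r \overlap r'$, matching the left, and closes the case. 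The conditions $p,q \subq \dom(\Sigma)$ place us in the intended top-level substitution regime (so that $x \notin q$ and all reachability/saturation reasoning over $q$ stays within the store typing); they are consistent with, but not essential to, the algebraic core of the argument above.
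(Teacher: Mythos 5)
Your proof is correct and follows exactly the line the paper takes: the paper gives no detailed derivation for this lemma, but its accompanying discussion singles out the same asymmetric case ($x$ occurring in exactly one of $r$, $r'$) as the only nontrivial one and attributes its resolution to the same two ingredients you use, namely the observability bound $p \qglb \flt \subq q$ combined with saturation of the set containing $x$ to conclude $q \subq \qsat{x} \subq r$. The remaining cases are pure set algebra, as you observe, so nothing is missing.
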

\noindent Qualifier substitution does not generally distribute with set intersection, due to the problematic
case when the substituted variable \(x\) occurs in only one of the saturated sets \({r}\)
and \({r'}\). Distributivity holds if (1) we ensure that what is observed about the qualifier \(p\) we substitute for \(x\)
is bounded by what the context observes about \(x\), \ie, \(p \qglb \flt\subq q\)
for \(x : \ty[q]{T}\in\G\), and (2) \(p,q\) are top-level as above.
Furthermore, we require that the intersected qualifiers \(r\) and \(r'\) are reachability saturated, which is given in the context
of \rulename{t-app}.

\begin{lemma}[Top-Level Substitution for Qualifier/Effect Subtyping]\label{lem:subst_quals}\ \vspace{-8pt}
\infrule{%
  \csx{\G,x:\ty[q]{T}}{\Sigma} \ts p <: r \qquad
  q,q'\subseteq \DOM(\Sigma) \qquad
  \theta = [q'/x] \qquad
  \WF{\csx{\G, x : \ty[q]{T}}{\Sigma}}
}{
  \csx{\G\theta}{\Sigma} \ts p\theta <: r\theta
}
\end{lemma}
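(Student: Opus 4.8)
The plan is to argue directly from the single defining rule of qualifier subtyping, \rulename{q-sub}, with no induction, since that judgment is characterized entirely by a subset inclusion together with a well-scopedness side condition. Inverting the premise $\Gamma, x : \ty[q]{T} \mid \Sigma \ts p <: r$ through \rulename{q-sub} yields exactly $p \subseteq r$ and $r \subseteq \DOM(\Gamma) \cup \{x\} \cup \DOM(\Sigma)$. The goal is to re-derive $p\theta <: r\theta$ by a fresh application of \rulename{q-sub} in the smaller context $\Gamma\theta \mid \Sigma$, so I must discharge two obligations: the inclusion $p\theta \subseteq r\theta$, and the well-scopedness $r\theta \subseteq \DOM(\Gamma\theta) \cup \DOM(\Sigma)$.

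For the inclusion, I would first establish the auxiliary fact that qualifier substitution $(\cdot)\theta$ is monotone on finite sets of variables and locations, i.e.\ $A \subseteq B$ implies $A\theta \subseteq B\theta$. This follows by a short case analysis on whether $x \in A$ and $x \in B$, unfolding the definition $S[q'/x] = (S \setminus \{x\}) \cup q'$ when $x \in S$ and $S[q'/x] = S$ otherwise (\Cref{fig:saturation_overlap}); the only point to note is that $x \in A$ together with $A \subseteq B$ forces $x \in B$, so the parts replaced by $q'$ coincide while the residual parts stay nested. Instantiating monotonicity at $A := p$, $B := r$ immediately gives $p\theta \subseteq r\theta$.

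For well-scopedness I would again split on whether $x \in r$. If $x \notin r$, then $r\theta = r$ and $r$ already lies in $\DOM(\Gamma) \cup \DOM(\Sigma) = \DOM(\Gamma\theta) \cup \DOM(\Sigma)$, using that substitution rewrites only the qualifiers recorded inside $\Gamma$ and leaves its set of bound variables unchanged, so $\DOM(\Gamma\theta) = \DOM(\Gamma)$. If $x \in r$, then $r\theta = (r \setminus \{x\}) \cup q'$; here $r \setminus \{x\} \subseteq \DOM(\Gamma) \cup \DOM(\Sigma)$ by the inverted premise, and the hypothesis $q' \subseteq \DOM(\Sigma)$ supplies the remaining inclusion, whence $r\theta \subseteq \DOM(\Gamma\theta) \cup \DOM(\Sigma)$. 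Re-applying \rulename{q-sub} with these two facts concludes the argument.

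I do not expect any genuine difficulty; the lemma is essentially bookkeeping. The one place that uses a hypothesis in an essential way is the well-scopedness step, where dropping $x$ from the context is sound only because the replacement $q'$ is a \emph{top-level} qualifier, i.e.\ $q' \subseteq \DOM(\Sigma)$ mentions only store locations and no variable that could fall out of scope once $x$ is removed. The well-formedness premise $\WF{\Gamma, x : \ty[q]{T} \mid \Sigma}$ and the bound qualifier $q$ play no role in this particular argument and are carried only for uniformity with the companion substitution lemmas (e.g.\ \Cref{lem:subst_commutes_overlap}).
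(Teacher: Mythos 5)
Your proof is correct and matches the paper's argument, which is exactly the one-liner that qualifier subtyping is the subset relation by definition and substitution is monotone with respect to $\subseteq$; you merely spell out the monotonicity case analysis and the well-scopedness side condition of \rulename{q-sub} that the paper leaves implicit. Your observation that $q' \subseteq \DOM(\Sigma)$ is what makes dropping $x$ from the context sound is the right one.
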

\begin{proof}
  By the fact that substitution is monotonic w.r.t.\ subset inclusion
  \(\subseteq\) and qualifier/effect subtyping being that relation by
  definition.
\end{proof}

\begin{lemma}[Top-Level Substitution for Subtyping]\label{lem:subst_subtyping}\ \vspace{-5pt}
    \infrule{%
      \csx{\G,x:\ty[q]{S}}{\Sigma} \ts T <: U\qquad q,p\subseteq \DOM(\Sigma)\qquad\theta = [p/x]
    }{
      \csx{\G\theta}{\Sigma} \ts T\theta <: U\theta
    }
    \infrule{%
      \csx{\G,x:\ty[q]{S}}{\Sigma} \ts \ty[p]{T}\ \EPS[1] <: \ty[q]{U}\ \EPS[2]\qquad q,p\subseteq \DOM(\Sigma)\qquad\theta = [p/x]
    }{
      \csx{\G\theta}{\Sigma} \ts \ty[p]{T}\theta\ \FX{\EPS[1]\bm{\theta}} <: \ty[q]{U}\theta\ \FX{\EPS[2]\bm{\theta}}
    }
\end{lemma}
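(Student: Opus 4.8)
The plan is to prove both statements simultaneously by mutual induction on the derivations of ordinary subtyping $\G, x:\ty[q]{S}\mid\Sigma \ts T <: U$ and qualified subtyping $\G, x:\ty[q]{S}\mid\Sigma \ts \ty[p]{T}\ \EPS[1] <: \ty[q]{U}\ \EPS[2]$, since these two judgments are mutually defined: \rulename{s-fun} appeals to qualified subtyping in its premises, while \rulename{sqe-sub} appeals back to ordinary subtyping. Before starting the induction I would strengthen the statement to allow the substituted binding to sit anywhere in the context, i.e.\ generalize to contexts of the form $\G, x:\ty[q]{S}, \G'\mid\Sigma$ with conclusion drawn in $\G\theta, \G'\theta\mid\Sigma$. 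This generalization is exactly what makes the function case go through, since there the substitution is pushed underneath a binder and the substituted variable $x$ ceases to be the last entry of the context.

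For ordinary subtyping, the leaf cases \rulename{s-base} and \rulename{s-ref} are immediate: base types and $\TRef~B$ (whose referent is a base type) carry no qualifiers, so $T\theta = T$ and the very same derivation re-typechecks in $\G\theta, \G'\theta\mid\Sigma$. The only substantial ordinary case is \rulename{s-fun}. Here I would apply the induction hypothesis to the contravariant domain premise $\ty[q]{U}\ \PURE <: \ty[o]{S}\ \PURE$ at the unextended context, and to the covariant codomain premise $\ty[q]{T}\ \EPS[1] <: \ty[r]{V}\ \EPS[2]$ at the context extended by the function's bound variable $y:\ty[p]{U}$ (using the generalized hypothesis with $\G' = y:\ty[p]{U}$), and then reassemble the result with \rulename{s-fun}. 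Pushing $\theta$ under the function binder is sound precisely because $p$ is top-level: the side condition $p\subseteq\DOM(\Sigma)$ forces $p$ to contain only store locations and no term variables, so $p$ neither mentions nor is captured by $y$, and substitution therefore commutes both with the homomorphic extension to the function type and with the context extension.

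For qualified subtyping the sole rule is \rulename{sqe-sub}, whose premises are an ordinary subtyping $S <: T$, a qualifier subtyping $p <: q$, and a subeffecting $\EPS[1] <: \EPS[2]$. I would discharge the ordinary premise via the mutual induction hypothesis, and both the qualifier and the effect premises via \Cref{lem:subst_quals}, which applies uniformly to effects since effects are just qualifiers and use the same subtyping relation. Recombining with \rulename{sqe-sub} then yields $\ty[p]{T}\theta\ \EPS[1]\theta <: \ty[q]{U}\theta\ \EPS[2]\theta$ in $\G\theta\mid\Sigma$. The hypotheses $q,p\subseteq\DOM(\Sigma)$ and the ambient well-formedness feed directly into the invocations of \Cref{lem:subst_quals}.

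The main obstacle I anticipate is the \rulename{s-fun} case, and specifically the bookkeeping of the context extension beneath the binder: making the generalized induction hypothesis precise enough that $\theta$ distributes correctly over the extended context $\G, x:\ty[q]{S}, y:\ty[p]{U}$ and over the dependent codomain qualifier and type, and confirming that top-levelness of $p$ excludes any variable-capture interaction with $y$. Everything else reduces to the already-established distribution of substitution over qualifier and effect subtyping (\Cref{lem:subst_quals}) together with routine reassembly of derivations.
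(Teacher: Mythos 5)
Your proposal is correct and follows essentially the same route as the paper, which also proceeds by mutual induction over the two subtyping derivations and discharges the qualifier and effect premises via \Cref{lem:subst_quals}. The generalization to a context of the form $\G, x:\ty[q]{S}, \G'$ and the observation that top-levelness of $p$ lets the substitution pass under the binder in \rulename{s-fun} are exactly the details the paper's one-line proof leaves implicit.
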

\begin{proof}
By mutual induction over the respective subtyping derivations, using \Cref{lem:subst_quals} where appropriate.
\end{proof}
In the type preservation proof, $\beta$-reduction substitutes a function parameter for some value,
which requires a carefully formulated substitution lemma:
\begin{lemma}[Top-Level Term Substitution]\label{lem:subst_term}\ \vspace{-8pt}
\infrule{
   \csx[\flt]{\G,x:\ty[p\overlap r]{S}}{\Sigma} \ts t : \ty[q]{T}\ \EPS\qquad \csx[p]{\varnothing}{\Sigma}\ts v : \ty[p]{S}\ \PURE\qquad \theta = [p/x] \\[1ex]
    p\subq\dom(\Sigma)\qquad p \qglb \flt \subq p\overlap r
    }{
        \csx[\flt\theta]{\Gamma\theta}{\Sigma} \ts t[v/x] : (\ty[q]{T}\ \FX{\EPS})\theta
     }
\end{lemma}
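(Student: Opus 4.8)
The plan is to prove the statement by induction on the typing derivation of $t$, after first generalizing it to permit an arbitrary context suffix to the right of $x$. Concretely, I would replace the hypothesis context $\G, x : \ty[p\overlap r]{S}$ by $\G, x : \ty[p\overlap r]{S}, \G'$ and also apply $\theta$ to $\G'$ in the conclusion; the displayed lemma is then the special case $\G' = \varnothing$. This generalization is forced by the binder cases \rulename{t-abs} and \rulename{t-let}, which extend the context on the right of $x$. The single structural fact I would lean on throughout is that the substitution is \emph{top-level}: since $v$ is typed in the empty term context $\varnothing$, the premise $p\subq\dom(\Sigma)$ tells us $p$ consists solely of store locations, so $\theta$ is a renaming of $x$ to locations that never clashes with a bound term variable.

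The routine cases are dispatched uniformly. For \rulename{t-cst} and \rulename{t-loc}, $t[v/x]=t$ and $\theta$ fixes both $\qbot$ and location qualifiers, so the same rule reapplies. For \rulename{t-var} with a variable $y\neq x$, the stored qualifier is untouched by $\theta$, and the premise $y\subq\flt$ transports to $y\subq\flt\theta$. For the effectful forms \rulename{t-ref}, \rulename{t-!}, and \rulename{t-:=}, I apply the induction hypothesis to the subderivations and push $\theta$ through the effect composition, using that $\EFFSEQ$ is just $\cup$ and therefore commutes with substitution. The subsumption case \rulename{t-sub} applies the induction hypothesis to its term premise and invokes \Cref{lem:subst_subtyping} and \Cref{lem:subst_quals} on its subtyping premise.

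Two cases carry the real content. In \rulename{t-var} with $y=x$ we have $t[v/x]=v$ and expected type $(\ty[x]{S})\theta = \ty[p]{S}$, which I obtain by weakening (\Cref{lem:weakening}) the value typing $\csx[p]{\varnothing}{\Sigma}\ts v:\ty[p]{S}\ \PURE$ up to the target context $\Gamma\theta$ and filter $\flt\theta$; this is sound because $x\subq\flt$ yields $p = x\theta\subq\flt\theta$. The \rulename{t-app} case is the crux. After applying the induction hypothesis to the function and argument subderivations (say with function qualifier $q_1$, argument qualifier $q_2$, and parameter $z$), I must show the domain overlap survives substitution, i.e.\ $(\qsat{q_2}\overlap\qsat{q_1})\theta = \qsat{q_2}\theta\overlap\qsat{q_1}\theta$, so that the application rule can be reassembled. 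This is exactly \Cref{lem:subst_commutes_overlap}: its saturation hypotheses are supplied by the saturated qualifiers already present in \rulename{t-app}, and its decisive side condition $p\overlap\flt\subq p\overlap r$ is precisely the hypothesis given to this lemma for $x:\ty[p\overlap r]{S}$. I would additionally rearrange the dependent result substitution, turning the composite $[q_2/z]\,\theta$ into $\theta\,[q_2\theta/z]$ by the standard commutation of substitutions, valid because $z$ is fresh for $p$.

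The main obstacle, and where essentially all the genuine work resides, is threading the overlap side condition $p\overlap\flt\subq p\overlap r$ through the induction. In \rulename{t-abs} and \rulename{t-let} the filter grows to include the freshly bound variable $z$, which a priori could enlarge $p\overlap\flt$ and violate the invariant. Top-level-ness rescues it: $z$ is a term variable while $p$ holds only locations, so $p\overlap\{z\}=\varnothing$ and the intersection $p\overlap\flt$ is unchanged. Re-establishing this invariant under binders, together with \Cref{lem:subst_commutes_overlap} and the substitution-commutation fact needed in \rulename{t-app}, is the delicate part; the remainder is mechanical bookkeeping of how $\theta$ distributes over the qualifier and effect operations in each rule.
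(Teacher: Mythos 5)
Your proposal is correct and follows essentially the same route as the paper's proof: induction over the typing derivation, exploiting that the substitute $p$ consists only of store locations, with \Cref{lem:subst_commutes_overlap} carrying the \rulename{t-app} case and \Cref{lem:subst_subtyping} the \rulename{t-sub} case. The explicit generalization to a context suffix $\G'$ and the careful threading of $p \qglb \flt \subq p\overlap r$ through binders are details the paper leaves implicit under ``most cases are straightforward,'' but they match the intended argument.
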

\begin{proof}
By induction over the derivation \(\csx[\flt]{\G,x:\ty[p\overlap r]{S}}{\Sigma} \ts t : \ty[q]{T}\
\EPS\). Most cases are straightforward, exploiting that qualifier substitution is monotonous w.r.t.\
\(\subq\) and that the substitute \(p\) for \(x\) consists of store locations only.
The case \rulename{t-app} critically
requires \Cref{lem:subst_commutes_overlap} for \((p \overlap q)\theta = p\theta \overlap q\theta\)
in the induction hypothesis. The case \rulename{t-sub} requires the substitution lemma for subtyping
(\Cref{lem:subst_subtyping}).
\end{proof}
\noindent Just as in \Cref{lem:subst_commutes_overlap} above, the substitution lemma imposes the
observability condition \(p \overlap \flt \subq p\overlap r\), \ie, \(t\) observes nothing more
about \(v\)'s reachability set than its assumption about \(x\), and it is oblivious of \(p\setminus r\).
That is to say, substitution ``grows'' the parameter in \rulename{t-app} with
overlap between \(p\) and the function qualifier \(r\), growing the result
by  \(p\setminus r\), realizing implicit polymorphism over qualifiers.

\subsubsection{Main Soundness Result}\label{sec:maybesoundness}

\begin{theorem}[Progress]\label{thm:progress}
 If \(\ \csx[\DOM(\Sigma)]{\varnothing}{\Sigma}  \ts t : \ty[q]{T}\ \EPS\), then either \(t\) is a value, or
 for any store \(\sigma\) where \(\csx[\DOM(\Sigma)]{\varnothing}{\Sigma} \ts \sigma\), there exists
 a term \(t'\) and store \(\sigma'\) such that \(\sigma \mid t  \redv \sigma'\mid t'\).
\end{theorem}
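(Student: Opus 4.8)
The plan is to proceed by induction on the typing derivation of \(\csx[\DOM(\Sigma)]{\varnothing}{\Sigma} \ts t : \ty[q]{T}\ \EPS\). The crucial structural observation is that the variable environment is empty: hence the premise \(x : \ty[q']{T'}\in\varnothing\) of \rulename{t-var} is never satisfiable, that case is vacuous, and we are guaranteed never to get stuck on a dangling free variable. The three value-introduction rules \rulename{t-cst}, \rulename{t-loc}, and \rulename{t-abs} discharge immediately, since constants, store locations, and \(\lambda\)-abstractions are all values \(v\). The subsumption rule \rulename{t-sub} is also immediate: it leaves the subject term unchanged, so the induction hypothesis transfers verbatim.

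The remaining cases are the elimination and compound forms \rulename{t-app}, \rulename{t-let}, \rulename{t-ref}, \rulename{t-!}, and \rulename{t-:=}. For each I would invoke the induction hypothesis on the immediate subterms, following the left-to-right evaluation order dictated by the evaluation contexts \(E\) of \Cref{fig:directstyle:semantics}. If some subterm can itself step, I lift that step to the whole term by plugging it into the matching context --- \(E\,t_2\) and \(v\,E\) for application, \(\tlet~{x = E}~\tin~t_2\) for let, \(\tref_E~t_2\) and \(\tref_v~E\) for allocation, \(!\,E\) for dereference, and \(E := t_2\), \(v := E\) for assignment. Otherwise all the relevant subterms are already values, and I claim the corresponding redex rule fires: \rulename{$\beta$}, \rulename{let}, \rulename{ref} (picking any \(\ell\notin\DOM(\sigma)\)), \rulename{deref}, and \rulename{assign}, respectively.

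Turning ``all subterms are values'' into ``a redex fires'' requires two auxiliary facts. The first is a standard canonical-forms lemma, proved by inversion on value typing: a value of a function type is a \(\lambda\)-abstraction, a value of type \(\TRef~B\) is a location \(\ell\), and a value of type \(\Typ{Alloc}\) is the allocation capability \(\omega\); this is what guarantees that the operator in \rulename{t-app} is literally a \(\lambda\), and that the reference and allocator operands in \rulename{t-!}, \rulename{t-:=}, and \rulename{t-ref} are literally a location, respectively \(\omega\). The second fact, which I expect to be the only genuinely delicate point, concerns \rulename{deref} and \rulename{assign}: both require that the location \(\ell\) under the eliminator actually occurs in the store, so that the store can be split around \(\ell\) in the shape \(\tlets~{\ell = \tref_\omega~v}\) demanded by those rules. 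Typing alone only yields \(\ell\in\DOM(\Sigma)\); to conclude \(\ell\in\DOM(\sigma)\) I would use the well-formed-store hypothesis \(\csx[\DOM(\Sigma)]{\varnothing}{\Sigma}\ts\sigma\) together with a straightforward induction on its derivation, which establishes \(\DOM(\sigma)=\DOM(\Sigma)\) and that every location typed in \(\Sigma\) is indeed bound by a store cell in \(\sigma\). With canonical forms and this store-agreement property in hand, every compound case produces a step, completing the induction.
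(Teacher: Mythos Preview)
Your proposal is correct and follows exactly the approach the paper takes --- induction on the typing derivation --- but you have spelled out the standard machinery (vacuous variable case, canonical forms, store-agreement, lifting substeps via evaluation contexts) that the paper leaves entirely implicit in its one-line proof. There is no divergence in strategy, only in level of detail.
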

\begin{proof}
By induction over the derivation \(\csx[\DOM(\Sigma)]{\varnothing}{\Sigma}  \ts t : \ty[q]{T}\ \EPS\).
\end{proof}
\noindent Similar to \cite{DBLP:journals/pacmpl/BaoWBJHR21}, reduction preserves types up to qualifier growth
by fresh allocations:
\begin{theorem}[Preservation]\label{thm:soundness}\ \vspace{-8pt}
  \infrule{\csx[\DOM(\Sigma)]{\varnothing}{\Sigma}  \ts t : \ty[q]{T}\ \EPS\qquad \csx[\DOM(\Sigma)]{\varnothing}{\Sigma} \ts \sigma\qquad\sigma\mid t \redv \sigma' \mid t'
  }{
    \exists \Sigma' \supseteq \Sigma.\;\exists p \subq\DOM(\Sigma'\setminus\Sigma).\quad\csx[\DOM(\Sigma')]{\varnothing}{\Sigma'} \ts t' : \ty[q,p]{T}\ \FX{\EPS,p}\qquad
    \csx[\DOM(\Sigma')]{\varnothing}{\Sigma'} \ts \sigma'
  }
\end{theorem}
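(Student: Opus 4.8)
The plan is to proceed by induction on the typing derivation of $t$, with an inner case analysis that matches the shape of the evaluation context $E$ from $\sigma\mid E[\ldots]\redv\sigma'\mid E[\ldots]$ against the last typing rule used. The crucial structural observation is that evaluation contexts never descend under a binder (in $\tlet~x=E~\tin~t$ the hole sits in the bound expression, not in the body), so the hole is always typed in the \emph{same} ambient context as the whole term; at top level this means the redex is typed under $\csx[\DOM(\Sigma)]{\varnothing}{\Sigma}$ and is therefore closed. I would dispatch \rulename{t-sub} first: apply the induction hypothesis to obtain $t'$ at a grown type $\ty[p,p']{S}\ \FX{\EPS[1],p'}$ over some $\Sigma'\supseteq\Sigma$ with $p'\subq\DOM(\Sigma'\setminus\Sigma)$, then re-derive the supertype by weakening the subtyping derivation to $\Sigma'$ (\lemref{lem:subtyping_weakening}) and by monotonicity of the subset-based qualifier/effect subtyping under the common fresh $p'$, concluding $t' : \ty[q,p']{T}\ \FX{\EPS[2],p'}$. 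For each syntax-directed rule I then split on whether $E=\square$ (a head step) or $E$ is a proper context (a congruence step).

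For the \textbf{head steps} \rulename{$\beta$} and \rulename{let}, inversion on \rulename{t-app}/\rulename{t-let} (and on \rulename{t-abs} in the $\beta$ case) exposes the body $t_0$ typed under the extended assumption $x:\ty[\qsat{p}\overlap\qsat{q}]{S}$ and the value $v$ typed at $\ty[p]{S}$. Here $\Sigma'=\Sigma$ and $p=\varnothing$, so I must show that $t_0[v/x]$ inhabits exactly the result type $(\ty[r]{U}\ \EPS[3])\theta$ already recorded by \rulename{t-app} with $\theta=[p/x]$ (the value subterms contribute no effect, so the composed $\EPS[1]\EFFSEQ\EPS[2]$ vanishes). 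This is discharged by the Top-Level Term Substitution lemma (\lemref{lem:subst_term}); the real work is to verify its observability side condition $p\qglb\flt\subq p\overlap r$, which follows from the overlap annotation $\qsat{p}\overlap\qsat{q}$ on the domain together with the Observability Invariant (\lemref{lem:has_type_filter}) and Tight Observability (\lemref{lem:values_tight}) applied to $v$. The resulting type matches on the nose because \rulename{t-app} has already pushed $\theta$ through $\ty[r]{U}$ and its latent effect.

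The \textbf{head steps} for the store primitives choose $\Sigma'$ and $p$ explicitly. For \rulename{ref}, I extend $\Sigma'=\Sigma,\ell:\ty[\qbot]{\TRef~B}$ with $\ell\notin\DOM(\Sigma)$ and take $p=\{\ell\}$; the contractum $\ell$ types at $\ty[\ell]{\TRef~B}\ \PURE$ by \rulename{t-loc}, which sits below the required $\ty[q,p]{T}\ \FX{\EPS,p}$ by subsumption (the original effect already records the aliases of the allocator, and $\PURE\subq\EPS,p$), and the extended store is well-formed since $v$ is a closed base value. For \rulename{deref} I keep $\Sigma'=\Sigma$, $p=\varnothing$, and read off the typing $\ty[\qbot]{B}\ \PURE$ of the stored value from well-formedness of $\sigma$, matching the dereference's result type $\ty[\qbot]{B}$ after subsuming the effect upward. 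For \rulename{assign}, again $\Sigma'=\Sigma$, $p=\varnothing$: the contractum $\tunit$ types at $\ty[\qbot]{\TUnit}\ \PURE$ by \rulename{t-cst}, and the updated store stays well-formed because the new contents $v'$ were typed at the same base type with empty qualifier.

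The \textbf{congruence steps} recurse into the reducing subterm (the frames $E\ t$, $v\ E$, $\tlet~x=E~\tin~t$, $\tref_E~t$, $\tref_v~E$, $!E$, $E\coloneqq t$, $v\coloneqq E$), apply the induction hypothesis to obtain a fresh $\Sigma'\supseteq\Sigma$ and $p\subq\DOM(\Sigma'\setminus\Sigma)$, weaken the typings of the \emph{unreduced} subterms and of $\sigma$ up to $\Sigma'$ and filter $\DOM(\Sigma')$ via \lemref{lem:weakening}, and reassemble with the same rule, propagating the growth $p$ into the overall qualifier and effect. I expect the \emph{main obstacle} to be exactly this propagation: I must show that the side conditions of the reassembling rule—in particular the overlap check $\qsat{p}\overlap\qsat{q}$ of \rulename{t-app}, the scoping constraints $\EPS[3]\subq q,x$ and $r\subq\flt,x$, and the various observability bounds—survive the subterm's qualifier and effect gaining the fresh $p$. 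The enabling fact is that $p$ consists of \emph{freshly allocated} locations drawn from $\DOM(\Sigma'\setminus\Sigma)$: such a location reaches only itself and is reached by nothing already in $\Sigma$, so it contributes nothing to any saturated intersection and cannot manufacture spurious overlap or aliasing. This is precisely the reachability invariant that separates the argument from ordinary simply-typed preservation, and it is what ultimately justifies the qualifier-growth shape $\ty[q,p]{T}\ \FX{\EPS,p}$ of the conclusion.
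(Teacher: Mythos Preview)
Your approach is correct and matches the paper's: induction on the typing derivation, split into head versus congruence cases (the paper phrases this as ``assuming explicit congruence reduction rules''), with the substitution lemma discharging $\beta$/\textsf{let} and explicit store extension handling \textsf{ref}. The identification of the fresh-location growth as the one nontrivial invariant in the congruence cases is also on point.

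One detail you glossed over in the $\beta$ case: inversion of \rulename{t-abs} does \emph{not} directly give the body typed under $x:\ty[\qsat{p}\overlap\qsat{q}]{T}$---it gives $x:\ty[p']{T'}$ for the \emph{declared} parameter, together with subtyping facts $\ty[\qsat{p}\overlap\qsat{q}]{T} <: \ty[p']{T'}$, $q' \subq q$, and $\ty[r']{U'}\ \EPSPR <: \ty[r]{U}\ \EPS$ for the codomain. The paper bridges this by narrowing (\lemref{lem:narrowing}) the context and weakening the filter before invoking \lemref{lem:subst_term}, and then applies \rulename{t-sub} afterward to up-cast the substituted body to the target; so the result does not match ``on the nose.'' This is a routine repair rather than a conceptual gap, but it is precisely the step the paper singles out when it sketches the $\beta$ case.
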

\begin{proof}
  By induction over the derivation \(\csx[\DOM(\Sigma)]{\varnothing}{\Sigma}  \ts t : \ty[q]{T}\ \EPS\).
  Most of the cases are straightforward.
  We discuss the beta reduction case of \rulename{t-app} where the substitution
  lemma (\Cref{lem:subst_term}) needs to be applied.
  To make the proof simpler, we assume explicit congruence reduction rules here.

  In this case, we have $t = (\lambda x. t_0) ~ v$ and their typings by induction hypotheses:
  $ \csx[\DOM(\Sigma)]{\varnothing}{\Sigma} \ts \lambda x. t_0 :
      \ty[q]{\left(x{\,:\,}\ty[\qsat{p}\,{\overlap}\, \qsat{q}]{T} \to^{\FX{\EPS[3]}} \ty[r]{U}\right)}\ \EPS[1] $ and
  $ \csx[\DOM(\Sigma)]{\varnothing}{\Sigma} \ts v : \ty[p]{T}\ \EPS[2] $.
  We need to show
  $$ \csx[\DOM(\Sigma)]{\varnothing}{\Sigma} \ts t_0[v/x] : (\ty[r]{U}\ \FX{\EPS, p})[p/x].$$
  Inverting the tying of the lambda value, we have the body term $t_0$ typing
  $$ \csx[q' \cup \{x\}]{x : \ty[p']{T'}}{\Sigma} \ts t_0 : \ty[r']{U'}\ \FX{\EPS'}, $$
  and
  $$ \ty[q \cap p]{T} <: \ty[p']{T'} ,
  q' <: q , \text{ and }
  \csx{x : \ty[\qsat{p} \cap \qsat{q}]{T}}{\Sigma} \ts \ty[r']{U'} \FX{\EPS'} <: \ty[r]{U} \FX{\EPS, p}.$$
  By narrowing the context and weakening the filter, we obtain a body term typing that is amenable
  to apply the substitution lemma (\Cref{lem:subst_term}):
  $$ \csx[q \cup \{x\}]{x : \ty[\qsat{p} \cap \qsat{q}]{T}}{\Sigma} \ts t_0 : \ty[r']{U'}\ \FX{\EPS'}. $$
  Then after applying \Cref{lem:subst_term}, we use \rulename{t-sub} to up-cast the result type and effect,
  which proves the goal.

\end{proof}

\begin{corollary}[Preservation of Separation]\label{coro:preservation_separation}\
Interleaved executions preserve types and disjointness:\vspace{-10pt}
  \infrule{%
{\begin{array}{l@{\qquad}l@{\qquad}l@{\qquad}l}
\csx[\DOM(\Sigma)]{\varnothing}{\Sigma} \ts t_1 : \ty[q_1]{T_1}\ \EPS[1] & \sigma\phantom{'}\mid t_1  \redv t_1' \mid \sigma' & \csx[\DOM(\Sigma)]{\varnothing}{\Sigma} \ts \sigma &  \\[1ex]
\csx[\DOM(\Sigma)]{\varnothing}{\Sigma} \ts t_2: \ty[q_2]{T_2}\ \EPS[2]  & \sigma'\mid t_2 \redv t_2' \mid \sigma''          & q_1 \overlap q_2 \subq \qbot       &
\end{array}}
}{{\begin{array}{ll@{\qquad}l@{\qquad}l}
\exists p_1\;p_2\;\EPSPR[1]\;\EPSPR[2]\;\Sigma'\;\Sigma''. & \csx[\DOM(\Sigma')\phantom{'}]{\varnothing}{\Sigma'\phantom{'}} \ts t_1' : \ty[p_1]{T_1}\ \EPSPR[1] & \Sigma'' \supseteq \Sigma' \supseteq \Sigma \\[1ex]
                                     & \csx[\DOM(\Sigma'')]{\varnothing}{\Sigma''} \ts t_2' : \ty[p_2]{T_2}\ \EPSPR[2]                     & p_1 \overlap p_2 \subq \qbot
\end{array}}}
\end{corollary}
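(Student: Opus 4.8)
The plan is to apply the preservation theorem (\Cref{thm:soundness}) once for each of the two reductions and then reconcile the resulting qualifiers by exploiting that preservation only grows a term's qualifier and effect by \emph{fresh} store locations. Throughout I would use that both initial qualifiers are bounded by the observation $\DOM(\Sigma)$: by the Observability Invariant (\Cref{lem:has_type_filter}), since both terms are typed with filter $\flt = \DOM(\Sigma)$, we have $q_1 \subq \DOM(\Sigma)$ and $q_2 \subq \DOM(\Sigma)$. This is the fact that connects the ``old'' qualifiers to the store and thereby separates them from anything allocated later.

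First I would run preservation on $\sigma \mid t_1 \redv t_1' \mid \sigma'$. This yields a store typing $\Sigma' \supseteq \Sigma$ and a set of fresh locations $p_1' \subq \DOM(\Sigma' \setminus \Sigma)$ with $\csx[\DOM(\Sigma')]{\varnothing}{\Sigma'} \ts t_1' : \ty[q_1, p_1']{T_1}\ \FX{\EPS[1], p_1'}$, together with a well-formed residual store $\csx[\DOM(\Sigma')]{\varnothing}{\Sigma'} \ts \sigma'$. Next, since $\Sigma' \supseteq \Sigma$ and $\DOM(\Sigma') \supseteq \DOM(\Sigma)$, weakening (\Cref{lem:weakening}) retypes the second term as $\csx[\DOM(\Sigma')]{\varnothing}{\Sigma'} \ts t_2 : \ty[q_2]{T_2}\ \EPS[2]$, which, together with well-formedness of $\sigma'$ in $\Sigma'$, lets me run preservation on $\sigma' \mid t_2 \redv t_2' \mid \sigma''$. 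This produces $\Sigma'' \supseteq \Sigma'$ and fresh $p_2' \subq \DOM(\Sigma'' \setminus \Sigma')$ with $\csx[\DOM(\Sigma'')]{\varnothing}{\Sigma''} \ts t_2' : \ty[q_2, p_2']{T_2}\ \FX{\EPS[2], p_2'}$. I would then instantiate the existentials with $p_1 = q_1 \cup p_1'$, $p_2 = q_2 \cup p_2'$, $\EPSPR[1] = \EPS[1] \cup p_1'$, $\EPSPR[2] = \EPS[2] \cup p_2'$, and the $\Sigma', \Sigma''$ just produced; the chain $\Sigma'' \supseteq \Sigma' \supseteq \Sigma$ and the two typing judgments are then immediate.

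The remaining obligation is disjointness $p_1 \overlap p_2 \subq \qbot$, which I would establish by distributing the intersection $(q_1 \cup p_1') \cap (q_2 \cup p_2')$ into four cases. The pair $q_1 \cap q_2$ is empty by hypothesis. The two mixed pairs vanish because the original qualifiers live in $\DOM(\Sigma)$ whereas the fresh sets are disjoint from $\DOM(\Sigma)$: from $p_1' \subq \DOM(\Sigma' \setminus \Sigma)$ we get $q_2 \cap p_1' = \varnothing$, and from $p_2' \subq \DOM(\Sigma'' \setminus \Sigma') \subq \DOM(\Sigma'' \setminus \Sigma)$ we get $q_1 \cap p_2' = \varnothing$. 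Finally $p_1' \cap p_2' = \varnothing$ because the two allocation windows $\Sigma' \setminus \Sigma$ and $\Sigma'' \setminus \Sigma'$ are themselves disjoint. Combining the four cases gives $p_1 \cap p_2 = \qbot$.

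The heavy lifting is done entirely by \Cref{thm:soundness}; the only genuine content here is the freshness bookkeeping, so the main (mild) obstacle is keeping the three store typings and their differences straight, and in particular confirming the invariant that preservation reports each newly allocated location as fresh with respect to the \emph{incoming} store (so that the windows $\Sigma' \setminus \Sigma$ and $\Sigma'' \setminus \Sigma'$ are disjoint). Once that invariant and the $\DOM(\Sigma)$-boundedness of $q_1, q_2$ are pinned down, each of the four disjointness subgoals is a one-line set-theoretic observation, and no induction beyond that internal to preservation is required.
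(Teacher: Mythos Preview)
Your proposal is correct and follows essentially the same approach as the paper's (one-sentence) proof: apply preservation (\Cref{thm:soundness}) sequentially to both steps, using weakening to retype $t_2$ under the intermediate store typing, and then conclude disjointness from the fact that each step only adds fresh locations to the qualifier. You have simply spelled out the freshness bookkeeping that the paper leaves implicit.
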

\begin{proof}
  By sequential application of preservation (\Cref{thm:soundness}) and the fact that a reduction step
  increases the assigned qualifier by at most a fresh new location, thus preserving disjointness.
\end{proof}
\section{The Direct-Style \maybelang-Calculus with Store-Allocated Values}\label{sec:directstyle_letv}

\begin{figure}\small
\begin{mdframed}
\judgement{Introductions, Store Terms, Evaluation Contexts}{}
\[\begin{array}{l@{\ \ }c@{\ \ }l@{\qquad\qquad\ }l@{\ \ }c@{\ \ }l}
    \iota    & ::= & \lambda x.t \mid c\mid \tref_{\HLBox[gray!20]{\ell}}~\ell & & & \\
    {\sigma} & ::= & \varnothing \mid \sigma,\,\tlets~{\ell = \HLBox[gray!20]{\iota}}                        & & & \\
    {E}      & ::= & \square \mid E\ t \mid \HLBox[gray!20]{\ell}\ E \mid \tref_{E}~t \mid \tref_{\HLBox[gray!20]{\ell}}~E \mid\ !{E} \mid {E} := {t} \mid {\HLBox[gray!20]{\ell}} := {E}\mid \tlet~{x = E}~\tin~t  & & & \\
  \end{array}\]
\judgement{\HLBox[gray!20]{\text{\textbf{\textsf{Well-Formed Store Entries and Stores}}}}}{\BOX{\csx[\flt]{\G}{\Sigma} \ts \ell : \iota \in \sigma}\ \BOX{\GS[\flt]\ts \sigma}} %
\begin{mathpar}
     \inferrule{
       \Sigma(\ell) = \ty[\qbot]{\TRef~B}\qquad\csx[\flt]{\G}{\Sigma}\ts \ell' : \ty[\qbot]{B}\ \PURE\qquad\Sigma(w) = \ty[\qbot]{\Typ{Alloc}}\qquad\sigma(w) = \omega
     }{
      \csx[\flt]{\G}{\Sigma} \ts\ell: \tref_w~\ell' \in \sigma
     }
    \and
    \inferrule{
       \Sigma(\ell) = \ty[q]{T}\qquad\csx[\flt]{\G}{\Sigma}\ts \iota : \ty[q]{T}\ \PURE\qquad \forall \ell,w.\ \iota \neq \tref_w~\ell
     }{
      \csx[\flt]{\G}{\Sigma} \ts \ell : \iota \in \sigma
     }
    \and
    \inferrule{
      \vert \Sigma \vert = \vert \sigma \vert\qquad \left(\csx[\flt]{\G}{\Sigma} \ts \ell:\iota \in \sigma\right)_{\tlets~\ell=\iota\in\sigma}
    }{
       \csx[\flt]{\G}{\Sigma}\ts\sigma
    }
\end{mathpar}
\judgement{Reduction Rules}{\BOX{\sigma \mid t \redsv \sigma\mid t}}
\[\begin{array}{r@{\ \ }c@{\ \ }ll@{\ \ }r}
  \sigma,\,\tlets~{\ell_1 = \lambda x.t},\,\sigma'\mid\CX[gray]{E}{\ell_1\ \ell_2}           & \redsv & \sigma,\,\tlets~{\ell_1 = \lambda x.t},\,\sigma'\mid\CX[gray]{E}{t[\ell_2/x]} & & \rulename{$\beta$} \\[1ex]
  \sigma\mid\CX[gray]{E}{\tlet~{x = \ell}~\tin~t}                                            & \redsv & \sigma\mid\CX[gray]{E}{t[\ell/x]}                                           & & \rulename{let} \\[1ex]
  \sigma\mid\CX[gray]{E}{\iota}                                                              & \redsv & \sigma,\tlets~{\ell = \iota}\mid\CX[gray]{E}{\ell}                          & & \rulename{intro} \\
                                                                                             &        & \ell \not\in \DOM(\sigma)                                                   & &\\[1ex]
  \sigma,\,\tlets~{\ell = \tref_w~\ell'},\,\sigma'\mid\CX[gray]{E}{!\ell}                    & \redsv & \sigma,\,\tlets~{\ell = \tref_w~\ell'},\,\sigma'\mid\CX[gray]{E}{\ell'}       & & \rulename{deref} \\[1ex]
  \sigma,\,\tlets~{\ell = \tref_w~\ell'},\,\sigma'\mid\CX[gray]{E}{\ell := \ell''}           & \redsv & \sigma,\,\tlets~{\ell = \tref_w~\ell''},\,\sigma'\mid\CX[gray]{E}{\tunit}     & & \rulename{assign}
\end{array}\]
\caption{Call-by-value reduction for \maybelang with store-allocated values.}\label{fig:directstyle_letv:semantics}
\end{mdframed}
\end{figure}
 
As a first step towards transitioning into monadic normal form, we refine the previous system's
operational semantics into one that has all values in the store, \ie,
substitution becomes variable renaming, because all intermediate results are named and bound in the
store. We keep the same type system as before and show its soundness with respect to the refined
operational semantics with store-allocated values.

\subsection{Syntax}\label{sec:directstyle_letv_syntax}

We introduce a slight change to the syntax of \maybelang (\Cref{fig:maybe:syntax}) that does not
affect the typing rules, namely changing what constitutes a value and re-categorizing former values
and reference allocations as ``introductions'' \(\iota\) for store-bound entities:
\begin{align*}
  v  & ::= \HLBox[gray!20]{\ell} & \text{Values}\\
  \iota & ::= c \mid \lambda x.t\mid \tref_{\HLBox[gray!20]{\ell}}~\ell & \text{Introductions}\\
  t & ::= \cdots \mid v \mid \iota & \text{Terms}\\
  {\sigma} & ::=  \varnothing \mid \sigma,\,\tlets~{\ell = \HLBox[gray!20]{\iota}} & \text{Stores}
\end{align*}
Both mutable references and immutable constants are part of the store now, and we can discern by
types and context relations whether a location \(\ell\) may be mutated at runtime or not.

Since all constants are store-bound, we also expect that the first operand of \(\tref\) is a location
binding the allocation capability/constant \(\omega\).

\subsection{Dynamics}\label{sec:directstyle_letv_dynamics}

\Cref{fig:directstyle_letv:semantics} shows the operational semantics for \maybelang with store-allocated values.
All elimination forms, now operate on store-bound introductions. For instance, the function application rule \rulename{\(\beta\)}
replaces the call with the body of the function stored at \(\ell_1\), and passes a location \(\ell_2\) pointing to
the argument of the call. Substitution on terms simply becomes a renaming of a variable to a store location.
The new rule \rulename{intro} replaces the previous rule \rulename{ref}, generalizing it to commit any introduction
into the store at a fresh location.
\subsection{Metatheory}\label{sec:directstyle_letv_metatheory}

Since the type system has not changed, we can reuse most of the results developed in \Cref{sec:directstyle:metatheory}.

\begin{lemma}[Top-Level Term Substitution]\label{lem:subst_term_letv}\ \vspace{-8pt}
    \infrule{%
      \csx[\flt]{\G,x:\ty[p\overlap r]{S}}{\Sigma} \ts t : \ty[q]{T}\ \EPS\qquad \csx[p]{\varnothing}{\Sigma}\ts \HLBox[gray!20]{\ell} : \ty[p]{S}\ \PURE\qquad \theta = [p/x] \\[1ex]
        p\subq\dom(\Sigma)\qquad p \qglb \flt \subq p\overlap r
        }{
            \csx[\flt\theta]{\G\theta}{\Sigma} \ts t[\HLBox[gray!20]{\ell}/x] : (\ty[q]{T}\ \FX{\EPS})\theta
        }
\end{lemma}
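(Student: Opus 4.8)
The plan is to exploit the fact, emphasized in the surrounding text, that the store-allocated calculus \emph{keeps exactly the same type system} as the original \maybelang{} and differs only in its operational semantics and in how the grammar partitions values versus introductions. Under this observation the statement is essentially an instance of the already-proved top-level substitution lemma \Cref{lem:subst_term}. Concretely, store locations are values in the original grammar ($v ::= c \mid \lambda x.t \mid \ell$), so the premise $\csx[p]{\varnothing}{\Sigma}\ts \ell : \ty[p]{S}\ \PURE$ is a legitimate value-typing judgment of the original system, and all the side conditions ($p\subq\dom(\Sigma)$, $p \qglb \flt \subq p\overlap r$, $\theta = [p/x]$) are literally those of \Cref{lem:subst_term}. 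The one point to check is that the derivation $\csx[\flt]{\G,x:\ty[p\overlap r]{S}}{\Sigma} \ts t : \ty[q]{T}\ \EPS$ is also a valid derivation in the original calculus; this holds because the typing rules are identical and the re-categorization of values/introductions removes no rule, so every store-allocated term typeable here is typeable there by the same derivation tree.

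Given this, I would instantiate \Cref{lem:subst_term} with $v := \ell$ and read off the conclusion $\csx[\flt\theta]{\G\theta}{\Sigma} \ts t[\ell/x] : (\ty[q]{T}\ \FX{\EPS})\theta$ directly, so the lemma falls out as a corollary rather than needing a fresh induction. If a self-contained argument is wanted instead, I would replay the induction over the derivation of $\csx[\flt]{\G,x:\ty[p\overlap r]{S}}{\Sigma} \ts t : \ty[q]{T}\ \EPS$ exactly as in \Cref{lem:subst_term}: the leaf cases \rulename{t-cst}, \rulename{t-var}, \rulename{t-loc} are immediate, using that $p$ consists solely of store locations so that $[\ell/x]$ is a pure renaming and qualifier substitution is monotone w.r.t.\ $\subq$; the structural cases thread the induction hypothesis through the $\EFFSEQ$ effect and qualifier compositions; and \rulename{t-sub} appeals to the substitution lemma for subtyping (\Cref{lem:subst_subtyping}).

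The main obstacle is the same as in \Cref{lem:subst_term}, namely the \rulename{t-app} case, which requires that the substitution $\theta$ commute with the overlap intersection $\qsat{p}\overlap\qsat{q}$ appearing in the argument type. This is precisely the content of \Cref{lem:subst_commutes_overlap}, whose hypotheses are discharged by the observability condition $p \qglb \flt \subq p\overlap r$ together with the top-level assumption that $p$ (and the relevant context qualifiers) contain only store locations. Because the substitute is here a single location $\ell$, the substitution cannot introduce fresh term variables, so the renaming character of $\theta$ makes every remaining bookkeeping step routine; no new difficulty is introduced by the move to store-allocated values.
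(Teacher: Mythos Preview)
Your proposal is correct and matches the paper's approach exactly: the paper's entire proof is the single sentence ``This is a special case of \Cref{lem:subst_term},'' which is precisely your instantiation with $v := \ell$. Your additional justification for why the instantiation is legitimate (same type system, locations already values in the original grammar) and the outline of a self-contained induction are sound but more detailed than what the paper provides.
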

\begin{proof}
This is a special case of \Cref{lem:subst_term}.
\end{proof}

\begin{theorem}[Preservation]\label{thm:soundness_letv}\ \vspace{-8pt}
   \infrule{\csx[\DOM(\Sigma)]{\varnothing}{\Sigma}  \ts t : \ty[q]{T}\ \EPS\qquad \csx[\DOM(\Sigma)]{\varnothing}{\Sigma} \ts \sigma\qquad\sigma\mid t \redsv \sigma' \mid t'
   }{
     \exists \Sigma' \supseteq \Sigma.\;\exists p \subq\DOM(\Sigma'\setminus\Sigma).\quad\csx[\DOM(\Sigma')]{\varnothing}{\Sigma'} \ts t' : \ty[q,p]{T}\ \FX{\EPS,p}\qquad
     \csx[\DOM(\Sigma')]{\varnothing}{\Sigma'} \ts \sigma'
   }
 \end{theorem}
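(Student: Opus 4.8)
The plan is to prove this by induction on the typing derivation $\csx[\DOM(\Sigma)]{\varnothing}{\Sigma} \ts t : \ty[q]{T}\ \EPS$, mirroring the proof of \Cref{thm:soundness} and again assuming explicit congruence reduction rules so that reduction inside an evaluation context is absorbed by the induction hypothesis. Because the type system is literally the one from \Cref{sec:directstyle}, every auxiliary result reused there (weakening, narrowing, the subtyping substitution lemmas, and the term substitution lemma, now specialized to locations as \Cref{lem:subst_term_letv}) applies unchanged, so most cases are verbatim copies of the earlier argument. In each congruence case the induction hypothesis yields a fresh $\Sigma' \supseteq \Sigma$ and growth $p \subq \DOM(\Sigma'\setminus\Sigma)$; I then re-apply the same typing rule to retype $t'$, using freshness of $p$ (it names nothing occurring in $\Sigma$, hence in no pre-existing type) to preserve all side conditions --- observability, the overlap premise $\qsat{p}\overlap\qsat{q}$ of \rulename{t-app}, and the scoping conditions $x\notin\FV(\cdot)$ --- while monotonicity of qualifier and effect subtyping absorbs the growth through \rulename{t-sub}.

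The genuinely reduction-specific cases are the redexes of \Cref{fig:directstyle_letv:semantics}. For \rulename{$\beta$}, where $t = \ell_1\,\ell_2$ and $\tlets~\ell_1 = \lambda x.t_0 \in \sigma$, I invert the location typing of $\ell_1$ and use well-formedness of $\sigma$ to recover the typing of the stored abstraction, invert \rulename{t-abs} to expose the body typing of $t_0$, and then --- after the same narrowing and filter-weakening step as in \Cref{thm:soundness} --- apply \Cref{lem:subst_term_letv} with the argument location $\ell_2$, closing with \rulename{t-sub}. The \rulename{let} case is the analogous, simpler, invocation of \Cref{lem:subst_term_letv}. The \rulename{deref} and \rulename{assign} cases proceed by store lookup exactly as before: well-formedness of $\sigma$ directly supplies the referent typing $\ell' : \ty[\qbot]{B}\ \PURE$ (and, for \rulename{assign}, the base-typed overwritten location), so the reduct keeps the same type and store well-formedness is re-established with no growth, i.e.\ $p = \qbot$.

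The main obstacle I anticipate is the new \rulename{intro} case, which replaces \rulename{ref} and commits an arbitrary introduction $\iota$ of type $\ty[q]{T}$ to a fresh location $\ell$. I extend the store typing to $\Sigma' = \Sigma,\,\ell : \ty[q]{T}$, take $p = \{\ell\}$, and must re-establish well-formedness of $\sigma,\,\tlets~\ell = \iota$. The subtlety is that the two store-entry rules of \Cref{fig:directstyle_letv:semantics} have to be dispatched on the shape of $\iota$: for $\iota \in \{c,\ \lambda x.t\}$ the general rule applies since these introductions are pure, whereas for $\iota = \tref_w~\ell'$ --- which carries a nontrivial allocation effect and hence fails the pure-effect premise of the general rule --- I must instead use the reference-specific rule, discharging its premises from the inversion of \rulename{t-ref} (the base-typed referent $\ell'$ and the allocator $w$) rather than from the effect of the $\tref$ node. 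Finally, the result $\ell$ is typed by \rulename{t-loc} with self-qualifier $\{\ell\}$, and I obtain the goal type $\ty[q,\ell]{T}$ and effect $\FX{\EPS,\ell}$ by \rulename{t-sub}/\rulename{q-sub} from $\{\ell\}\subq q\cup\{\ell\}$ and $\PURE\subq\FX{\EPS,\ell}$, checking that the reachability closure that $\ell$ inherits from its stored qualifier $q$ is consistent with the saturation machinery of \Cref{fig:saturation_overlap}.
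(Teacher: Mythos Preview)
Your proposal is correct and follows essentially the same approach as the paper: induction on the typing derivation, reusing the auxiliary lemmas of \Cref{sec:directstyle:metatheory} unchanged, and extracting operand typings from the well-formed store in the elimination cases. The paper's proof is a one-line pointer back to \Cref{thm:soundness} noting exactly this store-extraction difference; your treatment of the \rulename{intro} case and the case split on the two store-entry rules is a faithful elaboration of details the paper leaves implicit.
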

 \begin{proof}
   By induction over the derivation \(\csx[\DOM(\Sigma)]{\varnothing}{\Sigma}  \ts t : \ty[q]{T}\ \EPS\).
   The proof is similar to the proof for \Cref{thm:soundness}, with the difference that typing evidence
   for operands needs to be extracted from the well-formed store \(\sigma\).
 \end{proof}

Finally, the preservation of separation \Cref{coro:preservation_separation} continues to hold in this system, with exactly the same proof. %
\section{Monadic Normal Form}\label{sec:monadic}

\begin{figure}\small
\begin{mdframed}
\judgement{Monadic Normal Form}{\BOX{\mnflang}}\vspace{-10pt}
\[\begin{array}{l@{\qquad}l@{\qquad}l@{\qquad}l}
    \V{x},\V{y},\V{z}    &::= & x \mid \ell                                                           & \text{Names}\\
    \iota                &::= & c \mid \lambda x.g  \mid \tref_{\ell}~\ell                            & \text{Introductions}\\
    v                    &::= & \ell                                                                  & \text{Values}\\
    n                    &::= & \iota \mid \V{x}~\V{x} \mid \tref_{\V{x}}~\V{x}\mid {!~\V{x}} \mid \V{x}\coloneqq \V{x}     & \text{Graph Nodes} \\
    g                    &::= & \V{x} \mid \tlet~x = b~\tin~g                                         & \text{Graph Terms}\\
    b                    &::= & n \mid g                                                              & \text{Bindings}\\
\end{array}\]\\
\typicallabel{g-let}%
\judgement{MNF Typing}{\BOX{\GS[\varphi]\tsM n : \ty[q]{T}\ \EPS}\ \BOX{\GS[\varphi]\tsM g : \ty[q]{T}\ \EPS}\ \BOX{\GS[\varphi]\tsM b : \ty[q]{T}\ \EPS}}
\begin{minipage}[t]{.5\linewidth}\vspace{0pt}
\infrule[n-cst]{\ \\
      c \in B
    }{
      \GS[\flt] \tsM c : \ty[\qbot]{B}\ \PURE
    }
\vgap
\infrule[n-abs]{
  \csx[q,x]{\G\,,\, x: \ty[p]{T}}{\Sigma} \tsM g : \ty[r]{U}\ \EPS  \\ q\subq \flt
}{
  \GS[\flt] \tsM \lambda x.g : \ty[q]{\left(x: \ty[p]{T} \to^{\EPS} \ty[r]{U}\right)}\; \PURE
}
\vgap
\infrule[n-app]{
    \V{x} : \ty[q]{\left(z: \ty[\qsat{p}\overlap \qsat{q}]{T} \to^{\EPS} \ty[r]{U}\right)}\in\GS[\flt]\\
    \V{y} : \ty[p]{T}\in\GS[\flt]\quad\theta = [p/z]\\
    z \notin\FV(U)\quad\EPS\subq q,z\quad r\subq\flt,z
  }{
    \GS[\flt]\tsM \V{x}~\V{y} : (\ty[r]{U}\ \EPS)\theta
  }
\vgap
\infrule[n-ref]{
      \V{x} : \ty[\qbot]{B}\in \GS[\flt] \\ \V{y} : \ty[q]{\Typ{Alloc}}\in\GS[\flt]
    }{
      \GS[\flt]\tsM \tref_\V{y}~\V{x} : \ty[\qbot]{(\TRef~\ty{B})}\ \FX{\V{y}}
    }
\vgap
    \infrule[n-\(!\)]{
      \V{x} : \ty[q]{(\TRef~\ty{B})}\in\GS[\flt]
    }{
      \GS[\flt]\tsM {!~\V{x}} : \ty[\qbot]{B}\ \FX{\V{x}}
    }
\end{minipage}%
\begin{minipage}[t]{.5\linewidth}\vspace{0pt}
\infrule[n-\(:=\)]{
      \V{x} : \ty[q]{(\TRef~\ty{B})} \in\GS[\flt]\\
      \V{y} : \ty[\qbot]{B}\in\GS[\flt]
    }{
      \GS[\flt]\tsM \V{x} \coloneqq \V{y} : \ty[\qbot]{\TUnit}\ \FX{\V{x}}
    }
\vgap
    \infrule[g-ret]{\ \\
      \V{x} : \ty[q]{T} \in \GS[\flt]
    }{
      \GS[\flt] \tsM \V{x} : \ty[\V{x}]{T}\ \PURE
    }
\vgap
    \infrule[g-let]{\ \\
      \GS[\flt]\tsM b : \ty[p]{S} \ \EPS[1]\\
      \csx[\flt,x]{\G\, ,\, x: \ty[\qsat{p}\cap\qsat{\flt}]{S}}{\Sigma} \tsM g : \ty[q]{T}\ \EPS[2] \\
      \theta = [p/x]\quad x \notin\FV{(T)}
    }
    {
      \GS[\varphi] \vdash_{\EMM} \Let{x}{b}{g} : (\ty[q]{T}\ \EPS[1] \EFFSEQ  \EPS[2])\theta
    }
\vgap
\infrule[b-sub]{\ \\
      \GS[\flt]\tsM b : \ty[p]{S}\ \EPS[1] \\  \GS\ts\ty[p]{S}\ \EPS[1] <: \ty[q]{T}\ \EPS[2]\\
      q,\EPS[2]\subq\flt
    }{
      \GS[\flt]\tsM b : \ty[q]{T}\ \EPS[2]
    }
\end{minipage}
\judgement{Name Lookup}{\BOX{\V{x}: \ty[q]{T}\in\GS[\flt]}}\\
\begin{minipage}[t]{.5\linewidth}\vspace{0pt}
\infrule[l-var]{x : \ty[q]{T}\in\G\qquad x\subq\flt}{x : \ty[q]{T}\in \GS[\flt] }
\end{minipage}%
\begin{minipage}[t]{.5\linewidth}\vspace{0pt}
\infrule[l-loc]{\ell : \ty[q]{T}\in\Sigma\qquad \ell\subq\flt}{\ell : \ty[q]{T}\in \GS[\flt] }
\end{minipage}%

\caption[The syntax and typing rules of the monadic normal form \mnflang.]{The syntax and typing rules of the monadic normalform \mnflang. Cf.~\Cref{fig:maybe:syntax} for the subtyping rules.}\label{fig:mnf:syntax}\label{fig:checking-vanilla-mnf}
\end{mdframed}
\end{figure}

\begin{figure}\small
    \begin{mdframed}
    \judgement{}{\BOX{t \leadsto g}}\vspace{-30pt}
    \infrule{}{c \leadsto \tlet~x = c~\tin~x}
    \infrule{}{\V{x} \leadsto \V{x}}
    \infrule{t\leadsto g}{\lambda x.t \leadsto \Let{y}{\lambda x.g}{y}}
    \infrule{t\leadsto g}{!t\leadsto\Let{x_1}{g}{\Let{x_2}{!x_1}{x_2}}}
    \infrule{t_1\leadsto g_1\qquad t_2\leadsto g_2}{\tlet~{x = t_1}~\tin~t_2\leadsto \tlet~{x = g_1}~\tin~g_2}
    \infrule{t_1\leadsto g_1\qquad t_2\leadsto g_2}{t_1~t_2\leadsto\Let{x_1}{g_1}{\Let{x_2}{g_2}{\Let{x_3}{x_1~x_2}{x_3}}}}
    \infrule{t_1\leadsto g_1\qquad t_2\leadsto g_2}{\tref_{t_1}~t_2\leadsto \Let{x_1}{g_1}{\Let{x_2}{g_2}{\Let{x_3}{\tref_{x_1}~x_2}{x_3}}}}
    \infrule{t_1\leadsto g_1\qquad t_2\leadsto g_2}{t_1 := t_2\leadsto\Let{x_1}{g_1}{\Let{x_2}{g_2}{\Let{x_3}{(x_1 := x_2)}{x_3}}}}
    \caption[Translation from \maybelang into \mnflang.]{Translation from \maybelang into \mnflang. Variable names introduced on the right-hand side are always fresh.}\label{fig:direct:mnf:translation}
    \end{mdframed}
\end{figure} %

The penultimate step towards deriving the graph IR is restricting the \directlang language to
\emph{monadic normal form} (MNF), called \mnflang (\Cref{fig:mnf:syntax}). We establish soundness of
\mnflang by (1) showing that \directlang's reduction relation with store-allocated values (\redsv,
\Cref{fig:directstyle_letv:semantics}) preserves MNF, (2) specifying provably type-preserving
translations between both languages, so that (3) we can resort to the previous section's soundness
result for \directlang.

\subsection{Syntax}\label{sec:mnf-syntax}

We make use of the syntactic category of names in places where both variables and locations
are permitted, written in typewriter font.

MNF (\Cref{fig:mnf:syntax}) is characterized by having all
intermediate results and subterms of expressions let-bound to variable names. Unlike A-normal form
(ANF), which has strictly flat sequences of let bindings with primitive operations, MNF permits
binding nested computations. A (directed, acyclic) graph can be read from graph terms \(g\), by
regarding let bindings as introducing a name for either (1) a primitive graph node labelled with a
primitive operation drawn from \(n\), or (2) naming a nested subgraph \(g\). Variable
occurrences in bound nodes correspond to edges pointing to the let binding in scope.

In this work, we choose an even stricter form of MNF than usual, \ie, sequences of let bindings
in graph terms \(g\) always end with a name. We found that this more regular form is
easier to work with when specifying optimization rules.

\subsection{Reduction Preserves MNF}\label{sec:mnf:reduction_preserves_mnf}

The reduction relation for \directlang with store-allocated values (\Cref{fig:directstyle_letv:semantics}) preserves MNF, and
can thus be restricted to obtain the call-by-value reduction relation for \mnflang:
\begin{lemma}[Reduction Preserves MNF]\label{lem:mnf:reduction_preserves_mnf}
    Let \(g\) be a graph term in \mnflang, and \(\sigma\) a store that only binds \mnflang introductions, such that \(\sigma\mid g \redsv \sigma'\mid t\)
    for some \(\sigma'\) and term \(t\). Then it holds that
    \begin{enumerate}
        \item \(\sigma'\) is a store binding only \mnflang introductions.
        \item \(t\) is a graph term of \mnflang.
    \end{enumerate}
\end{lemma}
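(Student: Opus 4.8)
The plan is to argue by structural induction on the MNF graph term $g$, supported by a substitution lemma and the hypothesis that $\sigma$ binds only \mnflang{} introductions. I would first observe that, because $g$ is in MNF, the evaluation context $E$ witnessing a step $\sigma\mid g\redsv\sigma'\mid t$ is tightly constrained. A graph term is either a name $\V{x}$, which is a value (or a stuck variable) and admits no step, or a $\tlet~x=b~\tin~g'$, whose only compatible context shapes are $E=\square$ (so the whole $\tlet$ is the redex, forcing rule \rulename{let}) and $E=\tlet~x=E'~\tin~g'$ (so the redex lies in the bound position). Reduction can never descend into the body $g'$ of a $\tlet$, nor into the operands of a graph node $n$, since those operands are already names and hence values (locations) or stuck variables. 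Thus every step from $g$ either contracts the outer $\tlet$ or fires inside $b$, which is exactly the case split the induction follows.

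The auxiliary ingredient is a substitution lemma stating that $[\ell/x]$ preserves each MNF syntactic category (names, introductions $\iota$, nodes $n$, graph terms $g$, and bindings $b$). This follows by a routine mutual induction over the grammar: substituting a location for a variable sends names to names and commutes with every formation rule, the only care being capture-avoidance at the $\lambda$- and $\tlet$-binders. This is what licenses the substitution-based reduction rules to stay within MNF.

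For the inductive case $g=\tlet~x=b~\tin~g'$ I would analyze $b$. If $b$ is a bare location $\ell$, rule \rulename{let} yields $t=g'[\ell/x]$, a graph term by the substitution lemma, with $\sigma$ unchanged. If $b$ is a node $n$ it contracts at the top: under \rulename{intro} we commit the introduction $\iota=n$ to the store (a ref node with location operands is such an introduction), and since $g$ is MNF this $\iota$ is an MNF introduction, so $\sigma'$ still binds only MNF introductions while $t=\tlet~x=\ell~\tin~g'$ is MNF because $\ell$ is a name; under \rulename{deref} the result binding is the stored location $\ell'$, and under \rulename{assign} it is the constant $\tunit$, both legal bindings, with \rulename{assign} rewriting the store entry to the still-introduction $\tref_w~\ell''$. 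The characteristic case is \rulename{$\beta$}, where $n=\ell_1~\ell_2$: the lambda $\lambda y.g_0$ stored at $\ell_1$ is an MNF introduction by the hypothesis on $\sigma$, so its body $g_0$ is a graph term, and the contractum $t=\tlet~x=g_0[\ell_2/y]~\tin~g'$ is MNF precisely because the grammar $b::=n\mid g$ admits the (substituted) nested graph term as a binding. Finally, if $b$ is itself a nested graph term (necessarily a $\tlet$, since a bare name as $b$ either triggers \rulename{let} or is stuck), the step takes place inside it, i.e.\ $\sigma\mid b\redsv\sigma'\mid b'$, and the induction hypothesis gives that $b'$ is a graph term and $\sigma'$ binds only MNF introductions, whence $t=\tlet~x=b'~\tin~g'$ is MNF.

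I expect the main obstacle to be the \rulename{$\beta$} case, and specifically the two facts it rests on: that the stored lambda body is guaranteed to be a graph term (which is exactly why the store hypothesis is phrased in terms of MNF introductions) and that a substituted nested graph term may stand directly as a let-binding without renormalization. This is the essential point distinguishing MNF from ANF, where $\beta$-contraction would produce a non-flat binding requiring re-flattening; here no such step is needed. The only other thing demanding care is the opening observation, namely justifying that the general evaluation-context semantics, once restricted to MNF terms, admits only the constrained context shapes that make the above case analysis exhaustive.
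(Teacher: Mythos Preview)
Your proposal is correct and follows the same approach as the paper: both hinge on the observation that for an MNF graph term the admissible evaluation contexts collapse to $E ::= \square \mid \tlet~x=E~\tin~g$, after which every reduction rule visibly replaces a binding by a binding. The paper states this in two sentences, while you unfold it into a structural induction with an explicit substitution lemma and a per-rule case analysis; the extra detail is sound, and your emphasis on the \rulename{$\beta$} case (using the MNF-store hypothesis to ensure the stored lambda body is a graph term, and that a nested graph term is a valid binding without renormalization) makes explicit exactly what the paper leaves implicit.
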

\begin{proof}
    Since the reduction step begins with a graph term \(g\), it can be only decomposed into a redex and evaluation context
    according to \(E ::= \square \mid \Let{x}{E}{g}\), and no other cases apply. Redexes in the hole can only be bindings \(b\), \ie,
    either a node \(n\) or a nested graph term \(g'\).  It is easy to see that each possible reduction rule focuses on such a binding,
    and each rule plugs the hole with another binding \(b'\) on the right-hand side, thus preserving MNF.
    Furthermore, \(\sigma'\) is either equal to \(\sigma\), or a modification of the latter where each binding is in MNF.
\end{proof}
\subsection{Translation from Direct Style to MNF}\label{sec:mnf:translation}

This section considers the syntax-directed translation of \directlang into \mnflang
(\Cref{fig:direct:mnf:translation}).

\begin{lemma}[Type Preservation of the MNF Translation]\label{lem:type_preservation:translation}
If\ \ \(\GS[\flt]\ts t : \ty[q]{T}\ \EPS\) and \(t \leadsto g\), then \(\GS[\flt]\tsM g : \ty[q]{T}\ \EPS\).
\end{lemma}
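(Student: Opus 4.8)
The plan is to induct on the derivation of $\GS[\flt]\ts t : \ty[q]{T}\ \EPS$. Inducting on the typing derivation rather than on $t\leadsto g$ lets me dispatch the one non-structural rule, \rulename{t-sub}, uniformly: its premise is a shorter derivation for the \emph{same} $t$, so the induction hypothesis yields $\GS[\flt]\tsM g : \ty[p]{S}\ \EPS[1]$, and since \mnflang shares the subtyping judgments of \maybelang (\Cref{fig:mnf:syntax}), a single application of \rulename{b-sub} with the identical subtyping premise re-derives the target $\ty[q]{T}\ \EPS[2]$. For every remaining rule the term shape is fixed, and hence so is $g$ by inspection of the syntax-directed translation in \Cref{fig:direct:mnf:translation}.

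The leaf and introduction cases are immediate. For \rulename{t-cst}, the target $c\leadsto\Let{x}{c}{x}$ is typed by \rulename{n-cst}, then \rulename{g-ret}, then \rulename{g-let}, where the trailing substitution $[\qbot/x]$ and the empty effect union leave $\ty[\qbot]{B}\ \PURE$ unchanged; \rulename{t-var} and \rulename{t-loc} translate to themselves and are retyped by \rulename{l-var}/\rulename{l-loc} followed by \rulename{g-ret}. For \rulename{t-abs}, the induction hypothesis on the body gives $\csx[q,x]{\G, x:\ty[p]{T}}{\Sigma}\tsM g_0 : \ty[r]{U}\ \EPS$, so \rulename{n-abs} types the node $\lambda x.g_0$ with exactly the same function type and $\PURE$; wrapping it as $\Let{y}{\lambda x.g_0}{y}$ and discharging the binder with \rulename{g-let}/\rulename{g-ret} recovers the qualifier via $y[q/y]=q$ and keeps the effect $\PURE$. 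The case \rulename{t-let} is the one place where the translation sends a let to a let directly: the two hypotheses give MNF typings for $g_1,g_2$, and \rulename{g-let} reassembles them with literally the same bound-variable qualifier $\qsat{p}\cap\qsat{\flt}$, substitution $[p/x]$, and effect union as \rulename{t-let}.

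The substance of the proof lies in the elimination forms \rulename{t-app}, \rulename{t-ref}, \rulename{t-!}, and \rulename{t-:=}, each of which translates to a chain of let-bindings that names every operand subgraph and then binds a single primitive node. I would isolate the common pattern as an admissible sequencing step: given operand typings $\GS[\flt]\tsM g_i : \ty[p_i]{S_i}\ \EPS[i]$ (moved into the growing context and filter by weakening, \Cref{lem:weakening}), binding $x_i = g_i$ via \rulename{g-let} extends the context with $x_i : \ty[\qsat{p_i}\cap\qsat{\flt_i}]{S_i}$ and contributes $\EPS[i]$ to the accumulated effect and the substitution $[p_i/x_i]$ to the result. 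I then type the innermost node with the matching \rulename{n-app}/\rulename{n-ref}/\rulename{n-!}/\rulename{n-:=} rule, whose premises read the operands straight out of the extended context, and close with \rulename{g-ret} on the final fresh name. The crucial point is that each node emits its intrinsic effect and its result via the freshly bound operand \emph{names} (e.g.\ \rulename{n-ref} yields effect $\FX{\V{y}}$, and \rulename{n-app} injects the argument binder into $r$ through $\theta$), so the enclosing \rulename{g-let}s substitute the operands' original qualifiers back in: the per-binding substitutions $[p_i/x_i]$ telescope into the single substitution $\theta$ of the corresponding direct-style rule, and because $\EFFSEQ$ is merely $\cup$, the accumulated effect union reproduces $\EPS[1]\EFFSEQ\EPS[2]\EFFSEQ\cdots$ up to associativity, commutativity, and idempotence.

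I expect the telescoping in the \rulename{t-app} case to be the main obstacle, since \rulename{g-let} records each operand under its \emph{saturated} qualifier $\qsat{p_i}\cap\qsat{\flt_i}$, whereas the direct-style result is phrased with the unsaturated $p_i$, and I must show the chain still yields exactly $\ty[q]{T}\ \EPS$. The reconciliation rests on two ingredients. First, the observability invariant (\Cref{lem:has_type_filter}) gives $p_i,\EPS[i]\subq\flt$, hence $\qsat{p_i}\cap\qsat{\flt_i}=\qsat{p_i}$, which is precisely what makes the domain-qualifier match $\qsat{p}\overlap\qsat{q}$ demanded by \rulename{n-app} go through up to idempotence of saturation. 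Second, freshness of the introduced names guarantees they occur in neither the operand qualifiers nor the codomain type, which discharges the side conditions $z\notin\FV(U)$ and $x\notin\FV(T)$ and collapses every substitution that would otherwise act on an already-eliminated binder. Carefully composing these two facts across the three nested \rulename{g-let}s is the delicate computation, and once it is verified the chained derivation delivers $\ty[q]{T}\ \EPS$, completing the case and the induction.
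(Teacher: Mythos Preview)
Your proof follows essentially the same approach as the paper's: induction on the direct-style typing derivation, with the \rulename{t-app} case worked out as the representative nontrivial case by threading the three nested \rulename{g-let}s and relying on freshness of the introduced names to collapse the telescoping substitutions. Your explicit attention to the saturated context qualifier $\qsat{p_i}\cap\qsat{\flt_i}$ that \rulename{g-let} installs versus what \rulename{n-app} reads back is in fact more careful than the paper's own writeup, which silently works in an intermediate context $\G'$ with the unsaturated $p$ and $q$.
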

\begin{proof}
Straightforward by induction over the typing derivation \(\GS[\flt]\ts t : \ty[q]{T}\ \EPS\).
We exemplify the proof for applications \rulename{t-app}. In this case \[t_1~t_2 \leadsto \tlet~{x_1 = g_1}~\tin~\tlet~{x_2 = g_2}~\tin~\tlet~{x_3 = x_1~x_2}~\tin~x_3 \]
where \(t_1 \leadsto g_1\) and \(t_2\leadsto g_2\).
\begin{enumerate}
  \item We have \(\GS[\flt]\vdash t_1 : \ty[q]{\left(x{\,:\,}\ty[\qsat{p}\,{\overlap}\, \qsat{q}]{T} \to^{\FX{\EPS[3]}} \ty[r]{U}\right)}\ \EPS[1]\).
  \item We have \(\GS[\flt]\ts t_2 : \ty[p]{T}\ \EPS[2]\).
  \item We have \(x \notin \FV(U)\), \(\EPS[3]\subq q,x\), \(r\subq \flt,x\), and \(\theta = [p/x]\).
  \item By IH: \(\GS[\flt]\tsM g_1 : \ty[q]{\left(x{\,:\,}\ty[\qsat{p}\,{\overlap}\, \qsat{q}]{T} \to^{\FX{\EPS[3]}} \ty[r]{U}\right)}\ \EPS[1]\).
  \item By IH:  \(\GS[\flt]\tsM g_2 : \ty[p]{T}\ \EPS[2]\).
  \item By weakening: \(\csx[\flt,x_1]{\G, x_1 : \ty[q]{\left(x{\,:\,}\ty[\qsat{p}\,{\overlap}\, \qsat{q}]{T} \to^{\FX{\EPS[3]}} \ty[r]{U}\right)}}{\Sigma}\tsM g_2 : \ty[p]{T}\ \EPS[2]\).
  \item Let \(\G' := \G, x_1 : \ty[q]{\left(x{\,:\,}\ty[p\,{\overlap}\, q]{T} \to^{\FX{\EPS[3]}} \ty[r]{U}\right)}, x_2 : \ty[p]{T}\).
  \item By rule \rulename{n-app} and (3): \(\csx[\flt,x_1,x_2]{\G'}{\Sigma}\tsM x_1~x_2 : \ty[r\theta]{U}\ \EPS[3]\theta\).
  \item By \rulename{g-let} and \rulename{g-ret}: \(\csx[\flt,x_1,x_2]{\G'}{\Sigma}\tsM \tlet~{x_3 = x_1~x_2}~\tin~x_3 : \ty[r\theta]{U}\ \EPS[3]\theta\).
  \item With (6) and \rulename{g-let}: \[\csx[\flt,x_1]{\G, x_1 : \ty[q]{\left(x{\,:\,}\ty[\qsat{p}\,{\overlap}\, \qsat{q}]{T} \to^{\FX{\EPS[3]}} \ty[r]{U}\right)}}{\Sigma}\tsM \tlet~{x_2 = g_2}~\tin~\tlet~{x_3 = x_1~x_2}~\tin~x_3 : (\ty[r\theta]{U}\ \EPS[2]\EFFSEQ\EPS[3]\theta)[p/x_2]\]
  \item With (4) and \rulename{g-let}:  \[\csx[\flt]{\G}{\Sigma}\tsM \tlet~{x_1 = g_1}~\tin~\tlet~{x_2 = g_2}~\tin~\tlet~{x_3 = x_1~x_2}~\tin~x_3 : (\ty[r\theta{[p/x_2]}]{U}\ \EPS[1]\EFFSEQ(\EPS[2]\EFFSEQ\EPS[3]\theta)[p/x_2])[q/x_1]\]
  \item Since \(x_1\) and \(x_2\) were picked fresh, and \(x\) is not free in \EPS[1] and \EPS[2] by (1) and (2), we have \[(\ty[r\theta{[p/x_2]}]{U}\ \EPS[1]\EFFSEQ(\EPS[2]\EFFSEQ\EPS[3]\theta)[p/x_2])[q/x_1] = \ty[r\theta]{U}\ \EPS[1]\EFFSEQ\EPS[2]\EFFSEQ\EPS[3]\theta = (\ty[r]{U}\ \EPS[1]\EFFSEQ\EPS[2]\EFFSEQ\EPS[3])\theta.\]
        That is, (11) proves the goal.
\end{enumerate}

\end{proof}

\subsection{Soundness}\label{sec:mnf-metatheory}

Instead of proving progress and preservation directly, we assert that terms in monadic normal form
can always be typed in the same manner in both the direct style and MNF type systems.
The intention is that we have the same type system, but restricted in the terms.

\begin{lemma}[Type-preserving Embedding of MNF Terms]\label{lem:type_preservation:translation_backwards}\hfill
\begin{enumerate}
        \item \(\GS[\flt]\tsM n : \ty[q]{T}\ \EPS\) iff \(\GS[\flt]\ts n : \ty[q]{T}\ \EPS\).
        \item \(\GS[\flt]\tsM g : \ty[q]{T}\ \EPS\) iff \(\GS[\flt]\ts g : \ty[q]{T}\ \EPS\).
\end{enumerate}
\end{lemma}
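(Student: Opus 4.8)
The plan is to prove both biconditionals by simultaneous induction, treating the MNF judgment $\tsM$ as the direct-style judgment $\ts$ restricted to the MNF syntactic categories. Since every graph node $n$ and graph term $g$ is also a \maybelang{} term, both sides of each ``iff'' are well-posed; and since the two systems share the very same subtyping judgment, the subsumption rules \rulename{b-sub} and \rulename{t-sub} correspond verbatim. I would prove the node-level statement~(1) and the graph-term/binding-level statement~(2) together, by mutual induction on the respective typing derivations, carrying out each direction separately.

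For the forward direction ($\tsM \Rightarrow \ts$) I would walk through each MNF rule and exhibit the matching direct-style derivation. The introduction rules \rulename{n-cst} and \rulename{n-abs} map onto \rulename{t-cst} and \rulename{t-abs} (the body of a $\lambda$ is a graph term, hence covered by the IH for~(2)); \rulename{g-let} maps onto \rulename{t-let} with the identical binding qualifier $\qsat{p}\cap\qsat{\flt}$; and \rulename{g-ret} maps onto \rulename{t-var}/\rulename{t-loc}, both of which assign the singleton self-qualifier. The real work is in the elimination rules \rulename{n-app}, \rulename{n-ref}, \rulename{n-!}, \rulename{n-:=}: MNF consults the context directly through the name-lookup judgment $\V{x}:\ty[q]{T}\in\GS[\flt]$, whereas the direct-style counterparts first type the operand names. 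I would discharge each name premise by \rulename{t-var}/\rulename{t-loc} (which simultaneously discharges the observability side condition $\V{x}\subq\flt$), observe that typing a name is always $\PURE$ so the sequenced operand effects collapse, and check that the intrinsic effect the MNF rule writes in terms of the name (e.g.\ $\FX{\V{x}}$ in \rulename{n-!}) agrees with the direct-style effect $\EPS\EFFSEQ\FX{q}$ computed from the operand's assigned qualifier.

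For the backward direction ($\ts \Rightarrow \tsM$) I would induct on the direct-style derivation of an MNF term and exploit the restricted shape of the subject to invert it: a bare name is necessarily typed by \rulename{t-var}/\rulename{t-loc} (possibly under \rulename{t-sub}), an application $\V{x}~\V{y}$ necessarily by \rulename{t-app} applied to two such name derivations, a $\tlet$ by \rulename{t-let}, and so on. From each inverted derivation I would read off the corresponding MNF rule, folding any intervening \rulename{t-sub} into \rulename{b-sub}.

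The step I expect to be the main obstacle is reconciling, in the application case, the qualifiers carried by bare names. The MNF rule \rulename{n-app} uses the \emph{context} qualifiers $p$ and $q$ of the operand names in the overlap annotation $\qsat{p}\overlap\qsat{q}$ and in the result substitution $\theta=[p/z]$, while \rulename{t-var}/\rulename{t-loc} assign the \emph{self} qualifiers $\{\V{y}\}$ and $\{\V{x}\}$. The bridge is the saturation identity $\qsat{\{\V{x}\}}=\{\V{x}\}\cup\qsat{q}$ relating a name's self-qualifier to the saturation of its context qualifier, together with the observability invariant (\lemref{lem:has_type_filter}), tight observability (\lemref{lem:values_tight}), and subsumption to absorb the residual endpoints. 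Making the overlap annotation and the dependent substitution line up \emph{exactly}, so that both systems assign the identical $\ty[q]{T}\ \EPS$ rather than merely related ones, is the delicate point; I would isolate it as an auxiliary lemma describing how saturation and overlap behave under the passage between self- and context-qualifiers before completing the induction.
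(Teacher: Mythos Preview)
Your plan—mutual induction over the respective derivations, matching each MNF rule to its direct-style counterpart and absorbing slack with \rulename{b-sub}/\rulename{t-sub}—is exactly the paper's; its entire proof is the single sentence ``Each direction is proved by mutual induction over the respective typing derivation.'' You have already gone further than the paper by flagging \rulename{n-app} as the delicate case.

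One caution about your proposed resolution there. You intend to bridge the self-qualifier/context-qualifier mismatch via the identity $\qsat{\{\V x\}}=\{\V x\}\cup\qsat{q}$ together with subsumption, but subsumption will not in general close that gap. The function domain is \emph{contravariant} under \rulename{s-fun}, so from the stored domain qualifier $\qsat{p}\cap\qsat{q}$ you may only pass to a \emph{smaller} one; yet the self-qualifier overlap $\qsat{\{\V y\}}\cap\qsat{\{\V x\}}=(\{\V y\}\cup\qsat{p})\cap(\{\V x\}\cup\qsat{q})$ is in general a \emph{superset} of $\qsat{p}\cap\qsat{q}$ (strictly so whenever $\V y\in\qsat{q}$ or $\V x\in\qsat{p}$). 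Symmetrically, in the backward direction \rulename{n-app} reads the domain qualifier literally from the context entry with no subsumption at the lookup, so a direct-style derivation that first reshapes the function type via \rulename{t-sub} need not correspond to any \rulename{n-app} instance on the same~$\Gamma$. The paper's one-line proof does not address this either; the rule-for-rule correspondence is transparent for contexts of the shape produced by the MNF translation (where the stored domain is, by construction, $\qsat{p}\cap\qsat{q}$ relative to the stored operand qualifiers), which is all the downstream corollaries actually need. Your instinct to isolate an auxiliary lemma is the right move, but expect it to require a hypothesis on~$\Gamma$ rather than a purely algebraic fact about saturation.
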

\begin{proof}
Each direction is proved by mutual induction over the respective typing derivation.
\end{proof}
Together with \Cref{lem:mnf:reduction_preserves_mnf}, it follows
that the type soundness and preservation of separation results of the direct style system
(\Cref{sec:directstyle_letv_metatheory}) carry over to the MNF version.
\begin{corollary}[MNF Progress]\label{thm:mnf:progress}
    If \(\ \csx[\DOM(\Sigma)]{\varnothing}{\Sigma}  \tsM g : \ty[q]{T}\ \EPS\), then either \(g\) is a value, or
    for any store \(\sigma\) where \(\csx[\DOM(\Sigma)]{\varnothing}{\Sigma} \ts \sigma\), there exists
    a graph term \(g'\) and store \(\sigma'\) such that \(\sigma \mid g  \redsv \sigma'\mid g'\).
\end{corollary}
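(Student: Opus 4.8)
The plan is to derive this as a corollary rather than prove progress from scratch: I will transfer the progress property from the store-allocated direct-style calculus (\Cref{sec:directstyle_letv}) to \mnflang{} using the embedding lemma, and then use \Cref{lem:mnf:reduction_preserves_mnf} to confine the resulting term to MNF. The key structural observation is that every MNF graph term $g$ is, syntactically, an ordinary \maybelang{} term, and that the store-allocated reduction $\redsv$ restricted to such terms \emph{is} the MNF reduction; so there is nothing genuinely new to prove about \emph{whether} $g$ steps, only about \emph{where} it lands.

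First I would apply \Cref{lem:type_preservation:translation_backwards}(2) to the hypothesis $\csx[\DOM(\Sigma)]{\varnothing}{\Sigma} \tsM g : \ty[q]{T}\ \EPS$, obtaining the direct-style judgment $\csx[\DOM(\Sigma)]{\varnothing}{\Sigma} \ts g : \ty[q]{T}\ \EPS$. Next I would invoke progress for the store-allocated direct-style semantics (the $\redsv$-analogue of \Cref{thm:progress}, which holds for the semantics of \Cref{sec:directstyle_letv_metatheory} by the same case analysis) on this judgment together with the well-formed store $\sigma$ satisfying $\csx[\DOM(\Sigma)]{\varnothing}{\Sigma} \ts \sigma$. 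This yields the dichotomy: either $g$ is a value, or there exist $t$ and $\sigma'$ with $\sigma \mid g \redsv \sigma' \mid t$. In the value case, since both the MNF and the store-allocated value grammars are just locations $\ell$ (and under the empty variable context a bare-name graph term typed by \rulename{g-ret} can only be a location of $\Sigma$), ``$g$ is a value'' is already exactly the desired conclusion and nothing further is needed.

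For the step case, I would appeal directly to \Cref{lem:mnf:reduction_preserves_mnf}: with $\sigma$ binding only \mnflang{} introductions and $g$ an MNF graph term, the lemma gives that $\sigma'$ again binds only \mnflang{} introductions and that $t$ is itself an MNF graph term, which I rename $g'$. Hence $\sigma \mid g \redsv \sigma' \mid g'$ with $g'$ a graph term, completing the case and the corollary.

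The main obstacle is the bookkeeping around the two reduction relations and the shape of the store. One must be explicit that the progress being used is the one for $\redsv$ rather than the original $\redv$ of \Cref{thm:progress}, and that the case analysis underlying it decomposes $g$ exactly along the restricted MNF evaluation contexts $E ::= \square \mid \Let{x}{E}{g}$, so no redex outside the MNF fragment ever arises (this is the same decomposition exploited in the proof of \Cref{lem:mnf:reduction_preserves_mnf}). The second delicate point is that \Cref{lem:mnf:reduction_preserves_mnf} requires $\sigma$ to bind only \mnflang{} introductions; this is \emph{not} implied by store well-formedness alone, since a well-formed store may hold a $\lambda$ with a non-MNF body, so it must be carried as the standing hypothesis of the MNF system. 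Once these two points are discharged, the dichotomy from direct-style progress and the MNF-preservation lemma combine immediately.
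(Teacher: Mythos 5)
Your proposal is correct and follows essentially the same route the paper intends: embed the MNF typing into the direct-style system via \Cref{lem:type_preservation:translation_backwards}, apply progress for the store-allocated semantics of \Cref{sec:directstyle_letv}, and use \Cref{lem:mnf:reduction_preserves_mnf} to conclude the resulting term and store stay in MNF. Your two cautionary remarks (that the progress theorem must be the $\redsv$-analogue, and that the store must bind only \mnflang{} introductions rather than merely be well-formed) are accurate refinements of the paper's one-line justification rather than deviations from it.
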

\begin{corollary}[MNF Preservation]\label{thm:mnf:soundness}\ \vspace{-8pt}
  \infrule{\csx[\DOM(\Sigma)]{\varnothing}{\Sigma}  \tsM g : \ty[q]{T}\ \EPS\qquad  \csx[\DOM(\Sigma)]{\varnothing}{\Sigma} \ts \sigma\qquad\sigma\mid g \redsv \sigma' \mid g'
  }{
    \exists \Sigma' \supseteq \Sigma.\;\exists p \subq\DOM(\Sigma'\setminus\Sigma).\quad\csx[\DOM(\Sigma')]{\varnothing}{\Sigma'} \tsM g' : \ty[q,p]{T}\ \FX{\EPS,p}\qquad
    \csx[\DOM(\Sigma')]{\varnothing}{\Sigma'} \ts \sigma'
  }
\end{corollary}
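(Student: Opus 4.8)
The plan is to obtain MNF preservation as a direct corollary of the direct-style preservation theorem with store-allocated values (\Cref{thm:soundness_letv}), exploiting that \mnflang is nothing but \directlang restricted to graph terms under an identical type system. Concretely, the argument is a round-trip: embed the MNF typing derivation into the direct-style system, run direct-style preservation on the (literally identical) reduction step, observe via \Cref{lem:mnf:reduction_preserves_mnf} that the reduct is still in MNF, and embed the resulting typing back into \mnflang.

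First I would apply the forward direction of \Cref{lem:type_preservation:translation_backwards}(2) to the hypothesis $\csx[\DOM(\Sigma)]{\varnothing}{\Sigma} \tsM g : \ty[q]{T}\ \EPS$, yielding the direct-style judgment $\csx[\DOM(\Sigma)]{\varnothing}{\Sigma} \ts g : \ty[q]{T}\ \EPS$. The store well-formedness premise $\csx[\DOM(\Sigma)]{\varnothing}{\Sigma} \ts \sigma$ is already a direct-style judgment and is reused verbatim, and since $\redsv$ is the very same reduction relation defined in \Cref{sec:directstyle_letv}, the step $\sigma\mid g \redsv \sigma'\mid g'$ transfers unchanged.

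Next I would invoke \Cref{thm:soundness_letv} on these three direct-style premises, obtaining some $\Sigma'\supseteq\Sigma$ and $p\subq\DOM(\Sigma'\setminus\Sigma)$ together with $\csx[\DOM(\Sigma')]{\varnothing}{\Sigma'} \ts g' : \ty[q,p]{T}\ \FX{\EPS,p}$ and $\csx[\DOM(\Sigma')]{\varnothing}{\Sigma'} \ts \sigma'$. The latter is already the second conjunct of the goal, so the only remaining task is to recover the MNF typing of $g'$. For this I would appeal to \Cref{lem:mnf:reduction_preserves_mnf}: since $g$ is an \mnflang graph term and $\sigma$ binds only \mnflang introductions, the reduct $g'$ is again a graph term of \mnflang and $\sigma'$ again binds only \mnflang introductions. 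This makes the backward direction of \Cref{lem:type_preservation:translation_backwards}(2) applicable, turning $\csx[\DOM(\Sigma')]{\varnothing}{\Sigma'} \ts g' : \ty[q,p]{T}\ \FX{\EPS,p}$ into $\csx[\DOM(\Sigma')]{\varnothing}{\Sigma'} \tsM g' : \ty[q,p]{T}\ \FX{\EPS,p}$, which is the first conjunct of the goal with the same witnesses $\Sigma'$ and $p$.

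The step needing the most care is precisely this syntactic bookkeeping: \Cref{thm:soundness_letv} certifies only that the reduct is well-typed in direct style and says nothing about its syntactic shape, so the proof genuinely rests on \Cref{lem:mnf:reduction_preserves_mnf} to re-establish that $g'$ lies in the MNF fragment before the embedding may be reapplied. One should also note that the grown typing target $\ty[q,p]{T}\ \FX{\EPS,p}$ is an admissible MNF target, but since \Cref{lem:type_preservation:translation_backwards} is stated for arbitrary qualified types and effects this is immediate. The companion progress corollary (\Cref{thm:mnf:progress}) follows by an analogous one-way embedding from \Cref{thm:progress}, additionally using that an MNF graph term decomposes only into the evaluation contexts considered in \Cref{lem:mnf:reduction_preserves_mnf}.
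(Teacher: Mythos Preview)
Your proposal is correct and matches the paper's approach exactly: the paper states this as a corollary that follows from \Cref{lem:mnf:reduction_preserves_mnf} together with \Cref{lem:type_preservation:translation_backwards} and the direct-style preservation \Cref{thm:soundness_letv}, which is precisely the round-trip embedding argument you spell out.
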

\begin{corollary}[MNF Preservation of Separation]\label{coro:mnf:preservation_separation}
    Interleaved executions preserve types and disjointness:\vspace{-8pt}
      \infrule{%
    {\begin{array}{l@{\qquad}l@{\qquad}l@{\qquad}l}
    \csx[\DOM(\Sigma)]{\varnothing}{\Sigma} \tsM g_1 : \ty[q_1]{T_1}\ \EPS[1] & \sigma\phantom{'}\mid g_1\redsv  \sigma'\phantom{'}\mid  g_1'  & \csx[\DOM(\Sigma)]{\varnothing}{\Sigma} \ts \sigma &  \\[1ex]
    \csx[\DOM(\Sigma)]{\varnothing}{\Sigma} \tsM g_2: \ty[q_2]{T_2}\ \EPS[2]  & \sigma'\mid g_2  \redsv \sigma''\mid g_2'           & q_1 \overlap q_2 \subq \qbot       &
    \end{array}}
    }{{\begin{array}{ll@{\qquad}l@{\qquad}l}
    \exists p_1\;p_2\;\EPSPR[1]\;\EPSPR[2]\;\Sigma'\;\Sigma''. & \csx[\DOM(\Sigma')\phantom{'}]{\varnothing}{\Sigma'\phantom{'}} \tsM g_1' : \ty[p_1]{T_1}\ \EPSPR[1] & \Sigma'' \supseteq \Sigma' \supseteq \Sigma \\[1ex]
                                         & \csx[\DOM(\Sigma'')]{\varnothing}{\Sigma''} \tsM g_2' : \ty[p_2]{T_2}\ \EPSPR[2]                     & p_1 \overlap p_2 \subq \qbot
    \end{array}}}
    \end{corollary}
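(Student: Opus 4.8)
The plan is to lift the direct-style preservation-of-separation result through the type-preserving embedding of MNF terms, rather than re-running a progress/preservation argument. The crucial observation is that \mnflang{} shares both its reduction relation \redsv{} and its type system with the store-allocated direct-style \maybelang{}-calculus: MNF is merely a syntactic restriction of the same judgments. Consequently both the hypotheses and the desired conclusion can be transported across the equivalence supplied by \Cref{lem:type_preservation:translation_backwards}, and the work reduces to a single invocation of the already-established store-allocated analogue of \Cref{coro:preservation_separation}.

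Concretely, I would first apply the forward direction of \Cref{lem:type_preservation:translation_backwards}(2) to reinterpret the two MNF hypotheses $\csx[\DOM(\Sigma)]{\varnothing}{\Sigma} \tsM g_1 : \ty[q_1]{T_1}\ \EPS[1]$ and $\csx[\DOM(\Sigma)]{\varnothing}{\Sigma} \tsM g_2 : \ty[q_2]{T_2}\ \EPS[2]$ as direct-style judgments using \ts. The two reduction steps $\sigma\mid g_1 \redsv \sigma'\mid g_1'$ and $\sigma'\mid g_2 \redsv \sigma''\mid g_2'$, the store well-formedness, and the disjointness premise $q_1 \overlap q_2 \subq \qbot$ are then literally the premises of the preservation-of-separation corollary for the \redsv{} semantics (the store-allocated analogue of \Cref{coro:preservation_separation}, which holds by the identical proof appealing to \Cref{thm:soundness_letv}). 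Applying it produces store extensions $\Sigma'' \supseteq \Sigma' \supseteq \Sigma$ together with direct-style typings $\csx[\DOM(\Sigma')]{\varnothing}{\Sigma'} \ts g_1' : \ty[p_1]{T_1}\ \EPSPR[1]$ and $\csx[\DOM(\Sigma'')]{\varnothing}{\Sigma''} \ts g_2' : \ty[p_2]{T_2}\ \EPSPR[2]$, and the disjointness $p_1 \overlap p_2 \subq \qbot$.

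It remains to push these direct-style judgments back into \mnflang. Here \Cref{lem:mnf:reduction_preserves_mnf} does the essential work: since $g_1$ and $g_2$ are graph terms and $\sigma$ binds only \mnflang{} introductions, each reduct $g_1', g_2'$ is again a graph term of \mnflang{} and each of $\sigma', \sigma''$ binds only \mnflang{} introductions. Hence the reverse direction of \Cref{lem:type_preservation:translation_backwards}(2) applies and re-derives the MNF judgments $\csx[\DOM(\Sigma')]{\varnothing}{\Sigma'} \tsM g_1' : \ty[p_1]{T_1}\ \EPSPR[1]$ and $\csx[\DOM(\Sigma'')]{\varnothing}{\Sigma''} \tsM g_2' : \ty[p_2]{T_2}\ \EPSPR[2]$. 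Combined with the store inclusions and disjointness already obtained, this is exactly the conclusion of the corollary.

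The argument is essentially bookkeeping once the syntactic equivalence is in hand, so I do not anticipate a genuine obstacle. The one point demanding care is the applicability of the backward embedding to the \emph{reducts}: it is licensed only because \Cref{lem:mnf:reduction_preserves_mnf} guarantees both that $g_1', g_2'$ remain in MNF and that the intermediate and final stores remain \mnflang{}-introduction stores. Without this closure property the round-trip through the direct-style system would not close, so invoking it at the right place is the step I would take most deliberately.
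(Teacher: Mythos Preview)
Your proposal is correct and follows essentially the same approach as the paper: the text preceding the corollary explicitly states that, together with \Cref{lem:mnf:reduction_preserves_mnf}, the preservation-of-separation result for the store-allocated direct-style system (\Cref{sec:directstyle_letv_metatheory}) carries over to MNF, and \Cref{lem:type_preservation:translation_backwards} supplies the bridge between the two typing judgments. Your write-up simply spells out in detail what the paper leaves implicit in calling it a corollary.
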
 %
\section{Monadic Normal Form with Hard Dependencies}\label{sec:monad-norm-with}\label{sec:graphir}

\begin{figure}
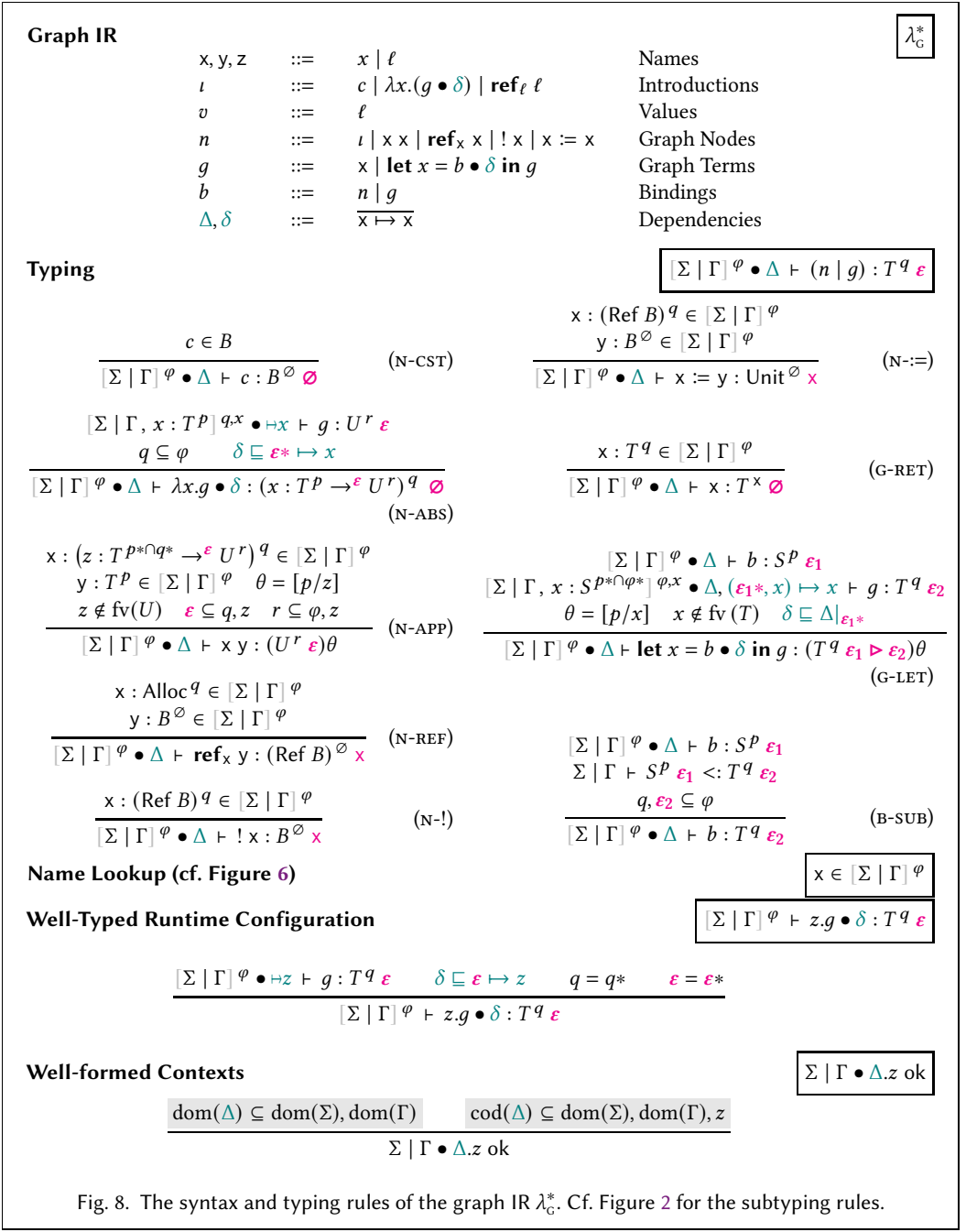
\small
\begin{mdframed}
\judgement{Graph IR}{\BOX{\irlang}}\vspace{-10pt}
\[\begin{array}{l@{\qquad}l@{\qquad}l@{\qquad}l}
    \V{x},\V{y},\V{z} &::= & x \mid \ell                                                                             & \text{Names}\\
    \iota             &::= & c \mid \lambda x.(g\has\DEP)  \mid \tref_{\ell}~\ell                                    & \text{Introductions}\\
    v                 &::= & \ell                                                                                    & \text{Values}\\
    n                 &::= & \iota \mid \V{x}~\V{x} \mid \tref_{\V{x}}~\V{x}\mid {!~\V{x}} \mid \V{x}\coloneqq \V{x} & \text{Graph Nodes} \\
    g                 &::= & \V{x} \mid \tlet~x = b\has\DEP~\tin~g                                                   & \text{Graph Terms}\\
    b                 &::= & n \mid g                                                                                & \text{Bindings}\\
    \DELTA,\DEP       &::= & \seq{\V{x}\mapsto \V{x}}                                                                & \text{Dependencies}\\
\end{array}\]\\
\typicallabel{g-let}%
\judgement{Typing}{\BOX{\GSD[\varphi]\ts (n\mid g) : \ty[q]{T}\ \EPS}}
\begin{minipage}[t]{.47\linewidth}\vspace{0pt}
\infrule[n-cst]{\ \\
      c \in B
    }{
      \GSD[\flt]\ts c : \ty[\qbot]{B}\ \PURE
    }
\vgap
\infrule[n-abs]{
  \csx[q,x]{\G\,,\, x: \ty[p]{T}}{\Sigma}\has\mute{\pointsto{x}} \ts g : \ty[r]{U}\ \EPS  \\ q\subq \flt\qquad
  \mute{\DEP\sqsubseteq\FX{\qsat{\EPS}}\mapsto x}
}{
  \GSD[\flt]\ts \lambda x.g\has\DEP : \ty[q]{\left(x: \ty[p]{T} \to^{\EPS} \ty[r]{U}\right)}\; \PURE
}
\vgap
\infrule[n-app]{
    \V{x} : \ty[q]{\left(z: \ty[\qsat{p}\overlap \qsat{q}]{T} \to^{\EPS} \ty[r]{U}\right)}\in\GS[\flt]\\
    \V{y} : \ty[p]{T}\in\GS[\flt]\quad\theta = [p/z]\\
    z \notin\FV(U)\quad\EPS\subq q,z\quad r\subq\flt,z\\
  }{
    \GSD[\flt]\ts \V{x}~\V{y} : (\ty[r]{U}\ \EPS)\theta
  }
\vgap
\infrule[n-ref]{
      \V{x} : \ty[q]{\Typ{Alloc}}\in\GS[\flt]\\\V{y} : \ty[\qbot]{B}\in \GS[\flt]
    }{
      \GSD[\flt]\ts \tref_\V{x}~\V{y} : \ty[\qbot]{(\TRef~\ty{B})}\ \FX{\V{x}}
    }
\vgap
    \infrule[n-\(!\)]{
      \V{x} : \ty[q]{(\TRef~\ty{B})}\in\GS[\flt]
    }{
      \GSD[\flt]\ts {!~\V{x}} : \ty[\qbot]{B}\ \FX{\V{x}}
    }
\end{minipage}%
\begin{minipage}[t]{.03\linewidth}
\hspace{1pt}%
\end{minipage}%
\begin{minipage}[t]{.5\linewidth}\vspace{0pt}
\infrule[n-\(:=\)]{
      \V{x} : \ty[q]{(\TRef~\ty{B})} \in\GS[\flt]\\
      \V{y} : \ty[\qbot]{B}\in\GS[\flt]
    }{
      \GSD[\flt]\ts \V{x} \coloneqq \V{y} : \ty[\qbot]{\TUnit}\ \FX{\V{x}}
    }
\vgap
    \infrule[g-ret]{\ \\
      \V{x} : \ty[q]{T} \in \GS[\flt]
    }{
      \GSD[\flt] \ts \V{x} : \ty[\V{x}]{T}\ \PURE
    }
\vgap
    \infrule[g-let]{\ \\
      \GSD[\flt]\ts b : \ty[p]{S} \ \EPS[1]\\
      \cdsx[\flt,x]{\G\, ,\, x: \ty[\qsat{p}\cap\qsat{\flt}]{S}}{\Sigma}{\DELTA,\mute{(\FX{\qsat{\EPS[1]}},x)\mapsto x}} \ts g : \ty[q]{T}\ \EPS[2] \\
      \theta = [p/x]\quad x \notin\FV{(T)}\quad\mute{\DEP\sqsubseteq\DELTA\vert_{\FX{\qsat{\EPS[1]}}}}
    }
    {
      \GSD[\varphi] \vdash \Let{x}{b\has\DEP}{g} : (\ty[q]{T}\ \EPS[1] \EFFSEQ  \EPS[2])\theta
    }
\vgap
\infrule[b-sub]{\ \\
      \GSD[\flt]\ts b : \ty[p]{S}\ \EPS[1] \\  \GS\ts\ty[p]{S}\ \EPS[1] <: \ty[q]{T}\ \EPS[2]\\
      q,\EPS[2]\subq\flt
    }{
      \GSD[\flt]\ts b : \ty[q]{T}\ \EPS[2]
    }
\end{minipage}
\judgement{Name Lookup (cf. \Cref{fig:checking-vanilla-mnf})}{\BOX{\V{x}\in\GS[\flt]}}\\
\judgement{Well-Typed Runtime Configuration}{\BOX{\GS[\flt]\ts z.g\has\DEP : \ty[q]{T}\ \EPS}}
\infrule{
  \cdsx[\flt]{\G}{\Sigma}{\mute{\pointsto{z}}}\ts g : \ty[q]{T}\ \EPS \qquad \mute{\DEP\sqsubseteq\EPS\mapsto z}\qquad q=\qsat{q}\qquad\EPS=\qsat{\EPS}
  }{
  \GS[\flt]\ts z.g\has\DEP : \ty[q]{T}\ \EPS
}
\judgement{Well-formed Contexts}{\BOX{\WF{\GSD.z}}}\vspace{-8pt}
\infrule{
  \HLBox[gray!20]{\DOM(\DELTA)\subq\DOM(\Sigma),\DOM(\G)}\qquad \HLBox[gray!20]{\CODOM(\DELTA)\subq\DOM(\Sigma),\DOM(\G),z}
}{
\WF{\GSD.z}
}
\caption[The syntax and typing rules of the graph IR \irlang.]{The syntax and typing rules of the graph IR \irlang. Cf.~\Cref{fig:maybe:syntax} for the subtyping rules.}\label{fig:graphir:syntax}\label{fig:checking-graphir-mnf}\label{fig:graphir:checking}
\end{mdframed}
\end{figure}

\begin{figure}\small
\begin{mdframed}
\typicallabel{\(\leadsto\)-let}%
\judgement{Dependency Synthesis}{\BOX{\GSD[\varphi]\ts (n\mid g) : \ty[q]{T}\ \EPS\yields{(\bm{n}\mid\bm{g})\has\DEP}}}
\begin{minipage}[t]{.47\linewidth}\vspace{0pt}
\infrule[\(\leadsto\)-cst]{
      c \in B
    }{
      \GSD[\flt]\ts c : \ty[\qbot]{B}\ \PURE\yields{c\has\NODEP}
    }
\vgap
\infrule[\(\leadsto\)-abs]{
  \csx[q,x]{\G\,,\, x: \ty[p]{T}}{\Sigma}\has\mute{\pointsto{x}} \ts g : \ty[r]{U}\ \EPS \yields{ \bm{g}\has\DEP}  \\ q\subq \flt
}{
  \GSD[\flt]\ts \lambda x.g : \ty[q]{\left(x: \ty[p]{T} \to^{\EPS} \ty[r]{U}\right)}\; \PURE\\\yields{(\lambda x.\bm{g}\has\DEP)\has\NODEP}
}
\vgap
\infrule[\(\leadsto\)-app]{
    \V{x} : \ty[q]{\left(z: \ty[\qsat{p}\overlap \qsat{q}]{T} \to^{\EPS} \ty[r]{U}\right)}\in\GS[\flt]\\
    \V{y} : \ty[p]{T}\in\GS[\flt]\quad\theta = [p/z]\\
    z \notin\FV(U)\quad\EPS\subq q,z\quad r\subq\flt,z
  }{
    \GSD[\flt]\ts \V{x}~\V{y} : (\ty[r]{U}\ \EPS)\theta\\\yields{\V{x}~\V{y}\has\DELTA\vert_{\FX{\qsat{(\EPS\FX{\theta})}}}}
  }
\vgap
\infrule[\(\leadsto\)-ref]{
  \V{x} : \ty[q]{\Typ{Alloc}}\in\GS[\flt]\\\V{y} : \ty[\qbot]{B}\in \GS[\flt]
    }{
      \GSD[\flt]\ts \tref_\V{x}~\V{y} : \ty[\qbot]{(\TRef~\ty{B})}\ \FX{\V{x}}\\\yields{\tref_\V{x}~\V{y}\has\DELTA\vert_{\FX{\qsat{\V{x}}}}}
    }
\vgap
    \infrule[\(\leadsto\)-\(!\)]{
      \V{x} : \ty[q]{(\TRef~\ty{B})}\in\GS[\flt]
    }{
      \GSD[\flt]\ts {!~\V{x}} : \ty[\qbot]{B}\ \FX{\V{x}}\\\yields{{!~\V{x}}\has\DELTA\vert_{\FX{\qsat{\V{x}}}}}
    }
\end{minipage}%
\begin{minipage}[t]{.03\linewidth}
\hspace{1pt}%
\end{minipage}%
\begin{minipage}[t]{.5\linewidth}\vspace{0pt}
\infrule[\(\leadsto\)-\(:=\)]{
      \V{x} : \ty[q]{(\TRef~\ty{B})} \in\GS[\flt]\\
      \V{y} : \ty[\qbot]{B}\in\GS[\flt]
    }{
      \GSD[\flt]\ts \V{x} \coloneqq \V{y} : \ty[\qbot]{\TUnit}\ \FX{\V{x}}\\\yields{\V{x} \coloneqq \V{y}\has\DELTA\vert_{\FX{\qsat{\V{x}}}}}
    }
\vgap
    \infrule[\(\leadsto\)-ret]{\ \\
      \V{x} : \ty[q]{T} \in \GS[\flt]
    }{
      \GSD[\flt] \ts \V{x} : \ty[\V{x}]{T}\ \PURE\yields{\V{x}\has\NODEP}
    }
\vgap
    \infrule[\(\leadsto\)-let]{
      \GSD[\flt]\ts b : \ty[p]{S} \ \EPS[1]\yields{\bm{b}\has\DEP[1]}\\
      \cdsx[\flt,x]{\G\, ,\, x: \ty[\qsat{p}\cap\qsat{\flt}]{S}}{\Sigma}{\DELTA,\mute{(\FX{\qsat{\EPS[1]}},x)\mapsto x}} \ts g : \ty[q]{T}\ \EPS[2] \\
      \yields{\bm{g}\has\DEP[2]}\\
      \theta = [p/x]\quad x \notin\FV{(T)}
    }
    {
      \GSD[\varphi] \vdash \Let{x}{b}{g} : (\ty[q]{T}\ \EPS[1] \EFFSEQ  \EPS[2])\theta\\\yields{\left(\Let{x}{\bm{b}\has\DEP_1}{\bm{g}}\right)\has\DEP[1],\DEP_2[x\leadsto\DELTA\vert_{\qsat{p}}]}
    }
\vgap
\infrule[\(\leadsto\)-sub]{\ \\
      \GSD[\flt]\ts b : \ty[p]{S}\ \EPS[1]\yields{\bm{b}\has\DEP[1]} \\  \GS\ts\ty[p]{S}\ \EPS[1] <: \ty[q]{T}\ \EPS[2]\\
      q,\EPS[2]\subq\flt
    }{
      \GSD[\flt]\ts b : \ty[q]{T}\ \EPS[2]\yields{\bm{b}\has\DELTA\vert_{\FX{\qsat{\EPS[2]}}}}
    }
\end{minipage}
\judgement{Name Lookup (cf. \Cref{fig:checking-vanilla-mnf})}{\BOX{\V{x}\in\GS[\flt]}}
\caption{Type-and-effect-directed dependency synthesis from \mnflang into \irlang.}\label{fig:graphir:synthesis}
\end{mdframed}
\end{figure}

\begin{figure}\small
\begin{mdframed}
\judgement{Store Terms, Graph Term Contexts, Binding Contexts}{}
\[\begin{array}{l@{\ \ }c@{\ \ }l@{\qquad\qquad\ }l@{\ \ }c@{\ \ }l}
    {\sigma} & ::= & \varnothing \mid \sigma,\,\tlets~{\ell = \iota}     & & & \\
    {G}      & ::= & \square\has\DEP \mid (\tlet~{x = G}~\tin~g)\has\DEP & & & \\
    {B}      & ::= & (\tlet~{x = \square}~\tin~g)\has\DEP \mid (\tlet~{x = B}~\tin~g)\has\DEP         & & & \\
  \end{array}\]
\judgement{Reduction Rules}{\BOX{\sigma \mid  z.g\has\DEP \redg \sigma\mid z.g\has\DEP}}
\[\begin{array}{r@{\ \ }c@{\ \ }ll@{\,}r}
\sigma\mid z.\CX[gray]{B}{\ell_1\ \ell_2\has\DEP[1]}                                                    & \redg & \sigma\mid z.\CX[gray]{B}{(g\has\DEP[2])\mute{[x\leadsto \DEP[1] ]}[\ell_2/x][\ell_2/x]_{\mathsf{t}}} & & \rulename{$\beta$} \\
                                                                                                        &       & \sigma = \sigma_1,\,\tlets~{\ell_1 = \lambda x.g\has\DEP[2]},\,\sigma_2                               & & \\
                                                                                                        &       & \DOM(\DEP[1])\subq\DOM(\sigma),\ \CODOM(\DEP[1]) \subq \{z\}                                                                           & & \\[1ex]
\sigma\mid z.\CX[gray]{G}{\tlet~{x = \ell\has\DEP}~\tin~g}                                              & \redg & \sigma\mid z.\CX[gray]{G}{g\mute{[x\leadsto\DEP]}[\ell/x][\ell/x]_{\mathsf{t}}}                       & & \rulename{let} \\
                                                                                                        &       & \DOM(\DEP)\subq\DOM(\sigma),\ \CODOM(\DEP) \subq \{z\}                                                                              & & \\[1ex]
\sigma\mid z.\CX[black]{G}{\tlet~{x = \iota\has\DEP}~\tin~g}                                            & \redg & \sigma,\tlets~{\ell = \iota}\mid z.\CX[black]{G'}{g\mute{[x\leadsto\DEP]}[\ell/x][\ell/x]_{\mathsf{t}}}                    & & \rulename{intro} \\
                                                                                                        &       & \ell \not\in \DOM(\sigma), \ \CODOM(\DEP) \subq \{z\},\ G' = G\langle \iota:z.\ell\rangle             & & \\[1ex]
\sigma,\,\tlets~{\ell = \tref_w~\ell'},\,\sigma'\mid z.\CX[gray]{B}{!\ell\has\DEP}                      & \redg & \sigma,\,\tlets~{\ell = \tref_w~\ell'},\,\sigma'\mid z.\CX[gray]{B}{\ell'\has\DEP}                    & & \rulename{deref} \\
                                                                                                        &       & \DOM(\DEP)\subq\DOM(\sigma),\ \CODOM(\DEP) \subq \{z\}                                                                              & & \\[1ex]
\sigma,\,\tlets~{\ell = \tref_w~\ell'},\,\sigma'\mid z.\CX[gray]{B}{\ell := \ell''\has\DEP}             & \redg & \sigma,\,\tlets~{\ell = \tref_w~\ell''},\,\sigma'\mid z.\CX[gray]{B}{\tunit\has\DEP}                  & & \rulename{assign}\\
                                                                                                        &       & \DOM(\DEP)\subq\DOM(\sigma),\ \CODOM(\DEP) \subq \{z\}                                                                              & &
\end{array}\]
\judgement{Contextual Effect Propagation}{\BOX{G\langle \iota: z.\ell\rangle}}
\[\begin{array}{l@{\ \ }c@{\ \ }l}
 G\langle  c : z.\ell \rangle & = & G\\
 G\langle  \lambda x.g\has\DEP : z.\ell \rangle & = & G\\
 (\square\has\DEP)\langle \tref_w~\ell_1 : z. \ell_2 \rangle & = & \square\has\mute{\DEP,\ell_2 \mapsto z}\\
 ((\tlet~{x = G}~\tin~g)\has\DEP)\langle \tref_w~\ell_1 : z. \ell_2 \rangle & = & (\tlet~{x = G\langle \tref_w~\ell_1 : z. \ell_2 \rangle}~\tin~g)\has\mute{\DEP,\ell_2 \mapsto z}\\
\end{array}\]
\judgement{Well-Formed Store Entries and Stores}{\BOX{\GSD[\flt] \ts \ell : \iota \in \sigma}\ \BOX{\GSD[\flt]\ts \sigma}} %
\begin{mathpar}
     \inferrule{
       \Sigma(\ell) = \ty[\qbot]{\TRef~B}\qquad\GSD[\flt]\ts \ell' : \ty[\qbot]{B}\ \PURE\qquad\Sigma(w) = \ty[\qbot]{\Typ{Alloc}}\qquad\sigma(w) = \omega
     }{
      \GSD[\flt] \ts\ell: \tref_w~\ell' \in \sigma
     }
    \and
    \inferrule{
       \Sigma(\ell) = \ty[q]{T}\qquad\GSD[\flt]\ts \iota : \ty[q]{T}\ \PURE\qquad \forall \ell,w.\ \iota \neq \tref_w~\ell
     }{
      \GSD[\flt] \ts \ell : \iota \in \sigma
     }
    \and
    \inferrule{
      \vert \Sigma \vert = \vert \sigma \vert\qquad \left(\GSD[\flt] \ts \ell:\iota \in \sigma\right)_{\tlets~\ell=\iota\in\sigma}
    }{
       \GSD[\flt]\ts\sigma
    }
\end{mathpar}
\caption{Call-by-value reduction for \irlang with runtime dependency checking.}\label{fig:graphir:cbv}
\end{mdframed}
\end{figure}
 
This section presents our graph IR \irlang which enriches the monadic normal form (MNF) of the previous section
with effect dependencies.

\subsection{Dependencies}\label{sec:hard-dependencies-intro}
Data dependencies are expressed by ordinary variable occurrences in terms. Tracking effect dependencies requires
extra term annotations.
Intuitively, a (hard) effect dependency \(x \mapsto y\) indicates that an effect
on the node \(x\) (\eg, a reference, or global capability) is induced, and that node \(y\) is the previous
node in the graph at which an effect on \(x\) occurred. That is to say, \(x \mapsto y\) \emph{does not}
indicate there is an edge between \(x\) and \(y\), but rather that there is an edge \emph{from the current node}
to which \(x \mapsto y\) is attached pointing to \(y\) and its (effect) label is \(x\).
Oftentimes we will just say ``dependency'' and omit the ``hard'' and ``effect'' qualifiers.

We bundle these dependencies into finite maps \(\DEP\) from variables to variables, and annotate them
to atomic nodes or nested graphs at let bindings.
Dependencies come with a standard "update" operator which is associative:
\[(\DEP[1],\DEP[2])(x) := \begin{cases} \DEP[2](x) & x\in\DOM(\DEP[2])\\ \DEP[1](x) & \text{otherwise.}\end{cases}\]
and a restriction operator of the domain to a given set (abusing set notation):
\[\DEP\vert_{\alpha} := \{x \mapsto y\in\DEP\mid x\in\alpha \},\]
and we also define two removal operators, i.e.,
\[\DEP - \alpha := \{x \mapsto y \in\DEP \mid y \notin \alpha \},\]
which removes all mappings pointing into \(\alpha\), and
\[\DEP \setminus \alpha := \DEP\vert_{\DOM(\DEP)\setminus \alpha},\]
which removes all entries pointing from \(\alpha\).
In symmetry with substitutions on terms and qualifiers, there is a notion of substitution (or rewiring, rerouting), over dependencies:
\[ \DEP[1][x \leadsto \DEP[2]] := \DEP[1] - \{x\}, \{y\mapsto z \in \DEP[2] \mid y\mapsto x \in \DEP[1]\},\]
which is rerouting the dependency target \(x\) via \(\DEP[2]\). For instance,
\[\mute{(a \mapsto b, c \mapsto x, y \mapsto x)[x \leadsto (c\mapsto a, y\mapsto b)]} = \mute{(a \mapsto b, c \mapsto a, y \mapsto b)}. \]
That is, rewiring substitutes some of the targets in dependency mappings, which happens when the variable \(x\) is replaced, e.g.,
when \(x\) is the formal parameter of a function at a call site. The function body may have dependencies pointing to \(x\), and these
must be rerouted using the call-site's dependencies. Dually, \(x\) might also be in the domain of a dependency, and the monadic
semantics will rename it to some store location holding the value of the call site argument. For this case, we also
lift qualifier substitutions to act on the domain of dependencies, i.e.,
\[\DEP{[q/x]} := \DEP\setminus x, \{ y\mapsto \DEP(x) \mid y \in q \}\]
so that all variables and locations in \(q\) will point to \(x\)'s target, if it has an entry.

\paragraph{\textbf{Effect-dependency calculation}}
Static effects \(\EPS\) and contextual information on last uses determine effect dependencies \(\DEP\) at graph nodes.
We outline the basic process in the following.

A few auxiliary definitions are in order:
\begin{align*}
   \alpha \mapsto x & := \{ y \mapsto x \mid y\in \alpha\}  \\
   \Sigma,\G \mapsto x &:= \DOM(\Sigma)\cup\DOM(\G)\mapsto x\\
   {\color{gray}\GS\vdash\,}\pointsto{x} & := \Sigma,\G \mapsto x,\ \Sigma,\G\ \text{are implicit}\\
   {\color{gray}\GS\vdash\,}\DEP\uparrow^z & := \DEP, \{ x \mapsto z \mid x \in (\DOM(\Sigma)\cup\DOM(\G))\setminus\DOM(\DEP)\},\ \Sigma,\G\ \text{are implicit}
\end{align*}
Firstly, we overload the finite maps notation to specify a map with domain \(\alpha\) pointing to
the single variable \(x\), and analogously the map pointing all the variables and locations bound in
\(\G\) and \(\Sigma\) to \(x\), for which we often use the shorthand notation \(\pointsto{x}\)
omitting those contexts when they are unambiguous. The \(\uparrow^z\) operator is used to ensure
that a dependency is defined for at least the variables and location in context, adding entries
pointing to a common block start variable \(z\).

If a nested graph
named \(x\) has effect \(\EPS\), then we record that the reachable variables affected by \(\FX{\qsat{\EPS}}\)\footnote{Since the systems in this report \emph{lazily} assign qualifiers and effects, we need to consider the transitive reachability closure 
(\Cref{fig:saturation_overlap}) to get ahold of all relevant dependencies.}
were last used at \(x\) (plus \(x\) was last used at itself, explained below). So if
\(\mute{\Delta}\) records the last use of each variable in scope (essentially a structural coeffect
which is also just a finite map from variables to variables), then we update it to \(\mute{\Delta,
(\FX{\qsat{\EPS}},x\mapsto x)} \). Synthesizing the dependency for node \(x\) itself is just a
matter of projecting the last-use coeffect \(\DELTA\).

Concretely, consider \(\tlet~{x = g_1}~\tin~g_2\) with last uses \(\mute{\Delta}\), and let \(\EPS\)
be the effect of \(g_1\). We first calculate the annotated version of \(g_1\):
\[g_1 \leadsto g_1' \has \mute{\Delta\vert_{\FX{{\qsat{\EPS}}}}}\] which attaches the current last uses with
respect to \(\EPS\). When proceeding with the continuation \(g_2\), \(x\) is added into the context,
and we need to define its last-use coeffect, which is simply \(x\) itself.
We will discuss the precise calculation rules of our type system in the next section.

\subsection{Syntax and Statics}\label{sec:hard-statics}

\Cref{fig:checking-graphir-mnf} shows the type checking relation,
and it is easy too see that the \irlang-calculus is identical to the \mnflang-calculus (\Cref{fig:checking-vanilla-mnf}),
when erasing all the \mute{teal} parts pertaining to dependencies. At the term level, we attach
dependencies \(\DEP\) to let bindings (representing the dependencies for all the effects of the bound graph term) and
the body of \(\lambda\)-abstractions (representing the dependencies of the abstraction's latent effects)

\subsubsection{Type Checking}
The typing judgment now carries an additional dependency map \(\DELTA\) attached to the context
(basically a form of coeffect), which is used to track the \emph{last uses} of variables and
locations in the context/store. We stipulate that at all times the domain of \(\DELTA\) ranges over
the domain of \(\G\) and \(\Sigma\). Last uses are threaded as an input through typing derivations,
and the only rules at which they are accessed are those for terms with dependency annotations, \ie,
\rulename{n-abs} and \rulename{g-let}. Those annotated dependencies should always conform to the
effect of the term in question, and as a rule of thumb, we regard the effect \(\EPS\)'s transitive
reachability closure (a set of variables and locations) as a slice of the currently known last uses
\(\DELTA\), \ie, restricting the domain of \(\DELTA\) to the effect in question yields all the
relevant dependencies at the current node.

In the rule \rulename{n-abs} for \(\lambda\)-abstractions, we check the body with all the last uses
pointing to the formal parameter \(x\). This is because we generally do not know the call site and
actual argument in advance, and the most natural choice is abstracting the last use of the free variables in
the body by \(x\) in symmetry to term abstraction. As we have motivated in the main paper and will shortly see
in the operational semantics (\Cref{sec:hard-dynamics}), the ``latent'' dependency \(\DEP\) annotated to the
function's body has to be rewired at the call site.
Following the ``rule of thumb'' above, we check that the annotated
dependency \(\DEP\) is conforming to the body's dependency in relation to its last uses, \ie, \(\DEP\) is
a sub-map of \(\mute{\FX{\qsat{\EPS}}\mapsto x}\).

Similarly, rule \rulename{g-let} checks that the annotated dependency for the bound node/nested
graph is conforming to its effect, \ie, it is a sub-map of the last uses for \(\FX{\qsat{\EPS[1]}}\)
in \(\DELTA\). When typing the let body \(g\), we update the last uses for the variables affected by
\(\FX{\qsat{\EPS[1]}}\) and let them point to the binding \(x\), precisely because those variables
have been last used here. Furthermore, since \(x\) is newly introduced in the body \(g\), we also
have to specify a last use for it, which is \(x\) itself.

\subsubsection{Dependency Synthesis}
While the typing relation provides a means to check dependencies, it does not provide a method to
compute them. For this purpose we define a type/qualifier/effect-directed synthesis relation
(\Cref{fig:graphir:synthesis}) for \mnflang terms which lack any dependency annotations and produces
dependency-annotated \irlang terms along with the dependency map for effects on free variables as
output, given an initial map \(\DELTA\) of last uses and typing context as input.
Synthesizing the dependencies follows the ``rule of thumb'' from above, \ie, synthesized and inserted
dependencies are always the currently known last uses of the variables in the term's effect
(cf.\ \Cref{sec:graphir:synthmeta}).

\subsection{Dynamics}\label{sec:hard-dynamics} The call-by-value operational semantics for \irlang
(\Cref{fig:graphir:cbv}) is a refinement of the operational semantics for \mnflang with
store-allocated values (\Cref{fig:directstyle_letv:semantics}). The changes are twofold: (1)
dependencies are part of the term syntax now, and have to be accounted for by reductions, and (2)
the semantics additionally checks for each reduction step whether all the effect dependencies have
been already evaluated and committed to the store. By type soundness
(cf.~\Cref{sec:hard-metatheory}), well-typed terms do not exhibit dependency violations, \ie,
dependencies correctly reflect the observed runtime execution order of effect operations.

Reduction occurs over runtime configurations \(\sigma \mid z.g \has\DEP\), which compared to the
\mnflang-calculus attaches a distinguished \emph{start variable} \(z\) to the graph term \(g\) along
with its dependency \(\DEP\). We stipulate that \(z\) is always chosen so that it is not a free
variable of \(g\). It is a mechanism to check whether a dependency has already been evaluated. That
is, at the top level, we set the initial use of all the free variables/locations to \(z\), and that
will be reflected in the synthesized dependencies of a term. The invariant is that the next
operation will have all its dependencies purely on store variables and these will point to \(z\),
which is outside of the program. The meaning is that all dependencies of the current node are in the
store, and each of the reduction rules checks this property. In the following, we discuss the
changes made to the reduction rules compared to \Cref{fig:directstyle_letv:semantics}.

In the function application rule \rulename{\(\beta\)}, we now have to account for the latent
dependencies of the function and the dependencies at the call site. Thus, in symmetry with dependent
function application, we rewire the function body and its dependency \(\DEP[2]\) with \(\DEP[1]\)
for \(x\). Since dependencies are type-level information annotated in the term syntax, we also have
to perform the qualifier substitution part of the static dependent function application, \ie, we
substitute the argument's location \(\ell_2\) for \(x\) in the domain of dependencies, if present.
To distinguish qualifier substitution in dependencies from term substitution/renaming of variables,
we attach the subscript \(\mathsf{t}\) to the latter. Hence, substitution becomes simultaneous
rewiring, qualifier substitution, and renaming on terms in the graph IR.
Accordingly, we do the same in \rulename{let} and \rulename{intro}.

Rule \rulename{intro} has changed compared to \Cref{fig:directstyle_letv:semantics} in that it
simultaneously performs the \rulename{let} elimination along with the introduction of the fresh
location. The reason is that from the perspective of static typing,
introducing the new location will make it appear in reachability qualifiers and effects, and
consequently in dependencies. So the reduction step has to patch up the dependencies in the
body of the let expression, because \(x\) now aliases \(\ell\), which it did not beforehand.
Furthermore, if \(\iota\) is a mutable reference, using it makes it also appear in the codomain of
effects dependencies, and this change needs propagating into all dependencies along the spine of
the evaluation context up to the top level, because \(\ell\) is globally visible.

Finally, modulo checking effect dependencies, rules \rulename{deref} and \rulename{assign} have not
changed.

\subsection{Metatheory}\label{sec:hard-metatheory}

\subsubsection{Properties of Dependency Synthesis}\label{sec:graphir:synthmeta}

Dependency synthesis induces a function over MNF typing derivations which given an input map of last
uses always produce an annotated graph IR term with the same type, qualifier, and effect. As a
corollary, we obtain a type/effect/qualifier-preserving and dependency-synthesizing translation from
the direct style \directlang system into the \irlang graph IR.

\begin{lemma}[Synthesis Invariant]\label{lem:deps-invariant}
  Dependencies are completely determined by the context and effect, as follows:
  \begin{enumerate}
    \item If\ \ \(\GSD[\varphi]\vdash n : \ty[q]{T}\ \EPS\yields{\bm{n} \has \DEP},\) then \(\DEP = \DELTA\vert_{\FX{\qsat{\EPS}}}\).
    \item If\ \ \(\GSD[\varphi]\vdash g : \ty[q]{T}\ \EPS\yields{\bm{g}\has \DEP},\) then \(\DEP = \DELTA\vert_{\FX{\qsat{\EPS}}}\).
  \end{enumerate}
\end{lemma}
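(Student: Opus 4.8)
The plan is to proceed by mutual induction on the synthesis derivation, following the two judgment forms of \Cref{fig:graphir:synthesis}. For every rule other than \rulename{$\leadsto$-let} the claim is immediate by inspection, because the synthesis rules are designed so that the emitted dependency is \emph{syntactically} of the form $\DELTA\vert_{\FX{\qsat{\EPS}}}$ for the very effect $\EPS$ carried in the conclusion: rules \rulename{$\leadsto$-ref}, \rulename{$\leadsto$-!}, \rulename{$\leadsto$-:=} and \rulename{$\leadsto$-app} emit exactly this restriction, while the pure forms \rulename{$\leadsto$-cst}, \rulename{$\leadsto$-ret} and \rulename{$\leadsto$-abs} emit $\NODEP$, which equals $\DELTA\vert_{\FX{\qsat{\PURE}}}$ since $\qsat{\PURE}=\emptyset$. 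Rule \rulename{$\leadsto$-sub} likewise re-derives the dependency from scratch as $\DELTA\vert_{\FX{\qsat{\EPS[2]}}}$ against its conclusion effect $\EPS[2]$, so no appeal to the induction hypothesis on the sub-derivation is even needed there.

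The entire content of the lemma therefore lies in the \rulename{$\leadsto$-let} case, where the emitted dependency $\DEP[1],\DEP[2][x\leadsto\DELTA\vert_{\qsat p}]$ is assembled from the outputs of the two sub-derivations rather than written as a single restriction. Here I would first invoke the induction hypotheses to replace the sub-outputs by canonical restrictions: $\DEP[1]=\DELTA\vert_{\qsat{\EPS[1]}}$ for the bound term $b$, and $\DEP[2]=\DELTAP\vert_{\qsat{\EPS[2]}}$ for the body $g$, where $\DELTAP=\DELTA,(\FX{\qsat{\EPS[1]}},x)\mapsto x$ is the extended last-use map under which $g$ is checked. The goal then becomes the purely algebraic identity
\[ \DELTA\vert_{\qsat{\EPS[1]}}\;,\;\bigl(\DELTAP\vert_{\qsat{\EPS[2]}}\bigr)[x\leadsto\DELTA\vert_{\qsat p}] \;=\; \DELTA\vert_{\qsat{(\EPS[1]\cup\EPS[2])\theta}}, \qquad \theta=[p/x]. \]

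To discharge it I would record a few structural facts and then reduce everything to an equality of \emph{domains}. Since $x$ is fresh it lies outside the outer observation $\flt$, so by the observability invariant (\Cref{lem:has_type_filter}, applied to the underlying effect assignment) $\EPS[1]\subq\flt$ yields $x\notin\EPS[1]$, hence $\EPS[1]\theta=\EPS[1]$ and $x\notin\qsat{\EPS[1]}$; consequently $(\EPS[1]\cup\EPS[2])\theta=\EPS[1]\cup\EPS[2]\theta$ and, as saturation distributes over union, $\qsat{(\EPS[1]\cup\EPS[2])\theta}=\qsat{\EPS[1]}\cup\qsat{\EPS[2]\theta}$. Unfolding the extension $\DELTAP$, exactly the names of $\qsat{\EPS[2]}$ lying in $\qsat{\EPS[1]}\cup\{x\}$ point to $x$; rewiring deletes these and re-adds only those also in $\qsat p$ (and never $x$ itself, as $x\notin\qsat p$). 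Crucially, every surviving entry still points along $\DELTA$ --- an unchanged entry keeps $\DELTA(y)$, a re-added one is $\DELTA\vert_{\qsat p}(y)=\DELTA(y)$ --- so the rewired map is itself a restriction $\DELTA\vert_D$ of $\DELTA$ with $D=\bigl(\qsat{\EPS[2]}\setminus(\qsat{\EPS[1]}\cup\{x\})\bigr)\cup\bigl(\qsat{\EPS[2]}\cap\qsat{\EPS[1]}\cap\qsat p\bigr)$. Because both components are now restrictions of the \emph{same} $\DELTA$, the update operator agrees with union on their overlap, so the identity collapses to the single domain equation $\qsat{\EPS[1]}\cup D=\qsat{\EPS[1]}\cup\qsat{\EPS[2]\theta}$; in particular the $\qsat{\EPS[1]}$-overlap that $D$ drops is precisely what the left component $\DELTA\vert_{\qsat{\EPS[1]}}$ restores.

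The crux --- and the step I expect to cost the most --- is relating the \emph{inner} saturation $\qsat{\EPS[2]}$, computed with $x$ bound at qualifier $\qsat p\cap\qsat\flt$, to the \emph{outer} saturation $\qsat{\EPS[2]\theta}$ obtained after substituting $p$ for $x$. Here I would use that no pre-existing name reaches the fresh $x$, so inner and outer saturations agree off $x$, while $\qsat{x}$ itself unfolds to $\{x\}\cup\qsat{(\qsat p\cap\qsat\flt)}$; observability $p\subq\flt$ together with monotonicity and idempotence of saturation then simplifies $\qsat{(\qsat p\cap\qsat\flt)}$ to $\qsat p$, so that the $x$-contributions to the inner saturation collapse exactly to the $\qsat p$ introduced by the substitution. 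This is the same substitution-commutes-with-saturation phenomenon underlying \Cref{lem:subst_commutes_overlap}, and verifying it (with a case split on whether $x\in\EPS[2]$), together with the rewiring bookkeeping above, is where the real work sits; the remaining set algebra and the observation that update equals union for two restrictions of a single map are routine.
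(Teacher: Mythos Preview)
Your proposal is correct and follows the same overall scheme as the paper: mutual induction on the synthesis derivation, with all cases except \rulename{$\leadsto$-let} immediate by inspection, and the let case handled via the induction hypotheses plus a case split on whether $x\in\EPS[2]$.

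Where you differ from the paper is in the organization of the let-case algebra. You observe early that after rewiring, \emph{both} components of the update $\DEP[1],\DEP[2][x\leadsto\DELTA\vert_{\qsat{p}}]$ are restrictions of the \emph{same} map $\DELTA$, so that the update collapses to a union and the whole identity reduces to the single domain equation $\qsat{\EPS[1]}\cup D=\qsat{\EPS[1]}\cup\qsat{\EPS[2]\theta}$. The paper instead carries out the rewiring symbolically and, inside the $x\in\EPS[2]$ case, introduces a further sub-split on whether $\qsat{p}\subseteq\qsat{\EPS[1]}\cap\qsat{\EPS[2]}$ or $\qsat{p}\cap\qsat{\EPS[1]}\cap\qsat{\EPS[2]}=\varnothing$; both branches then compute the same final domain. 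Your uniform reduction buys you a shorter argument and sidesteps that extra dichotomy altogether, while the paper's more explicit manipulation makes each rewiring step visible. The identification of the crux---reconciling the inner saturation $\qsat{\EPS[2]}$ (with $x$ bound at qualifier $\qsat{p}\cap\qsat{\flt}$) against the outer $\qsat{\EPS[2]\theta}$ via $\qsat{x}=\{x\}\cup\qsat{p}$---matches the paper's use of ``properties of saturation'' exactly.
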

\begin{proof}
By mutual induction over the respective derivation. Most cases are straightforward. The case
for rule \inflabel{$\leadsto$-let} requires careful reasoning about dependency maps:
\begin{itemize}
  \item Case \rulename{\(\leadsto\)-let}: We need to show \(\mute{\DEP[1],\DEP[2][x \leadsto \DELTA\vert_p]} = \DELTA\vert_{\FX{\qsat{(\EPS[1]\EFFSEQ\EPS[2]\theta)}}}\) for \(\theta = [p/x]\).
        \begin{enumerate}
          \item By IH: \(\DEP[1]=\DELTA\vert_{\FX{\qsat{\EPS[1]}}}\).
          \item By IH: \(\DEP[2]=(\DELTA,\mute{(\FX{\qsat{\EPS[1]}},x)\mapsto x})\vert_{\FX{\qsat{\EPS[2]}}}\).
          \item By scoping: \(x\notin\DOM(\DELTA)\), and \(x\notin\CODOM(\DELTA)\).
          \item Case distinction:
                \begin{enumerate}
                  \item \(x\notin\EPS[2]\): Thus,
                  \[\begin{array}{rcll}
                    \mute{\DEP[1],\DEP[2][x \leadsto \DELTA\vert_p]} & = & \DEP[1],\DEP[2] & \text{by}\ x\notin\EPS[2]  \\
                                     & = & \DELTA\vert_{\FX{\qsat{\EPS[1]}}},(\DELTA,\mute{(\FX{\qsat{\EPS[1]}},x)\mapsto x})\vert_{\FX{\qsat{\EPS[2]}}} & \text{by (1), (2)}\\
                                     & = &\DELTA\vert_{\FX{\qsat{\EPS[1]}}},\DELTA\vert_{\FX{\qsat{\EPS[2]}}\setminus\FX{\qsat{\EPS[1]}}} & \text{by}\ x\notin\EPS[2]\ \text{and def. of}\  \_,\_ \\
                                     & = &\DELTA\vert_{\FX{\qsat{\EPS[1]}},\FX{\qsat{\EPS[2]}}} & \text{by definition of}\ \_\vert_{\_} \\
                                     & = & \DELTA\vert_{\FX{\qsat{\EPS[1]}}\EFFSEQ\FX{\qsat{\EPS[2]}}} & \text{by definition of}\ \_\EFFSEQ\_\\
                                     & = & \DELTA\vert_{\FX{\qsat{(\EPS[1]\EFFSEQ\EPS[2])}}} & \text{by properties of}\ \qsat{\_}\\
                                     & = & \DELTA\vert_{\FX{(\qsat{\EPS[1]\EFFSEQ\EPS[2]\theta)}}}\quad \checkmark & \text{by}\ x\notin\EPS[2]\\
                  \end{array}\]
                  \item \(x\in\EPS[2]:\)
                         \begin{enumerate}
                          \item We have that
                          \[\begin{array}{rcll}
                            \DEP[2] & = & (\DELTA,\mute{(\FX{\qsat{\EPS[1]}},x)\mapsto x})\vert_{\FX{\qsat{\EPS[2]}}} & \text{by (2)} \\
                                    & = & (\DELTA\setminus \FX{\qsat{\EPS[1]}}, \mute{(\FX{\qsat{\EPS[1]}},x)\mapsto x}) \vert_{\FX{\qsat{\EPS[2]}}} &  \text{by def. of}\  \_,\_\\\
                                    & = & (\DELTA\setminus \FX{\qsat{\EPS[1]}})\vert_{\FX{\qsat{\EPS[2]}}}, \mute{(\FX{\qsat{\EPS[1]}}\mapsto x)\vert_{\FX{\qsat{\EPS[2]}}}, (x \mapsto x)\vert_{\FX{\qsat{\EPS[2]}}}}  & \text{by prop. of}\ \_\vert_{\_}\\
                                    & = & \DELTA\vert_{\FX{\qsat{\EPS[2]}}\setminus \FX{\qsat{\EPS[1]}}}, \mute{(\FX{\qsat{\EPS[1]}}\mapsto x)\vert_{\FX{\qsat{\EPS[2]}}}, (x \mapsto x)\vert_{\FX{\qsat{\EPS[2]}}}}  & \text{by prop. of}\ \_\vert_{\_}\\
                                    & = & \DELTA\vert_{\FX{\qsat{\EPS[2]}}\setminus \FX{\qsat{\EPS[1]}}}, \mute{(\FX{\qsat{\EPS[1]}}\overlap\FX{\qsat{\EPS[2]}}\mapsto x), x \mapsto x}   & \text{by }x\in\EPS[2],\ \text{and prop. of}\ \_\vert_{\_}\\
                                    & = & \DELTA\vert_{\FX{\qsat{\EPS[2]}}\setminus \FX{\qsat{\EPS[1]}},x}, \mute{(\FX{\qsat{\EPS[1]}}\overlap\FX{\qsat{\EPS[2]}}\mapsto x), x \mapsto x}   & \text{by }x\notin\CODOM{\DELTA}\\
                          \end{array}\]
                          \item From \(x\in\EPS[2]\) and properties of reachability saturation, we have that \(\qsat{p}\subq\FX{\qsat{\EPS[2]}}\).
                          \item From that, and  \(x\in\EPS[2]\) , and properties of saturation, we have either \(\qsat{p}\subq \FX{\qsat{\EPS[1]}}\overlap\FX{\qsat{\EPS[2]}}\) or
                                \(\qsat{p} \overlap \FX{\qsat{\EPS[1]}}\overlap\FX{\qsat{\EPS[2]}} = \qbot\). Case distinction:
                          \item Case \(\qsat{p}\subq \FX{\qsat{\EPS[1]}}\overlap\FX{\qsat{\EPS[2]}}\): Thus, \(\EPSS[1]\overlap\EPSS[2] = \qsat{p},q\) where \(\qsat{p}\overlap q=\qbot\)
                                  \begin{enumerate}
                                  \item That with (4).(b).(i) yields \(\DEP[2] = \DELTA\vert_{\EPSS[2]\setminus \EPSS[1],x}, \mute{(\qsat{p},q\mapsto x), x \mapsto x}\)
                                  \item Thus,
                                  \[\begin{array}{rcll}
                                    \DEP[2]\mute{[x \leadsto \DELTA\vert_p]} & = & (\DELTA\vert_{\EPSS[2]\setminus \EPSS[1],x}, \mute{(\qsat{p},q\mapsto x), x \mapsto x})\mute{[x \leadsto \DELTA\vert_\qsat{p}]} & \text{by (A)} \\
                                                                             & = & \DELTA\vert_{\EPSS[2]\setminus \EPSS[1],x}\mute{[x \leadsto \DELTA\vert_\qsat{p}]}, \mute{(\qsat{p},q\mapsto x)\mute{[x \leadsto \DELTA\vert_\qsat{p}]}, x \mapsto x\mute{[x \leadsto \DELTA\vert_\qsat{p}]}} & \text{by prop. of }\leadsto \\
                                                                             & = & \DELTA\vert_{\EPSS[2]\setminus \EPSS[1],x}\mute{[x \leadsto \DELTA\vert_\qsat{p}]}, \mute{(\qsat{p},q\mapsto x)\mute{[x \leadsto \DELTA\vert_\qsat{p}]}} & \text{by }x\notin\DOM(\mute{\DELTA\vert_\qsat{p}}) \\
                                                                             & = & \DELTA\vert_{\EPSS[2]\setminus \EPSS[1],x}\mute{[x \leadsto \DELTA\vert_\qsat{p}]}, \mute{(\qsat{p}\mapsto x)\mute{[x \leadsto \DELTA\vert_\qsat{p}]}} & \text{by (iv)} \\
                                                                             & = & \DELTA\vert_{\EPSS[2]\setminus \EPSS[1],x}\mute{[x \leadsto \DELTA\vert_\qsat{p}]}, \mute{\DELTA\vert_\qsat{p}} & \text{by def. of }\leadsto \\
                                                                             & = & \DELTA\vert_{\EPSS[2]\setminus \EPSS[1],x}, \mute{\DELTA\vert_\qsat{p}} & \text{by }x\notin\CODOM(\DELTA) \\
                                  \end{array}\]
                                  \item From that, and (1), we conclude
                                   \[\begin{array}{rcll}
                                    \mute{\DEP[1],\DEP[2][x \leadsto \DELTA\vert_\qsat{p}]}  & = & \DELTA\vert_{\EPSS[1]},\DELTA\vert_{\EPSS[2]\setminus \EPSS[1],x}, \mute{\DELTA\vert_\qsat{p}} & \text{}\\
                                                                                      & = & \DELTA\vert_{\EPSS[1],\EPSS[2]\setminus x,\mute{\qsat{p}}} & \text{by definition of}\ \_\vert_{\_}\text{ and prop. of sets} \\
                                                                                      & = & \DELTA\vert_{\EPSS[1]\EFFSEQ\EPSS[2][\qsat{p}/x]}  & \text{by def. of substitution and effect composition} \\
                                                                                      & = & \DELTA\vert_{\FX{(\qsat{\EPS[1]\EFFSEQ\EPS[2]\theta)}}}\quad \checkmark & \text{by prop. of saturation} \\
                                     \end{array}\]
                                \end{enumerate}
                          \item Case \(\qsat{p} \overlap \EPSS[1]\overlap\EPSS[2] = \qbot\): Thus with (ii), we have \(\qsat{p}\subq\EPSS[2]\setminus\EPSS[1]\).
                                \begin{enumerate}
                                  \item Thus,
                                  \[\begin{array}{rcll}
                                    \DEP[2]\mute{[x \leadsto \DELTA\vert_\qsat{p}]} & = & (\DELTA\vert_{\EPSS[2]\setminus \EPSS[1],x}, \mute{(\EPSS[1]\overlap\EPSS[2]\mapsto x), x \mapsto x})\mute{[x \leadsto \DELTA\vert_\qsat{p}]} & \text{by (i)} \\
                                                                             & = & \DELTA\vert_{\EPSS[2]\setminus \EPSS[1],x}\mute{[x \leadsto \DELTA\vert_\qsat{p}]} & \text{by (v) and } x\notin\DOM(\mute{\DELTA\vert_\qsat{p}})\\
                                                                             & = & \DELTA\vert_{\EPSS[2]\setminus \EPSS[1],x} & \text{by } x \notin\CODOM(\DELTA)
                                  \end{array}\]
                                  \item From that, and (1), we conclude
                                   \[\begin{array}{rcll}
                                    \mute{\DEP[1],\DEP[2][x \leadsto \DELTA\vert_\qsat{p}]}  & = & \DELTA\vert_{\EPSS[1]},\DELTA\vert_{\EPSS[2]\setminus \EPSS[1],x} & \\
                                                                                      & = & \DELTA\vert_{\EPSS[1],\EPSS[2]\setminus x,\mute{\qsat{p}}} & \text{by (v)}\\
                                                                                      & = & \DELTA\vert_{\EPSS[1]\EFFSEQ\EPSS[2][\qsat{p}/x]}  & \text{by def. of substitution and effect composition} \\
                                                                                      & = & \DELTA\vert_{\FX{(\qsat{\EPS[1]\EFFSEQ\EPS[2]\theta)}}}\quad \checkmark & \text{by prop. of saturation} \\
                                     \end{array}\]
                                \end{enumerate}
                         \end{enumerate}
                \end{enumerate}
        \end{enumerate}
\end{itemize}
\end{proof}
The above proof of \Cref{lem:deps-invariant}, justifies that we could alternatively pick the dependency \(\DELTA\vert_{\FX{\qsat{(\EPS[1]\EFFSEQ\EPS[2]\theta)}}}\)
as the synthesis result in the conclusion of rule \rulename{\(\leadsto\)-let}, and certifies that sequential dependency map composition and
rewiring are consistent with effect composition and substitution, provided that the effects are saturated in the context.

\begin{lemma}[Soundness of Synthesis]\label{lem:synth-soundness}
  Synthesis produces well-typed annotated programs:
  \begin{enumerate}
    \item If\ \ \(\GSD[\flt]\ts n : \ty[q]{T}\ \EPS\yields{\bm{n} \has \DEP},\) then \(\GSD[\varphi]\vdash \bm{n} : \ty[q]{T}\ \EPS\).
    \item If\ \ \(\GSD[\flt]\ts g : \ty[q]{T}\ \EPS\yields{\bm{g} \has \DEP},\) then \(\GSD[\varphi]\vdash \bm{g} : \ty[q]{T}\ \EPS\).
  \end{enumerate}
\end{lemma}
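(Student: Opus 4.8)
The plan is to proceed by mutual induction over the combined synthesis derivation $\GSD[\flt]\ts (n\mid g) : \ty[q]{T}\ \EPS\yields{(\bm{n}\mid\bm{g})\has\DEP}$, proving both parts simultaneously. The guiding observation is that the synthesis relation (\Cref{fig:graphir:synthesis}) and the checking relation (\Cref{fig:checking-graphir-mnf}) are in one-to-one structural correspondence: each synthesis rule $\leadsto$-$R$ has a matching checking rule $R$ sharing the very same type, effect, qualifier, and scoping premises, and the \emph{only} genuine difference is that the checking rules impose sub-map side conditions on dependency annotations ($\DEP\sqsubseteq\FX{\qsat{\EPS}}\mapsto x$ in \rulename{n-abs}, $\DEP[1]\sqsubseteq\DELTA\vert_{\FX{\qsat{\EPS[1]}}}$ in \rulename{g-let}), whereas synthesis fixes those annotations to concrete maps. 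Accordingly, in each case I would apply the induction hypotheses to the synthesis subpremises to obtain well-typedness of the annotated subterms, then reassemble via the matching checking rule, discharging any dependency side condition separately.

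For the rules whose synthesized term carries no dependency of its own — \rulename{\(\leadsto\)-cst}, \rulename{\(\leadsto\)-ret}, \rulename{\(\leadsto\)-app}, \rulename{\(\leadsto\)-ref}, \rulename{\(\leadsto\)-\(!\)}, and \rulename{\(\leadsto\)-\(:=\)} — the output $\bm{n}$ is structurally identical to the input node (the emitted map $\DEP$ is consumed only at the enclosing \rulename{g-let}), and the premises are verbatim those of the checking counterpart (constant membership, name lookups, and the scoping conditions $z\notin\FV(U)$, $\EPS\subq q,z$, $r\subq\flt,z$). Hence the matching checking rule applies immediately; none of these rules has a synthesis subpremise, so no appeal to the induction hypothesis is even needed. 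The subsumption case \rulename{\(\leadsto\)-sub} is equally routine: the induction hypothesis yields $\GSD[\varphi]\vdash\bm{b}:\ty[p]{S}\ \EPS[1]$, the subtyping premise is carried over unchanged, and one application of \rulename{b-sub} re-derives $\GSD[\varphi]\vdash\bm{b}:\ty[q]{T}\ \EPS[2]$; the recomputed output dependency $\DELTA\vert_{\FX{\qsat{\EPS[2]}}}$ is irrelevant to the checking goal.

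The two substantive cases are \rulename{\(\leadsto\)-abs} and \rulename{\(\leadsto\)-let}, where the synthesized term genuinely carries a dependency that must be validated against the checking rule's sub-map premise. Here I would invoke the Synthesis Invariant (\Cref{lem:deps-invariant}), which pins down each synthesized dependency exactly. In \rulename{\(\leadsto\)-abs} the body is synthesized under the coeffect $\pointsto{x}$, so the invariant gives $\DEP = (\pointsto{x})\vert_{\FX{\qsat{\EPS}}} = \FX{\qsat{\EPS}}\mapsto x$, which is trivially a sub-map of $\FX{\qsat{\EPS}}\mapsto x$ and thus satisfies the \rulename{n-abs} premise; in \rulename{\(\leadsto\)-let} the invariant gives $\DEP[1] = \DELTA\vert_{\FX{\qsat{\EPS[1]}}}$, reflexively discharging $\DEP[1]\sqsubseteq\DELTA\vert_{\FX{\qsat{\EPS[1]}}}$. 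Crucially, the context threading matches that of the checking rules verbatim — \rulename{\(\leadsto\)-abs} and \rulename{n-abs} both check the body under $\pointsto{x}$, and \rulename{\(\leadsto\)-let} and \rulename{g-let} both extend the body's coeffect by the same $(\FX{\qsat{\EPS[1]}},x)\mapsto x$ — so the induction hypotheses for the subterms apply against exactly the contexts the checking rule demands.

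The main obstacle is less conceptual than bookkeeping: to make the reflexive discharge rigorous, one must cite \Cref{lem:deps-invariant} at precisely the coeffect in scope and confirm the identity $(\pointsto{x})\vert_{\FX{\qsat{\EPS}}} = \FX{\qsat{\EPS}}\mapsto x$, which relies on $\FX{\qsat{\EPS}}$ lying within $\DOM(\G)\cup\DOM(\Sigma)$ (guaranteed by well-formedness of the context, $\WF{\GSD.z}$). Everything else follows from the rule-by-rule correspondence between the synthesis and checking judgments.
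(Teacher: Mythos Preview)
Your proposal is correct and takes essentially the same approach as the paper's proof, which reads simply ``By straightforward mutual induction over the respective derivation, making use of \Cref{lem:deps-invariant} where appropriate.'' You have spelled out exactly that argument: the rule-by-rule correspondence between synthesis and checking, with the Synthesis Invariant invoked to discharge the sub-map side conditions in the \rulename{n-abs} and \rulename{g-let} cases.
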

\begin{proof}
By straightforward mutual induction over the respective derivation, making use of \Cref{lem:deps-invariant} where appropriate.
\end{proof}

\begin{lemma}[Synthesis is Total]\label{lem:synth-totality}\hfill
  \begin{enumerate}
    \item If\ \ \(\GS[\varphi]\tsM n : \ty[q]{T}\ \EPS\) then \(\forall \DELTA.\ \exists \mute{\bm{n}}.\ \exists\DEP.\) \(\GSD[\varphi]\vdash n : \ty[q]{T}\ \EPS\yields{\bm{n} \has \DEP}\) and \(\mute{\bm{n}}\) erases back to \(n\).
    \item If\ \ \(\GS[\varphi]\tsM g : \ty[q]{T}\ \EPS\) then \(\forall \DELTA.\ \exists \mute{\bm{g}}.\ \exists\DEP.\) \(\GSD[\varphi]\vdash g : \ty[q]{T}\ \EPS\yields{\bm{g} \has \DEP}\) and \(\mute{\bm{g}}\) erases back to \(g\).
  \end{enumerate}
\end{lemma}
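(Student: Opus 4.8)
The plan is to prove both parts simultaneously by mutual induction over the \mnflang{} typing derivations, treating the three judgments for nodes \(n\), bindings \(b = n\mid g\), and graph terms \(g\) together. The guiding observation, already made for the checking relation, is that the synthesis rules of \Cref{fig:graphir:synthesis} are a \emph{decoration} of the \mnflang{} typing rules of \Cref{fig:checking-vanilla-mnf}: deleting every \mute{teal} component (the threaded last-use map \(\DELTA\), the \(\yields{\cdot}\) output, and the annotations) from a synthesis rule yields exactly its \mnflang{} counterpart. Crucially, the dependency material appears in the synthesis rules only as \emph{outputs}---restrictions \(\DELTA\vert_{\FX{\qsat{\EPS}}}\), extensions \(\DELTA,\mute{(\FX{\qsat{\EPS[1]}},x)\mapsto x}\), and rewirings \(\DEP[2][x\leadsto\DELTA\vert_{\qsat{p}}]\)---all of which are total operations on finite maps, none of which is a side condition that could obstruct rule application. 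Hence the rules stand in one-to-one correspondence, and the proof essentially replays the given \mnflang{} derivation while threading \(\DELTA\) and accumulating annotations.

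Concretely, I fix an arbitrary \(\DELTA\) and case-split on the last rule used in the \mnflang{} derivation. For the non-recursive rules \rulename{n-cst}, \rulename{g-ret}, \rulename{n-app}, \rulename{n-ref}, \rulename{n-\(!\)}, and \rulename{n-\(:=\)}, the black premises (name lookups \(\V{x}\in\GS[\flt]\) and side conditions such as \(z\notin\FV(U)\), \(\EPS\subq q,z\), and \(r\subq\flt,z\)) coincide with those of the matching synthesis rule, so I apply \rulename{\(\leadsto\)-cst}, \rulename{\(\leadsto\)-ret}, etc., and read off \(\DEP\) as the prescribed slice of \(\DELTA\) (or \(\NODEP\)). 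For the recursive rules \rulename{n-abs}, \rulename{g-let}, and \rulename{b-sub}, I invert the \mnflang{} rule to obtain typing subderivations for the immediate subterms, and then apply the induction hypothesis at the \emph{particular} last-use map demanded by the corresponding synthesis rule: \(\mute{\pointsto{x}}\) for the body in \rulename{\(\leadsto\)-abs}, and the updated map \(\DELTA,\mute{(\FX{\qsat{\EPS[1]}},x)\mapsto x}\) for the continuation in \rulename{\(\leadsto\)-let}. Since the statement is universally quantified over \(\DELTA\), each such instantiation is available. Plugging the annotated subterms into the synthesis rule produces the required \(\bm{n}\) (resp.\ \(\bm{g}\)) together with its dependency map, discharging the existentials.

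The erasure obligation is carried along the same induction. Writing \(\erase(\cdot)\) for the map that deletes the \mute{teal} annotations, I observe that every synthesis rule emits exactly the same term skeleton as its \mnflang{} counterpart, with dependencies attached only at let bindings and \(\lambda\)-bodies. Thus \(\erase\) commutes with each rule, and by the induction hypotheses \(\erase\) of each annotated subterm equals the original subterm; it follows immediately that the assembled \(\bm{n}\) (resp.\ \(\bm{g}\)) erases back to \(n\) (resp.\ \(g\)).

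I do not anticipate a genuine obstacle, precisely because synthesis cannot get stuck: totality is a direct consequence of the rule-for-rule correspondence together with the totality of the underlying map operations. The only place demanding mild bookkeeping care is \rulename{\(\leadsto\)-let}, where one must thread the updated \(\DELTA\) into the continuation and form the rewired output \(\mute{\DEP[1],\DEP[2][x\leadsto\DELTA\vert_{\qsat{p}}]}\); this is purely definitional and does not affect existence. That this rewired map is in fact the expected slice \(\DELTA\vert_{\FX{\qsat{(\EPS[1]\EFFSEQ\EPS[2])\theta}}}\) is the content of \Cref{lem:deps-invariant}, which is not required for totality but reassures us that the synthesized \(\DEP\) is canonical.
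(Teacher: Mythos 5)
Your proposal is correct and takes essentially the same route as the paper, whose proof is exactly the one-line observation that the synthesis rules are in one-to-one correspondence with the \mnflang{} typing rules and that the claim follows by straightforward mutual induction. Your elaboration---that the dependency material occurs only as output of total map operations (so no rule application can be obstructed), and that the universal quantification over \(\DELTA\) is what licenses instantiating the induction hypothesis at the updated last-use maps in the \rulename{\(\leadsto\)-abs} and \rulename{\(\leadsto\)-let} cases---is precisely the content the paper leaves implicit.
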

\begin{proof}
By straightforward mutual induction over the respective derivations, exploiting that typing rules of each system are in one-to-one correspondence.
\end{proof}
\begin{corollary}[End-to-end Type Preservation]
For any well-formed context and compatible \(\DELTA\), there is a type/effect/qualifier-preserving translation from the direct-style \directlang-calculus into
the \irlang graph IR.
\end{corollary}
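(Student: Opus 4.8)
The plan is to exhibit the desired translation as the two-stage composition of the monadic-normal-form translation \(\leadsto\) from \Cref{fig:direct:mnf:translation} followed by the type-and-effect-directed dependency synthesis of \Cref{fig:graphir:synthesis}, and then to argue preservation stage by stage by chaining the lemmas already established. Concretely, given a direct-style derivation \(\GS[\flt]\ts t : \ty[q]{T}\ \EPS\), I would first translate \(t\) into an MNF term \(g\) with \(t \leadsto g\); by \Cref{lem:type_preservation:translation} this step is type/effect/qualifier-preserving, yielding \(\GS[\flt]\tsM g : \ty[q]{T}\ \EPS\) with \emph{the same} qualified type \(\ty[q]{T}\) and effect \(\EPS\).

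Next, for the given compatible input coeffect \(\DELTA\), I would invoke totality of synthesis (\Cref{lem:synth-totality}): there exist an annotated graph IR term \(\bm{g}\) and a dependency map \(\DEP\) such that \(\GSD[\flt]\ts g : \ty[q]{T}\ \EPS\yields{\bm{g}\has\DEP}\), and \(\bm{g}\) erases back to \(g\). Soundness of synthesis (\Cref{lem:synth-soundness}) then promotes the synthesis judgment to a bona fide graph-IR typing \(\GSD[\flt]\ts \bm{g} : \ty[q]{T}\ \EPS\). Since the qualified type \(\ty[q]{T}\) and effect \(\EPS\) are carried unchanged through both \(\leadsto\) and synthesis, the composite \(t \mapsto \bm{g}\) preserves type, qualifier, and effect end-to-end, which is exactly the claim. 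The synthesized dependencies \(\DEP\) are the only genuinely new information; by \Cref{lem:deps-invariant} they are canonically determined as \(\DELTA\vert_{\FX{\qsat{\EPS}}}\), so the translation is in fact a well-defined function of \(t\) and \(\DELTA\) rather than merely an existential.

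The genuine work has already been absorbed into the cited lemmas, so the remaining obligation — and the one place that needs care — is bookkeeping about the \emph{well-formedness and compatibility} hypotheses. I would make precise that ``well-formed context and compatible \(\DELTA\)'' means \(\WF{\GSD.z}\) holds for a fresh start variable \(z\), i.e. \(\DOM(\DELTA)\subq\DOM(\Sigma),\DOM(\G)\) and \(\CODOM(\DELTA)\subq\DOM(\Sigma),\DOM(\G),z\); at the top level this is met by taking \(\DELTA = \pointsto{z}\), pointing every context entry at \(z\), as in the runtime-configuration judgment. I would also observe that the fresh names introduced by the \(\leadsto\) rules can always be chosen disjoint from \(\DOM(\G)\cup\DOM(\Sigma)\), so the intermediate contexts threaded through synthesis stay well-formed and the side conditions of \Cref{lem:synth-totality} and \Cref{lem:synth-soundness} remain discharged at each nested binding. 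The main obstacle is thus not a hard proof step but rather threading these scoping and well-formedness invariants consistently across the two stages; once they are in place, the corollary follows immediately by composing \Cref{lem:type_preservation:translation}, \Cref{lem:synth-totality}, and \Cref{lem:synth-soundness}.
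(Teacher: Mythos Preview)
Your proposal is correct and follows essentially the same approach as the paper: the paper's proof is a one-liner citing \Cref{lem:type_preservation:translation}, \Cref{lem:synth-totality}, and \Cref{lem:synth-soundness}, which is exactly the chain you spell out. Your additional remarks on \Cref{lem:deps-invariant} and the well-formedness bookkeeping go beyond what the paper records but are consistent with it.
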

\begin{proof}
By lemmas \ref{lem:type_preservation:translation}, \ref{lem:synth-totality}, and \ref{lem:synth-soundness}.
\end{proof}

\subsubsection{Substitution and Rewiring}

\begin{lemma}[Top-Level Rewiring and Substitution]\label{lem:graphir:subst_term}\hfill\\
If \(\ \WF{\cdsx{\G,x:\ty[p\overlap r]{S}}{\Sigma}{\DELTA}.z}\),\ \ \(\cdsx[p]{\varnothing}{\Sigma}{\NODEP}\ts \ell : \ty[p]{S}\ \PURE\), and  \(\theta = [p/x]\)
where \(p\subq\dom(\Sigma)\) and \( p \qglb \flt \subq p\overlap r\), and \(\DOM(\DEP)\subq\DOM(\Sigma)\), and \(\CODOM(\DEP)\subq \{z\}\) then
  \infrule[1]{%
        \cdsx[\flt]{\G,x:\ty[p\overlap r]{S}}{\Sigma}{\DELTA} \ts n : \ty[q]{T}\ \EPS
    }{
        \cdsx[\flt\theta]{\G\theta}{\Sigma}{\mute{\DELTA[x\leadsto \DEP\uparrow^z]\theta}} \ts n\mute{[x\leadsto \DEP]}\theta[\ell/x]_{\mathsf{t}} : (\ty[q]{T}\ \FX{\EPS})\theta
    }
  \infrule[2]{%
        \cdsx[\flt]{\G,x:\ty[p\overlap r]{S}}{\Sigma}{\DELTA} \ts g : \ty[q]{T}\ \EPS
    }{
        \cdsx[\flt\theta]{\G\theta}{\Sigma}{\mute{\DELTA[x\leadsto \DEP\uparrow^z]\theta}} \ts g\mute{[x\leadsto \DEP]}\theta[\ell/x]_{\mathsf{t}} : (\ty[q]{T}\ \FX{\EPS})\theta
    }
\end{lemma}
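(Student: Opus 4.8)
The plan is to prove both statements simultaneously by mutual induction on the graph-IR typing derivations $\cdsx[\flt]{\G,x:\ty[p\overlap r]{S}}{\Sigma}{\DELTA} \ts n : \ty[q]{T}\ \EPS$ and $\cdsx[\flt]{\G,x:\ty[p\overlap r]{S}}{\Sigma}{\DELTA} \ts g : \ty[q]{T}\ \EPS$, following the same skeleton as \Cref{lem:subst_term}. The crucial simplification is that erasing all \mute{teal} dependency information turns each graph-IR judgment into the corresponding MNF judgment (\Cref{lem:type_preservation:translation_backwards}): the underlying erased term is $n\theta[\ell/x]_{\mathsf t}$ (resp.\ $g\theta[\ell/x]_{\mathsf t}$), and that it carries type $(\ty[q]{T}\ \FX{\EPS})\theta$ in the substituted context is \emph{exactly} \Cref{lem:subst_term_letv} (whose \rulename{n-app} obligation already absorbs the overlap reasoning of \Cref{lem:subst_commutes_overlap}). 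Hence the only genuinely new obligations are (i) that $\_[x\leadsto\DEP]$ yields well-formed annotations, and (ii) that the transformed coeffect $\DELTA[x\leadsto\DEP\uparrow^z]\theta$ still validates every dependency side-condition; the well-formedness hypotheses ($\DOM(\DEP)\subq\DOM(\Sigma)$, $\CODOM(\DEP)\subq\{z\}$, and $\WF{\ldots.z}$) are carried along the induction to keep the coeffect total with codomain in $\DOM(\Sigma),\DOM(\G),z$.

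I would first dispatch the cases that carry no annotation and never read $\DELTA$: the leaf nodes \rulename{n-cst}, \rulename{n-app}, \rulename{n-ref}, dereference and assignment, together with \rulename{g-ret}. For these the coeffect is threaded unchanged and $[x\leadsto\DEP]$ is the identity, so the goal is immediate from \Cref{lem:subst_term_letv}; the subcase of \rulename{g-ret} returning the substituted variable $x$ itself is settled using the value-typing hypothesis $\cdsx[p]{\varnothing}{\Sigma}{\NODEP}\ts \ell : \ty[p]{S}\ \PURE$, exactly as \rulename{t-var} is handled inside \Cref{lem:subst_term}. The rule \rulename{b-sub} invokes \Cref{lem:subst_subtyping} on its subtyping premise and then re-restricts the annotation to the substituted effect $\FX{\qsat{\EPS[2]\theta}}$.

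The substance lies in the two annotated rules. Take \rulename{g-let}, binding a fresh $y\neq x$ with bound effect $\EPS[1]$, annotation $\DEP'\sqsubseteq\DELTA\vert_{\FX{\qsat{\EPS[1]}}}$, and continuation $g$ typed under the extended coeffect $\DELTA,(\FX{\qsat{\EPS[1]}},y)\mapsto y$. To reapply the rule after substitution I must establish the commutation
\[
\bigl(\DELTA,(\FX{\qsat{\EPS[1]}},y)\mapsto y\bigr)[x\leadsto\DEP\uparrow^z]\,\theta
\;=\;
\bigl(\DELTA[x\leadsto\DEP\uparrow^z]\,\theta\bigr),\,(\FX{\qsat{\EPS[1]\theta}},y)\mapsto y ,
\]
so that the induction hypothesis on $g$ lands precisely in the coeffect the reassembled \rulename{g-let} demands; dually, I must propagate the submap condition to $\DEP'[x\leadsto\DEP]\theta\sqsubseteq(\DELTA[x\leadsto\DEP\uparrow^z]\theta)\vert_{\FX{\qsat{\EPS[1]\theta}}}$. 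The \rulename{n-abs} case is analogous but lighter: its body coeffect $\pointsto{y}$ sends every context name to $y\neq x$, so the rewiring $[x\leadsto\DEP\uparrow^z]$ is vacuous on it and only the domain-rebasing $\theta$ acts, which reproduces $\pointsto{y}$ over $\G\theta$.

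The main obstacle is the dependency-map algebra behind the displayed commutation: I must show that codomain-rewiring $[x\leadsto\DEP\uparrow^z]$ commutes past domain-substitution $\theta=[p/x]$ and past the restriction $\_\vert_{\FX{\qsat{\EPS[1]}}}$, where the $\uparrow^z$ lift is exactly what preserves totality of the coeffect when a name formerly last-used at $x$ has no entry in $\DEP$ and must fall back to the block-start sentinel $z$ (whereas plain $\DEP$ suffices for the partial term annotations). The delicate branch is $x\in\FX{\qsat{\EPS[1]}}$, where $\theta$ redirects all of $p$ onto $x$'s former target and the identity $\FX{\qsat{\EPS[1]\theta}}=(\FX{\qsat{\EPS[1]}}\setminus x)\cup\qsat p$ must be reconciled with the bare-$p$ domain produced by dependency substitution — the same saturation-versus-substitution bookkeeping already carried out in the \rulename{\(\leadsto\)-let} case of \Cref{lem:deps-invariant} and in \Cref{lem:subst_commutes_overlap}. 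I would therefore isolate the needed identities for the operators $\_,\_$\,, $\_\vert_{\_}$\,, $\_[x\leadsto\_]$\,, $\_[p/x]$\,, and $\_\uparrow^z$ as small auxiliary sublemmas and invoke them uniformly in the \rulename{n-abs} and \rulename{g-let} cases.
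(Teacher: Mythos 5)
Your proposal follows essentially the same route as the paper's proof: mutual induction reusing \Cref{lem:subst_term_letv} for the dependency-erased content, with the real work concentrated in \rulename{n-abs} (where the rewiring is vacuous on $\pointsto{y}$ because $y\neq x$ and the body annotation is a submap of $\mute{\EPSSPR\mapsto y}$) and \rulename{g-let} (where the same commutation identity between the coeffect extension, $[x\leadsto\DEP\uparrow^z]$, and $\theta$ must be established). Your explicit flagging of the $x\in\FX{\qsat{\EPS[1]}}$ branch and the proposed auxiliary sublemmas merely spell out what the paper compresses into ``properties of rewiring and qualifier substitution,'' so the argument is correct and matches the paper's.
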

\begin{proof} %
  The proof proceeds by mutual induction over the respective derivation. Ignoring dependencies, each case uses similar
reasoning steps as the previous substitution lemma proof for the system with store-allocated values (\Cref{lem:subst_term_letv}).
We focus here only on the interesting cases involving dependencies, which are the typing rules for \(\lambda\)-abstraction and let bindings.
\begin{itemize}
  \item Case \rulename{n-abs}: That is, \(n = \lambda y.g\has\DEP'\).
         \begin{enumerate}
          \item We have \(\csx[q,y]{\G\,,\, x: \ty[p\overlap r]{S}\,,\, y : \ty[p']{T}}{\Sigma}\has\mute{\pointsto{y}} \ts g : \ty[r']{U}\ \EPSPR\).
          \item We have \(\DEP' \sqsubseteq \mute{\EPSSPR\mapsto y}\).
          \item By IH:  \(\csx[q\theta,y]{\G\theta,\, y : \ty[p'\theta]{T\theta}}{\Sigma}\has\mute{(\pointsto{y}[x\leadsto\DEP\uparrow^z]\theta)} \ts g\mute{[x\leadsto\DEP]}\theta[\ell/x]_{\mathsf{t}} : (\ty[r']{U}\ \EPSPR)\theta\).
          \item Since \(x \neq y\), it holds that \(\mute{(\pointsto{y}[x\leadsto\DEP\uparrow^z]\theta) = \pointsto{y}}\).
          \item By (2), no entry in \(\DEP'\) points to \(x\), because it is a submap of \(\mute{\EPSSPR\mapsto y}\) and \(y \neq x\).
          \item Thus \(\DEP'\mute{[x\leadsto \DEP]}\theta = \DEP'\theta\), and by (2) and monotonicity of substitution, \(\DEP'\theta \sqsubseteq \mute{\EPSPR\theta\mapsto y}\).
          \item Hence \(n\mute{[x\leadsto \DEP]}\theta[\ell/x]_{\mathsf{t}} = \lambda y. (g\mute{[x\leadsto \DEP]}\theta[\ell/x]_{\mathsf{t}})\has\delta'\theta\).
          \item By (3),(6),(7), and \rulename{n-abs} the proof goal follows.
         \end{enumerate}
  \item Case \rulename{g-let}: That is, \(g = \tlet~{y = b\has\DEP'}~\tin~g'\).
        \begin{enumerate}
          \item We have  \(\cdsx[\flt]{\G\, ,\, x: \ty[p\overlap r]{S}}{\Sigma}{\DELTA}\ts b : \ty[q]{T} \ \EPS[1]\).
          \item We have \(\cdsx[\flt,y]{\G\, ,\, x: \ty[p\overlap r]{S}\,,\, y : \ty[q]{T}}{\Sigma}{\DELTA,\mute{(\EPSS[1],y)\mapsto y}} \ts g' : \ty[r]{U}\ \EPS[2]\).
          \item We have \(\mute{\DEP'\sqsubseteq\DELTA\vert_{\EPSS[1]}}\).
          \item By IH: \(\cdsx[\flt\theta]{\G\theta}{\Sigma}{\DELTA\mute{[x\leadsto\DEP\uparrow^z]\theta}}\ts b\mute{[x\leadsto\DEP]}\theta[\ell/x]_{\mathsf{t}} : (\ty[q]{T} \ \EPS[1])\theta\).
          \item By IH: \(\cdsx[\flt\theta,y]{\G\theta\,, y : \ty[q\theta]{T\theta}}{\Sigma}{(\DELTA,\mute{(\EPSS[1],y)\mapsto y})\mute{[x\leadsto\DEP\uparrow^z]\theta}} \ts g'\mute{[x\leadsto\DEP]}\theta[\ell/x]_{\mathsf{t}} : (\ty[r]{U}\ \EPS[2])\theta\).
          \item From \(x \neq y\) and the properties of rewiring and qualifier substitution on dependencies, it follows that
          \[(\DELTA,\mute{(\EPSS[1],y)\mapsto y})\mute{[x\leadsto\DEP\uparrow^z]\theta} = \DELTA\mute{[x\leadsto\DEP\uparrow^z]\theta}, \mute{(\EPSS[1]\theta,y)\mapsto y}. \]
          \item From (3) and monotonicity of substitution, we have \(\DEP'\theta \sqsubseteq \DELTA\theta\vert_{\EPSS[1]\theta}\).
          \item By (4), (5), (6), (7), and \rulename{g-let} the proof goal follows.
        \end{enumerate}
\end{itemize}
\end{proof}

\subsubsection{Context Typing and Effect Introductions}

We have generative effect introductions that modify the runtime context, and thus need lemmas for
plugging/decomposition and growing the stack of dependencies by fresh effect dependencies from
generative effects (i.e., reference allocations in this system).

\begin{definition}[Context Typings]\label{def:graphir:context_typing} We define the typings of graph and binding contexts (\Cref{fig:graphir:cbv}) relative to an ambient block start variable \(z\), as follows:
  \begin{enumerate}
    \item \(\GSD[\flt]\ts z.\CX{G}{\cdot}: \ty[p]{S}\ \EPS[1]\Rightarrow \ty[q]{T}\ \EPS[2]\) iff \(\cdsx[\flt]{\G, x : \ty[p]{S}}{\Sigma}{\mute{\DELTA,x\mapsto z}}\ts z.\CX{G}{x}: \ty[q]{T}\ \EPS[2]\) and \(\EPSS[1] = \DOM(\DEP)\) for
    the dependency \(\DEP\) at \(G\)'s hole.
    \item \(\GSD[\flt]\ts z.\CX{B}{\cdot}: \ty[p]{S}\ \EPS[1]\Rightarrow \ty[q]{T}\ \EPS[2]\) iff \(\cdsx[\flt]{\G, x : \ty[p]{S}}{\Sigma}{\mute{\DELTA,x\mapsto z}}\ts z.\CX{B}{x\has\DEP}: \ty[q]{T}\ \EPS\)
    for some \(\DEP\sqsubseteq\DELTA\vert_{\EPSS[1]}\).
  \end{enumerate}
\end{definition}

\begin{lemma}[Contextual Effect Propagation]\label{lem:graphir:context_propagation}\hfill
   \infrule{
     \cdsx[\flt]{\varnothing}{\Sigma}{\mute{\pointsto{z}}}\ts z.\CX{G}{\cdot}: \ty[p]{S}\ \EPS[1]\Rightarrow \ty[q]{T}\ \EPS[2] \quad \ell\notin\DOM(\Sigma) %
   }{
     \cdsx[\flt]{\varnothing}{\Sigma, \ell : \ty[r]{U}}{\mute{\pointsto{z}}}\ts z.\CX{G\langle \iota: z.\ell\rangle}{\cdot}: \ty[p]{S}\ \EPS[1],\FX{\ell}\Rightarrow \ty[q]{T}\ \EPS[2],\FX{\ell}
   }
\end{lemma}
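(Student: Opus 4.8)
The plan is to proceed by structural induction on the graph context $G$, whose grammar is $G ::= \square\has\DEP \mid (\tlet~{y = G'}~\tin~g')\has\DEP$. Since the only generative effect in \irlang{} is reference allocation, the single substantive case is $\iota = \tref_w~\ell_1$: for $\iota$ a constant or a $\lambda$-abstraction the propagation $G\langle\iota:z.\ell\rangle$ is the identity and the introduction carries the empty effect, so no $\FX{\ell}$ is generated and there is nothing to prove. The governing observation is that the conclusion's coeffect $\mute{\pointsto{z}}$ is taken over the \emph{extended} store $\Sigma,\ell:\ty[r]{U}$, and by definition this already sends the fresh name $\ell$ to $z$. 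Thus attaching $\mute{\ell\mapsto z}$ at every spine node---exactly what $G\langle\tref_w~\ell_1:z.\ell\rangle$ does---coincides with restricting the extended coeffect to the enlarged effect set. Moreover $\ell:\ty[\qbot]{\TRef~B}$ is a fresh reference, so $\qsat{\ell}=\{\ell\}$, and augmenting any effect $\EPS$ by $\FX{\ell}$ enlarges its saturation by precisely $\{\ell\}$, keeping the effect/domain bookkeeping exact.

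First I would unfold both context typings through \Cref{def:graphir:context_typing}, reducing each to a configuration typing of $z.\CX{G}{x}$ with $x:\ty[p]{S}$ and the entry $\mute{x\mapsto z}$ appended to $\DELTA$. In the base case $G = \square\has\DEP$, propagation produces $\square\has\mute{\DEP,\ell\mapsto z}$; as $\DOM(\DEP)=\EPSS[1]$ and $\ell$ is fresh, the hole dependency's domain grows by exactly $\{\ell\}$, matching the enlarged input effect $\EPS[1],\FX{\ell}$, while the dependency-erased typing is carried over by \Cref{lem:weakening} for the fresh $\ell$ and the dependency layer by the uniform addition of $\mute{\ell\mapsto z}$. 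In the inductive case $G = (\tlet~{y = G'}~\tin~g')\has\DEP$, I would invert the \rulename{g-let} derivation of $\CX{G}{x} = (\tlet~{y = \CX{G'}{x}}~\tin~g')\has\DEP$ to extract a typing of the bound context $\CX{G'}{x}$---which by \Cref{def:graphir:context_typing} is a context typing of $G'$---together with a typing of $g'$ under $\DELTA$ extended by $y$'s last uses $\mute{(\EPSS[1],y)\mapsto y}$. Applying the induction hypothesis to $G'$ propagates $\FX{\ell}$ through the inner input and output effects, and since $\EFFSEQ$ is union the added $\FX{\ell}$ distributes to give output $\EPS[2],\FX{\ell}$. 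I then reassemble with \rulename{g-let}, the outer dependency becoming $\mute{\DEP,\ell\mapsto z}$, whose conformance $\mute{\DEP,\ell\mapsto z\sqsubseteq\DELTA\vert_{\EPSS[1],\ell}}$ holds because the extended $\DELTA=\mute{\pointsto{z}}$ sends $\ell$ to $z$.

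The main obstacle is the dependency bookkeeping rather than the typing skeleton: I must verify that the uniform addition of $\mute{\ell\mapsto z}$ preserves \emph{all} side-conditions simultaneously at the hole and at every enclosing let---namely the exact-domain requirement $\EPSS[1]=\DOM(\DEP)$ of \Cref{def:graphir:context_typing} and the submap conformance $\mute{\DEP\sqsubseteq\DELTA\vert_{\EPSS[1]}}$ of \rulename{g-let}. This rests on $\ell$ being fresh, so that $\ell\notin\DOM(\DELTA)\cup\CODOM(\DELTA)$ and no clash arises when the continuation's last-use update $\mute{(\EPSS[1],y)\mapsto y}$ is threaded through, and on $\qsat{\ell}=\{\ell\}$, so that the saturation of the enlarged effects introduces no spurious names. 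Concretely, the crux is showing that the update and restriction operators on finite maps commute with adding $\mute{\ell\mapsto z}$---for instance that $(\DELTA,\mute{(\EPSS[1],y)\mapsto y})$ still sends $\ell$ to $z$---which is precisely the kind of finite-map manipulation already carried out in \Cref{lem:deps-invariant}.
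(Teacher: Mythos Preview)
Your proposal is correct and follows essentially the same approach as the paper: induction on $G$, handling the base case via the hole-dependency domain condition plus subtyping/weakening, and the inductive case by inverting \rulename{g-let}, applying the IH to the inner context $G'$, weakening the continuation typing to the extended store, and reassembling. The only minor slippage is your claim $\qsat{\ell}=\{\ell\}$: the lemma is stated for arbitrary $\ty[r]{U}$, so in general $\qsat{\ell}=\{\ell\}\cup\qsat{r}$, and the paper accordingly threads $\FX{\qsat{r}}$ into the last-use update when re-deriving the continuation typing---though for the \tref{} case you specialize to, $r=\qbot$ and your simplification is fine.
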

\begin{proof}
Let \(\cdsx[\flt]{\varnothing}{\Sigma}{\mute{\pointsto{z}}}\ts z.\CX{G}{\cdot}: \ty[p]{S}\ \EPS[1]\Rightarrow \ty[q]{T}\ \EPS[2]\),
and \(\ell\notin\DOM(\Sigma)\). We proceed by induction over \(G\):
\begin{itemize}
  \item Case \(G = \square\has\DEP\): Hence \(G\langle \iota: z.\ell\rangle = \square\has\mute{\DEP,\ell\mapsto z}\), and \(\GS\ts\ty[p]{S}\ \EPS[1] <: \ty[q]{T}\ \EPS[2]\),
  and \(\EPS[1] = \DOM(\DEP)\). By the properties of subtyping, it holds that \(\csx{\varnothing}{\Sigma,\ell : \ty[r]{U}}\ts\ty[p]{S}\ \EPS[1],\FX{\ell} <: \ty[q]{T}\ \EPS[2],\FX{\ell}\).
  Using \rulename{g-ret} and \rulename{b-sub} proves the goal.
  \item Case \(G = (\tlet~{x = G'}~\tin~g)\has\DEP\): Hence \(G\langle \iota: z.\ell\rangle = (\tlet~{x = G'\langle \iota: z.\ell\rangle}~\tin~g)\has\mute{\DEP,\ell\mapsto z}\), and
  \begin{enumerate}
    \item By Def.~\ref{def:graphir:context_typing} and typing inversion:
    \begin{enumerate}
      \item \(\cdsx[\flt]{y : \ty[p]{S}}{\Sigma}{\mute{\pointsto{z}}}\ts z.\CX{G'}{y}: \ty[p']{S'}\ \EPSPR[2]\) for some \(\ty[p']{S'}\ \EPSPR[2]\).
      \item \(\cdsx[\flt]{y : \ty[p]{S}, x : \ty[p']{S'}}{\Sigma}{\mute{(\pointsto{z},\EPSSPR[2]\mapsto x)}}\ts g : \ty[q']{T}\ \EPS[3]\).
      \item \(q = q'[p'/x]\), \(\EPS[2] = \EPSPR[2]\EFFSEQ\EPS[3]\FX{[p'/x]}\).
      \item \(\EPSS[1]=\DOM(\DEP')\) for the dependency \(\DEP'\) at the hole of \(G'\).
    \end{enumerate}
    \item By IH: \(\cdsx[\flt]{\varnothing}{\Sigma, \ell : \ty[r]{U}}{\mute{\pointsto{z}}}\ts z.\CX{G'\langle \iota: z.\ell\rangle}{\cdot}: \ty[p]{S}\ \EPS[1],\FX{\ell}\Rightarrow \ty[p']{S'}\ \EPSPR[2],\FX{\ell}\).
    \item By weakening on (1b): \(\cdsx[\flt]{y : \ty[p]{S}, x : \ty[p']{S'}}{\Sigma, \ell:\ty[r]{U}}{\mute{(\pointsto{z},(\EPSSPR[2],\FX{\ell},\FX{\qsat{r}})\mapsto x)}}\ts g : \ty[q']{T}\ \EPS[3]\).
    \item By \rulename{g-let}: \(\cdsx[\flt]{y : \ty[p]{S}}{\Sigma, \ell : \ty[r]{U}}{\mute{\pointsto{z}}}\ts \tlet~{x = \CX{G'\langle \iota: z.\ell\rangle}{y}}~\tin~g: \ty[ {q'\theta} ]{T}\ \EPSPR[2],\FX{\ell}\EFFSEQ\EPS[3]\theta\)
    for \(\theta = [p'/x]\), and the goal follows from that.
  \end{enumerate}
\end{itemize}
\end{proof}

\begin{lemma}[Decomposition]\label{lem:graphir:decomp}\hfill
\begin{enumerate}
    \item If\(\ \ \cdsx[\flt]{\varnothing}{\Sigma}{\mute{\pointsto{z}}}\ts z.\CX{G}{g}: \ty[q]{T}\ \EPS\), then \(\cdsx[\flt]{\varnothing}{\Sigma}{\mute{\pointsto{z}}}\ts z.\CX{G}{\cdot}: \ty[p]{S}\ \EPSPR\Rightarrow \ty[q]{T}\ \EPS\)
    for some \(\ty[p]{S}\ \EPSPR\),
    and \(\cdsx[\flt]{\varnothing}{\Sigma}{\mute{\pointsto{z}}}\ts z.g\has\DEP: \ty[p]{S}\ \EPSPR\), where \(\DEP\) is the dependency at the hole of \(G\).
    \item If\(\ \ \cdsx[\flt]{\varnothing}{\Sigma}{\mute{\pointsto{z}}}\ts z.\CX{B}{b\has\DEP}: \ty[q]{T}\ \EPS\), then \(\cdsx[\flt]{\varnothing}{\Sigma}{\mute{\pointsto{z}}}\ts z.\CX{B}{\cdot}: \ty[p]{S}\ \EPSPR\Rightarrow \ty[q]{T}\ \EPS\)
    for some \(\ty[p]{S}\ \EPSPR\), where
    \(\cdsx[\flt]{\varnothing}{\Sigma}{\mute{\pointsto{z}}}\ts b: \ty[p]{S}\ \EPSPR\) and \(\DEP \sqsubseteq \mute{\pointsto{z}\vert_{\EPSSPR}}\).
\end{enumerate}
\end{lemma}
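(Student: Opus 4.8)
The plan is to prove both parts by induction on the structure of the evaluation context — the graph-term context $G$ for part~(1) and the binding context $B$ for part~(2) — inverting the \rulename{g-let} (and runtime-configuration) typing at each step and reassembling it through \Cref{def:graphir:context_typing}. Throughout, the ambient last-use map is fixed to $\mute{\pointsto{z}}$, so the hole dependency $\DEP$ of $G$ (resp.\ $B$) is always a restriction of $\mute{\pointsto{z}}$, which is what keeps the domain bookkeeping tractable.

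For part~(1), in the base case $G = \square\has\DEP$ we have $\CX{G}{g} = g\has\DEP$, so $z.g\has\DEP : \ty[q]{T}\ \EPS$ is exactly the hypothesis; I take $\ty[p]{S}\ \EPSPR := \ty[q]{T}\ \EPS$, which gives the second conclusion immediately. The context typing $z.\CX{G}{\cdot} : \ty[q]{T}\ \EPS \Rightarrow \ty[q]{T}\ \EPS$ is obtained by plugging a fresh probe variable $y$ (with last use $y\mapsto z$): typing $y$ with \rulename{g-ret} and growing its effect from $\PURE$ up to $\EPS$ via \rulename{b-sub} yields $z.\CX{G}{y} : \ty[q]{T}\ \EPS$, and the defining side condition $\EPSS = \DOM(\DEP)$ follows from the conformance $\mute{\DEP\sqsubseteq\EPS\mapsto z}$ of the runtime-configuration rule together with the equality direction supplied by \Cref{lem:deps-invariant} (restricting the full map $\mute{\pointsto{z}}$ to $\qsat{\EPS}$ has domain exactly $\qsat{\EPS}$). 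In the inductive case $G = (\tlet~x = G'~\tin~g')\has\DEP[0]$, I invert \rulename{g-let} on $z.\CX{G}{g}$ to extract a typing of the bound term $\CX{G'}{g}$ at some $\ty[p_1]{S_1}\ \EPS[1]$ and a typing of the body $g'$ in the extended context with updated last uses $\mute{\DELTA,(\FX{\qsat{\EPS[1]}},x)\mapsto x}$. Applying the induction hypothesis to $z.\CX{G'}{g}$ delivers both the inner context typing $z.\CX{G'}{\cdot} : \ty[p]{S}\ \EPSPR \Rightarrow \ty[p_1]{S_1}\ \EPS[1]$ and the desired $z.g\has\DEP : \ty[p]{S}\ \EPSPR$ with $\DEP$ the shared hole dependency; re-applying \rulename{g-let} to the inner context typing (typing $\CX{G'}{y}$ for the probe variable $y$) and the body $g'$ reconstructs the context typing of $G$.

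Part~(2) is analogous, but the hole sits at a binding position. In the base case $B = (\tlet~x = \square~\tin~g')\has\DEP[0]$, inverting \rulename{g-let} on $(\tlet~x = b\has\DEP~\tin~g')\has\DEP[0]$ yields $b : \ty[p]{S}\ \EPSPR$ together with the conformance $\mute{\DEP\sqsubseteq\DELTA\vert_{\EPSS[1]}}$, which under $\DELTA = \mute{\pointsto{z}}$ is precisely the required $\DEP \sqsubseteq \mute{\pointsto{z}\vert_{\EPSSPR}}$; the context typing is again built by plugging a probe variable, typing it with \rulename{g-ret}/\rulename{b-sub}, and re-applying \rulename{g-let} with the body $g'$ from inversion. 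The inductive case $B = (\tlet~x = B'~\tin~g')\has\DEP[0]$ proceeds by inverting \rulename{g-let}, applying the induction hypothesis to $B'$, and reassembling, exactly as in part~(1).

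The main obstacle is the coeffect bookkeeping, identical in spirit to the difficulty in \Cref{lem:graphir:context_propagation}: one must carefully reconcile the convention of \Cref{def:graphir:context_typing} — where the hole is probed with a variable carrying last use $y\mapsto z$ over the ambient map $\mute{\pointsto{z}}$ — with the way \rulename{g-let} threads and updates last uses to $\mute{\DELTA,(\FX{\qsat{\EPS[1]}},x)\mapsto x}$ when descending into a body. Establishing the two dependency side conditions ($\EPSS[1] = \DOM(\DEP)$ in part~(1) and $\DEP\sqsubseteq\DELTA\vert_{\EPSSPR}$ in part~(2)) amounts to tracking that restricting the full last-use map $\mute{\pointsto{z}}$ to a saturated effect produces a dependency whose domain is exactly that effect, appealing to \Cref{lem:deps-invariant} for the equality; the residual effect and qualifier mismatches introduced by the probe variable are absorbed by \rulename{b-sub}.
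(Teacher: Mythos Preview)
Your proposal is correct and takes essentially the same approach as the paper, which states only that ``both cases are proved by induction over the respective context.'' Your detailed unfolding of the base and inductive cases, inverting \rulename{g-let} and reassembling through \Cref{def:graphir:context_typing}, is precisely what that one-line proof sketch intends; the paper's own proof of the neighboring \Cref{lem:graphir:context_propagation} confirms that \rulename{g-ret} plus \rulename{b-sub} on the probe variable is the expected mechanism in the base case.

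One minor remark: the appeal to \Cref{lem:deps-invariant} for the side condition $\EPSS = \DOM(\DEP)$ is slightly misdirected, since that lemma concerns the \emph{synthesis} judgment rather than checking. In decomposition you are free to \emph{choose} $\EPSPR$ so that $\EPSSPR = \DOM(\DEP)$ holds by construction (the checking rule only gives the inclusion $\DEP \sqsubseteq \DELTA\vert_{\EPSS[1]}$), so no external lemma is needed there.
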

\begin{proof}
  Both cases are proved by induction over the respective context.
\end{proof}
\begin{lemma}[Plugging]\label{lem:graphir:plug}\hfill
\begin{enumerate}
  \item If\(\ \ \cdsx[\flt]{\varnothing}{\Sigma}{\mute{\pointsto{z}}}\ts z.\CX{G}{\cdot}: \ty[p]{S}\ \EPS[1]\Rightarrow \ty[q]{T}\ \EPS[2]\)
  and \(\cdsx[\flt]{\varnothing}{\Sigma}{\mute{\pointsto{z}}}\ts z.g\has\DEP: \ty[p]{S}\ \EPS[1]\) where \(\DEP\) is the dependency at the hole of \(G\),
  then \(\cdsx[\flt]{\varnothing}{\Sigma}{\mute{\pointsto{z}}}\ts z.\CX{G}{g}: \ty[q]{T}\ \EPS[2]\).
  \item If\(\ \ \cdsx[\flt]{\varnothing}{\Sigma}{\mute{\pointsto{z}}}\ts z.\CX{B}{\cdot}: \ty[p]{S}\ \EPS[1]\Rightarrow \ty[q]{T}\ \EPS[2]\)
  and \(\cdsx[\flt]{\varnothing}{\Sigma}{\mute{\pointsto{z}}}\ts b: \ty[p]{S}\ \EPS[1]\), then for all \(\mute{\DEP \sqsubseteq \EPSS[1]\mapsto z}\)
  it holds that \(\cdsx[\flt]{\varnothing}{\Sigma}{\mute{\pointsto{z}}}\ts z.\CX{B}{b\has\DEP}: \ty[q]{T}\ \EPS[2]\).
\end{enumerate}
\end{lemma}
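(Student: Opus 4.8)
The plan is to prove both parts by structural induction on the context ($G$ in part (1), $B$ in part (2)), reading off Plugging as the converse of Decomposition (\Cref{lem:graphir:decomp}) and reusing the same inversion/reassembly bookkeeping already exercised in Contextual Effect Propagation (\Cref{lem:graphir:context_propagation}). The guiding principle is that the well-typed-runtime-configuration rule and \rulename{g-let} are \emph{local} in the hole's contents: the body typing of an enclosing let, and the conformance side condition on that let's dependency annotation, depend on the bound binding only through its qualified type, its effect, and the threaded last-use map $\DELTA$. Since the induction hypothesis returns the plugged subterm at exactly the qualified type and effect that \defref{def:graphir:context_typing} assigns to the placeholder name used in the hole, every surrounding frame can be rebuilt verbatim.

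For part (1), the base case $G = \square\has\DEP$ has $\CX{G}{g} = g\has\DEP$. Inverting the context typing through the runtime-configuration rule and then \rulename{g-ret}/\rulename{b-sub} on the placeholder $x : \ty[p]{S}$ yields $\GS\ts S <: T$ and, from $x\in q$ together with $q = \qsat{q}$ and $x$ reaching $p$, also $\GS\ts p <: q$; the condition $\mute{\DEP\sqsubseteq\EPS[2]\mapsto z}$ with $\DOM(\DEP) = \EPSS[1] = \EPS[1]$ forces $\EPS[1]\subq\EPS[2]$. Hence $\GS\ts \ty[p]{S}\ \EPS[1] <: \ty[q]{T}\ \EPS[2]$, so applying \rulename{b-sub} to the given hole typing $\cdsx[\flt]{\varnothing}{\Sigma}{\mute{\pointsto{z}}}\ts g : \ty[p]{S}\ \EPS[1]$ upcasts it to $\ty[q]{T}\ \EPS[2]$; re-applying the runtime-configuration rule (the side conditions $\mute{\DEP\sqsubseteq\EPS[2]\mapsto z}$, $q = \qsat{q}$, $\EPS[2] = \qsat{\EPS[2]}$ are already in hand) concludes. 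In the inductive case $G = (\tlet~x = G'~\tin~g')\has\DEP'$, I would invert the context typing via the runtime-configuration rule and \rulename{g-let} to extract (i) an inner context typing $\cdsx[\flt]{\varnothing}{\Sigma}{\mute{\pointsto{z}}}\ts z.\CX{G'}{\cdot}: \ty[p]{S}\ \EPS[1]\Rightarrow \ty[p']{S'}\ \EPSPR[2]$ sharing the same hole and hole-dependency $\DEP$, (ii) the body typing of $g'$ under $x : \ty[\qsat{p'}\qglb\qsat{\flt}]{S'}$ with last uses $\mute{\pointsto{z}, (\EPSSPR[2],x)\mapsto x}$, and (iii) the conformance $\mute{\DEP'\sqsubseteq\DELTA\vert_{\EPSSPR[2]}}$. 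Applying the induction hypothesis to (i) and the given hole typing produces $\cdsx[\flt]{\varnothing}{\Sigma}{\mute{\pointsto{z}}}\ts z.\CX{G'}{g} : \ty[p']{S'}\ \EPSPR[2]$, i.e.\ the same intermediate type and effect, so (ii) and (iii) carry over unchanged and \rulename{g-let} followed by the runtime-configuration rule reassembles $z.\CX{G}{g} : \ty[q]{T}\ \EPS[2]$.

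Part (2) follows the same pattern over $B$. In the base case $B = (\tlet~x = \square~\tin~g)\has\DEP'$, inverting the context typing supplies the body typing of $g$, and I would reconstruct with \rulename{g-let} using the bound binding $b : \ty[p]{S}\ \EPS[1]$ and the chosen annotation $\DEP$; its side condition $\mute{\DEP\sqsubseteq\DELTA\vert_{\EPSS[1]}}$ holds because $\DELTA = \mute{\pointsto{z}}$ gives $\mute{\DELTA\vert_{\EPSS[1]} = \EPSS[1]\mapsto z}$ and we assumed $\mute{\DEP\sqsubseteq\EPSS[1]\mapsto z}$. The inductive case $B = (\tlet~x = B'~\tin~g)\has\DEP'$ invokes the induction hypothesis on $B'$ exactly as in part (1).

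The main obstacle is precisely the dependency and coeffect bookkeeping in the inductive cases: one must verify that neither the last-use map fed into the let body nor the conformance check on the binding's dependency annotation is sensitive to whether the hole holds the placeholder name or the genuine subterm. This is what the locality observation buys us, since the hole's influence on its enclosing frame is mediated entirely by the fixed intermediate pair $\ty[p']{S'}\ \EPSPR[2]$ and by the hole's pre-determined dependency (whose domain is $\EPSS[1]$), both preserved by the induction hypothesis. The only other delicate points are routine: extracting $\EPS[1]\subq\EPS[2]$ and $p\subq q$ in the base case of part (1) relies on the saturation side conditions $q = \qsat{q}$ and $\EPS[2] = \qsat{\EPS[2]}$ carried by the runtime-configuration judgment, which are exactly what make the effect and qualifier upcasts admissible.
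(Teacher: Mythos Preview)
Your proposal is correct and matches the paper's approach exactly: the paper's proof of this lemma is simply ``Both cases are proved by induction over the respective context,'' and you have unfolded precisely that induction, supplying the inversion/reassembly bookkeeping for the base and step cases that the paper leaves implicit. Your observation that the enclosing \rulename{g-let} frame depends on the hole only through its type, effect, and the threaded last-use map is the right locality principle, and it is the same mechanism the paper relies on (implicitly here, and explicitly in the neighboring proof of \Cref{lem:graphir:context_propagation}).
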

\begin{proof}
  Both cases are proved by induction over the respective context.
\end{proof}
\begin{lemma}[Qualifier-Growing Replacement]\label{lem:graphir:replacement}\hfill
\begin{enumerate}
  \item If\(\ \ \cdsx[\flt]{\varnothing}{\Sigma}{\mute{\pointsto{z}}}\ts z.\CX{G}{g}: \ty[p]{S}\ \EPS[1]\Rightarrow \ty[q]{T}\ \EPS[2]\)
    and \(\cdsx[\flt]{\varnothing}{\Sigma'}{\mute{\pointsto{z}}}\ts g': \ty[p,r]{S}\ \EPS[1]\FX{,r}\) where \(\Sigma'\supseteq\Sigma\) and \(r\subq\DOM(\Sigma'\setminus\Sigma)\), then \(\cdsx[\flt]{\varnothing}{\Sigma'}{\mute{\pointsto{z}}}\ts z.\CX{G'}{g'}: \ty[q,r]{T}\ \EPS[2]\FX{,r}\) where
    \(G'\) is the result of contextual effect propagation for the fresh introduction form \(r\).
  \item  If\(\ \ \cdsx[\flt]{\varnothing}{\Sigma}{\mute{\pointsto{z}}}\ts z.\CX{B}{b\has\DEP}: \ty[p]{S}\ \EPS[1]\Rightarrow \ty[q]{T}\ \EPS[2]\)
    and \(\cdsx[\flt]{\varnothing}{\Sigma'}{\mute{\pointsto{z}}}\ts b': \ty[p,r]{S}\ \EPS[1]\FX{,r}\) where \(\Sigma'\supseteq\Sigma\) and \(r\subq\DOM(\Sigma'\setminus\Sigma)\), then \(\cdsx[\flt]{\varnothing}{\Sigma'}{\mute{\pointsto{z}}}\ts z.\CX{B'}{b'\has\DEP}: \ty[q,r]{T}\ \EPS[2]\FX{,r}\)
    where \(B'\) is the result of contextual effect propagation for the fresh introduction form \(r\).
\end{enumerate}
\end{lemma}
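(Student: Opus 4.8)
The plan is to prove both parts simultaneously by induction on the structure of the context ($G$ in part~1, $B$ in part~2), following the same pattern as the adjacent Decomposition and Plugging lemmas (\Cref{lem:graphir:decomp,lem:graphir:plug}). Two orthogonal growths must be accounted for: the effect grows by $r$ together with the structural rewrite $G' = G\langle\iota:z.\ell\rangle$, and the result qualifier grows by $r$. The first is exactly what Contextual Effect Propagation (\Cref{lem:graphir:context_propagation}) computes; the second is the context-ified analogue of the ``reduction grows the assigned qualifier by a fresh location'' phenomenon already visible in the conclusion of \Cref{thm:soundness}. The decisive simplification is that $r\subq\DOM(\Sigma'\setminus\Sigma)$ is fresh, so neither the types, qualifiers, effects, nor dependency maps of the original context typing mention $r$; consequently growing the hole's effect by $r$ on both sides of the context is always a sound over-approximation (as it is in preservation), and growing its qualifier by $r$ can only leak $r$ --- and nothing else --- into the result.

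For the base case $G = \square\has\DEP$, unfolding \Cref{def:graphir:context_typing} (through \rulename{g-ret} and \rulename{b-sub}) reduces the context typing to a subtyping $\csx{\varnothing}{\Sigma}\ts\ty[p]{S}\ \EPS[1] <: \ty[q]{T}\ \EPS[2]$ with $\EPSS[1]=\DOM(\DEP)$ at the hole. Since qualifier and effect subtyping are just subset inclusion (\rulename{q-sub}), they are monotone under adjoining the same $r$ to both sides, giving $\ty[p,r]{S}\ \EPS[1]\FX{,r} <: \ty[q,r]{T}\ \EPS[2]\FX{,r}$; weakening this into $\Sigma'$ by \Cref{lem:subtyping_weakening} and composing with the hypothesis $g' : \ty[p,r]{S}\ \EPS[1]\FX{,r}$ via \rulename{b-sub} yields $\ty[q,r]{T}\ \EPS[2]\FX{,r}$. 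The dependency at the hole of $G'$ is precisely $\mute{\DEP,\ell\mapsto z}$ (when $\iota$ is a reference) or $\DEP$ (otherwise), which is the value supplied by the base case of \Cref{lem:graphir:context_propagation} and remains a conforming sub-map once $r$ is admitted to the effect.

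For the inductive case $G = (\tlet~{x = G_0}~\tin~g_0)\has\DEP$, inverting the context typing via \Cref{def:graphir:context_typing} and \rulename{g-let} exposes an inner context typing $z.\CX{G_0}{\cdot}: \ty[p]{S}\ \EPS[1]\Rightarrow \ty[p']{S'}\ \EPSPR$ and a body judgment for $g_0$ under the binding $x : \ty[\qsat{p'}\overlap\qsat{\flt}]{S'}$ with the last uses updated at $\EPSSPR$. I apply the induction hypothesis to $G_0$ and $g'$ to obtain $z.\CX{G_0'}{g'}: \ty[p',r]{S'}\ \EPSPR\FX{,r}$, reuse the contextual-effect-propagation computation to thread $\mute{\ell\mapsto z}$ into $\DEP$ along the spine, and reassemble with \rulename{g-let}. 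The body $g_0$ is unchanged save for moving to $\Sigma'$ (weakening) and tolerating $x$'s qualifier grown by the fresh $r$; because $r$ does not occur in $g_0$ and $x\notin\FV(T)$, the dependent-let substitution $[p',r/x]$ contributes exactly $r$ to the result qualifier and nothing more, matching the claimed $\ty[q,r]{T}$.

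The main obstacle is this inductive let case: coordinating the qualifier growth with the dependent binding while keeping every dependency map well-formed. Concretely, the fresh location $\ell$ enters the codomain of the dependencies along the entire spine of $G$ (this is the action of $G\langle\iota:z.\ell\rangle$), so at each let node I must re-establish the \rulename{g-let} side condition $\mute{\DEP\sqsubseteq\DELTA\vert_{\EPSS[1]}}$ relative to the $r$-extended store and last-use map, and simultaneously argue that the result qualifier grows by exactly $r$ --- using freshness of $r$ and $x\notin\FV(T)$ to rule out spurious growth. Part~2 is entirely analogous, the only difference being that the hole is a binding $b\has\DEP$ rather than a graph term, so the concluding step appeals to the binding clause of \Cref{def:graphir:context_typing} together with its side condition $\DEP\sqsubseteq\mute{\EPSS[1]\mapsto z}$ in place of the graph-term clause.
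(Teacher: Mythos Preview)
Your proposal is correct but takes a different route from the paper. The paper's proof is a one-line appeal to the three lemmas just established: it invokes Decomposition (\Cref{lem:graphir:decomp}), then Contextual Effect Propagation (\Cref{lem:graphir:context_propagation}), then Plugging (\Cref{lem:graphir:plug}), treating this lemma as the compositional payoff for having factored out those three pieces. Your direct structural induction on $G$ (resp.\ $B$) instead unrolls this composition, re-deriving inline what the propagation and plugging lemmas already package.

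What each buys: the paper's approach is shorter and makes explicit that the result is nothing more than ``decompose, grow the context, re-plug.'' Your approach is more self-contained and, in particular, spells out how the qualifier growth from $p$ to $p,r$ at the hole propagates to $q,r$ at the result via the dependent-let substitution $[p',r/x]$ in the inductive case. The paper's terse composition leaves this last point implicit---\Cref{lem:graphir:context_propagation} only grows effects, and \Cref{lem:graphir:plug} as stated expects the hole term's type to match the hole type exactly---so the qualifier-growth step relies on how plugging interacts with substitution for fresh locations. Your induction makes that reasoning manifest, at the cost of duplicating work already housed in the surrounding lemmas.
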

\begin{proof}
  By decomposition (\Cref{lem:graphir:decomp}), contextual effect propagation (\Cref{lem:graphir:context_propagation}),
  and plugging (\Cref{lem:graphir:plug}).
\end{proof}
\subsubsection{Soundness}
\begin{theorem}[Progress]\label{thm:graphir:progress}
  If \(\ \csx[\DOM(\Sigma)]{\varnothing}{\Sigma} \ts z.g\has\DEP : \ty[q]{T}\ \EPS\), then either \(g\) is a location \(\ell\in\DOM(\Sigma)\), or
  for any store \(\sigma\) where \(\csx[\DOM(\Sigma)]{\varnothing}{\Sigma} \ts \sigma\), there exists
  a graph term \(g'\), store \(\sigma'\), and dependency \(\mute{\DEP'}\) such that \(\sigma \mid z.g\has\DEP  \redg \sigma'\mid z.g'\has\mute{\DEP'}\).
\end{theorem}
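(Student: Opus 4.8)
The plan is to run a standard progress argument by structural induction on the graph term, after first peeling off the runtime-configuration rule. Inverting \(\csx[\DOM(\Sigma)]{\varnothing}{\Sigma} \ts z.g\has\DEP : \ty[q]{T}\ \EPS\) yields a graph-term judgment \(\cdsx[\DOM(\Sigma)]{\varnothing}{\Sigma}{\mute{\pointsto{z}}} \ts g : \ty[q]{T}\ \EPS\) in which the ambient last-use map is uniformly \(\mute{\pointsto{z}}\), together with \(\mute{\DEP}\sqsubseteq\EPS\mapsto z\) and the saturation side conditions \(q=\qsat{q}\), \(\EPS=\qsat{\EPS}\). The uniform shape \(\mute{\pointsto{z}}\) of the coeffect is exactly what will discharge the dependency guards of the reduction rules, so I carry it as an invariant through the induction and generalize the statement to all well-typed configurations \(z.g'\has\mute{\DEP'}\) so that the induction hypothesis is available on proper subterms.

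If \(g=\V{x}\), it is typed by \rulename{g-ret}, and since the term context is empty, name lookup (\rulename{l-loc}) forces \(\V{x}\) to be a location \(\ell\in\DOM(\Sigma)\); this is the first disjunct. Otherwise \(g=\Let{x}{b\has\DEP'}{g'}\), typed by \rulename{g-let}, and I case on the syntactic form of the bound term \(b\). If \(b\) is a node, I split on its shape: an introduction \(\iota\) — including \(\tref_{\V{x}}~\V{y}\), whose operands both resolve to locations in the empty context and which is therefore an \(\iota\) — steps by \rulename{intro}; an application \(\V{x}~\V{y}\) has, by name lookup, \(\V{x}=\ell_1\) and \(\V{y}=\ell_2\) locations, and inverting the well-formed store \(\GSD[\flt]\ts\sigma\) (canonical forms) shows \(\ell_1\) holds a \(\lambda\)-introduction, so \rulename{$\beta$} applies; \({!~\V{x}}\) and \(\V{x}\coloneqq\V{y}\) similarly resolve their reference operand to a \(\tref\)-cell in \(\sigma\) and step by \rulename{deref} and \rulename{assign}. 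If instead \(b\) is a nested graph \(g''\), then either \(g''\) is a name, hence a location, and \rulename{let} fires; or \(g''\) is itself a let, and since \rulename{g-let} types \(b\) under the very same ambient \(\mute{\pointsto{z}}\), the sub-configuration \(z.g''\has\DEP'\) is again well typed, so the induction hypothesis yields an internal step that lifts through the graph context \(\tlet~x=\square~\tin~g'\) (which is admitted by \(G ::= (\tlet~x=G~\tin~g)\has\DEP\)).

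The crux, and the only part going beyond a textbook progress proof, is discharging the guards \(\DOM(\DEP')\subq\DOM(\sigma)\) and \(\CODOM(\DEP')\subq\{z\}\) carried by every reduction rule for the dependency \(\DEP'\) at the active redex. From the \rulename{g-let} premise (equivalently, the synthesis invariant \Cref{lem:deps-invariant}), this dependency satisfies \(\mute{\DEP'}\sqsubseteq\mute{\DELTA}\vert_{\EPSS[1]}\) with \(\mute{\DELTA}=\mute{\pointsto{z}}\). Because \(\mute{\pointsto{z}}\) sends every atom of \(\DOM(\Sigma)\cup\DOM(\G)\) to \(z\) and the term context is empty, its restriction has codomain contained in \(\{z\}\) and domain contained in \(\DOM(\Sigma)\); finally \(\DOM(\sigma)=\DOM(\Sigma)\) by store well-formedness (\(\vert\Sigma\vert=\vert\sigma\vert\)), so both guards hold. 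Store well-formedness of the result is immediate for the guard-checking rules, since \(\sigma\) is unchanged by \rulename{$\beta$}, \rulename{let}, \rulename{deref}, \rulename{assign}, or extended at a fresh \(\ell\) with a well-typed introduction by \rulename{intro}.

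I expect the main obstacle to be bookkeeping rather than conceptual: maintaining the invariant that the active redex is always reached along a purely store-bound spine whose last uses point to \(z\), so that the codomain-\(\{z\}\) guard is met, and extracting the canonical-forms facts from the well-formed-store relation for the \rulename{$\beta$}, \rulename{deref}, and \rulename{assign} cases. The nested-graph case is the delicate one, as it requires reconstructing a well-typed sub-configuration \(z.g''\has\DEP'\) before invoking the induction hypothesis; here one must verify that peeling a \rulename{g-let} preserves the \(\mute{\pointsto{z}}\) ambient map and re-establishes the configuration-typing side conditions \(\mute{\DEP'}\sqsubseteq\EPS'\mapsto z\), \(q'=\qsat{q'}\), and \(\EPS'=\qsat{\EPS'}\) for the inner effect \(\EPS'\).
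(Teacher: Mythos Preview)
Your proposal is correct and matches the paper's approach, which is stated tersely as ``by induction over the typing derivation.'' Your expansion is an accurate unfolding of that one line: invert the configuration rule once, then case on the graph-term typing, using the invariant that the ambient last-use coeffect along the evaluation spine is \(\mute{\pointsto{z}}\) to discharge the \(\DOM/\CODOM\) guards of the reduction rules.

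One remark on the obstacle you flag at the end: the saturation side conditions \(q=\qsat{q}\) and \(\EPS=\qsat{\EPS}\) live only on the outer configuration rule and are never needed for progress. If you state the induction hypothesis on the graph-term judgment \(\cdsx[\DOM(\Sigma)]{\varnothing}{\Sigma}{\mute{\pointsto{z}}}\ts g : \ty[q]{T}\ \EPS\) rather than on the configuration judgment, the issue evaporates---the \rulename{g-let} premise already types the bound \(b\) under the same \(\mute{\pointsto{z}}\), so the IH applies directly to nested graphs without reconstructing a full sub-configuration or re-establishing saturation. This is presumably what the paper intends by ``the typing derivation,'' and it makes the nested-let case routine.
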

\begin{proof}
  By induction over the typing derivation.
\end{proof}
\begin{theorem}[Preservation]\label{thm:graphir:preservation}\hfill\\
\begin{adjustbox}{scale=.9}
\begin{minipage}{\linewidth}
  \infrule{\csx[\DOM(\Sigma)]{\varnothing}{\Sigma}  \ts z.g\has\DEP : \ty[q]{T}\ \EPS\quad \WF{\cdsx{\varnothing}{\Sigma}{\mute{\pointsto{z}}.z}}\quad \cdsx[\DOM(\Sigma)]{\varnothing}{\Sigma}{\mute{\pointsto{z}}} \ts \sigma\quad\sigma\mid z.g\has\DEP \redg \sigma' \mid z.g'\has\mute{\DEP'}
  }{
    \exists \Sigma' \supseteq \Sigma.\;\exists p \subq\DOM(\Sigma'\setminus\Sigma).\quad\csx[\DOM(\Sigma')]{\varnothing}{\Sigma'} \ts z.g'\has\mute{\DEP'} : \ty[q,p]{T}\ \FX{\EPS,p}\qquad\qquad\\[1ex]
    \phantom{\exists \Sigma' \supseteq \Sigma.\;\exists p \subq\DOM(\Sigma'\setminus\Sigma).\ \ \,\,}\WF{\cdsx{\varnothing}{\Sigma'}{\mute{\pointsto{z}}.z}}\qquad  \cdsx[\DOM(\Sigma')]{\varnothing}{\Sigma'}{\mute{\pointsto{z}}} \ts \sigma'
  }
\end{minipage}
\end{adjustbox}
\end{theorem}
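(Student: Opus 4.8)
The plan is to follow the same skeleton as the preservation proof for the store-allocated system (\Cref{thm:soundness_letv}), but to additionally carry the dependency annotation $\DEP$ and the last-use coeffect $\mute{\pointsto{z}}$ through each step. Because the \irlang reduction rules (\Cref{fig:graphir:cbv}) fire inside explicit graph and binding evaluation contexts $G$ and $B$ rather than through congruence rules, I would not argue by bare induction over the typing derivation but by a decomposition--analysis--recomposition pattern. First I would invert the well-typed runtime configuration judgment to obtain $\cdsx[\DOM(\Sigma)]{\varnothing}{\Sigma}{\mute{\pointsto{z}}}\ts g : \ty[q]{T}\ \EPS$ together with the side conditions $\mute{\DEP\sqsubseteq\EPS\mapsto z}$, $q=\qsat{q}$, and $\EPS=\qsat{\EPS}$. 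Then I would apply Decomposition (\Cref{lem:graphir:decomp}) to split $g$ at its redex, yielding a context typing $z.\CX{G}{\cdot}$ (or $z.\CX{B}{\cdot}$) together with a typing of the bound redex and the dependency $\DEP$ sitting at the hole.

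Next I would case on the reduction rule, which is fixed by the shape of the redex. For \rulename{$\beta$} and \rulename{let}, the contractum is obtained by substituting a store location for the bound variable while simultaneously rewiring the latent/bound dependency via $\mute{[x\leadsto\DEP]}$ and performing the qualifier and term renamings $\theta[\ell/x]_{\mathsf{t}}$. Here the heavy lifting is done by the rewiring-and-substitution lemma (\Cref{lem:graphir:subst_term}), which produces a typing of the contractum under the same $\Sigma$ with unchanged $\ty[q]{T}\ \EPS$, so that $p=\qbot$ suffices. I would then recompose with Plugging (\Cref{lem:graphir:plug}), checking that the reduction-rule side conditions $\DOM(\DEP)\subq\DOM(\sigma)$ and $\mute{\CODOM(\DEP)\subq\{z\}}$ are consistent with the dependency constraint $\mute{\DEP\sqsubseteq\DELTA\vert_{\FX{\qsat{\EPS}}}}$ imposed by \rulename{g-let} and the runtime-configuration rule. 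For \rulename{deref} and \rulename{assign}, neither the store typing nor the set of names grows; after verifying the same dependency side conditions, these cases collapse to the reasoning of \Cref{thm:soundness_letv} (re-typing the contractum and re-plugging with $\mute{\DEP'}=\DEP$), with the extra observation that the referent's typing is extracted from the well-typed store $\sigma$.

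The genuinely new case is \rulename{intro}, which allocates a fresh location $\ell$ and therefore forces $\Sigma'=\Sigma,\ell:\ty[r]{U}$ and qualifier/effect growth by the fresh location $\ell$ (playing the role of $p$ in the conclusion). Because $\ell$ becomes globally visible, its effect dependency must be threaded along the \emph{entire} spine of $G$ by contextual effect propagation, producing $G'=G\langle\iota:z.\ell\rangle$. This is exactly what the Qualifier-Growing Replacement lemma (\Cref{lem:graphir:replacement}) packages---it chains Decomposition, Contextual Effect Propagation (\Cref{lem:graphir:context_propagation}), and Plugging---so I would discharge this case by obtaining the freshly typed introduction form from the introduction rules together with well-formedness of $\sigma$, and then instantiating \Cref{lem:graphir:replacement} to lift the context typing to the extended store and the grown qualifier and effect.

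In every case I would finally re-establish the two remaining conclusions. Well-formedness $\WF{\cdsx{\varnothing}{\Sigma'}{\mute{\pointsto{z}}}.z}$ is immediate except in \rulename{intro}, where the growth of $\DOM(\Sigma)$ by $\ell$ is matched by the growth of $\DOM(\mute{\pointsto{z}})$ while $\CODOM(\mute{\pointsto{z}})$ still ranges only over $z$. Well-typedness of the new store $\cdsx[\DOM(\Sigma')]{\varnothing}{\Sigma'}{\mute{\pointsto{z}}}\ts\sigma'$ follows by extending or modifying the store-typing derivation exactly as in \Cref{thm:soundness_letv}. I expect \rulename{intro} to be the main obstacle: showing that propagating the new edge $\mute{\ell\mapsto z}$ along the context spine preserves typing \emph{and} maintains the global invariant $\mute{\CODOM(\DEP')\subq\{z\}}$ together with context well-formedness is the only content not already present in the dependency-free systems. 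The plan is to have quarantined this difficulty entirely inside \Cref{lem:graphir:context_propagation,lem:graphir:replacement}, so that the preservation proof itself reduces to selecting the correct lemma per reduction rule and reconciling its hypotheses with the typing slice produced by inversion.
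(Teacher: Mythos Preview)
Your proposal is correct and follows essentially the same decomposition--analyse-redex--recompose skeleton as the paper's proof: case on the reduction rule, use \Cref{lem:graphir:subst_term} for \rulename{$\beta$}/\rulename{let}, handle \rulename{intro} via contextual effect propagation, and finish with plugging. The only cosmetic difference is that the paper invokes \Cref{lem:graphir:replacement} once up front as the overarching frame (so that every case reduces to showing the redex step is locally type/effect preserving up to fresh-location growth), whereas you call \Cref{lem:graphir:decomp}/\Cref{lem:graphir:plug} directly in the non-growing cases and reserve \Cref{lem:graphir:replacement} for \rulename{intro}; since the latter is just the chained form of the former with a trivial propagation when nothing fresh is introduced, the two presentations are equivalent.
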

\begin{proof}
  By inspecting the rule applied for the step \(\sigma\mid z.g\has\DEP \redg \sigma' \mid z.g'\has\mute{\DEP'}\) and
  the qualifier-growing replacement \Cref{lem:graphir:replacement}, it is sufficient to prove that
  each rule is type/effect/qualifier preserving up to fresh store introductions in a minimal context:
  \begin{itemize}
    \item Case \rulename{\(\beta\)}: In which case we have a well-typed application \(\ell_1~\ell_2\has\DEP[1]\) in the hole. We proceed by induction
    over its typing derivation, which ends either in \rulename{n-app} or \rulename{b-sub}. No store introduction occurs, hence the context type is preserved.
    \begin{itemize}
      \item Case \rulename{n-app}:
            \begin{enumerate}
              \item We have \(\CODOM(\DEP[1])\subq\{z\}\).
              \item We have \(\ell_1 : \ty[q]{\left(x: \ty[\qsat{p}\overlap \qsat{q}]{T} \to^{\EPS} \ty[r]{U}\right)}\in\csx[\DOM(\Sigma)]{\varnothing}{\Sigma}\).
              \item We have \(\ell_2 : \ty[p]{T}\in\csx[\DOM(\Sigma)]{\varnothing}{\Sigma}\).
              \item We have \(x \notin\FV(U)\),  \(\EPS\subq q,x\),  \(r\subq\flt,x\), and \(\theta = [p/x]\).
              \item By (1) and environment relation, we have \(\sigma(\ell_1) = \lambda x.g\has\DEP[2]\).
              \item By inversion: \(\csx[q',x]{x: \ty[p']{T'}}{\Sigma}\has\mute{\pointsto{x}} \ts g : \ty[r']{U'}\ \EPS'\),
                    \(\mute{\DEP[2]\sqsubseteq\EPSSPR\mapsto x}\), \(\csx{\varnothing}{\Sigma}\ts \ty[\qsat{p}\overlap \qsat{q}]{T}\ \PURE <: \ty[p']{T'}\ \PURE\), and \(\csx{x: \ty[\qsat{p}\overlap \qsat{q}]{T}}{\Sigma}\ts\ty[r']{U'}\ \EPS' <: \ty[r]{U}\ \EPS\), and \(q'\subq q\).
              \item By narrowing, weakening, subsumption: \(\csx[\DOM(\Sigma),x]{x: \ty[\qsat{p}\overlap \qsat{q}]{T}}{\Sigma}\has\mute{\pointsto{x}} \ts g : \ty[r]{U}\ \EPS\).
              \item By (1) and the substitution and rewiring \Cref{lem:graphir:subst_term}: \[\csx[\DOM(\Sigma)]{\varnothing}{\Sigma}\has\mute{\pointsto{z}} \ts g\mute{[x\leadsto\DEP[1]]}[\ell_2/x][\ell_2/x]_{\mathsf{t}} : \ty[{r[p/x]}]{U}\ \EPS{[p/x]}.\]
              \item By (6) and transitivity, we have \(\mute{\DEP[2]\sqsubseteq\EPSS\mapsto x}\), and hence \[\DEP[1]\mute{[x\leadsto\DEP[1]]}[\ell_2/x][\ell_2/x]_{\mathsf{t}}\mute{\,\sqsubseteq\FX{\EPSS{[p/x]}} \mapsto z}\]
              \item By (8), (9), and plugging \Cref{lem:graphir:plug} we can now prove this case.
            \end{enumerate}
      \item Case \rulename{b-sub}: By IH and subsumption.
    \end{itemize}
    \item Case \rulename{let}: Follows from the substitution and rewiring \Cref{lem:graphir:subst_term}.
    \item Case \rulename{intro}: In which case we have a well-typed let binding \(\tlet~{x = \iota\has\DEP}~\tin~g\) in the hole.
    By induction over the derivation, only \rulename{b-sub} or \rulename{g-let} applies. The first is trivial, and we consider the latter case.
    By the contextual effect propagation \Cref{lem:graphir:context_propagation}, we have increased the qualifiers and effects of the right-hand-side context typing with the fresh new store location,
    after which we can proceed as in the \rulename{let} case with the substitution and rewiring \Cref{lem:graphir:subst_term} and conclude.
    \item Case \rulename{deref}: In which case we have a well-typed dereference \(!~\ell\has\DEP\) with \(\CODOM(\DEP)\subq\{z\}\) in the hole.
    By induction over the derivation, only \rulename{b-sub} or \rulename{n-!} applies. The first is again trivial, and the latter case is straightforward,
    since by the environment predicate we have that thew hole is plugged with a store value of the same type.
    \item Case \rulename{assign}: Similar to the previous case.
  \end{itemize}
\end{proof}
\begin{corollary}[Preservation of Separation]\label{thm:graphir:preservation_of_separation}
  Interleaved executions preserve types and disjointness:\\
\begin{adjustbox}{scale=.88}
\begin{minipage}{\linewidth}
  \infrule{%
{\begin{array}{l@{\quad}l@{\quad}l@{\quad}l}
\csx[\DOM(\Sigma)]{\varnothing}{\Sigma} \vdash z.g_1\has\DEP[1]: \ty[q_1]{T_1}\ \EPS[1] & \sigma\phantom{'}\mid z.g_1\has\DEP[1]  \redg \sigma'\phantom{'} \mid z.g_1'\has\mute{\DEP[1]'} & \csx[\DOM(\Sigma)]{\varnothing}{\Sigma} \ts \sigma & \WF{\cdsx{\varnothing}{\Sigma}{\mute{\pointsto{z}}.z}} \\[1ex]
\csx[\DOM(\Sigma)]{\varnothing}{\Sigma} \vdash z.g_2\has\DEP[2]: \ty[q_2]{T_2}\ \EPS[2]  & \sigma'\mid z.g_2\has\DEP[2]  \redg \sigma'' \mid z.g_2'\has\mute{\DEP[2]'}          & q_1 \overlap q_2 \subq \qbot       &
\end{array}}
}{{\begin{array}{ll@{\quad}l@{\quad}l}
\exists p_1\;p_2\;\EPSPR[1]\;\EPSPR[2]\;\Sigma'\;\Sigma''. & \csx[\DOM(\Sigma')\phantom{'}]{\varnothing}{\Sigma'\phantom{'}} \ts z.g_1'\has\mute{\DEP[1]'} : \ty[p_1]{T_1}\ \EPSPR[1] & \Sigma'' \supseteq \Sigma' \supseteq \Sigma \\[1ex]
                                     & \csx[\DOM(\Sigma'')]{\varnothing}{\Sigma''} \ts z.g_2'\has\mute{\DEP[2]'} : \ty[p_2]{T_2}\ \EPSPR[2]                     & p_1 \overlap p_2 \subq \qbot
\end{array}}}
\end{minipage}%
\end{adjustbox}
\end{corollary}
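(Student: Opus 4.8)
The plan is to reuse the recipe from the direct-style and MNF versions of this result (\Cref{coro:preservation_separation,coro:mnf:preservation_separation}): apply graph IR preservation (\Cref{thm:graphir:preservation}) to the two reduction steps one after the other, threading the growing store typing between them, and then argue that each step enlarges the assigned qualifier only by freshly allocated locations, so disjointness survives. The extra work over the earlier corollaries is purely the bookkeeping forced by the runtime-configuration shape (the start variable \(z\), the implicit last-use coeffect \(\mute{\pointsto{z}}\), and the attached dependencies).

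First I would apply \Cref{thm:graphir:preservation} to the step \(\sigma\mid z.g_1\has\DEP[1] \redg \sigma'\mid z.g_1'\has\mute{\DEP[1]'}\). This produces a store typing \(\Sigma'\supseteq\Sigma\) and a fresh set \(r_1\subq\DOM(\Sigma'\setminus\Sigma)\) with
\[\csx[\DOM(\Sigma')]{\varnothing}{\Sigma'}\ts z.g_1'\has\mute{\DEP[1]'} : \ty[q_1,r_1]{T_1}\ \FX{\EPS[1],r_1},\]
and, crucially, also re-establishes the two companion facts \(\WF{\cdsx{\varnothing}{\Sigma'}{\mute{\pointsto{z}}.z}}\) and \(\cdsx[\DOM(\Sigma')]{\varnothing}{\Sigma'}{\mute{\pointsto{z}}}\ts\sigma'\). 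Taking the witnesses \(p_1 := q_1,r_1\) and \(\EPSPR[1] := \EPS[1],r_1\) discharges the first conclusion.

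Next, since the second reduction runs against \(\sigma'\) whereas \(z.g_2\has\DEP[2]\) is typed only under \(\Sigma\), I would weaken it into \(\Sigma'\) before invoking preservation again; graph IR weakening is admissible exactly as \Cref{lem:weakening} is in the direct-style calculus. The one point that actually needs checking — and the step I expect to be the main obstacle — is the coeffect consistency: weakening the store merely adds the fresh locations to \(\mute{\pointsto{z}}\), all pointing at the start variable \(z\), and because \(g_2\) does not mention these fresh cells its effect \(\EPS[2]\) and its dependency \(\DEP[2]\) are untouched, so the configuration side conditions \(\mute{\DEP[2]\sqsubseteq\EPS[2]\mapsto z}\), \(q_2=\qsat{q_2}\), \(\EPS[2]=\qsat{\EPS[2]}\), and the invariant that top-level dependencies point into \(\{z\}\) all persist. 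Combined with the \(\WF\) and store-typing facts from the first step, \Cref{thm:graphir:preservation} then applies to \(\sigma'\mid z.g_2\has\DEP[2] \redg \sigma''\mid z.g_2'\has\mute{\DEP[2]'}\), yielding \(\Sigma''\supseteq\Sigma'\), fresh \(r_2\subq\DOM(\Sigma''\setminus\Sigma')\), and
\[\csx[\DOM(\Sigma'')]{\varnothing}{\Sigma''}\ts z.g_2'\has\mute{\DEP[2]'} : \ty[q_2,r_2]{T_2}\ \FX{\EPS[2],r_2};\]
set \(p_2 := q_2,r_2\) and \(\EPSPR[2] := \EPS[2],r_2\), with \(\Sigma''\supseteq\Sigma'\supseteq\Sigma\) immediate.

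Finally I would discharge disjointness, \(p_1 \overlap p_2 \subq \qbot\). The hypothesis already gives \(q_1 \overlap q_2 \subq \qbot\), and by the runtime-configuration rule \(q_1,q_2\) are saturated; the fresh sets \(r_1\subq\DOM(\Sigma'\setminus\Sigma)\) and \(r_2\subq\DOM(\Sigma''\setminus\Sigma')\) are disjoint from one another (as \(\Sigma'\subq\Sigma''\)) and from every name of \(\Sigma\), hence from \(q_1\) and \(q_2\). Because freshly allocated cells carry empty reachability, saturating \(q_1,r_1\) and \(q_2,r_2\) introduces no bridge between old and new names, so the four cross terms in \((q_1,r_1)\overlap(q_2,r_2)\) vanish and the result follows. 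Everything beyond the weakening/coeffect step in the third paragraph is the same freshness accounting already used in \Cref{coro:preservation_separation}.
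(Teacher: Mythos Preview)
Your proposal is correct and follows exactly the approach the paper intends: the corollary is left without its own proof block precisely because it is obtained by sequentially applying graph IR preservation (\Cref{thm:graphir:preservation}) to the two steps and then invoking the same freshness argument as in \Cref{coro:preservation_separation}. The extra bookkeeping you spell out for the runtime-configuration shape (weakening to \(\Sigma'\), the coeffect \(\mute{\pointsto{z}}\), and the saturation side conditions) is the right elaboration of that recipe.
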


\begin{corollary}[Dependency Safety]\label{coro:graphir:dep-safety}
Evaluation respects the order of effect dependencies for well-typed graph IR terms, \ie, an effectful graph node is executed only if all its dependencies
are resolved in the store.
\end{corollary}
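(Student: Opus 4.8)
The plan is to obtain dependency safety as a direct corollary of the progress and preservation theorems for \irlang (\Cref{thm:graphir:progress,thm:graphir:preservation}). The key observation is that the dependency-checking conditions are not an external property to be verified \emph{a posteriori}, but are built into the operational semantics $\redg$ itself: every reduction rule in \Cref{fig:graphir:cbv} carries the side conditions $\DOM(\DEP)\subq\DOM(\sigma)$ and $\CODOM(\DEP)\subq\{z\}$, which together state precisely that the locations affected by the firing node are already allocated in the store, and that every effect dependency of that node has been rewired to the committed block-start variable $z$ --- i.e., has been ``resolved.'' Consequently, if a well-typed configuration can step at all, it necessarily steps in a dependency-respecting manner.

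First I would make the informal notion precise: call a configuration $\sigma\mid z.g\has\DEP$ \emph{stuck on a dependency} when $g$ is not a value yet no $\redg$-rule applies \emph{solely} because one of the dependency side conditions fails, even though the underlying redex --- a \rulename{$\beta$}, \rulename{let}, \rulename{intro}, \rulename{deref}, or \rulename{assign} redex --- is otherwise ready to fire. Dependency safety is then the statement that no configuration reachable from a well-typed program is ever stuck on a dependency.

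Next I would invoke progress (\Cref{thm:graphir:progress}): a well-typed configuration is either a value $\ell\in\DOM(\Sigma)$ or admits a $\redg$-step. Since $\redg$ already embeds the dependency checks, the mere \emph{existence} of a step witnesses that $\DOM(\DEP)\subq\DOM(\sigma)$ and $\CODOM(\DEP)\subq\{z\}$ hold at the current redex; hence a well-typed, non-value configuration is never stuck on a dependency. This is the single-step form of the claim. I would then lift it to whole executions by induction on the length of a $\redg$-sequence, using preservation (\Cref{thm:graphir:preservation}) at each step to re-establish every hypothesis of progress for the successor configuration --- namely well-typedness of $z.g'\has\mute{\DEP'}$, the context well-formedness $\WF{\cdsx{\varnothing}{\Sigma'}{\mute{\pointsto{z}}.z}}$, and the store typing $\cdsx[\DOM(\Sigma')]{\varnothing}{\Sigma'}{\mute{\pointsto{z}}}\ts\sigma'$. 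Thus every reachable configuration remains well-typed and, by progress, dependency-safe.

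With the heavy lifting discharged by progress and preservation, the only real subtlety --- rather than a genuine obstacle --- is articulating the correspondence between the rule side conditions and the English rendering of the corollary. I would spell out that for the genuinely effectful nodes (\rulename{intro} on a reference, \rulename{deref}, \rulename{assign}) the condition $\DOM(\DEP)\subq\DOM(\sigma)$ forces the mutated location to be live in the store, while $\CODOM(\DEP)\subq\{z\}$ forces each recorded effect dependency to point to $z$, i.e.\ to an already-committed prefix. Since the rewiring carried out by \rulename{$\beta$}, \rulename{let}, and \rulename{intro} (via $\mute{[x\leadsto\DEP]}$ together with the contextual effect propagation $G\langle\iota:z.\ell\rangle$) redirects any downstream dependency on a just-consumed node to $z$, the invariant ``an effectful node fires only after the prior use it depends on has been committed to the store'' is maintained step by step, which is exactly the stated guarantee.
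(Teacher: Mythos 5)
Your proposal is correct and matches the paper's intended argument: the paper gives no explicit proof for this corollary, but both \Cref{fig:overview} and the discussion in \Cref{sec:hard-dynamics} make clear that dependency safety is meant to follow from \Cref{thm:graphir:progress,thm:graphir:preservation} precisely because the dependency checks are built into the side conditions of $\redg$, so a well-typed configuration that can step at all steps in a dependency-respecting way. Your elaboration of the ``never stuck on a dependency'' formulation and the step-indexed lifting via preservation is exactly the right way to make the corollary precise.
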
 %
\section{Extension with Soft Dependencies}\label{sec:extension-with-soft}\label{sec:hardsoft}

\begin{figure}\small
\begin{mdframed}
\judgement{Graph IR}{\BOX{\irlang}}\vspace{-10pt}
\[\begin{array}{l@{\qquad}l@{\qquad}l@{\qquad}l}
     \HHD,\HDEP  &::= & \seq{\V{x}\mapsto \V{x}}                    & \text{Hard Dependencies}\\
     \SSD,\SDEP  &::= & \seq{\V{x}\mapsto \seq{\V{x}}}              & \text{Soft Dependencies}\\
     \DELTA,\DEP &::= & \HLBox[gray!20]{\HDEP; \SDEP  }             & \text{Dependencies}\\
     \EPS        &::= & \HLBox[gray!20]{\FX{\RD  {:} q; \WR {:} q}} & \text{Effects}\\
\end{array}\]\\
\judgement{Effects and Dependencies}{}\\
\[\begin{array}{l@{\ \,}c@{\ \,}l@{\ \ \ }l}
  \FX{(\RD:q_1\,;\, \WR : p_1)}\EFFSEQ\FX{(\RD:q_2; \WR : p_2)}\ \  & := & \FX{(\RD:q_1,q_2\,;\, \WR : p_1,p_2)} & \text{Sequential Composition}\\
  \FX{(\RD:q\,;\, \WR : p)}\subq r &:=& q,p \subq r & \text{Effect/Qualifier Inclusion}\\
  \GS\vdash{\FX{\qsat{(\RD:q\,;\, \WR : p)}}} &:=& \GS\vdash \FX{(\RD:\qsat{q}\,;\, \WR : \qsat{p})} & \text{Effect Saturation}\\
  \PURE &:=& \FX{(\RD:\qbot\,;\, \WR : \qbot)} & \text{Purity}\\
  \FX{\RD(q)} &:=& \FX{(\RD: q \,;\, \WR : \qbot)} & \text{Just a Read}  \\
  \FX{\WR(q)} &:=& \FX{(\RD: \qbot \,;\, \WR : q)} & \text{Just a Write}  \\
  \mute{(\HDEP[1];\SDEP[1]), (\HDEP[2];\SDEP[2])}&:= & \mute{(\HDEP[1],\HDEP[2]);(\SDEP[1],\SDEP[2])} & \text{Update}\\
  (\mute{\HDEP;\SDEP})\vert_{\FX{(\RD:q\,;\, \WR : p)}} &:= & (\mute{\HDEP}\vert_{\FX{q}}; \mute{\HDEP}\vert_{\FX{p}} \sqcup \mute{\SDEP}\vert_{\FX{p}}) & \text{Restriction}\\
  \mute{(\HDEP[1];\SDEP[1])[x \leadsto {\HDEP[2];\SDEP[2]} ]} & := & \mute{(\HDEP[1][x \leadsto {\HDEP[2]} ];\SDEP[1][x \leadsto {\SDEP[2]} ])} & \text{Rewiring}\\
  \mute{\SDEP\oplus_x\FX{q}} & :=&  \mute{\SDEP,\{ y\mapsto \SDEP(y),x\mid y \in \FX{q} \}} & \text{Insertion of}\ x \\
  \mute{(\mute{\HDEP;\SDEP}) \oplus_x \FX{(\RD:q\,;\, \WR : p)}} & := & (\mute{\HDEP, (\FX{p},x)\mapsto x\ ;\ (\SDEP,(\FX{p},x)\mapsto\qbot) \oplus_x \FX{q}  } & \text{Last Use at}\ x 
\end{array}\]
\typicallabel{\(\leadsto\)-let}%
\judgement{Dependency Synthesis}{\BOX{\GSD[\varphi]\ts (n\mid g) : \ty[q]{T}\ \EPS\yields{(\bm{n}\mid\bm{g})\has\DEP}}}
\begin{minipage}[t]{.47\linewidth}\vspace{0pt}
\infrule[\(\leadsto\)-ref]{
  \V{x} : \ty[q]{\Typ{Alloc}}\in\GS[\flt]\\\V{y} : \ty[\qbot]{B}\in \GS[\flt]
    }{
      \GSD[\flt]\ts \tref_\V{x}~\V{y} : \ty[\qbot]{(\TRef~\ty{B})}\ \HLBox[gray!20]{\FX{\RD(\V{x})}}\\\yields{\tref_\V{x}~\V{y}\has\DELTA\vert_{\HLBox[gray!20]{\FX{\qsat{\RD(\V{x})}}}}}
    }
\vgap
    \infrule[\(\leadsto\)-\(!\)]{
      \V{x} : \ty[q]{(\TRef~\ty{B})}\in\GS[\flt]
    }{
      \GSD[\flt]\ts {!~\V{x}} : \ty[\qbot]{B}\ \HLBox[gray!20]{\FX{\RD(\V{x})}}\\\yields{{!~\V{x}}\has\DELTA\vert_{\HLBox[gray!20]{\FX{\qsat{\RD(\V{x})}}}}}
    }
\end{minipage}%
\begin{minipage}[t]{.03\linewidth}
\hspace{1pt}%
\end{minipage}%
\begin{minipage}[t]{.5\linewidth}\vspace{0pt}
\infrule[\(\leadsto\)-\(:=\)]{
      \V{x} : \ty[q]{(\TRef~\ty{B})} \in\GS[\flt]\\
      \V{y} : \ty[\qbot]{B}\in\GS[\flt]
    }{
      \GSD[\flt]\ts \V{x} \coloneqq \V{y} : \ty[\qbot]{\TUnit}\ \HLBox[gray!20]{\FX{\WR(\V{x})}}\\\yields{\V{x} \coloneqq \V{y}\has\DELTA\vert_{\HLBox[gray!20]{\FX{\qsat{\WR(\V{x})}}}}}
    }
\vgap
\infrule[\(\leadsto\)-let]{\ \\
      \GSD[\flt]\ts b : \ty[p]{S} \ \EPS[1]\yields{\bm{b}\has\DEP[1]}\\
      \cdsx[\flt,x]{\G\, ,\, x: \ty[p]{S}}{\Sigma}{\HLBox[gray!20]{\mute{\DELTA\oplus_x\EPSS[1]}}} \ts g : \ty[q]{T}\ \EPS[2] \\
      \yields{\bm{g}\has\DEP[2]}\\
      \theta = [p/x]\quad x \notin\FV{(T)}
    }
    {
      \GSD[\varphi] \vdash \Let{x}{b}{g} : (\ty[q]{T}\ \EPS[1] \EFFSEQ  \EPS[2])\theta\\\yields{\left(\Let{x}{\bm{b}\has\DEP_1}{\bm{g}}\right)\has\DEP[1],\DEP_2[x\leadsto\DELTA\vert_\qsat{p}]}
    }
\end{minipage}
\judgement{Effect Subtyping}{\BOX{\GS\ts\EPS[1] <: \EPS[2]}}\\
\begin{mathpar}
\inferrule*[right=e-sub]{\GS\ts q_1 <: q_2\qquad \GS\ts p_1 <: p_2}{\GS\ts\FX{(\RD:q_1\,;\, \WR : p_1)} <: \FX{(\RD:q_2; \WR : p_2)}}
\end{mathpar}
\caption[The syntax and synthesis rules of the graph IR \irlang with hard and soft dependencies.]{The syntax and typing rules of the graph IR \irlang with hard and soft dependencies. We only show the changes relative to
\Cref{fig:graphir:synthesis}. All other rules remain exactly the same using overloaded operations on effects and qualifiers that act component-wise on hard and soft dependencies.}\label{fig:hardsoft:syntax}\label{fig:hardsoft:checking}
\end{mdframed}
\end{figure}

We extend the graph IR \irlang from the previous section with soft dependencies.
During code generation, a node that is only soft-depended by other nodes is
considered dead, and therefore is not scheduled (cf.~\Cref{sec:codemotion}).
If node $A$ hard-depends on node $B$, then $B$ must be executed (or scheduled)
before $A$. This is the default notion of dependency for the base \irlang{} system (\Cref{sec:monad-norm-with}), and
entails that no effect operation can be skipped.
This is evidently too rigid, and as motivated in the main paper, soft dependencies gives us more slack
to outright omit effects that are not observable, \eg, write-after-write (WAR) on a mutable reference cell.
If  $A$ \emph{soft-depends} on $B$, then $B$ should never be scheduled after
$A$, but $B$ might not be scheduled even if $A$ is scheduled. %
Being able to tell that some effectful part of a higher-order program can be omitted is immensely
useful.

The entire formal system and reasoning principles of \irlang carry over into a system with hard and
soft dependencies as presented in this section. The difference is the change in the effect and
dependency structure, \ie, effects are split into reads and writes, which induce hard
dependencies (the previous section's notion) and soft dependencies, respectively. That is to say, we
can regard these new structures as a product composition of the previous with new structures.

In future work, we would like to develop a generic theory of graph IRs that is parametric in such
effect and dependency structures. \citet{DBLP:journals/pacmpl/BaoWBJHR21}'s direct style system
already proposes one half of the solution by adopting \citet{DBLP:journals/toplas/Gordon21}'s effect
quantales. We anticipate that a general graph IR would require a ``dependency quantuale'' that
mirrors a given effect quantale.

In the following, we focus on the key differences to the previous section.

\subsection{Effects and Dependencies for Reads and Writes}\label{sec:hardsoft:effects}

The nature of effects changes from a simple ``effectful use of/on a variable'' to a more refined
distinction, classifying the effect on the variables as either a read or a write effect. Due to
aliasing/reachability, there is usually more than one variable involved, and compound expressions
accumulate their effects. Thus, we change the effect domain to labelled pairs \(\FX{\RD:q;\WR:p}\)
of qualifiers, grouping variables/locations by read and write (\Cref{fig:hardsoft:syntax}). We also
lift the preexisting operations and relations involving effects to such pairs in a straightforward
manner, with the intent that the \irlang{} typing and synthesis rules can be copied over almost
one-to-one with overloaded notations for the new effect structure. The only tweak needed is that
typing rules introducing effects should classify them as read or write effects, and the
last-use update at let bindings is a bit more involved.

\subsection{Hard-and-Soft Dependency Calculation}\label{sec:hardsoft:calculation}

We use the hard dependency in \(\DELTA\) to track the last write of any variable/location in context, whereas
its soft dependency tracks all the reads on a variable since it was last written.

Projections need to merge the hard dependencies of a write effect into its soft dependencies, and project the
hard dependencies for reads, \ie, \( (\mute{\HDEP;\SDEP})\vert_{\FX{(\RD:q\,;\, \WR : p)}} := (\mute{\HDEP}\vert_{\FX{q}}; \mute{\HDEP}\vert_{\FX{p}} \sqcup \mute{\SDEP}\vert_{\FX{p}}) \).

Last-use updates at let bindings need to reset the recorded last reads for any written variable, and add the bound variable to the last reads for
any written variable, \ie,
\[(\mute{\HDEP;\SDEP}) \oplus_x \FX{(\RD:q\,;\, \WR : p)} := (\mute{\HDEP, (\FX{p},x)\mapsto x\ ;\ (\SDEP,(\FX{p},x)\mapsto\qbot) \oplus_x \FX{q}  } ), \]
where \(x\) is the let-bound variable and \( \mute{\SDEP\oplus_x\FX{q}} := \SDEP,\{ y\mapsto \SDEP(y),x\mid y \in \FX{q} \} \)
adds \(x\) into each set pointed to by \(\FX{q}\).

\subsection{Statics}\label{sec:hardsoft:statics}

\Cref{fig:hardsoft:syntax} shows the required changes to the synthesis rules from \Cref{fig:graphir:synthesis} (and also indicates
the needed changes for the checking rules from \Cref{fig:graphir:checking}). With the overloaded operations on effects and dependencies,
the rules for mutable references need to classify their effects on the operands. That is, reference allocations cause a read
on the used allocation capability and its reachable aliases, dereferences a read on the target reference and its aliases, and
assignments a write, accordingly.

\subsection{Metatheory}\label{sec:hardsoft:metatheory}

The theorems and proofs for the graph IR with hard and soft dependency are for the most part identical
to the previous system, and we just repeat the most relevant theorems without proof.

\begin{lemma}[Synthesis Invariant]\label{lem:hardsoft:deps-invariant}
  Dependencies are completely determined by the context and effect, as follows:
  \begin{enumerate}
    \item If\ \ \(\GSD[\varphi]\vdash n : \ty[q]{T}\ \EPS\yields{\bm{n} \has \DEP},\) then \(\DEP = \DELTA\vert_{\FX{\qsat{\EPS}}}\).
    \item If\ \ \(\GSD[\varphi]\vdash g : \ty[q]{T}\ \EPS\yields{\bm{g}\has \DEP},\) then \(\DEP = \DELTA\vert_{\FX{\qsat{\EPS}}}\).
  \end{enumerate}
\end{lemma}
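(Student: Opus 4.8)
The plan is to mirror the proof of \Cref{lem:deps-invariant} almost verbatim, proceeding by mutual induction over the two synthesis derivations for nodes \(n\) and graph terms \(g\), and to exploit that the refined effect-and-dependency structure is essentially a product of the hard component \(\HDEP\) and the soft component \(\SDEP\), so that most algebraic manipulations descend component-wise. The only operators that genuinely mix the two components are the write-projection in the restriction \((\mute{\HDEP;\SDEP})\vert_{\FX{(\RD:q\,;\,\WR:p)}}\), which merges \(\mute{\HDEP}\vert_{\FX{p}}\) into the soft part via \(\sqcup\), and the last-use update \(\oplus_x\); these are exactly the places that will require new bookkeeping.

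First I would dispatch the base cases \rulename{\(\leadsto\)-cst}, \rulename{\(\leadsto\)-ref}, \rulename{\(\leadsto\)-!}, \rulename{\(\leadsto\)-:=}, and \rulename{\(\leadsto\)-ret}: in each the rule synthesizes the dependency literally as \(\DELTA\vert_{\FX{\qsat{\EPS}}}\) (now with the read/write classification baked into \(\EPS\)), so the claim is immediate by definition. For \rulename{\(\leadsto\)-abs} the \emph{node's own} synthesized dependency is \(\NODEP\) while its effect is \(\PURE = \FX{(\RD:\qbot\,;\,\WR:\qbot)}\), and since \(\DELTA\vert_{\qsat{\PURE}} = \emptyset\) the two agree; the latent body dependency is governed by a separate side condition and is not the subject of this statement. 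For \rulename{\(\leadsto\)-sub} the result is re-restricted to the subsumed effect \(\EPS[2]\), so the conclusion \(\DEP = \DELTA\vert_{\FX{\qsat{\EPS[2]}}}\) holds directly, independent of the induction hypothesis.

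The real work is the \rulename{\(\leadsto\)-let} case, where I must re-establish, now over pairs \((\HDEP;\SDEP)\), the identity
\[ \mute{\DEP[1],\DEP[2][x\leadsto\DELTA\vert_{\qsat{p}}]} \;=\; \DELTA\vert_{\FX{\qsat{(\EPS[1]\EFFSEQ\EPS[2]\theta)}}}, \qquad \theta = [p/x]. \]
By the induction hypotheses \(\DEP[1] = \DELTA\vert_{\FX{\qsat{\EPS[1]}}}\) and \(\DEP[2] = (\DELTA\oplus_x\EPSS[1])\vert_{\FX{\qsat{\EPS[2]}}}\), together with the same scoping facts \(x\notin\DOM(\DELTA)\) and \(x\notin\CODOM(\DELTA)\) as before. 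I would then replay the hard-only computation, splitting on \(x\notin\EPS[2]\) versus \(x\in\EPS[2]\), and within the latter on \(\qsat{p}\subq\EPSS[1]\overlap\EPSS[2]\) versus \(\qsat{p}\overlap\EPSS[1]\overlap\EPSS[2] = \qbot\), exactly as in \Cref{lem:deps-invariant}. The new content is to verify that restriction still distributes over the update \(\mute{\_,\_}\), that rewiring \(\mute{[x\leadsto\cdot]}\) commutes with the hard/soft split, and that these identities survive projection separately on reads and on writes.

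The main obstacle is precisely the interplay between the write-projection's \(\HDEP\!\to\!\SDEP\) merge and the update \(\oplus_x\), since these are the two operators that are not component-wise. Concretely, I expect to need that \(\oplus_x\) only ever inserts the fresh let-bound \(x\) into the read sets of written locations (so that after rewiring the occurrences of \(x\) are eliminated by \(x\notin\CODOM(\DELTA)\)), and that \(\sqcup\) is monotone and idempotent so that projecting the hard part into the soft part does not double-count last uses when the write set and the read set overlap. Once these two facts are in hand the hard and soft components close independently under the same case analysis, and the pair-level identity follows, completing the \rulename{\(\leadsto\)-let} case and hence the induction.
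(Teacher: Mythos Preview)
Your proposal is correct and follows exactly the approach the paper takes: the paper's own proof of this lemma consists of the single sentence ``Analogous to the proof of \Cref{lem:deps-invariant},'' and you have spelled out what that analogy entails, including the component-wise descent through the product structure and the \rulename{\(\leadsto\)-let} case analysis. If anything, you are being more explicit than the paper about the new bookkeeping required for the mixed operators \(\oplus_x\) and the write-projection merge, which the paper leaves entirely implicit.
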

\begin{proof}
Analogous to the proof of \Cref{lem:deps-invariant}.
\end{proof}

\begin{lemma}[Soundness of Synthesis]\label{lem:hardsoft:synth-soundness}
  Synthesis produces well-typed annotated programs:
  \begin{enumerate}
    \item If\ \ \(\GSD[\flt]\ts n : \ty[q]{T}\ \EPS\yields{\bm{n} \has \DEP},\) then \(\GSD[\varphi]\vdash \bm{n} : \ty[q]{T}\ \EPS\).
    \item If\ \ \(\GSD[\flt]\ts g : \ty[q]{T}\ \EPS\yields{\bm{g} \has \DEP},\) then \(\GSD[\varphi]\vdash \bm{g} : \ty[q]{T}\ \EPS\).
  \end{enumerate}
\end{lemma}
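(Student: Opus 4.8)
The plan is to mirror the proof of \Cref{lem:synth-soundness} for the hard-only system, proceeding by mutual induction over the synthesis derivation. By design the synthesis and checking rules of the hard-and-soft system are in one-to-one correspondence with those of \Cref{fig:graphir:synthesis} and \Cref{fig:graphir:checking}, differing only in that \(\EFFSEQ\), \(\sqsubseteq\), the restriction \(\_\vert_\_\), and the last-use update are reinterpreted over the refined effect domain \(\FX{(\RD:q\,;\,\WR:p)}\) and the product dependency structure \(\mute{\HDEP;\SDEP}\). For each syntax-directed synthesis rule, the induction hypotheses give well-typedness of the annotated subterms with identical type, qualifier, and effect; I then apply the matching checking rule. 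For the rules that impose no conformance side-condition on the emitted dependency, namely \rulename{\(\leadsto\)-cst}, \rulename{\(\leadsto\)-app}, \rulename{\(\leadsto\)-ref}, \rulename{\(\leadsto\)-\(!\)}, \rulename{\(\leadsto\)-\(:=\)}, \rulename{\(\leadsto\)-ret}, and \rulename{\(\leadsto\)-sub}, this closes the case immediately, exactly as in the hard-only proof.

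The only rules whose checking counterparts carry a side-condition on the attached dependency are \rulename{n-abs} and \rulename{g-let}, requiring \(\mute{\DEP \sqsubseteq \FX{\qsat{\EPS}}\mapsto x}\) and \(\mute{\DEP \sqsubseteq \DELTA\vert_{\FX{\qsat{\EPS[1]}}}}\) respectively. The key step is to discharge these using the Synthesis Invariant (\Cref{lem:hardsoft:deps-invariant}), which yields the stronger equality \(\DEP = \DELTA\vert_{\FX{\qsat{\EPS}}}\). For \rulename{g-let} conformance is then just reflexivity of \(\sqsubseteq\). For \rulename{n-abs} the body is synthesized under the last-use map \(\mute{\pointsto{x}}\), so its restriction to \(\FX{\qsat{\EPS}}\) is literally \(\FX{\qsat{\EPS}}\mapsto x\), and conformance again holds by reflexivity.

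The point that needs genuine care, and which is the only real departure from the hard-only argument, is that \(\sqsubseteq\), the restriction, and the update now act \emph{componentwise} on the hard and soft parts, with the restriction folding hard dependencies into the soft component on writes, \(\mute{(\HDEP;\SDEP)}\vert_{\FX{(\RD:q\,;\,\WR:p)}} = (\mute{\HDEP}\vert_{\FX{q}};\,\mute{\HDEP}\vert_{\FX{p}}\sqcup\mute{\SDEP}\vert_{\FX{p}})\), while the let-update \(\mute{(\HDEP;\SDEP)\oplus_x\EPS}\) resets the recorded last reads of written variables and inserts \(x\). I would therefore verify that the equality delivered by \Cref{lem:hardsoft:deps-invariant} holds in \emph{both} components simultaneously, so that applying \rulename{n-abs} and \rulename{g-let} is justified on the hard part and on the \(\sqcup\)-joined soft part alike. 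Since the invariant already carries out the substantive dependency-map calculation, I expect this verification to be light, and the remainder of the proof to follow the hard-only development line for line.
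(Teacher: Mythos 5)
Your proposal matches the paper's own argument: the paper proves this by straightforward mutual induction over the synthesis derivation, invoking the Synthesis Invariant (\Cref{lem:hardsoft:deps-invariant}) to discharge the dependency-conformance side conditions of \rulename{n-abs} and \rulename{g-let}, exactly as you describe (the paper simply states the proof is analogous to that of \Cref{lem:synth-soundness}). Your additional observation that the componentwise action of restriction and \(\sqsubseteq\) on the hard/soft product structure must be checked in both components is a correct and worthwhile refinement that the paper leaves implicit.
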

\begin{proof}
  Analogous to the proof of \Cref{lem:synth-soundness}.
\end{proof}

\begin{lemma}[Synthesis is Total]\label{lem:hardsoft:synth-totality}\hfill
  \begin{enumerate}
    \item If\ \ \(\GS[\varphi]\tsM n : \ty[q]{T}\ \EPS\) then \(\forall \DELTA.\ \exists \mute{\bm{n}}.\ \exists\DEP.\) \(\GSD[\varphi]\vdash n : \ty[q]{T}\ \EPS\yields{\bm{n} \has \DEP}\) and \(\mute{\bm{n}}\) erases back to \(n\).
    \item If\ \ \(\GS[\varphi]\tsM g : \ty[q]{T}\ \EPS\) then \(\forall \DELTA.\ \exists \mute{\bm{g}}.\ \exists\DEP.\) \(\GSD[\varphi]\vdash g : \ty[q]{T}\ \EPS\yields{\bm{g} \has \DEP}\) and \(\mute{\bm{g}}\) erases back to \(g\).
  \end{enumerate}
\end{lemma}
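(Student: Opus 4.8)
The plan is to mirror the proof of \Cref{lem:synth-totality} for the hard-only system, proceeding by mutual induction on the two MNF typing derivations $\GS[\varphi]\tsM n : \ty[q]{T}\ \EPS$ and $\GS[\varphi]\tsM g : \ty[q]{T}\ \EPS$, while universally generalizing the induction hypothesis over the input last-use map $\DELTA$. The crucial structural fact is that the synthesis rules of \Cref{fig:graphir:synthesis}, as modified component-wise in \Cref{fig:hardsoft:syntax}, stand in one-to-one correspondence with the MNF typing rules of \Cref{fig:checking-vanilla-mnf}: each synthesis rule carries exactly the same type/effect/qualifier premises as its MNF counterpart and merely augments the conclusion with a dependency annotation computed as a total function of $\DELTA$ and the node's effect. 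Consequently, for every MNF typing step there is precisely one applicable synthesis rule, and its non-dependency side-conditions are discharged verbatim by the given MNF derivation.

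For the leaf cases, namely \rulename{n-cst} and \rulename{g-ret} together with the effectful primitives \rulename{n-ref}, \rulename{n-!}, \rulename{n-:=}, and the application \rulename{n-app}, I would instantiate the matching synthesis rule directly, choosing the output dependency to be $\DELTA\vert_{\FX{\qsat{\EPS}}}$ in the hard-and-soft sense of restriction (\Cref{fig:hardsoft:syntax}); this is always defined, so $\DEP$ and the annotated term exist. For \rulename{n-abs} I would invoke the induction hypothesis on the body with the reset last-use map $\pointsto{x}$, which sends every context entry to the formal parameter $x$, obtaining the annotated body together with its dependency, and then reassemble the abstraction. In \rulename{b-sub} the induction hypothesis applies to the subderivation and the synthesized dependency is re-restricted to the subsumed effect.

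The only case requiring care is \rulename{g-let} for $\tlet~x = b~\tin~g$. Here I would first apply the induction hypothesis to $b$ under the given $\DELTA$, yielding $\bm{b}$ and $\DEP[1]$, then thread the updated last-use map $\mute{\DELTA\oplus_x\EPSS[1]}$, i.e.\ the \emph{Last Use at} $x$ operation which resets the recorded reads of the written locations and records $x$ as their new last use, and apply the induction hypothesis to $g$ under this map, yielding $\bm{g}$ and $\DEP[2]$. Assembling the synthesis let rule then produces the annotation $\DEP[1],\DEP[2][x\leadsto\DELTA\vert_{\qsat{p}}]$. I expect this threading to be the main obstacle: one must check that $\oplus_x$ is total on the \emph{pair-structured} hard-and-soft maps so that the universally-quantified induction hypothesis genuinely applies to the updated map, and that the read/write split of $\EPS[1]$ is carried correctly by the overloaded restriction and insertion operators. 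Because these operations are defined component-wise and are total on any dependency map, the hypothesis applies unchanged, and the argument is otherwise identical to \Cref{lem:synth-totality}.

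Finally, in each case the erasure obligation that $\bm{n}$ (resp.\ $\bm{g}$) erases back to $n$ (resp.\ $g$) holds by construction, since every synthesis rule reproduces the underlying MNF term verbatim and only attaches the \mute{teal} dependency annotations, which erasure discards. This completes the mutual induction and establishes totality of synthesis for the hard-and-soft system.
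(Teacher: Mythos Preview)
Your proposal is correct and follows the same approach as the paper: the paper's proof simply states ``Analogous to the proof of \Cref{lem:synth-totality},'' which in turn is ``By straightforward mutual induction over the respective derivations, exploiting that typing rules of each system are in one-to-one correspondence.'' Your write-up is a faithful and considerably more detailed elaboration of exactly that argument.
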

\begin{proof}
  Analogous to the proof of \Cref{lem:synth-totality}.
\end{proof}

\begin{corollary}[End-to-end Type Preservation]
  For any well-formed context and compatible \(\DELTA\), there is a type/effect/qualifier-preserving translation from the direct-style \directlang-calculus into
  the \irlang graph IR with hard and soft dependencies.
  \end{corollary}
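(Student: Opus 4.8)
The plan is to mirror the proof of the corresponding end-to-end corollary for the hard-only system (\Cref{sec:monad-norm-with}), exploiting the observation that the soft-dependency extension is a \emph{product composition} that leaves the overall proof skeleton intact. The translation factors into two independent stages: first the syntax-directed translation of \directlang into \mnflang, and then the type/qualifier/effect-directed synthesis of hard-and-soft dependencies producing an annotated \irlang term. The corollary then follows by composing three lemmas, exactly as in the hard-only case.

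For the first stage I would invoke \Cref{lem:type_preservation:translation} unchanged: whenever \(\GS[\flt]\ts t : \ty[q]{T}\ \EPS\) and \(t \leadsto g\), the MNF term \(g\) is assigned the same type, qualifier, and effect. Crucially, this lemma never mentions dependencies — it lives entirely in the dependency-free \mnflang world — so it is oblivious to whether effects are later refined into reads and writes, and no adaptation is needed for the hard-and-soft setting. For the second stage, given any well-formed context and compatible last-use map \(\DELTA\), I would apply totality of synthesis (\Cref{lem:hardsoft:synth-totality}) to obtain an annotated term \(\bm{g}\) together with its dependency \(\DEP\), derivable with the same type, qualifier, and effect, and erasing back to \(g\). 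Soundness of synthesis (\Cref{lem:hardsoft:synth-soundness}) then certifies that \(\bm{g}\) is well-typed in the \irlang graph IR with hard and soft dependencies, again preserving the three pieces of data. Chaining the two stages yields the desired end-to-end translation, with preservation immediate from the per-stage preservation.

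The part that carries the real weight is not the composition itself but the underlying \Cref{lem:hardsoft:synth-soundness} and its reliance on the synthesis invariant (\Cref{lem:hardsoft:deps-invariant}), which must be re-established for the refined effect structure \(\FX{(\RD:q\,;\,\WR:p)}\) and the product dependency structure \(\mute{\HDEP;\SDEP}\). The obstacle I expect there is verifying that the overloaded operations satisfy exactly the same algebraic laws that the hard-only proof of \Cref{lem:deps-invariant} relied upon: in particular the restriction \((\mute{\HDEP;\SDEP})\vert_{\FX{(\RD:q\,;\,\WR:p)}}\), which folds the write-induced hard dependencies into the soft component, and the last-use update \(\oplus_x\), which resets the recorded reads at each write. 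One must check that these still commute with sequential effect composition, rewiring, and qualifier substitution in the way the earlier invariant proof exploits.

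Since the excerpt asserts those lemmas as proven by analogy to \Cref{sec:monad-norm-with}, the corollary reduces to the same three-lemma composition as in the hard-only case. The only additional bookkeeping is the compatibility/well-formedness side condition on \(\DELTA\) — namely that its domain and codomain range over the ambient context and the block-start variable — which is precisely what feeds the universally quantified \(\DELTA\) in \Cref{lem:hardsoft:synth-totality} and guarantees the synthesized dependency is defined on the right carrier.
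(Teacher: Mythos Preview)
Your proposal is correct and matches the paper's proof exactly: the paper proves this corollary by citing \Cref{lem:type_preservation:translation}, \Cref{lem:hardsoft:synth-totality}, and \Cref{lem:hardsoft:synth-soundness}, which is precisely the three-lemma composition you describe. Your additional discussion of the algebraic obligations underlying \Cref{lem:hardsoft:deps-invariant} is accurate commentary but not part of the corollary's proof itself, since those lemmas are already established.
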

  \begin{proof}
  By lemmas \ref{lem:type_preservation:translation}, \ref{lem:hardsoft:synth-totality}, and \ref{lem:hardsoft:synth-soundness}.
\end{proof}

\begin{theorem}[Progress]\label{thm:hardsoft:progress}
    If \(\ \csx[\DOM(\Sigma)]{\varnothing}{\Sigma} \ts z.g\has\DEP : \ty[q]{T}\ \EPS\), then either \(g\) is a location \(\ell\in\DOM(\Sigma)\), or
    for any store \(\sigma\) where \(\csx[\DOM(\Sigma)]{\varnothing}{\Sigma} \ts \sigma\), there exists
    a graph term \(g'\), store \(\sigma'\), and dependency \(\mute{\DEP'}\) such that \(\sigma \mid z.g\has\DEP  \redg \sigma'\mid z.g'\has\mute{\DEP'}\).
\end{theorem}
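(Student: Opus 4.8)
The plan is to mirror the progress proof of the hard-dependency-only system (\Cref{thm:graphir:progress}): I would proceed by induction on the typing derivation of the runtime configuration \(\csx[\DOM(\Sigma)]{\varnothing}{\Sigma}\ts z.g\has\DEP : \ty[q]{T}\ \EPS\), reduce the problem to a case analysis on the shape of the leftmost redex of \(g\), and verify that the dependency-checking side conditions guarding each reduction rule of \Cref{fig:graphir:cbv} are satisfied. Because the effect and dependency structures are now products — \(\EPS = \FX{(\RD:q\,;\,\WR:p)}\) and \(\DEP = \mute{(\HDEP;\SDEP)}\) — whose operators (update, restriction, rewiring, last-use) all act component-wise, every step of the hard-only argument carries over essentially verbatim; the genuinely new content is only the bookkeeping for the two components \(\HDEP\) and \(\SDEP\) of each map.

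First I would invert the runtime-configuration typing rule to obtain \(\cdsx[\DOM(\Sigma)]{\varnothing}{\Sigma}{\mute{\pointsto{z}}}\ts g : \ty[q]{T}\ \EPS\) together with the invariants \(\mute{\DEP\sqsubseteq\EPS\mapsto z}\), \(q=\qsat q\), and \(\EPS=\qsat\EPS\); the crucial fact this exposes is that the last-use map attached to the context is \(\mute{\pointsto{z}}\), i.e.\ every variable and location in scope is recorded as last-used at the block-start variable \(z\). I would then case-split on \(g\) along the MNF grammar. If \(g\) is a name \(\V{x}\), then since the term environment is empty at top level, the name-lookup judgment forces \(\V{x}\) to be a location \(\ell\), and well-scoping gives \(\ell\in\DOM(\Sigma)\); this is the first disjunct. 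Otherwise \(g = \Let{x}{b\has\DEP'}{g'}\), and I would locate the innermost redex by decomposing through the graph/binding contexts \(G\) and \(B\), then case on the bound node or nested graph \(b\): an introduction \(\iota\) fires \rulename{intro}; an application \(\V{x}~\V{y}\) inverts, via store well-formedness \(\csx[\DOM(\Sigma)]{\varnothing}{\Sigma}\ts\sigma\), to a location binding a \(\lambda\)-abstraction and fires \rulename{$\beta$}; a dereference \({!~\V{x}}\) or an assignment \(\V{x}\coloneqq\V{y}\) inverts to a location binding a reference cell and fires \rulename{deref}/\rulename{assign}; and a nested graph ending in a name fires \rulename{let}. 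This is exactly the canonical-forms reasoning of the hard-only proof.

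The main obstacle — and the only place where the dependency machinery does real work — is discharging the side conditions \(\DOM(\DEP')\subq\DOM(\sigma)\) and \(\CODOM(\DEP')\subq\{z\}\) that guard every reduction rule, i.e.\ showing that the leftmost redex is never blocked by an unresolved dependency. The key invariant is that the leftmost binding, wherever it sits in the nesting, is always typed under the unchanged last-use map \(\mute{\pointsto{z}}\): in \rulename{g-let} the last-use map is only updated \emph{for the continuation} \(g'\), so nothing to the left of the leftmost binding can have altered it, and descending into a leftmost nested subgraph preserves \(\mute{\pointsto{z}}\). Consequently the annotated dependency \(\DEP'\) at that binding is a restriction \(\mute{\DELTA\vert_{\FX{\qsat{\EPS}}}}\) of \(\mute{\pointsto{z}}\), whence its codomain is \(\{z\}\) and its domain lies within \(\DOM(\Sigma)\cup\DOM(\varnothing)=\DOM(\Sigma)\); store well-formedness (\(|\Sigma|=|\sigma|\), hence \(\DOM(\Sigma)=\DOM(\sigma)\)) then closes both conditions. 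This is precisely the discipline that the configuration maintains, namely that all dependencies of the current node already reside in the store pointing to \(z\). For the hard-and-soft system I must additionally verify this for both the hard component \(\HDEP\) and the soft component \(\SDEP\) of \(\DEP'\); since the restriction operator \(\mute{(\HDEP;\SDEP)\vert_{\FX{(\RD:q\,;\,\WR:p)}}}\) and the last-use update \(\mute{(\HDEP;\SDEP)\oplus_x\FX{(\RD:q\,;\,\WR:p)}}\) act component-wise and each preserves the ``points-to-\(z\)'' property, this adds no essential difficulty beyond the hard-only proof, and the remaining cases follow that argument unchanged.
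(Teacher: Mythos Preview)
Your proposal is correct and follows the same approach the paper intends: the paper gives only the one-line proof ``by induction over the typing derivation'' for the hard-only progress theorem (\Cref{thm:graphir:progress}) and states the hard-and-soft version without proof, remarking that the proofs are ``for the most part identical to the previous system.'' Your expansion — inverting the configuration typing to expose the ambient last-use map \(\mute{\pointsto{z}}\), the canonical-forms case analysis on the leftmost binding, and the observation that \(\mute{\DEP'\sqsubseteq\DELTA\vert_{\FX{\qsat{\EPS}}}}\) with \(\DELTA=\mute{\pointsto{z}}\) forces \(\CODOM(\DEP')\subq\{z\}\) and \(\DOM(\DEP')\subq\DOM(\Sigma)=\DOM(\sigma)\) — is exactly the content the paper leaves implicit, and your remark that the product structure of effects and dependencies makes the soft-dependency bookkeeping purely component-wise matches the paper's own justification for omitting the proof.
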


\begin{theorem}[Preservation]\label{thm:hardsoft:preservation}\hfill\\
    \begin{adjustbox}{scale=.9}
    \begin{minipage}{\linewidth}
      \infrule{\csx[\DOM(\Sigma)]{\varnothing}{\Sigma}  \ts z.g\has\DEP : \ty[q]{T}\ \EPS\quad \WF{\cdsx{\varnothing}{\Sigma}{\mute{\pointsto{z}}.z}}\quad \cdsx[\DOM(\Sigma)]{\varnothing}{\Sigma}{\mute{\pointsto{z}}} \ts \sigma\quad\sigma\mid z.g\has\DEP \redg \sigma' \mid z.g'\has\mute{\DEP'}
      }{
        \exists \Sigma' \supseteq \Sigma.\;\exists p \subq\DOM(\Sigma'\setminus\Sigma).\quad\csx[\DOM(\Sigma')]{\varnothing}{\Sigma'} \ts z.g'\has\mute{\DEP'} : \ty[q,p]{T}\ \FX{\EPS,p}\qquad\qquad\\[1ex]
        \phantom{\exists \Sigma' \supseteq \Sigma.\;\exists p \subq\DOM(\Sigma'\setminus\Sigma).\ \ \,\,}\WF{\cdsx{\varnothing}{\Sigma'}{\mute{\pointsto{z}}.z}}\qquad  \cdsx[\DOM(\Sigma')]{\varnothing}{\Sigma'}{\mute{\pointsto{z}}} \ts \sigma'
      }
    \end{minipage}
    \end{adjustbox}
\end{theorem}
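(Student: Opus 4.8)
The plan is to replay the proof of \thmref{thm:graphir:preservation} essentially verbatim, leaning on the observation from \secref{sec:hardsoft:effects} that the read/write, hard/soft structure is a product composition of the previous hard-only structure with a new soft-dependency structure. Effects are now labelled pairs $\EPS = \FX{(\RD:q; \WR:p)}$ and dependency maps are pairs $\DELTA = (\HDEP; \SDEP)$, and all the operators ($\EFFSEQ$, restriction $\vert$, rewiring, and the last-use update $\oplus_x$) are lifted to act component-wise (\figref{fig:hardsoft:syntax}). Because the typing and synthesis rules of \irlang carry over one-to-one under these overloaded operations, the first task is to re-establish the hard/soft analogues of the supporting infrastructure — top-level substitution and rewiring (\lemref{lem:graphir:subst_term}), contextual effect propagation (\lemref{lem:graphir:context_propagation}), decomposition and plugging (\lemref{lem:graphir:decomp}, \lemref{lem:graphir:plug}), and qualifier-growing replacement (\lemref{lem:graphir:replacement}) — whose proofs go through by the same inductions, now reasoning separately on the hard and soft projections.

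With those lemmas in hand, the main argument proceeds exactly as before: inspect the reduction rule fired under $\redg$, and for each of the five cases (\rulename{$\beta$}, \rulename{let}, \rulename{intro}, \rulename{deref}, \rulename{assign}) decompose the runtime configuration via the decomposition lemma, show the redex in the hole is type/effect/qualifier-preserving up to a single fresh store location, and re-plug via the plugging lemma, growing any affected dependencies along the spine of the evaluation context using contextual effect propagation and qualifier-growing replacement. The \rulename{$\beta$} case remains the most intricate mechanically, since it must simultaneously rewire the function body's latent hard \emph{and} soft dependencies against the call-site dependencies and perform the qualifier substitution on both components; but structurally it is unchanged from the hard-only development.

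The main obstacle — and the only genuinely new reasoning relative to \thmref{thm:graphir:preservation} — lies in the read/write bookkeeping, concentrated at \rulename{let} and at the effectful elimination cases \rulename{deref}/\rulename{assign}. I expect the crux to be verifying that the refined last-use update $\oplus_x$ and the refined restriction $\vert_{\FX{(\RD:q;\WR:p)}}$ together preserve the governing invariant: the hard component of $\DELTA$ records the last write of each location and the soft component records every read accumulated since that last write, so that a read node hard-depends on the last write of the locations it reads, while a write node soft-depends on both the previous write and every intervening read (thereby capturing the WAW/WAR slack). Showing this invariant stable under the component-wise operators is the one place where the two projections genuinely interact, since the write restriction folds $\HDEP$ into the soft slot; this is the hard/soft counterpart of \lemref{lem:deps-invariant}, already recorded as \lemref{lem:hardsoft:deps-invariant}. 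Once it is available, the preservation argument closes with no further surprises, just as in \secref{sec:hard-metatheory}.
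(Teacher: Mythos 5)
Your proposal is correct and matches the paper's intent exactly: the paper gives no proof for this theorem, stating only that the results for the hard/soft system are ``for the most part identical to the previous system,'' which is precisely your plan of replaying the proof of \thmref{thm:graphir:preservation} after lifting the supporting lemmas component-wise to the paired effect and dependency structure. Your identification of the one genuinely new interaction --- the write restriction folding $\HDEP$ into the soft slot, stabilized via the hard/soft synthesis invariant (\lemref{lem:hardsoft:deps-invariant}) --- is an accurate and slightly more explicit account of where the ``analogous'' reasoning actually differs.
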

\begin{corollary}[Preservation of Separation]\label{thm:hardsoft:preservation_of_separation}
    Interleaved executions preserve types and disjointness:\\
  \begin{adjustbox}{scale=.88}
  \begin{minipage}{\linewidth}
    \infrule{%
  {\begin{array}{l@{\quad}l@{\quad}l@{\quad}l}
  \csx[\DOM(\Sigma)]{\varnothing}{\Sigma} \vdash g_1\has\DEP[1]: \ty[q_1]{T_1}\ \EPS[1] & \sigma\phantom{'}\mid z.g_1\has\DEP[1]  \redg \sigma'\phantom{'} \mid z.g_1'\has\mute{\DEP[1]'} & \csx[\DOM(\Sigma)]{\varnothing}{\Sigma} \ts \sigma & \WF{\cdsx{\varnothing}{\Sigma}{\mute{\pointsto{z}}.z}} \\[1ex]
  \csx[\DOM(\Sigma)]{\varnothing}{\Sigma} \vdash g_2\has\DEP[2]: \ty[q_2]{T_2}\ \EPS[2]  & \sigma'\mid z.g_2\has\DEP[2]  \redg \sigma'' \mid z.g_2'\has\mute{\DEP[2]'}          & q_1 \overlap q_2 \subq \qbot       &
  \end{array}}
  }{{\begin{array}{ll@{\quad}l@{\quad}l}
  \exists p_1\;p_2\;\EPSPR[1]\;\EPSPR[2]\;\Sigma'\;\Sigma''. & \csx[\DOM(\Sigma')\phantom{'}]{\varnothing}{\Sigma'\phantom{'}} \ts z.g_1'\has\mute{\DEP[1]'} : \ty[p_1]{T_1}\ \EPSPR[1] & \Sigma'' \supseteq \Sigma' \supseteq \Sigma \\[1ex]
                                       & \csx[\DOM(\Sigma'')]{\varnothing}{\Sigma''} \ts z.g_2'\has\mute{\DEP[2]'} : \ty[p_2]{T_2}\ \EPSPR[2]                     & p_1 \overlap p_2 \subq \qbot
  \end{array}}}
  \end{minipage}%
  \end{adjustbox}
\end{corollary}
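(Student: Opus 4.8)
The plan is to reuse, essentially verbatim, the strategy behind the direct-style preservation of separation (\Cref{coro:preservation_separation}): apply the hard-and-soft preservation theorem (\Cref{thm:hardsoft:preservation}) to each of the two reduction steps in sequence, and exploit the fact that a single step can enlarge an assigned qualifier only by \emph{fresh} store locations, which is exactly what keeps the two results disjoint. A helpful observation up front is that \Cref{thm:hardsoft:preservation} already propagates the resulting dependency annotations and the context well-formedness predicate \(\WF{\cdsx{\varnothing}{\Sigma}{\mute{\pointsto{z}}.z}}\) on its own, so the only qualifier-level reasoning left to do concerns \(q_1\), \(q_2\), and the two freshly allocated location sets.

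First I would feed the step \(\sigma\mid z.g_1\has\DEP[1]\redg\sigma'\mid z.g_1'\has\mute{\DEP[1]'}\) into \Cref{thm:hardsoft:preservation}, using the supplied typing and well-formedness at \(\Sigma\). This produces some \(\Sigma'\supseteq\Sigma\) and a fresh \(p\subq\DOM(\Sigma'\setminus\Sigma)\) witnessing that \(z.g_1'\has\mute{\DEP[1]'}\) is well-typed under \(\Sigma'\) with qualifier \(q_1,p\), along with the refreshed well-formedness and store-typing judgments at \(\Sigma'\); I set \(p_1 := q_1,p\). Before replaying preservation on the second step I would weaken the typing of \(z.g_2\has\DEP[2]\) from \(\Sigma\) up to \(\Sigma'\) (enlarging its filter from \(\DOM(\Sigma)\) to \(\DOM(\Sigma')\)) via the graph-IR analogue of \Cref{lem:weakening}, which leaves its qualifier \(q_2\) untouched. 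Applying \Cref{thm:hardsoft:preservation} to \(\sigma'\mid z.g_2\has\DEP[2]\redg\sigma''\mid z.g_2'\has\mute{\DEP[2]'}\) then yields \(\Sigma''\supseteq\Sigma'\) and a fresh \(p'\subq\DOM(\Sigma''\setminus\Sigma')\) with \(z.g_2'\has\mute{\DEP[2]'}\) well-typed under \(\Sigma''\) at qualifier \(q_2,p'\); I set \(p_2 := q_2,p'\). Since \(g_1'\) is already typed under \(\Sigma'\) and \(g_2'\) under \(\Sigma''\), the two result judgments land in precisely the store typings the conclusion requires, so no further weakening of the results is needed.

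The final step is to re-derive disjointness \(p_1\overlap p_2\subq\qbot\). Here I would record that \(q_1,q_2\subq\DOM(\Sigma)\), \(p\subq\DOM(\Sigma'\setminus\Sigma)\), and \(p'\subq\DOM(\Sigma''\setminus\Sigma')\), so that \(p\) and \(p'\) inhabit pairwise disjoint slices of the store lying strictly above \(\DOM(\Sigma)\); combined with the hypothesis \(q_1\overlap q_2\subq\qbot\), every pairwise intersection in \((q_1,p)\overlap(q_2,p')\) collapses to \(\qbot\), giving \(p_1\overlap p_2\subq\qbot\).

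I do not expect a deep obstacle, since the statement is structurally identical to the earlier separation corollaries; the one place that warrants care is the bookkeeping between the two preservation invocations: threading the refreshed well-formedness judgment and store typing from the first step into the hypotheses of the second, and tracking that the two fresh location sets are disjoint from each other and from the original qualifiers. Both follow directly from the monotone, disjoint store extensions guaranteed by \Cref{thm:hardsoft:preservation}.
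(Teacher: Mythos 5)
Your proposal is correct and follows essentially the same route as the paper: the paper proves this corollary (by reference to \Cref{coro:preservation_separation}) via sequential application of the preservation theorem together with the observation that each step grows the assigned qualifier by at most fresh store locations, which is exactly your argument. Your elaboration of the intermediate weakening to \(\Sigma'\) and the disjointness bookkeeping for \(p\) and \(p'\) fills in details the paper leaves implicit, but introduces no new idea.
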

\begin{corollary}[Dependency Safety]\label{coro:hardsoft:dep-safety}
  Evaluation respects the order of effect dependencies for well-typed graph IR terms, \ie, an effectful graph node is executed only if all its dependencies
  are resolved in the store.
\end{corollary}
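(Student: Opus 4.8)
The plan is to obtain \Cref{coro:hardsoft:dep-safety} as an immediate consequence of the Progress (\Cref{thm:hardsoft:progress}) and Preservation (\Cref{thm:hardsoft:preservation}) theorems for the hard-and-soft system, mirroring exactly how \Cref{coro:graphir:dep-safety} was derived in the hard-only system. The essential observation is that the dependency-checking reduction relation \redg already bakes the notion of ``resolved dependencies'' into the side conditions of every rule firing on an effectful node: in \Cref{fig:graphir:cbv} (whose rules are reused verbatim per \Cref{fig:hardsoft:syntax}), each such rule carries $\DOM(\DEP)\subq\DOM(\sigma)$ together with $\CODOM(\DEP)\subq\{z\}$. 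Reading these conditions operationally, the domain of the node's dependency map lies entirely within the current store, so every node it depends on has already been allocated and evaluated, while the codomain points only to the distinguished start variable $z$, meaning every last-use target has been discharged to the top level. This is precisely the formal rendering of ``all of a node's dependencies are resolved in the store.''

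First, I would invoke \Cref{thm:hardsoft:progress}: a well-typed configuration $\sigma\mid z.g\has\DEP$ is either a value (a store location $\ell\in\DOM(\Sigma)$) or admits a \redg step. Because the only reduction rules applicable to an effectful node are exactly those guarded by the dependency-resolution side conditions above, the mere existence of such a step witnesses that those side conditions hold, and hence an effectful node fires only once its dependencies are resolved.

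Second, I would close the induction with \Cref{thm:hardsoft:preservation}: after a step the resulting configuration is again well-typed, and the store together with the well-formedness invariant $\WF{\cdsx{\varnothing}{\Sigma'}{\mute{\pointsto{z}}.z}}$ are preserved. Thus the progress argument can be repeated at every step of an arbitrary reduction sequence starting from a well-typed configuration. By induction on the length of that sequence, each effectful reduction step respects the order recorded by its dependencies, which is the content of the corollary.

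The only point requiring care, and the reason the corollary is stated afresh for this system rather than reused wholesale, is to confirm that the refinement to the product $(\HDEP;\SDEP)$ dependency structure does not disturb this reading of the side conditions. Here the restriction and last-use operators $\_\vert_{\_}$ and $\oplus_x$ of \Cref{fig:hardsoft:syntax} act component-wise, and the soft component never relaxes the hard guarantee: an effectful node still hard-depends on all of its write predecessors, so the $\DOM$/$\CODOM$ conditions retain their intended meaning. Since the reduction rules are literally reused and the supporting metatheory (\Cref{lem:hardsoft:deps-invariant,lem:hardsoft:synth-soundness,lem:hardsoft:synth-totality}) together with the two soundness theorems have already been established for this system, I expect no further reasoning about dependencies to be needed at the level of the corollary; it is a direct semantic reading of the soundness results, just as in the hard-only case.
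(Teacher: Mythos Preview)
Your proposal is correct and matches the paper's approach: the paper states \Cref{coro:hardsoft:dep-safety} without proof, treating it (like \Cref{coro:graphir:dep-safety}) as an immediate consequence of type soundness, since the dependency-resolution side conditions on \redg are exactly what Progress guarantees can be satisfied and Preservation perpetuates. Your expansion of why the product structure $(\HDEP;\SDEP)$ does not disturb this reading is more explicit than anything the paper spells out, but it is in the same spirit as the paper's remark that ``the theorems and proofs \ldots\ are for the most part identical to the previous system.''
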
 %
\section{Contextual Equivalence - The Direct-Style \maybelang{}-Calculus}
\label{sec:direct-lr}
We apply a logical relations approach following \cite{logical-approach,DBLP:conf/popl/AhmedDR09,DBLP:conf/ppdp/BentonKBH07} to support relational reasoning with respect to the \emph{observational equivalence} of two programs. We define binary logical relations over reachability types (the \maybelang{}-calculus in \secref{sec:directstyle}), and prove the soundness of the equational rules.
Our development is based on a framework for modeling reachability types with logical relations developed in parallel with this work \cite{bao2023logrel}. To make the present report self-contained, pieces of \citet{bao2023logrel} are repeated in this section without further reference.
To avoid technical complications, we choose a model that allows mutable references to contain only  first-order values, consistent with the previous sections. The definition of the logical relation can be extended to support higher-order references using well-established techniques such as step-indexing~\cite{ahmed2004semantics,DBLP:conf/popl/AhmedDR09,DBLP:journals/toplas/AppelM01}, which we leave as future work.

\subsection{High-level Overview of the Proofs}
A program $t_1$ is said to be \emph{contextually equivalent} to another program $t_2$, written as $\G[\flt] \models t_1 \equiva t_2 : \ty[p]{T}\ \EPS$, if for any program context $C$ with a hole of type $\ty[p]{T}\ \EPS$,
if $C[t_1]$ has some (observable) behavior, then so does $C[t_2]$.
The definition of context $C$ can be found in ~\secref{sec:direct_context}.

Following the approach of \citet{logical-approach} and related prior works
\cite{DBLP:conf/popl/AhmedDR09}, we define
a judgement for logical equivalence using binary logical relations,
written as $\G[\flt] \models t_1 \equivlog t_2 : \ty[q]{T} \ \EPS$.

The high-level structure of the proof is the following:
\begin{itemize}
    \item Soundness~(\thmref{thm:direct_lr_soundness}, \secref{sec:direct_soundness}). We show that the logical relation is sound with respect to contextual equivalence:
          \[
              \G[\flt] \models t_1 \equivlog t_2 : \ty[q]{T} \ \EPS \text{ implies } \G[\flt] \models t_1 \equiva t_2 : \ty[q]{T} \ \EPS.
          \]

    \item Compatibility lemmas~(\secref{sec:direct_cls}). We show that the logical relation is compatible with syntactic typing.
\end{itemize}
These results can be used to prove the soundness of the equational rules (\secref{sec:direct_equiv}).

\subsection{Contextual Equivalence}
\label{sec:direct_context}
\begin{figure}[t]\small
	\begin{mdframed}
		\judgement{Context for Contextual Equivalence}{}
		\[\begin{array}{l@{\ \ }c@{\ \ }l@{\qquad\qquad\ }l@{\ \ }c@{\ \ }l}
				{C} & ::= & \square \mid C\ t \mid t\ C \mid  \lambda x. C \mid \tref_t ~C \mid \tref_C ~ t \mid\ !~{C} \mid {C} := {t} \mid {t} := {C} \mid \tlet~{x = C}~\tin~t \mid \tlet~{x = t}~\tin~C &  &  & \\
			\end{array}\]
		\judgement{Context Typing Rules}{\BOX{C : (\G[\flt]; \ty[q]{T}\ \EPS) \carrow (\GP[\flt]; \ty[q]{T}\ \EPS)}}
		\begin{minipage}[t]{1.0\linewidth}\vspace{0pt}
			\infrule[c-hole]{
				\G[\flt] \ts \ty[q]{{T}}\ \EPS <: \ty[q']{{T'}}\ \EPSPR
			}{
				\square: (\G[\flt]; \ty[q]{{T}}\ \EPS) \carrow (\GP[\fltp]; \ty[q']{{T'}}\ \EPSPR)
			}
			\vgap
			\infrule[c-app-1]
			{
			C: (\G[\flt]; \ty[r]{{U}} \ \EPS[1]) \carrow
			(\GP[\fltp]; ((x: \ty[\qsat{p}\overlap \qsat{q}]{{T}}) \to^{\EPS[3]}  \ty[r']{{U'}})^{q'} \ \EPS[4])
			\quad
			\GP[\fltp] \ts t_2: \ty[p]{{T}} \ \EPS[2] \\
			x\notin\FV(U') \quad
			r' \subq \fltp, x \quad \EPS[3] \subq \fltp,x \quad \theta = [p/x]
			}
			{
			C \ t_2: (\G[\flt]; \ty[r]{{U}} \ \EPS[1]) \carrow (\GP[\fltp]; (\ty[r']{U'} \ \EPS[4] \EFFSEQ \EPS[2] \EFFSEQ \EPS[3])\theta)
			}
			\vgap
			\infrule[c-app-2]
			{
			\GP[\fltp] \ts t_1: ((x: \ty[\qsat{p} \overlap \qsat{q}]{{T}}) \to^{\EPS[4]}  \ty[r']{{U'}})^{q'} \ \EPS[2]
			\quad
			C: (\G[\flt]; \ty[r]{{U}} \ \EPS[1]) \carrow (\GP[\fltp]; \ty[p]{{T}}  \ \EPS[3]) \\
			x\notin\FV(U') \quad
			r' \subq \fltp,x \quad \EPS[3] \subq \fltp,x \quad \theta = [p/x]
			}
			{
			t_1 \ C: (\G[\flt]; \ty[r]{{U}} \ \EPS[1]) \carrow (\GP[\fltp]; (\ty[r']{U'} \ \EPS[2] \EFFSEQ \EPS[3] \EFFSEQ \EPS[4])\theta)
			}
			\vgap
			\infrule[c-$\lambda$]
			{  C : (\G[\flt]; \ty[r]{{S}}\ \EPS) \carrow ( (\G'\ ,\ x: \ty[p]{T})^{q, x}; \ty[r]{{U}} \ \EPSPR)
				\quad q \subq \flt
			}
			{
				\lambda x. C: (\G[\flt]; \ty[r]{{S}} \ \EPS) \carrow (\GP[\fltp]; ((x: \ty[p]{{T}}) \to^{\EPSPR} \ty[r]{{U}})^{q} \ \PURE)
			}
			\vgap
			\infrule[c-ref-1]
			{ \GP[\fltp] \ts t: \ty[q]{\Type{Alloc}} \ \EPS[1]  \quad
				C: (\G[\flt]; \ty[r]{{T}} \ \EPS[2]) \carrow (\GP[\fltp]; \ty[\qbot]{{B}} \ \EPSPR )
			}
			{ \tref_t \ C: (\G[\flt];  \ty[r]{{T}} \ \EPS[2]) \carrow (\GP[\fltp]; \ty[\qbot]{\TRef ~ {B}} \ \EPS[1]\EFFSEQ\EPSPR\EFFSEQ \FX{q} )
			}
			\vgap
			\infrule[c-ref-2]
			{
				\GP[\fltp] \ts t: \ty[\qbot]{{B}} \ \EPS[2]  \quad
				C: (\G[\flt]; \ty[r]{{T}} \ \EPS[1]) \carrow (\GP[\fltp]; \ty[q]{\Type{Alloc}} \ \EPSPR[1] )
			}
			{
				\tref_C \ t: (\G[\flt];  \ty[r]{{T}} \ \EPS[1]) \carrow (\GP[\fltp]; \ty[\qbot]{\TRef ~ {B}} \ \EPSPR[1]\EFFSEQ\EPS[2]\EFFSEQ \FX{q} )
			}
			\vgap
			\infrule[c-!]
			{ C: (\G[\flt]; \ty[r]{{T}} \ \EPS) \carrow (\GP[\fltp]; \ty[q']{(\TRef ~{B})} \ \EPSPR )
			}
			{\ts ! \ C: (\G[\flt]; \ty[r]{{T}} \ \EPS) \carrow (\GP[\fltp]; \ty[\qbot]{{B}} \ \EPSPR )
			}
			\vgap
			\infrule[c-:=-1]
			{ C : (\G[\flt] ; \ty[r]{{T}} \ \EPS) \carrow (\GP[\fltp] ; \ty[q']{(\TRef \ {B})} \ \EPSPR) \quad
				\GP[\fltp] \ts t_2: \ty[\qbot]{{B}} \ \EPSPR
			}
			{ C := t_2 : (\G[\flt]; \ty[r]{{T}} \ \EPS) \carrow (\GP[\fltp]; \ty[\qbot]{\TUnit} \ \EPSPR)
			}
			\vgap
			\infrule[c-:=-2]
			{ \GP[\fltp] \ts t_1 : \ty[q']{(\TRef\ {B})} \ \EPSPR \quad
				C : (\G[\flt];  \ty[r]{{T}} \  \EPS) \carrow (\GP[\fltp]; \ty[\qbot]{{B}} \ \EPSPR)
			}
			{ t_1 := C : (\G[\flt]; \ty[r]{{T}} \ \EPS) \carrow (\GP[\fltp];  \ty[\qbot]{\TUnit} \ \EPSPR)
			}
			\vgap
			\infrule[c-let-1]
			{
			C: (\G[\flt]; \ty[r]{{U}} \ \EPS[1]) \carrow
			((\GP, x: \ty[\qsat{p} \overlap \qsat{\fltp}]{{T}})^{\fltp, x};  \ty[r']{{U'}} \ \EPS[2])
			\quad
			\GP[\fltp] \ts t: \ty[p]{{T}} \ \EPS[3] \\
			x\notin\FV(U') \quad \theta = [p/x]
			}
			{
			\tlet ~ x = t~ \tin~ C: (\G[\flt];  \ty[r]{{U}} \ \EPS[1]) \carrow (\GP[\fltp];(\ty[r']{U'} \ \EPS[2] \EFFSEQ \EPS[3])\theta)
			}
			\vgap
			\infrule[c-let-2]
			{
			(\GP, (x: \ty[\qsat{p} \overlap \qsat{\fltp}]{{T}}))^{\fltp,x} \ts t: \ty[r']{{U'}} \ \EPS[2]
			\quad
			C: (\G[\flt]; \ty[r]{{U}} \ \EPS[1]) \carrow (\GP[\fltp]; \ty[p]{{T}}  \ \EPS[3]) \\
			x\notin\FV(U') \quad \theta = [p/x]
			}
			{
			\tlet ~ x = C~ \tin ~t : (\G[\flt]; \ty[r]{{U}} \ \EPS[1]) \carrow (\GP[\fltp]; (\ty[r']{U'} \ \EPS[2] \EFFSEQ \EPS[3])\theta)
			}
		\end{minipage}%
		\caption{Context typing rules for the \maybelang{}-Calculus.}
		\label{fig:direct_context}
	\end{mdframed}
\end{figure}

Unlike reduction contexts $E$ in \figref{fig:directstyle:semantics}, contexts $C$ for reasoning about the equivalence allow a ``hole'' to appear in any place.
We write $C: (\G[\flt]; \ty[q]{T} \  \EPS) \carrow (\GP[\fltp]; \ty[q']{T'}\ \EPSPR)$ to mean that the context $C$ is a program of type $\ty[q']{T'} \ \EPSPR$ (closed under $\GP[\fltp]$) with a hole that can be filled with any program of type $\ty[q]{T} \ \EPS$ (closed under $\G[\flt]$).
The typing rules for well-typed contexts imply that if
$\G[\flt] \ts t : \ty[q]{T} \ \EPS$ and $C:  (\G[\flt]; \ty[q]{T} \  \EPS)\carrow (\GP[\fltp]; \ty[q']{T'}\ \EPSPR)$ hold, then $\GP[\fltp] \ts C[t] : \ty[q']{T'} \ \EPSPR$.
\figref{fig:direct_context} shows the typing rules for well-typed contexts.

Two well-typed terms, $t_1$ and $t_2$, under type context $\G[\flt]$, are \emph{contextually equivalent} if any occurrences of the first term in a closed term can be replaced by the second term without affecting the \emph{observable results} of reducing the program, which is formally defined as follows:
\begin{definition}[Contextual Equivalence]\label{def:standard_equiv} We say $t_1$ is \emph{contextually equivalent} to $t_2$, written as $\G[\flt] \models t_1 \equiva t_2: \ty[p]{T}\ \EPS$, if $\G[\flt] \ts t_1: \ty[q]{T}\ \EPS$, and $\G[\flt] \ts t_2: \ty[q]{T}\ \EPS$, and:

    \[
        \forall \, C: (\G[\flt];  \ty[q]{T}\ \EPS) \carrow ( \emptyset; \ty[\qbot]{\TUnit} \ \PURE). \
        C[t_1]\downarrow \ \Longleftrightarrow \ C\hole{t_2} \downarrow.
    \]
\end{definition}
\noindent We write $t\downarrow$ to mean term $t$ terminates, if $\emptyset \mid t \mredv{\ast} \sigma \mid v $ for some value $v$ and final store $\sigma$.

The above definition is standard \cite{DBLP:conf/popl/AhmedDR09} and defines a partial program equivalence.
However, since we focus on a total fragment of the \directlang{}-calculus here, program termination can not be used as an observer for program equivalence.
We will thus rely on the following refined version of contextual equivalence using
Boolean contexts:
\begin{align*}
     & \forall \, C: (\G[\flt];  \ty[q]{T}\ \EPS) \carrow ( \emptyset; \ty[\qbot]{\Type{Bool}} \ \PURE). \ \exists \ \sigma, \sigma', v. \\
     & \qquad \emptyset \mid C[t_1] \mredv{\ast} \sigma \mid v\; \wedge\; \emptyset \mid C[t_2] \mredv{\ast} \sigma' \mid v.
\end{align*}
That is to say, we consider two terms contextually equivalent if they yield the same answer value in all Boolean contexts.

\subsection{The Model}
Following other prior works~\cite{DBLP:conf/icfp/ThamsborgB11,DBLP:conf/ppdp/BentonKBH07,ahmed2004semantics},
we apply Kripke logical relations to the \maybelang{}-calculus.
Our logical relations are indexed by types and store layouts via \emph{worlds}.
This allows us to interpret $\TRef \ \Type{B}$ as an allocated location that holds values of type $\Type{B}$.
The invariant that all allocated locations hold well-typed values with respect to the world must hold in the pre-state and be re-established in the post-state of a computation.
The world may grow as more locations may be allocated.
It is important that this invariant must hold in future worlds, which is commonly referred as \emph{monotonicity}.

Considering the restriction to first-order references here,  our store layouts are always ``flat'', \ie, free of cycles.
The notion of world for the \maybelang{}-calculus is defined in the following:
\begin{definition}[World]\label{def:world} A world $\W$ is a triple  $(L_1, L_2, f)$, where
    \begin{itemize}
        \item $L_1$ and $L_2$ are finite sets of locations.
        \item $f \subseteq (L_1 \times L_2)$ is a partial bijection.
    \end{itemize}
\end{definition}
A world is meant to define relational stores.
The partial bijection captures the fact that a relation holds under permutation of locations.

If $\W = (L_1, L_2, f)$ is a world, we refer to its components as follows:
\[
    \begin{array}{l l l l}
        \W(\ell_1, \ell_2) & = & \begin{cases}
                                     (\ell_1, \ell_2) \in f & \text{when defined} \\
                                     \emptyset              & \text{otherwise}
                                 \end{cases} \\
        \DOM_1(\W)         & = & L_1                               \\
        \DOM_2(\W)         & = & L_2                               \\
    \end{array}
\]
If $\W$ and $\W'$ are worlds, such that
$
    \DOM_1(\W) \overlap \DOM_1(\W') = \DOM_2(\W) \overlap \DOM_2(\W') = \emptyset
$,
then $\W$ and $\W'$ are called disjoint, and we write $\W \extends \W'$ to mean extending $\W$ with a disjoint world $\W'$.
Let $\sigma_1$ and $\sigma_2$ be two stores. We write $\WFRS{\sigma_1}{\sigma_2}{\W}$ to mean $\W = (\DOM(\sigma_1), \DOM(\sigma_2), f)$.

Our world definition allows us to specify that the domains of two relational stores may grow during a computation, but does not cover store operations, which is important when proving the soundness of equational rules.
Like prior works (\eg, ~\cite{DBLP:conf/icfp/ThamsborgB11,DBLP:conf/ppdp/BentonKBH07}), we use effects as a refinement for the definition of world.
The notation $\EPS$ denotes read/write effects, and $\omega$ means allocation occurs during a computation.
Local reasoning is enabled by reachability qualifiers and read/write effects,
meaning that what is preserved during an effectful computation are the locations that are \emph{not} mentioned in the read/write effects.
This is a common technique used in reasoning about frames in Hoare-style logics, \eg, separation logic~\cite{DBLP:conf/lics/Reynolds02}.
This treatment is also applicable to our refined effect system (\secref{sec:extension-with-soft}), where framing is achieved through write effects --
an established technique in Dafny~\cite{DBLP:conf/lpar/Leino10} and region logics~\cite{DBLP:journals/jacm/BanerjeeNR13,DBLP:conf/ecoop/BaoLE15}. In this case, a frame indirectly describes the locations that a computation may not change ~\cite{DBLP:journals/tse/BorgidaMR95}.
Framing allows the proof to carry properties of effectful terms, such as function applications, since properties that are true for unchanged locations will remain valid~\cite{DBLP:journals/fac/BaoLE18}.

\subsection{Interpretation of Reachability}
\label{sec:direct_interp_rq}
\begin{figure}[t]
  \begin{mdframed}\small
    \judgement{Interpretation of Value Reachability}
    \[
      \begin{array}{l}
        \dvalocss{\omega}  =  \emptyset \qquad   \dvalocss{\tunit} = \emptyset \qquad \dvalocss{c} = \emptyset \qquad \dvalocss{\ell}  = \{ \, \ell \, \} \\
        \dvalocss{\lambda x. t} = \{ \ell \mid \ell \in \FV(t) \land \ell \in \Loc \}                                                                   \\
      \end{array}
    \]
    \judgement{Interpretation of Reachability Qualifiers}
    \[
      \begin{array}{l@{\ \,}c@{\ \,}l@{\qquad\qquad\qquad\qquad\qquad\qquad \ \ }l}
        \dvarslocs{q} & \DEF & \{ \, \ell \mid \,  \ell \in q \, \land \, \ell \ \in \Loc \} \\
      \end{array}
    \]
    \judgement{Reachability Predicates}
    \[
      \begin{array}{l@{\ \,}c@{\ \,}l@{\qquad\qquad\qquad\qquad\qquad\qquad \ \ }l}
        \dvalq{\sigma}{v}{L} & \DEF & (\DOM(\sigma) \, \overlap \,  \dvalocss{v}) \subseteq L  \
      \end{array}
    \]
    \caption{Interpretation of reachability qualifiers.}
    \label{fig:direct_reachhability}
  \end{mdframed}
\end{figure}

In the \maybelang{}-calculus, reachability qualifiers are used to specify desired separation or permissible overlapping of reachable locations from a function's argument and its body.
\figref{fig:direct_reachhability} shows the interpretation of reachability qualifiers. As in the \maybelang{} calculus, values cannot be cyclic, we axiomatize the definition of reachability, without proving termination.
Here, we assume free variables are already substituted with values.

We use $\dvalocss{v}$ to define the set of locations that are reachable from a given value $v$.
Base type values, \ie, $\omega$ of type $\Type{Alloc}$, $\tunit$ of type $\TUnit$, and other constants $c$ of other base types $\Type{B}$, do not reach any store locations.
Thus, they reach the empty set of locations.
A location $\ell$ can only reach itself. Thus, its reachable set is the singleton set $\{\ell\}$.
The set of locations that are reachable from a function value $\lambda x.t$ are the set of the locations appearing in the function body.

We overload the function $\text{locs}$, and write $\dvarslocs{q}$ to mean the set of locations reachable from qualifier $q$, which are the set of the locations appeared in $q$.
A bound variable may appear in $q$, and serves as a placeholder to specify the set of locations that a function's return value may reach.
See \secref{sec:direct_binary} for details.
The notation $\dvalq{\sigma}{v}{L}$ is a predicate that asserts the set of locations that are reachable from $v$ in store $\sigma$ is a subset of $L$, where $L$ is a set of locations.

\subsection{Binary Logical Relations for \maybelang{}}
\label{sec:direct_binary}
This section presents the definition of binary logical relations for \maybelang{}.
Following the approach of \citet{logical-approach}, we define the binary logical relation for logical equivalence in two steps:
\begin{enumerate}
    \item We define binary interpretations on pairs of closed values, and pairs of closed terms.
    \item We define the logical equivalence relation on open terms, $\G[\flt] \models t_1 \equivlog t_2 : \ty[q]{T} \ \EPS$, by lifting the value and term relations to open terms using a closing substitution.
\end{enumerate}

The $\maybelang{}$-calculus has a dependent type system, where types may mention term variables.
We could either define logical relations indexed by two types with variants on qualifiers (\ie, the choice of locations after closing substitution); or indexed by a type where performing closing substitution to qualifiers in the definition.
Here, we choose the latter.
Thus, a relational value substitution is a parameter of the definition of logical relations.

The relational value substitution has to satisfy the context interpretation.
We define the interpretation of typing contexts:
\[
    \footnotesize
    \begin{array}{l l l}
        \UG{\emptyset^{\flt}}          & = & \emptyset                                                                                                                                                                     \\
        \UG{(\G, x: \ty[q]{T})^{\flt}} & = & \{(\W, \gamma\extends(x \mapsto (v_1, v_2))) \mid (\W, \gamma) \in \UG{\G[\flt]} \, \land \, \flt \subq \DOM(\G) \, \land \, q \subq \DOM(\Gamma) \, \land                    \\
                                       &   & \DVTC{\W}{v_1}{v_2} \in \DGMV{T} \, \land \,                                                                                                                                  \\
                                       &   & (\forall\, q, q'. \,  q \subseteq \flt \, \land \,  q' \subq \flt \,  \land \,  \Rightarrow                                                                                   \\
                                       &   & \qquad\qquad (\dvarslocs{\gamma_1(\qsat{q})} \, {\overlap} \, \dvarslocs{\gamma_1(\qsat{q'})} \subseteq \dvarslocs{\gamma_1(\qsat{q} \, {\overlap} \, \qsat{q'})} \, \land \, \\
                                       &   & \qquad\qquad \dvarslocs{\gamma_2(\qsat{q})} \, {\overlap} \, \dvarslocs{\gamma_2(\qsat{q'})} \subseteq \dvarslocs{\gamma_2(\qsat{q} \, {\overlap} \, \qsat{q'})}) )\}         \\
    \end{array}
\]

In the above definition, $\gamma$ ranges over relational value substitutions that are finite maps from variables $x$ to pairs of values $(v_1, v_2)$.
If $\gamma(x) = (v_1, v_2)$, then $\gamma_1(x)$ denotes $v_1$ and $\gamma_2(x)$ denotes $v_2$.
We write $\gamma_1(q)$ and $\gamma_2(q)$ to mean substituting the free variables in $q$ with respect to the relational value substitution $\gamma$.

\begin{figure*}[t]
    \begin{mdframed}
        \judgement{Value Interpretation of Types and Terms}{\BOX{\maybelang{}}}
        \begin{mathpar}
            \begin{footnotesize}
                \begin{array}{@{}r@{\hspace{1ex}}l@{\hspace{1ex}} l}
                    \DGMV{\Type{Alloc}}                     &      & \{ \DVTC{\W}{\omega}{\omega} \}                                                                                                                                                                                         \\
                    \DGMV{\TUnit}                           & =    & \{ \DVTC{\W}{\tunit}{\tunit} \}                                                                                                                                                                                         \\

                    \DGMV{\Type{Bool}}                      & =    & \{ \DVTC{\W}{v}{v} \mid v = \text{true}  \, \lor  \, v = \text{false} \}                                                                                                                                                \\

                    \DGMV{\TRef \ \Type{B}}                 & =    & \{ \DVTC{\W}{\ell_1}{\ell_2} \mid \forall \, \sigma_1, \sigma_2. \, \WFRS{\sigma_1}{\sigma_2}{\W} \, \land \, \ell_1 \in \DOM(\sigma_1) \, \land \, \ell_2 \in \DOM(\sigma_2) \, \land \, \W(\ell_1, \ell_2) \, \land   \\
                                                            &      & \DVTC{\W}{\sigma_1(\ell_1)}{\sigma_2(\ell_2)} \in \DGMV{\Type{B}} \}                                                                                                                                                    \\
                    \\
                    \DGMV{(x:\ty[p]{{T}}) \to^{\EPS} U^{r}} & =    & \{ \DVTC{\W}{\lambda x. t_1}{\lambda x. t_2} \mid \, \dvalocss{\lambda x. t_1} \subq \DOM_1 (\W) \, \land  \, \dvalocss{\lambda x. t_2}  \subq \DOM_2(\W) \, \land \,                                                   \\
                                                            &      & (\forall v_1, v_2, \W', \sigma_1, \sigma_2. \WFRS{\sigma_1}{\sigma_2}{\W \extends \W'}  \, \Rightarrow \, \DVTC{\W\extends \W'}{v_1}{v_2}\in \DGMV{T}\, \Rightarrow                                                     \\
                                                            &      & \qquad \dvalocss{\lambda x. t_1} \, {\overlap} \, \dvalocss{v_1} \, \subq \dvarslocs{\gamma_1(p)} \Rightarrow \dvalocss{\lambda x. t_2} \, {\overlap} \, \dvalocss{v_2} \, \subq \dvarslocs{\gamma_2(p)} \Rightarrow    \\
                                                            &      & \qquad \exists \, \W'', \sigma_1', \sigma_2', v_1', v_2'. \, \sigma_1 \mid t_1[x \mapsto v_1] \mredv{\ast} \sigma_1' \mid v_1' \, \land \, \sigma_2 \mid t_2[x \mapsto v_2]  \mredv{\ast} \sigma_2' \mid v_2'  \, \land \\
                                                            &      & \qquad \qquad \WFRS{\sigma_1'}{\sigma_2'}{\W \extends \W' \extends \W''} \, \land  \, \DVTC{\W \extends \W' \extends \W''}{v_1'}{v_2'} \in \DGMV{U} \, \land                                                            \\
                                                            &      & \qquad \qquad (x \in r \Rightarrow  \dvalq{\sigma_1}{v_1'}{(\dvalocss{\gamma_1(r)} \overlap \dvalocss{\lambda x. t_1}  \, \cup \, \dvalocss{v_1})} \, \land \,                                                          \\
                                                            &      & \qquad \qquad \qquad \qquad  \dvalq{\sigma_2}{v_2'}{(\dvalocss{\gamma_2(r)} \overlap \dvalocss{\lambda x. t_2} \, \cup \, \dvalocss{v_2})}) \, \land                                                                    \\
                                                            &      & \qquad \qquad (x \not\in r \Rightarrow \dvalq{\sigma_1}{v_1'}{(\dvalocss{\gamma_1(r)} \overlap \dvalocss{\lambda x. t_1})}  \, \land  \,                                                                                \\
                                                            &      & \qquad \qquad\qquad \qquad\dvalq{\sigma_2}{v_2'}{(\dvalocss{\gamma_2(r)} \overlap \dvalocss{\lambda x . t_2})}) \, \land                                                                                                \\
                                                            &      & \qquad \qquad (x \in \EPS \Rightarrow  \DEPS{\sigma_1}{\dvalocss{\gamma_1(\EPS \EFFSEQ \FX{p})}}{\sigma_1'} \, \land \, \DEPS{\sigma_2}{\dvalocss{\gamma_2(\EPS \EFFSEQ \FX{p})}}{\sigma_2'}) \, \land                  \\
                                                            &      & \qquad \qquad (x \not\in \EPS \Rightarrow \DEPS{\sigma_1}{\dvalocss{\gamma_1(\EPS)}}{\sigma_1'} \, \land \, \DEPS{\sigma_2}{\dvalocss{\gamma_2(\EPS)}}{\sigma_2'}))  \}                                                 \\
                    \DEPS{\sigma}{\EPS}{\sigma'}            & \DEF & \forall l \in \DOM(\sigma). \sigma(l) = \sigma'(l) \lor l \in \EPS                                                                                                                                                      \\

                    \\
                    \DGMt{\flt}{\ty[q]{T}\ \EPS}            & =    & \{ \DVTC{\W}{t_1}{t_2} \mid \forall \, \sigma_1, \sigma_2. \WFRS{\sigma_1}{\sigma_2}{\W} \, \land \, \exists \, \W', \sigma_1', \sigma_2', v_1, v_2.\,  t_1 \mid \sigma_1 \mredv{\ast} v_1 \mid \sigma_1' \, \land      \\
                                                            &      & t_2 \mid \sigma_2 \mredv{\ast} v_2 \mid \sigma_2' \, \land \, \WFRS{\sigma_1'}{\sigma_2'}{\W\extends \W'} \, \land  \, \DVTC{\W \extends \W'}{v_1}{v_2} \in \DGMV{T}  \, \land                                          \\
                                                            &      & \dvalq{\sigma_1}{v_1}{(\dvalocss{\gamma_1(\flt \overlap q)})} \, \land \, \dvalq{\sigma_2}{v_2}{(\dvalocss{\gamma_2(\flt \overlap q)})} \, \land                                                                        \\
                                                            &      & \DEPS{\sigma_1}{\dvalocss{\gamma_1(\EPS)}}{\sigma_1'} \, \land \, \DEPS{\sigma_2}{\dvalocss{\gamma_2(\EPS)}}{\sigma_2'} \}                                                                                              \\
                \end{array}
            \end{footnotesize}
        \end{mathpar}
        \caption{Binary value and term interpretation for the \maybelang{}-calculus.}
        \label{fig:direct_binary}
    \end{mdframed}
\end{figure*}

\paragraph{The Binary Value Interpretation.}
The definition of binary value interpretation of types is shown in \figref{fig:direct_binary}.
The relational interpretation of type ${T}$, written as $\DGMV{T}$, is a set of tuples of form $\DVTC{\W}{v_1}{v_2}$, where $v_1$ and $v_2$ are values, and $\W$ is a world.
We say $v_1$ and $v_2$ are related at type ${T}$ with respect to $\W$.

\paragraph{Ground Types.}
We restrict the base types $B$ to $\Type{Alloc}$, $\TUnit$ and $\Type{Bool}$ to streamline the presentation.
A pair of allocation capabilities $(\omega,\omega)$ are related at type $\Type{Alloc}$.
A pair of unit values $(\tunit,\tunit)$ are related at type $\TUnit$.
A pair of boolean values are related if they are both \text{true} or \text{false}.
A pair of locations $(\ell_1, \ell_2)$ are related if they are in the domain of the relational store with respect to $\W$, ($\WFRS{\sigma_1}{\sigma_2}{\W}$), such that $\W(\ell_1, \ell_2)$.
It means that a pair of related locations store related values.

\paragraph{Function Types.}
Two  $\lambda$ terms, $\lambda x.t_1$ and $\lambda x.t_2$, are related at type $\ty[p]{{T}} \to^{\EPS} U^{r}$ with respect to world $\W$, meaning that it satisfies the following conditions:
\begin{itemize}
    \item The set of locations reachable from the two $\lambda$ terms are well-formed with respect to the world, \ie, $\dvalocss{\lambda x. t_1} \subq \DOM_1(\W)$ and $\dvalocss{\lambda x. t_2} \subq \DOM_2(\W)$.
    \item The arguments are allowed if
          \begin{itemize}
              \item the arguments $v_1$ and $v_2$ are related at type $T$ with respect $\W\extends\W'$, for all $\W'$; and
              \item the overlapping locations reachable from the functions and their arguments are permissible by the argument's qualifier $p$, \ie, $\dvalocss{\lambda x. t_1} \overlap \dvalocss{v_1} \subq \dvarslocs{\gamma_1(p)}$ and $\dvalocss{\lambda x. t_2} \overlap \dvalocss{v_2} \subq \dvarslocs{\gamma_2(p)}$.
          \end{itemize}
    \item After substitution, the two terms $t_1[x \mapsto v_1]$ and $t_2[x \mapsto v_2]$ are reduced to some values $v_1'$ and $v_2'$ with some final stores $\sigma_1'$ and $\sigma_2'$.
    \item  $\sigma_1'$ and  $\sigma_2'$ are related with respect to world  $\W\extends\W'\extends\W''$, for some $\W''$, \ie, $(\sigma_1', \sigma_2') : \W\extends\W'\extends\W''$.
    \item $v_1'$ and $v_2'$ are related at type $U$ with respect to world  $\W\extends\W'\extends\W''$.
    \item If the return value's qualifier $r$ depends on the argument (\ie, $x \in r$), then the locations reachable from $v_1'$ and $v_2'$ are subsets of those reachable both from the function and $r$, plus those reachable from the arguments;
          otherwise (\ie, $x \not\in r$), they are just subset of those reachable both from the function and $r$.
    \item If a bound variable $x$ appears in the effect $\EPS$, meaning the function body may modify the argument, then the effect will include the qualifier that may reach the value of function argument $p$; otherwise it is just $\EPS$.
\end{itemize}

\paragraph{The Binary Term Interpretation.}
Two related terms, $t_1$ and $t_2$, are defined based on the relation of their computational behaviors, \ie, returned values, reachability qualifiers and effects, which is defined by $\DGMt{\varphi}{{T}\ \EPS}$.
It means for all related stores with respect to world, $\WFRS{\sigma_1}{\sigma_2}{\W}$, if
\begin{itemize}
    \item $t_1$ is evaluated to some value $v_1$ with some final store $\sigma_1'$;
    \item $t_2$ is evaluated to some value $v_2$ with some final store $\sigma_2'$;
    \item $v_1$ and $v_2$ are  related at type $T$ with respect to world $\W \extends \W'$ for some $\W'$;
    \item $\sigma_1'$ and $\sigma_2'$ are related with respect to $\W \extends \W'$.
    \item The locations reachable from the values in the domain of pre-stores are subset of those reachable from $\gamma_1(\flt \overlap q)$ and $\gamma_2(\flt \overlap q)$ for each of the term.
    \item The effect captures what may be modified in the pre-state store.
\end{itemize}
Note that we interpret the function body (after substitution) and other terms separately, which allows us to provide more precise reasoning in the logical relations of function types.

Now, we define the binary logical relation for logical equivalence $\G[\flt] \models t_1 \equivlog t_2: \ty[q]{T} \ \EPS$ as follows:
\begin{definition}\label{def:direct:log_equiv}
    \[
        \begin{array}{l l}
            \G[\flt] \models t_1 \equivlog t_2: \ty[q]{T} \ \EPS \DEF  \forall (\gamma, \W) \in \UG{\G[\flt]}. \DVTC{\W}{\gamma_1(t_1)}{\gamma_2(t_2)} \in \DGMt{\varphi}{\ty[q]{T}\ \EPS}
        \end{array}
    \]
    where $\gamma_1(t)$ and $\gamma_2(t)$ means substitutions of the free variable in $t_1$ and $t_2$ with respect to the relational value substitution $\gamma$.
    Closing substitution over qualifiers and effects is performed in the definition of logical relations (\figref{fig:direct_binary}).
\end{definition}

\subsection{Metatheory}
This section discusses several key lemmas used in the proof of compatibility lemmas (\secref{sec:direct_cls}) and soundness of equational rules (\secref{sec:direct_equiv}).

\subsubsection{World Extension and Relational Stores}

\begin{lemma}[Relational Store Update]\label{lem:storet_update} If $\WFRS{\sigma_1}{\sigma_2}{\W}$, and $\DVTC{\W}{\ell_1}{\ell_2} \in \DGMV{\TRef ~ \Type{B}}$, and
    $\DVTC{\W}{v_1}{v_2} \in \DGMV{\Type{B}}$,
    then
    $\WFRS{\sigma_1[\ell_1 \mapsto v_1]}{\sigma_2[\ell_2 \mapsto v_2]}{\W}$.
\end{lemma}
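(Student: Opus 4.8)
The plan is to unfold the definition of a well-formed relational store and verify that updating both stores at a pair of related locations with a pair of related base-type values preserves the world. The statement to prove is \lemref{lem:storet_update}: given $\WFRS{\sigma_1}{\sigma_2}{\W}$, $\DVTC{\W}{\ell_1}{\ell_2} \in \DGMV{\TRef ~ \Type{B}}$, and $\DVTC{\W}{v_1}{v_2} \in \DGMV{\Type{B}}$, we must show $\WFRS{\sigma_1[\ell_1 \mapsto v_1]}{\sigma_2[\ell_2 \mapsto v_2]}{\W}$. Recall that $\WFRS{\sigma_1}{\sigma_2}{\W}$ unfolds to $\W = (\DOM(\sigma_1), \DOM(\sigma_2), f)$ together with the invariant (implicit in the model and used by the $\TRef~\Type{B}$ interpretation) that every pair $(\ell_1',\ell_2')$ related by $\W$ stores $\DGMV{\Type{B}}$-related values in $\sigma_1,\sigma_2$.

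\emph{First} I would observe that an in-place update does not change the domain of either store: $\DOM(\sigma_1[\ell_1 \mapsto v_1]) = \DOM(\sigma_1)$ and $\DOM(\sigma_2[\ell_2 \mapsto v_2]) = \DOM(\sigma_2)$, since by $\DVTC{\W}{\ell_1}{\ell_2} \in \DGMV{\TRef ~ \Type{B}}$ we have $\ell_1 \in \DOM(\sigma_1)$ and $\ell_2 \in \DOM(\sigma_2)$. Hence the world's location components $\DOM_1(\W), \DOM_2(\W)$ and its partial bijection $f$ remain unchanged, and the first conjunct $\W = (\DOM(\sigma_1[\ell_1 \mapsto v_1]), \DOM(\sigma_2[\ell_2 \mapsto v_2]), f)$ holds immediately.

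\emph{Next} I would re-establish the value-relatedness invariant for every pair of locations tracked by $\W$. Take any $(\ell_1', \ell_2')$ with $\W(\ell_1', \ell_2')$; I must show the updated stores relate $\sigma_1'(\ell_1')$ and $\sigma_2'(\ell_2')$ at $\Type{B}$. The argument splits on whether $(\ell_1',\ell_2') = (\ell_1,\ell_2)$. In the updated case, the new contents are exactly $v_1, v_2$, which are related at $\Type{B}$ by hypothesis. In the unchanged case, I need that $\ell_1' \neq \ell_1$ and $\ell_2' \neq \ell_2$ so the update leaves both entries untouched; this is where the \emph{partial bijection} property of $f$ is essential. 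Since $\W(\ell_1,\ell_2)$ and $\W(\ell_1',\ell_2')$ both hold, if $\ell_1' = \ell_1$ then injectivity forces $\ell_2' = \ell_2$, and conversely if $\ell_2' = \ell_2$ then $\ell_1' = \ell_1$; so off the diagonal pair both components genuinely differ and the old, already-related contents from $\WFRS{\sigma_1}{\sigma_2}{\W}$ carry over verbatim.

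\textbf{The main obstacle} is the bijectivity bookkeeping in the unchanged case: one must rule out the ``half-overlap'' situations where $\ell_1' = \ell_1$ but $\ell_2' \neq \ell_2$ (or symmetrically), which would make the update clobber one side of a related pair and break the invariant. This cannot happen precisely because $f$ is a partial bijection relating $\ell_1$ to $\ell_2$, so I would make that appeal explicit. Everything else is routine unfolding; no step-indexing or world extension is needed since the world is held fixed and only first-order (base-type) values are stored.
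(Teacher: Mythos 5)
Your proof is correct and takes the same route as the paper, which simply says ``by definition of relational stores'': you unfold that definition, note the domains are unchanged, and re-establish the contents invariant by a case split on the updated pair, using the partial-bijection property of $f$ to rule out the half-overlap cases. The only thing you add is making explicit the bijectivity bookkeeping that the paper leaves implicit; nothing in your argument diverges from the intended proof.
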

\begin{proof} By definition of relational stores.
\end{proof}

\begin{lemma}[Relational Store Extension]\label{lem:storet_extend}
    If $\WFRS{\sigma_1}{\sigma_2}{\W}$, and $\DVTC{\W}{v_1}{v_2} \in \DGMV{\Type{B}}$,
    then \\
    $\WFRS{\sigma_1\extends (\ell_1:v_1)}{\sigma_2\extends{\ell_2:v_2}}{\W \extends (\ell_1, \ell_2, (\ell_1, \ell_2) \in f)}$, where $\ell_1 \not\in \DOM(\sigma_1)$ and $\ell_2 \not\in \DOM(\sigma_2)$.
\end{lemma}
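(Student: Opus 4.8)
The plan is to unfold the definition of a relational store and reduce the claim to elementary bookkeeping about the world's two location sets and its partial bijection, using the freshness hypotheses to guarantee that the world extension is well defined. First I would unfold $\WFRS{\sigma_1}{\sigma_2}{\W}$ (\defref{def:world} and the accompanying notion of relational store), obtaining $\W = (\DOM(\sigma_1), \DOM(\sigma_2), f)$ with $f \subq \DOM(\sigma_1) \times \DOM(\sigma_2)$ a partial bijection. I then set $\W' := (\{\ell_1\}, \{\ell_2\}, \{(\ell_1,\ell_2)\})$, which is trivially a world since a singleton relation is a partial bijection. The freshness assumptions $\ell_1 \notin \DOM(\sigma_1)$ and $\ell_2 \notin \DOM(\sigma_2)$ give $\DOM_1(\W) \overlap \DOM_1(\W') = \emptyset$ and $\DOM_2(\W) \overlap \DOM_2(\W') = \emptyset$, so $\W$ and $\W'$ are disjoint and $\W \extends \W'$ is defined, with components $(\DOM(\sigma_1) \cup \{\ell_1\},\ \DOM(\sigma_2) \cup \{\ell_2\},\ f \cup \{(\ell_1,\ell_2)\})$.

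Next I would verify that $f \cup \{(\ell_1,\ell_2)\}$ is again a partial bijection: because $f \subq \DOM(\sigma_1)\times\DOM(\sigma_2)$ and $\ell_1\notin\DOM(\sigma_1)$, the name $\ell_1$ occurs as a left component of no pair in $f$, and symmetrically $\ell_2$ occurs as a right component of none; hence the added pair introduces no clash and both functionality and injectivity are preserved. Computing the domains of the extended stores, $\DOM(\sigma_1 \extends (\ell_1{:}v_1)) = \DOM(\sigma_1) \cup \{\ell_1\}$ and $\DOM(\sigma_2 \extends (\ell_2{:}v_2)) = \DOM(\sigma_2) \cup \{\ell_2\}$, and these coincide exactly with the first and second components of $\W \extends \W'$, which is what $\WFRS{\sigma_1\extends(\ell_1{:}v_1)}{\sigma_2\extends(\ell_2{:}v_2)}{\W\extends\W'}$ demands.

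Finally, the hypothesis $\DVTC{\W}{v_1}{v_2} \in \DGMV{\Type{B}}$ is what re-establishes the per-location store invariant for the freshly related pair $(\ell_1,\ell_2)$, so that a reference into either new cell is still interpreted correctly. I would discharge it by observing that for the ground types under consideration here ($\Type{Alloc}$, $\TUnit$, $\Type{Bool}$) the value interpretation $\DGMV{\Type{B}}$ places no constraint on the world, so relatedness transports verbatim from $\W$ to $\W \extends \W'$. This is where the first-order restriction on references is essential: it is precisely what lets us avoid invoking a separate monotonicity lemma for $\TRef$-typed contents.

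The closest thing to an obstacle is therefore twofold — checking the partial-bijection side condition for the augmented relation, and justifying that the value-relatedness of $v_1,v_2$ persists under the world extension — but both are immediate from the definitions and the freshness/first-order assumptions, which matches the ``by definition of relational stores'' slogan used for the neighbouring update lemma (\lemref{lem:storet_update}). Everything else is pure domain arithmetic.
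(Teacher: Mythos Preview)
Your proposal is correct and follows essentially the same approach as the paper, which dispatches the lemma in one line ``By definition of relational stores.'' You have simply spelled out that one line in full: unfolding $\WFRS{\sigma_1}{\sigma_2}{\W}$ to the domain condition, checking disjointness of the singleton world, verifying the partial-bijection property of the extended relation, and matching up the domains of the extended stores with the components of $\W\extends\W'$.
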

\begin{proof} By definition of relational stores.
\end{proof}

\begin{lemma}[Logical Relation Closed Under Relational Value Substitution Extension]\label{lem:valt_extend}
    If $T$ is closed under $\G[\flt]$, and $(\W, \gamma) \in \UG{\G[\flt]}$, then
    $\DVTC{\W}{v_1}{v_2} \in \DGMV{T}$ if and only if $\DVTC{\W}{v_1}{v_2} \in \DGMVE{T}{\gamma\extends\gamma'}$, for all $\gamma'$.
\end{lemma}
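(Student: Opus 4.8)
The plan is to observe first that the biconditional is literally the set equality $\DGMV{T} = \DGMVE{T}{\gamma\extends\gamma'}$, since both $\mathcal{V}\synbracket{T}^{\gamma}$ and $\mathcal{V}\synbracket{T}^{\gamma\extends\gamma'}$ denote sets of triples $\DVTC{\W}{v_1}{v_2}$ ranging over all worlds, differing only in the substitution they carry as a parameter. The crucial structural fact, read off from \Cref{fig:direct_binary}, is that the substitution parameter enters the value interpretation \emph{only} through the closing substitutions $\gamma_1(\cdot),\gamma_2(\cdot)$ applied to the qualifier and effect annotations ($p$, $r$, $\EPS$, and $\EPS\EFFSEQ\FX{p}$) that occur syntactically in $T$. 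Since $(\W,\gamma)\in\UG{\G[\flt]}$ forces $\DOM(\gamma)=\DOM(\G)$, and $T$ closed under $\G[\flt]$ gives $\FV(T)\subq\DOM(\G)=\DOM(\gamma)$, the extension $\gamma\extends\gamma'$ agrees with $\gamma$ on all of $\DOM(\gamma)$ and differs only on the fresh domain $\DOM(\gamma')$, which is disjoint from $\FV(T)$. Hence $\gamma_i(q)=(\gamma\extends\gamma')_i(q)$ for every qualifier/effect $q$ appearing in $T$.

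With this in hand I would run a structural induction on $T$. For the ground types $\Type{Alloc}$, $\TUnit$, $\Type{Bool}$, and for $\TRef~\Type{B}$ (whose referent is a base type), the interpretation does not mention the substitution parameter at all, so the two sets are literally identical. For the function type $(x:\ty[p]{S})\to^{\EPS}U^{r}$, the parameter occurs (i) in the closing substitutions of $p$, $r$, $\EPS$, which coincide under $\gamma$ and $\gamma\extends\gamma'$ by the observation above, and (ii) in the recursive occurrences $\DGMV{S}$ and $\DGMV{U}$, for which I would invoke the induction hypothesis. The domain $S$ is itself closed under $\G[\flt]$, so the IH applies directly; unfolding both sides and replacing the equal closing-substitution values and the equal recursive interpretations yields membership-equivalence of the two function-type sets.

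The one genuine obstacle is the codomain $U$, which may mention the function's bound variable $x$ and is therefore closed under $\G[\flt],x$ rather than $\G[\flt]$, so the induction hypothesis as literally stated does not apply. I would resolve this by strengthening the induction statement to two substitutions, proving $\DGMVE{T}{\gamma} = \DGMVE{T}{\delta}$ whenever $\gamma$ and $\delta$ agree on $\DOM(\gamma)\cap\DOM(\delta)$ and every free variable of $T$ lies either in $\DOM(\gamma)\cap\DOM(\delta)$ or outside both domains (where it is carried along as an unsubstituted placeholder, as handled by the $x\in r$/$x\notin r$ and $x\in\EPS$/$x\notin\EPS$ case splits). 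The pair $(\gamma,\gamma\extends\gamma')$ satisfies this hypothesis: they agree on $\DOM(\gamma)$, the fresh variables of $\DOM(\gamma')$ do not occur in $T$, and the bound variable $x$ lies in neither domain by freshness, so it is treated identically by both. Under the generalized statement the recursion into $U$ goes through, since adding the fresh $x$ to the context merely adds a variable outside both domains and preserves the hypothesis, and the original lemma follows by instantiating $\delta := \gamma\extends\gamma'$.
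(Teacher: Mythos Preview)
Your approach is correct and matches the paper's, which records only ``By induction on type $T$ and the constructs of values $v_1$ and $v_2$.'' You have in fact supplied considerably more detail than the paper: your identification of the codomain issue (that $U$ is closed under $\G[\flt],x$ rather than $\G[\flt]$) and your proposed strengthening of the induction hypothesis to two substitutions agreeing on their common domain is a legitimate way to make the recursion go through, and the paper simply leaves this implicit.
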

\begin{proof} By induction on type $T$ and the constructs of values $v_1$  and $v_2$.
\end{proof}

\begin{lemma}[Logical Relation Closed Under World Extension]\label{lem:valt_store_extend} If $\DVTC{\W}{v_1}{v_2} \in \DGMV{T}$, then for all $\W'$, $\DVTC{\W\extends\W'}{v_1}{v_2} \in \DGMV{T}$.
\end{lemma}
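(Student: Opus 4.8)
The plan is to establish this Kripke \emph{monotonicity} property by structural induction on the type $T$, reading the obligations off the value interpretation in \figref{fig:direct_binary}. The enabling observation is that world extension $\extends$ behaves like a partial commutative monoid on pairwise-disjoint worlds (\defref{def:world}): it only enlarges the carrier sets, so $\DOM_1(\W)\subq\DOM_1(\W\extends\W')$ and $\DOM_2(\W)\subq\DOM_2(\W\extends\W')$; it extends the partial bijection conservatively, so $\W(\ell_1,\ell_2)$ implies $(\W\extends\W')(\ell_1,\ell_2)$; and it is associative, $(\W\extends\W')\extends\W'' = \W\extends(\W'\extends\W'')$, whenever the worlds involved are disjoint. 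I would first record these monoid laws as auxiliary facts, since every case reduces to reindexing quantifiers by means of them. The ground types $\Type{Alloc}$, $\TUnit$, and $\Type{Bool}$ are then immediate: each interpretation is a fixed set of value pairs with no constraint on the world component, so membership is insensitive to replacing $\W$ by $\W\extends\W'$. These cases also show the base-type relations are world-independent, a fact I reuse below.

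For the reference case $\TRef~\Type{B}$, assume $\DVTC{\W}{\ell_1}{\ell_2}\in\DGMV{\TRef~\Type{B}}$ and fix an arbitrary $\W'$ together with stores $\sigma_1,\sigma_2$ satisfying $\WFRS{\sigma_1}{\sigma_2}{\W\extends\W'}$. From the hypothesis we have $\ell_1\in\DOM_1(\W)$, $\ell_2\in\DOM_2(\W)$, and $\W(\ell_1,\ell_2)$; by the monoid laws these yield $\ell_1\in\DOM(\sigma_1)$, $\ell_2\in\DOM(\sigma_2)$, and $(\W\extends\W')(\ell_1,\ell_2)$. The remaining obligation is that the referents are related at $\Type{B}$ in $\W\extends\W'$. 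Since $\Type{B}$ is a first-order base type, its relation does not mention the world, so the relatedness supplied by the hypothesis (instantiated at the $\W$-part of $\sigma_1,\sigma_2$) transports unchanged; equivalently one appeals to the induction hypothesis at the strictly smaller type $\Type{B}$.

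For the function case $(x{:}\ty[p]{T})\to^{\EPS} U^{r}$, the well-formedness side conditions $\dvalocss{\lambda x.t_1}\subq\DOM_1(\W)$ and $\dvalocss{\lambda x.t_2}\subq\DOM_2(\W)$ lift to the domains of $\W\extends\W'$ by domain monotonicity. The body clause at world $\W\extends\W'$ quantifies over further disjoint extensions $(\W\extends\W')\extends\W''$; by associativity this equals $\W\extends(\W'\extends\W'')$, a legal future world of $\W$. Hence every instance of the obligation demanded at $\W\extends\W'$ — the argument-relatedness premise, the reduction of the substituted bodies, and the reachability- and effect-footprint conclusions — is exactly an instance already discharged by the hypothesis at $\W$, taking the future world there to be $\W'\extends\W''$. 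The clauses match verbatim, so no separate monotonicity argument about reductions is needed.

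The main obstacle is entirely in the bookkeeping of disjointness: one must verify that the chain $\W,\W',\W''$ stays pairwise disjoint so that the associativity rewriting $(\W\extends\W')\extends\W''=\W\extends(\W'\extends\W'')$ is well-defined, and that the partial-bijection lookups remain stable under conservative extension. Once these world-monoid laws are in hand, each case becomes a mechanical reindexing of the quantifiers, and the lemma follows.
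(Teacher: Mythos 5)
Your proof is correct and takes the same route as the paper, which simply states ``by induction on type $T$ and the constructs of values $v_1$ and $v_2$''; your case analysis (world-independence of base types, domain/bijection monotonicity for $\TRef~\Type{B}$, and reindexing the quantified future worlds via associativity of $\extends$ for function types) is exactly the detail that one-line proof elides.
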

\begin{proof} By induction on type $T$ and the constructs of values $v_1$  and $v_2$.
\end{proof}

\subsubsection{Well-formedness}
\begin{lemma}[Well-formed value interpretation]\label{lem:valt_wf} Let $(\W, \gamma) \in \UG{\G[\flt]}$.
    If $\UTC{\W}{v_1}{v_2} \in \DGMV{T}$, then $\dvalocss{v_1} \subq \DOM_1(\W)$ and $\dvalocss{v_2} \subq \DOM_2(\W)$.
\end{lemma}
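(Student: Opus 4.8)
The plan is to prove the claim by case analysis on the type $T$, exploiting the fact that $\DGMV{T}$ is defined by cases on $T$ (\figref{fig:direct_binary}) and that membership $\UTC{\W}{v_1}{v_2}\in\DGMV{T}$ already pins down the shape of $v_1$ and $v_2$. In each case I read off the required bound $\dvalocss{v_i}\subq\DOM_i(\W)$ directly from the relevant clause, together with the definition of $\dvalocss{\cdot}$ in \figref{fig:direct_reachhability}. No nontrivial induction hypothesis is actually needed here, since the value interpretation bakes the well-formedness invariant into every type; the ``induction'' collapses to a direct reading of the defining conditions.

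For the base types $\Type{Alloc}$, $\TUnit$, and $\Type{Bool}$, membership forces $v_1$ and $v_2$ to be $\omega$, $\tunit$, or a Boolean constant, and in each of these cases $\dvalocss{v_i}=\emptyset\subq\DOM_i(\W)$ holds vacuously by the clauses $\dvalocss{\omega}=\dvalocss{\tunit}=\dvalocss{c}=\emptyset$. For a function type $(x: \ty[p]{T'}) \to^{\EPS} U^{r}$, the two conjuncts $\dvalocss{\lambda x.t_1}\subq\DOM_1(\W)$ and $\dvalocss{\lambda x.t_2}\subq\DOM_2(\W)$ occur literally as part of the defining condition in \figref{fig:direct_binary}, so the goal is immediate and the function body is never inspected.

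The only case requiring a little care is $\TRef~\Type{B}$, where membership forces $v_i=\ell_i$, and by definition $\dvalocss{\ell_i}=\{\ell_i\}$. Here I must extract $\W(\ell_1,\ell_2)$ from the defining condition, which sits under a universal quantifier over related stores $\WFRS{\sigma_1}{\sigma_2}{\W}$. I would instantiate this quantifier with any pair of stores realizing $\W$; such stores always exist, because the base type $\Type{B}$ is inhabited by related values (e.g.\ via $\DGMV{\TUnit}$ or $\DGMV{\Type{Bool}}$), so the domains $L_1,L_2$ of any world can be populated with well-typed relational entries. From $\W(\ell_1,\ell_2)$, i.e.\ $(\ell_1,\ell_2)\in f$, together with the world invariant $f\subseteq L_1\times L_2$ (\defref{def:world}), I conclude $\ell_1\in L_1=\DOM_1(\W)$ and $\ell_2\in L_2=\DOM_2(\W)$, which is exactly $\dvalocss{\ell_i}\subq\DOM_i(\W)$. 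This store-existence step is the single mild obstacle; everything else follows by unfolding the definitions.
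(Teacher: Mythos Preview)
Your proposal is correct and matches the paper's approach, which is simply stated as ``by induction on type $T$ and the constructs of values $v_1$ and $v_2$''; you have just filled in the details of that case analysis. One small simplification for the $\TRef~\Type{B}$ case: the relation $\WFRS{\sigma_1}{\sigma_2}{\W}$ only constrains the \emph{domains} of $\sigma_1,\sigma_2$ to equal $L_1,L_2$, so any stores with those domains witness the universal---there is no need to populate them with related base values---and then $\ell_i\in\DOM(\sigma_i)=\DOM_i(\W)$ follows directly without even invoking $f\subseteq L_1\times L_2$.
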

\begin{proof} By induction on type $T$ and the constructs of value $v_1$ and $v_2$.
\end{proof}

\begin{lemma}[Well-formed Typing context interpretation] \label{lem:env_type_store_wf}    Let $(\W, \gamma) \in \UG{\G[\flt]}$, then for all $q \subq \flt$, $\dvarslocs{\gamma_1(q)} \subq \DOM_1(\W)$ and
    $\dvarslocs{\gamma_2(q)} \subq \DOM_2(\W)$.
\end{lemma}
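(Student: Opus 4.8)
The plan is to prove both inclusions by structural induction on the typing context $\G$, equivalently on the derivation witnessing $(\W,\gamma)\in\UG{\G[\flt]}$, since the shape of the relational substitution $\gamma$ mirrors that of $\G$. The two conjuncts (for $\gamma_1/\DOM_1(\W)$ and $\gamma_2/\DOM_2(\W)$) are symmetric, so I will argue the $\gamma_1$ case and transpose the reasoning for second projections. The observation that makes the induction go through is that both the closing substitution $\gamma_1(-)$ and the location operator $\dvarslocs{-}$ distribute over the atoms of a qualifier: because $\flt$ consists of \emph{variables} drawn from $\DOM(\G)$ (so $q\subq\flt$ contains no literal store locations), I have $\dvarslocs{\gamma_1(q)}=\bigcup_{y\in q}\dvalocss{\gamma_1(y)}$. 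Hence it suffices to show $\dvalocss{\gamma_1(y)}\subq\DOM_1(\W)$ for each single variable $y\in q$ and then take the union.

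For the base case $\G=\emptyset$, the well-formedness side conditions built into $\UG{-}$ force $\flt\subq\DOM(\emptyset)=\emptyset$, hence $q=\emptyset$ and $\dvarslocs{\gamma_1(\emptyset)}=\emptyset\subq\DOM_1(\W)$ trivially. For the inductive case $\G=\Gamma',x:\ty[q_0]{T_0}$, unfolding the definition of $\UG{-}$ gives $\gamma=\gamma'\extends(x\mapsto(v_1,v_2))$ with $(\W,\gamma')\in\UG{(\Gamma')^{\flt}}$ and $\DVTC{\W}{v_1}{v_2}\in\DGMV{T_0}$. Fix a variable $y\in q$. If $y=x$, then $\gamma_1(y)=v_1$ and \Cref{lem:valt_wf} applied to $\DVTC{\W}{v_1}{v_2}\in\DGMV{T_0}$ yields $\dvalocss{v_1}\subq\DOM_1(\W)$ directly. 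If $y\neq x$, then $y$ is bound in $\Gamma'$ and the head binding leaves it untouched, so $\gamma_1(y)=\gamma'_1(y)$; the induction hypothesis applied to $\{y\}\subq\flt$ under $(\W,\gamma')\in\UG{(\Gamma')^{\flt}}$ gives $\dvalocss{\gamma'_1(y)}\subq\DOM_1(\W)$. Unioning over all $y\in q$ establishes $\dvarslocs{\gamma_1(q)}\subq\DOM_1(\W)$, and the mirror-image argument on second projections gives $\dvarslocs{\gamma_2(q)}\subq\DOM_2(\W)$.

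I expect the only delicate point to be the bookkeeping about how the closing substitution acts on a qualifier: I must justify that $\dvalocss{\gamma_1(y)}$ depends only on the value $\gamma_1(y)$ and is therefore unchanged whether it is computed under $\gamma$ or under its restriction $\gamma'$ (so that the induction hypothesis genuinely applies to the $y\neq x$ atoms), and that no literal location occurs in $q$ — which holds precisely because $\flt$ is a set of variables contained in $\DOM(\G)$. Should one instead wish to read the value relation under the extended substitution at intermediate steps, \Cref{lem:valt_extend} supplies the required invariance of $\DGMV{-}$ under extension of the relational substitution. Note that no appeal to world monotonicity (\Cref{lem:valt_store_extend}) is needed, since the world $\W$ is held fixed throughout the induction.
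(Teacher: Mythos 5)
Your proof is correct and follows essentially the same route as the paper, which simply cites the definition of the typing context interpretation together with \lemref{lem:valt_wf}; your structural induction on $\G$ is just the explicit unfolding of that one-line argument, applying \lemref{lem:valt_wf} to each binding's related values. The side observations (that $q\subq\flt\subq\DOM(\G)$ contains no literal locations, and that no world monotonicity is needed) are sound and consistent with the paper's definitions.
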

\begin{proof} By definition of the typing context interpretation and \lemref{lem:valt_wf}.
\end{proof}

\begin{lemma}\label{lem:wf_env_type} Let $(\W, \gamma) \in \UG{\G[\flt]}$, then $\DOM(\gamma_1) = \DOM(\gamma_2) = \DOM(\G)$, and $\qsat{\DOM(\G)}$.
\end{lemma}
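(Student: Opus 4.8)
The plan is to establish both conjuncts simultaneously by induction on the structure of the typing context \(\G\), equivalently by induction on the derivation witnessing \((\W,\gamma)\in\UG{\G[\flt]}\). Throughout I would read off the scoping side-conditions that the definition of the context interpretation \(\UG{\cdot}\) bakes into its inductive clause, namely that each newly bound variable carries a qualifier contained in the domain of the \emph{prefix} context. The domain equality \(\DOM(\gamma_1)=\DOM(\gamma_2)=\DOM(\G)\) is routine and merely tracks the shape of \(\gamma\); the saturation claim \(\qsat{\DOM(\G)}=\DOM(\G)\), i.e.\ that \(\DOM(\G)\) is closed under the reachability relation \(\reaches\) of \Cref{fig:saturation_overlap}, is where this scoping condition does the real work.

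In the base case \(\G=\varnothing\) the relational substitution \(\gamma\) is empty, so \(\DOM(\gamma_1)=\DOM(\gamma_2)=\varnothing=\DOM(\varnothing)\), and \(\qsat{\varnothing}=\varnothing\) is immediate. In the inductive step \(\G=\GP,x:\ty[q]{T}\), the definition of \(\UG{\cdot}\) forces \(\gamma=\gamma'\extends(x\mapsto(v_1,v_2))\) with \((\W,\gamma')\in\UG{\GP[\flt]}\) and side-condition \(q\subq\DOM(\GP)\). The induction hypothesis yields \(\DOM(\gamma'_1)=\DOM(\gamma'_2)=\DOM(\GP)\) and \(\qsat{\DOM(\GP)}=\DOM(\GP)\). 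Extending by the binding of \(x\) gives \(\DOM(\gamma_1)=\DOM(\gamma'_1)\cup\{x\}=\DOM(\GP)\cup\{x\}=\DOM(\G)\), and symmetrically for \(\gamma_2\), discharging the first conjunct.

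For the saturation conjunct I would unfold qualifier saturation to \(\qsat{\DOM(\G)}=\qsat{\DOM(\GP)}\cup\qsat{x}\), so that by the induction hypothesis it suffices to show \(\qsat{x}\subq\DOM(\G)\). Since \(x\reaches y\) holds exactly when \(y\) lies in the qualifier \(q\) of \(x\), and \(q\subq\DOM(\GP)\), every immediate \(\reaches\)-successor of \(x\) sits in \(\DOM(\GP)\); its further closure stays inside \(\qsat{\DOM(\GP)}=\DOM(\GP)\) by the hypothesis, whence \(\qsat{x}\subq\{x\}\cup\DOM(\GP)=\DOM(\G)\). Together with reflexivity of \(\reaches^*\) for the reverse inclusion, this gives \(\qsat{\DOM(\G)}=\DOM(\G)\). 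The main obstacle is exactly this last step: it consumes the well-scopedness of contexts, and requires that extending the prefix by \(x\) neither creates a backward \(\reaches\)-edge into \(x\) from the prefix nor lets \(x\)'s successors escape \(\DOM(\G)\). I would therefore take care to verify that \(\UG{\cdot}\) threads the prefix-scoping condition \(q\subq\DOM(\GP)\) (as opposed to \(q\subq\DOM(\G)\)), since this is precisely what keeps the reachability closure of the freshly bound variable within the context and makes the clean equality hold.
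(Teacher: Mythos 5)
Your proof is correct and matches the paper's intent: the paper dispatches this lemma as ``immediate by the definition of the typing context interpretation and of saturation,'' and your explicit induction over the context (reading off the side-condition $q \subq \DOM(\Gamma)$ on each binding's qualifier to close $\qsat{\DOM(\G)}$ under reachability) is just that one-line argument spelled out in full.
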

\begin{proof} Immediately by the definition of typing context interpretation and the definition of saturation in \figref{fig:saturation_overlap}.
\end{proof}

\subsubsection{Semantic Typing Context}

\begin{lemma}[Semantic Typing Context Tighten]\label{lem:envt_tighten} If $(\W, \gamma, ) \in \UG{\G[\flt]}$, then for all $p \subq \flt$, $(\W, \gamma) \in \UG{\G[p]}$.
\end{lemma}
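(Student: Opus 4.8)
The plan is to induct on the structure of the typing context $\G$ (equivalently, on the derivation witnessing $(\W,\gamma)\in\UG{\G[\flt]}$), keeping the tightened filter $p\subq\flt$ fixed throughout. The guiding observation is that the filter $\flt$ occurs in the definition of $\UG{\G[\flt]}$ only in \emph{antitone} positions: once in the side condition $\flt\subq\DOM(\G)$, and once as the bound of the universally quantified overlap-preservation condition $\forall q,q'.\ q\subq\flt\land q'\subq\flt\Rightarrow\dots$. Crucially, the value interpretation $\DGMV{T}$ and the qualifier well-scopedness premise $q\subq\DOM(\G)$ do not mention the filter at all, so these carry over verbatim when we replace $\flt$ by $p$. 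Hence shrinking the filter can only \emph{weaken} the constraints, and membership must be preserved.

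For the base case $\G=\emptyset$, the interpretation of the empty context is independent of the filter, so $(\W,\gamma)\in\UG{\emptyset^{p}}$ is immediate. For the inductive case $\G=\G_0,x:\ty[q]{T}$, I would invert the hypothesis $(\W,\gamma)\in\UG{(\G_0,x:\ty[q]{T})^{\flt}}$ to extract $\gamma=\gamma_0\extends(x\mapsto(v_1,v_2))$ together with $(\W,\gamma_0)\in\UG{\G_0^{\flt}}$, the scoping facts $\flt\subq\DOM(\G_0)$ and $q\subq\DOM(\G)$, the value relatedness $\DVTC{\W}{v_1}{v_2}\in\DGMV{T}$, and the overlap condition for all $q,q'\subq\flt$. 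I then re-establish each clause with $p$ in place of $\flt$: the recursive premise follows from the induction hypothesis applied to $(\W,\gamma_0)$ with $p\subq\flt$; the side condition follows from $p\subq\flt\subq\DOM(\G_0)$; and the facts $q\subq\DOM(\G)$ and $\DVTC{\W}{v_1}{v_2}\in\DGMV{T}$ are untouched.

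The only clause requiring an argument — and hence the main (if modest) obstacle — is the universally quantified overlap-preservation condition. Here I would observe that since $p\subq\flt$, the set of pairs $(q,q')$ with $q,q'\subq p$ is contained in the set of pairs with $q,q'\subq\flt$; therefore the condition, which by hypothesis holds for \emph{every} $q,q'\subq\flt$, holds in particular for every $q,q'\subq p$, yielding exactly the inclusions $\dvarslocs{\gamma_i(\qsat{q})}\overlap\dvarslocs{\gamma_i(\qsat{q'})}\subseteq\dvarslocs{\gamma_i(\qsat{q}\overlap\qsat{q'})}$ (for $i\in\{1,2\}$) needed for $\UG{\G[p]}$. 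Collecting the reverified clauses gives $(\W,\gamma)\in\UG{\G[p]}$, closing the induction. I expect no genuine difficulty beyond being careful that the filter is threaded unchanged through the recursion, so that the induction hypothesis may legitimately be invoked with the same tightened $p$ at every layer of the context.
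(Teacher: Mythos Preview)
Your proposal is correct and takes essentially the same approach as the paper, which simply states ``By the definition of typing context interpretation.'' Your induction on $\G$ is precisely the unfolding of that one-liner: since $\UG{\G[\flt]}$ is defined recursively with $\flt$ threaded unchanged and appearing only in antitone positions (the side condition $\flt\subq\DOM(\G)$ and the bound of the universal overlap clause), checking each clause under the tightened filter $p\subq\flt$ is immediate, exactly as you describe.
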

\begin{proof} By the definition of typing context interpretation.
\end{proof}

\begin{lemma}[Semantic Typing Context Extension 1]\label{lem:envt_extend}
    If $(\W, \gamma) \in \UG{\G[\flt]}$,
    and $q \subq \DOM(\G)$,
    and $\DVTC{\W}{v_1}{v_2} \in \DGMV{T}$,
    and $\dvalocss{\gamma_1(\flt)} \overlap \dvalocss{v_1} \subq \dvalocss{\gamma_1(q)}$,
    and $\dvalocss{\gamma_2(\flt)} \overlap \dvalocss{v_2} \subq \dvalocss{\gamma_2(q)}$,
    then $(\W, \gamma\extends (x \mapsto (v_1, v_2))) \in \UG{(\G, x: \ty[q]{T})^{\flt,x}}$
\end{lemma}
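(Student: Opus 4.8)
The plan is to unfold the membership goal directly against the definition of the typing-context interpretation $\UG{\cdot}$ and discharge each conjunct in turn. Writing $\gamma' = \gamma\extends(x \mapsto (v_1, v_2))$, the definition of $\UG{(\G, x:\ty[q]{T})^{\flt,x}}$ demands four things: (i) $(\W, \gamma) \in \UG{\G[\flt]}$; (ii) the scoping side conditions $\flt \subq \DOM(\G)$ and $q \subq \DOM(\G)$; (iii) $\DVTC{\W}{v_1}{v_2} \in \DGMV{T}$; and (iv) that $\gamma'$ respects overlap, i.e.\ for all $r, r' \subq \flt, x$ both $\dvarslocs{\gamma_1'(\qsat{r})} \overlap \dvarslocs{\gamma_1'(\qsat{r'})} \subq \dvarslocs{\gamma_1'(\qsat{r}\overlap\qsat{r'})}$ and its $\gamma_2'$-analogue hold. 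Conjuncts (i)--(iii) are immediate: (i) and the $\flt \subq \DOM(\G)$ part of (ii) are exactly the hypothesis $(\W,\gamma)\in\UG{\G[\flt]}$ (the extra filter atom $x$ is inert on $\G$ since $x\notin\DOM(\G)$, so no genuine filter-weakening is needed, and \Cref{lem:envt_tighten} can be cited if one prefers to adjust filters explicitly); the $q \subq \DOM(\G)$ part of (ii) and conjunct (iii) are assumed outright.

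The real content is conjunct (iv), and this is where the two overlap-leakage hypotheses $\dvalocss{\gamma_i(\flt)} \overlap \dvalocss{v_i} \subq \dvalocss{\gamma_i(q)}$ are consumed. I would fix $i \in \{1,2\}$ and a pair $r, r' \subq \flt, x$, and case-split on whether $x \in r$ and $x \in r'$, using freshness of $x$ to compute saturations: since no variable or location of $\G$ reaches the fresh $x$, we have $x \in \qsat{r}$ iff $x \in r$, and otherwise $\qsat{r}$ coincides with the closure taken in $\G$; moreover $x \in r$ forces $\qsat{q} \subq \qsat{r}$ because $x$'s own qualifier is $q$. Substituting, $\dvarslocs{\gamma_i'(\qsat{r})}$ equals $\dvarslocs{\gamma_i(S)}$ when $x\notin r$, and $\dvarslocs{\gamma_i(S)} \cup \dvalocss{v_i}$ when $x \in r$, where $S$ denotes the non-$x$ part of $\qsat{r}$, on which $\gamma_i'$ agrees with $\gamma_i$.

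When $x \notin r$ and $x \notin r'$ the claim is literally the overlap property already recorded in (i) for $\gamma$. The symmetric difficult case is $x \in r$, $x \notin r'$: distributing the intersection, the only new summand beyond what (i) bounds is $\dvalocss{v_i} \overlap \dvarslocs{\gamma_i(\qsat{r'})}$. Here I would bound $\dvarslocs{\gamma_i(\qsat{r'})} \subq \dvalocss{\gamma_i(\flt)}$ (using $r'\subq\flt$), apply the leakage hypothesis to obtain $\dvalocss{v_i}\overlap\dvarslocs{\gamma_i(\qsat{r'})} \subq \dvarslocs{\gamma_i(q)} \overlap \dvarslocs{\gamma_i(\qsat{r'})}$, and then invoke the overlap property of $\gamma$ at the pair $q, r'$ to land inside $\dvarslocs{\gamma_i(\qsat{q}\overlap\qsat{r'})}$, which is contained in $\dvarslocs{\gamma_i'(\qsat{r}\overlap\qsat{r'})}$ because $\qsat{q}\subq\qsat{r}$. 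The case $x \in r \cap r'$ is easier: then $x \in \qsat{r}\overlap\qsat{r'}$, so $\dvalocss{v_i}$ sits in the right-hand side outright and the remaining $x$-free overlaps reduce to the earlier cases. The main obstacle is precisely this routing of the leaked overlap through $q$: it requires $q$ to be observable so that the overlap property of $\gamma$ applies at $q, r'$, together with careful tracking of how saturation and the substitution $\gamma_i'$ interact with the fresh binding; the well-formedness lemma \Cref{lem:valt_wf} is used throughout to keep all location sets inside $\DOM_1(\W)$ and $\DOM_2(\W)$.
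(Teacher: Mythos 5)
Your proof takes the same route as the paper's own, which discharges this lemma in a single line (``by the typing context interpretation and \Cref{lem:valt_extend}''); your unfolding of the overlap conjunct --- the case split on $x\in r$, $x\in r'$, and the routing of the leaked summand $\dvalocss{v_i}\overlap\dvarslocs{\gamma_i(\qsat{r'})}$ through the leakage hypothesis and then through $\gamma$'s overlap property at the pair $q,r'$ --- is precisely the content the paper leaves implicit, and it is sound (modulo the observation that $\dvarslocs{\gamma_i(\qsat{r'})}\subq\dvalocss{\gamma_i(\flt)}$ relies on the saturation of the context domain, \Cref{lem:wf_env_type}). The one step you dismiss too quickly is conjunct (iii): the hypothesis $\DVTC{\W}{v_1}{v_2}\in\DGMV{T}$ is membership in a relation indexed by the ambient $\gamma$, whereas the goal needs membership under $\gamma\extends(x\mapsto(v_1,v_2))$ --- and since $\DGMV{T}$ genuinely mentions $\gamma_1(p)$, $\gamma_2(r)$, etc.\ in the function-type case, these are not literally the same statement. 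Transporting it is exactly what \Cref{lem:valt_extend} (closure under relational value substitution extension) provides, and that is the one lemma the paper's proof explicitly cites; you should cite it there as well.
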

\begin{proof}  By typing context interpretation  and \lemref{lem:valt_extend}.
\end{proof}

\begin{lemma}[Semantic Typing Context Extension 2]\label{lem:envt_extend_all}
    If $(\W, \gamma)\in \UG{\G[\flt]}$,
    and $\DVTC{\W\extends\W'}{v_1}{v_2} \in \DGMV{T}$,
    and $\dvalocss{\gamma_1(q)} \overlap \dvalocss{v_1} \subq \dvalocss{\gamma_1(p)}$,
    and $\dvalocss{\gamma_2(q)} \overlap \dvalocss{v_2} \subq \dvalocss{\gamma_2(p)}$,
    and $q \subq \flt$,
    then $(\W \extends \W', \gamma \extends (x \mapsto (v_1, v_2))) \in \UG{(\G, x: \ty[p]{T})^{q, x}}$.
\end{lemma}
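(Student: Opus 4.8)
The plan is to derive this from the already-established Extension~1 (\lemref{lem:envt_extend}) rather than unfolding the definition of $\UG{\cdot}$ from scratch. Comparing the two statements, Extension~2 is exactly Extension~1 specialized to observation $q$ and binding qualifier $p$, except that (i) the incoming pair $(\W,\gamma)$ is presented at the looser observation $\flt \supseteq q$ and (ii) the value relation and the conclusion live in the extended world $\W \extends \W'$ rather than $\W$. Hence the substantive part---the overlap/separation bookkeeping for the enlarged substitution $\gamma \extends (x \mapsto (v_1, v_2))$, including the cases where the bound variable $x$ occurs in the tested qualifiers---is already packaged inside Extension~1, and I only need to reshape the hypotheses into the form Extension~1 expects.

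Concretely, I would first \emph{tighten} the observation: since $q \subq \flt$ is a hypothesis, \lemref{lem:envt_tighten} applied to $(\W, \gamma) \in \UG{\G[\flt]}$ yields $(\W, \gamma) \in \UG{\G[q]}$. Next I would \emph{promote} this to the extended world, establishing $(\W \extends \W', \gamma) \in \UG{\G[q]}$. This monotonicity of the context interpretation under world extension is a routine induction on $\G$: the empty case is immediate, and in the inductive case the recursive prefix membership is handled by the induction hypothesis, the value-relation component is closed under world extension by \lemref{lem:valt_store_extend}, and the overlap side conditions mention only the quantities $\dvarslocs{\gamma_i(\cdot)}$, which are independent of the world.

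With $(\W \extends \W', \gamma) \in \UG{\G[q]}$ in hand, I would apply Extension~1 (\lemref{lem:envt_extend}) instantiated with world $\W \extends \W'$, observation $q$, and binding qualifier $p$. Its remaining premises match the hypotheses of Extension~2 verbatim: the value premise $\DVTC{\W \extends \W'}{v_1}{v_2} \in \DGMV{T}$ is given, and the two overlap premises $\dvalocss{\gamma_i(q)} \overlap \dvalocss{v_i} \subq \dvalocss{\gamma_i(p)}$ are exactly the overlap hypotheses here. The scoping premise $p \subq \DOM(\G)$ is implicit, since $\gamma_i(p)$ is only meaningful when $\FV(p) \subq \DOM(\gamma_i) = \DOM(\G)$ (cf.~\lemref{lem:wf_env_type}). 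The conclusion of Extension~1 is then precisely $(\W \extends \W', \gamma \extends (x \mapsto (v_1, v_2))) \in \UG{(\G, x: \ty[p]{T})^{q, x}}$, which is the goal.

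I expect the only genuine subtlety to be the world-extension monotonicity step, and even that is routine once \lemref{lem:valt_store_extend} is invoked; everything else is a direct specialization. The alternative of arguing straight from the definition of $\UG{\cdot}$ would instead force an explicit case split on whether $x$ appears in the two test qualifiers of the overlap condition, discharging the $x$-cases by hand using the overlap hypotheses---but this is exactly the reasoning Extension~1 already carries out, so routing through it avoids duplicating that analysis.
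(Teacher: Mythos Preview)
Your proposal is correct and uses the same underlying ingredients as the paper's proof---\lemref{lem:envt_tighten}, \lemref{lem:valt_store_extend}, and (via Extension~1) \lemref{lem:valt_extend} together with the typing-context interpretation. The only organizational difference is that the paper unfolds the definition of $\UG{\cdot}$ directly and cites those lemmas in one line, whereas you explicitly factor through \lemref{lem:envt_extend} after first establishing the world-monotonicity of $\UG{\G[q]}$ as a separate intermediate step. Your factoring is arguably cleaner since it avoids re-doing the overlap case analysis already packaged in Extension~1; the paper's version is terser but effectively inlines that same argument.
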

\begin{proof} By typing context interpretation, \lemref{lem:valt_extend}, \lemref{lem:valt_store_extend} and \lemref{lem:envt_tighten}.
\end{proof}

\subsubsection{Reachability Qualifiers}

\begin{lemma}\label{lem:valq_omega} For all $\sigma$, $p$ and $q$, $\dvalq{\sigma}{\omega}{\dvalocss{p \overlap q}}$.
\end{lemma}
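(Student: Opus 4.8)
The plan is to reduce the goal to a trivial set inclusion by unfolding the two relevant definitions from \figref{fig:direct_reachhability}. First I would unfold the reachability predicate $\dvalq{\sigma}{\omega}{\dvalocss{p \overlap q}}$ according to its definition, which turns the goal into the set inclusion $(\DOM(\sigma) \overlap \dvalocss{\omega}) \subseteq \dvalocss{p \overlap q}$.

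The key step is then to observe that $\omega$ is the allocation capability, a base-type value of type $\Type{Alloc}$, and hence by the interpretation of value reachability it reaches no store locations, i.e., $\dvalocss{\omega} = \emptyset$. Consequently the left-hand side of the inclusion collapses: $\DOM(\sigma) \overlap \dvalocss{\omega} = \DOM(\sigma) \overlap \emptyset = \emptyset$. Since the empty set is a subset of every set, $\emptyset \subseteq \dvalocss{p \overlap q}$ holds unconditionally, independently of $\sigma$, $p$, and $q$. This discharges the goal.

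There is no real obstacle here; the statement holds definitionally and the argument is a one-line unfolding. The only point worth flagging is notational: the surface syntax $\text{locs}(\cdot)$ is overloaded for both value reachability $\dvalocss{\cdot}$ and qualifier reachability $\dvarslocs{\cdot}$, so in $\dvalocss{p \overlap q}$ the argument is in fact a qualifier and the qualifier interpretation is intended. Its precise value is immaterial, however, because the inclusion's left-hand side is already empty. I expect this lemma is recorded not for its difficulty but because it serves as a convenient base case when reasoning about the reachability of the allocation capability inside the function-type and term interpretations of \figref{fig:direct_binary}.
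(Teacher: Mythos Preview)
Your proposal is correct and matches the paper's approach exactly: the paper simply says ``Immediate by the definition in \figref{fig:direct_reachhability},'' and your unfolding of $\dvalq{\sigma}{\omega}{\dvalocss{p \overlap q}}$ together with $\dvalocss{\omega} = \emptyset$ is precisely what that one-liner amounts to.
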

\begin{proof} Immediate by the definition in \figref{fig:direct_reachhability}.
\end{proof}

\begin{lemma}\label{lem:valq_unit}For all $\sigma$, $p$ and $q$, $\dvalq{\sigma}{\tunit}{\dvalocss{p \overlap q}}$.
\end{lemma}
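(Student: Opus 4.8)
The plan is to unfold the predicate $\dvalq{\sigma}{\tunit}{\dvalocss{p \overlap q}}$ directly against its definition in \figref{fig:direct_reachhability}, exactly mirroring the proof of the immediately preceding \lemref{lem:valq_omega}. By definition, $\dvalq{\sigma}{\tunit}{L}$ abbreviates $(\DOM(\sigma) \overlap \dvalocss{\tunit}) \subq L$, so the goal reduces to establishing $(\DOM(\sigma) \overlap \dvalocss{\tunit}) \subq \dvalocss{p \overlap q}$ for arbitrary $\sigma$, $p$, and $q$.

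The key observation is that the unit value reaches no store locations: the interpretation of value reachability in \figref{fig:direct_reachhability} stipulates $\dvalocss{\tunit} = \emptyset$. Hence $\DOM(\sigma) \overlap \dvalocss{\tunit} = \DOM(\sigma) \overlap \emptyset = \emptyset$, and the empty set is trivially a subset of $\dvalocss{p \overlap q}$ (indeed of any set), which discharges the goal. There is no real obstacle here — the statement is immediate from the definitions. The only thing worth noting is that this is the $\tunit$-analogue of \lemref{lem:valq_omega}, whose one-line argument (likewise resting on an empty reachability set, there $\dvalocss{\omega} = \emptyset$) transfers verbatim; indeed the same reasoning applies uniformly to every constant $c$ of base type, since $\dvalocss{c} = \emptyset$ as well.
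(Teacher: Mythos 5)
Your proof is correct and takes exactly the route the paper intends: the paper's proof is just ``Immediate by the definition in Fig.~\ref{fig:direct_reachhability}'', and your unfolding — $\dvalocss{\tunit} = \emptyset$, hence $\DOM(\sigma) \overlap \dvalocss{\tunit} = \emptyset \subq \dvalocss{p \overlap q}$ — is precisely the one-line argument being left implicit. Your observation that the same reasoning covers \lemref{lem:valq_omega} and all base-type constants is also accurate.
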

\begin{proof} Immediate by the definition in \figref{fig:direct_reachhability}.
\end{proof}

\begin{lemma}\label{lem:valq_bool} For all $\sigma$, $b$, $p$ and $q$, $\dvalq{\sigma}{b}{\dvalocss{p \overlap q}}$, where $b$ is \text{true} or \text{false}.
\end{lemma}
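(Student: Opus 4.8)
The plan is to unfold the reachability predicate and exploit that Boolean literals reach no store locations. By the definition in \figref{fig:direct_reachhability}, the predicate $\dvalq{\sigma}{b}{\dvalocss{p \overlap q}}$ abbreviates the inclusion $(\DOM(\sigma) \overlap \dvalocss{b}) \subq \dvalocss{p \overlap q}$, so the entire statement reduces to bounding the set $\DOM(\sigma) \overlap \dvalocss{b}$.

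First I would observe that $b$ is either $\text{true}$ or $\text{false}$, and both are constants $c$, hence covered by the clause $\dvalocss{c} = \emptyset$ of the value-reachability interpretation. Thus $\dvalocss{b} = \emptyset$, and consequently $\DOM(\sigma) \overlap \dvalocss{b} = \DOM(\sigma) \overlap \emptyset = \emptyset$. Since the empty set is a subset of every set, in particular of $\dvalocss{p \overlap q}$, the required inclusion holds for arbitrary $\sigma$, $p$, and $q$.

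This argument is structurally identical to the immediately preceding lemmas \lemref{lem:valq_omega} and \lemref{lem:valq_unit}, which handle the allocation capability $\omega$ and $\tunit$ respectively by the very same empty-reachability observation; only the particular constant plugged in changes. Accordingly, there is essentially no obstacle to anticipate: the whole content of the lemma is that Boolean literals carry no locations, so the bound is satisfied vacuously and independently of the store and the qualifiers. I would expect this statement to serve later purely as a routine building block, for instance when discharging the reachability side-conditions attached to the value interpretation $\DGMV{\Type{Bool}}$ or in the compatibility lemma for Boolean constants in the logical relation (\secref{sec:direct_cls}).
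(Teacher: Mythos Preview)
Your proof is correct and takes essentially the same approach as the paper, which simply states ``Immediate by the definition in \figref{fig:direct_reachhability}.'' You have merely made explicit the unfolding that the paper leaves implicit: Boolean constants have empty value reachability, so the intersection with $\DOM(\sigma)$ is empty and the inclusion holds trivially.
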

\begin{proof} Immediate by the definition in \figref{fig:direct_reachhability}.
\end{proof}

\begin{lemma}\label{lem:valq_fresh} For all $\sigma$, $\ell$, $p$ and $q$, $\dvalq{\sigma}{\ell}{\dvalocss{p \overlap q}}$, where $\ell \not\in \DOM(\sigma)$.
\end{lemma}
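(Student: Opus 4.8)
The plan is to unfold the definitions from \figref{fig:direct_reachhability} and observe that the freshness hypothesis makes the relevant intersection empty, in direct analogy with the immediate proofs of \lemref{lem:valq_omega}, \lemref{lem:valq_unit}, and \lemref{lem:valq_bool}. Concretely, by the definition of the reachability predicate, the goal $\dvalq{\sigma}{\ell}{\dvalocss{p \overlap q}}$ expands to the set inclusion $(\DOM(\sigma) \overlap \dvalocss{\ell}) \subseteq \dvalocss{p \overlap q}$, and by the interpretation of value reachability we have $\dvalocss{\ell} = \{\ell\}$.

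First I would use the hypothesis $\ell \notin \DOM(\sigma)$ to conclude that $\DOM(\sigma) \overlap \{\ell\} = \emptyset$, since a location outside the store's domain cannot occur in the intersection with $\{\ell\}$. Then, since the empty set is a subset of every set, in particular $\emptyset \subseteq \dvalocss{p \overlap q}$, the required inclusion holds, which establishes the predicate.

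There is no real obstacle here. Unlike the three preceding lemmas, where the value reaches no locations at all, the location $\ell$ does reach the singleton $\{\ell\}$; but the freshness condition $\ell \notin \DOM(\sigma)$ ensures this singleton is disjoint from $\DOM(\sigma)$, so the argument again reduces to the same trivial emptiness observation. The whole proof is therefore immediate by the definitions in \figref{fig:direct_reachhability}.
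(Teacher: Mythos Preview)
Your proposal is correct and takes essentially the same approach as the paper, which simply states the result is immediate by the definitions in \figref{fig:direct_reachhability}. You have merely spelled out the unfolding that the paper leaves implicit.
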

\begin{proof} Immediate by the definition in \figref{fig:direct_reachhability}.
\end{proof}

\begin{lemma}\label{lem:valq_abs} If $(\G, x:\ty[p]{T})^{q,x} \ts t: \ty[r]{U}$, then
    for all $(\W, \gamma) \in \UG{(\G, x:\ty[p]{T})^{q,x}}$,
    $\dvalq{\DOM_1(\W)}{\gamma_1(\lambda x. t)}{\dvalocss{\gamma_1(q)}}$ and
    $\dvalq{\DOM_2(\W)}{\gamma_2(\lambda x. t)}{\dvalocss{\gamma_2(q)}}$.
\end{lemma}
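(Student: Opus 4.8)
The plan is to establish the sharper location inclusion $\dvalocss{\gamma_1(\lambda x. t)} \subq \dvalocss{\gamma_1(q)}$ (and the mirror image for $\gamma_2$), from which the stated predicate is immediate: unfolding the reachability predicate of \figref{fig:direct_reachhability}, the goal $\dvalq{\DOM_1(\W)}{\gamma_1(\lambda x. t)}{\dvalocss{\gamma_1(q)}}$ only demands $\DOM_1(\W)\overlap\dvalocss{\gamma_1(\lambda x. t)}\subq\dvalocss{\gamma_1(q)}$, which is weaker than the unrestricted inclusion. By symmetry I would spell out only the first component and then repeat verbatim with $\gamma_2$ and $\DOM_2(\W)$.

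First I would analyze $\dvalocss{\gamma_1(\lambda x. t)}$. Since $x$ is bound in $\lambda x. t$, the closing substitution $\gamma_1$ acts only on $\FV(t)\setminus\{x\}$, so $\gamma_1(\lambda x. t)$ is (up to capture-avoidance) $\lambda x.\gamma_1(t)$ with the value assigned to $x$ discarded by the binder. By the definition of $\dvalocss{\cdot}$ on $\lambda$-values, the reachable locations of this closure are exactly the locations among its free names, i.e. those introduced when substituting each $y\in\FV(t)\setminus\{x\}$ by $\gamma_1(y)$; hence $\dvalocss{\gamma_1(\lambda x. t)}\subq\bigcup_{y\in\FV(t)\setminus\{x\}}\dvalocss{\gamma_1(y)}$. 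Here I use that the open term $t$ carries no free locations of its own, as it is typed under the variable context $\G, x$ with no store typing.

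Next I would bound the free variables of the body. From the premise $(\G, x:\ty[p]{T})^{q,x}\ts t:\ty[r]{U}$ and the filter discipline of term typing — every referenced name must lie in the observation (\rulename{t-var}, \rulename{t-loc}), while binders only extend the filter with their bound name — a routine induction on the typing derivation gives $\FV(t)\subq q,x$, i.e. $\FV(t)\setminus\{x\}\subq q$. This is the same fact underlying the remark after \lemref{lem:values_tight} that a function's observation tracks the free variables of its body, and it is the only place the typing premise is used. Combining with the previous step, $\bigcup_{y\in\FV(t)\setminus\{x\}}\dvalocss{\gamma_1(y)}\subq\bigcup_{y\in q}\dvalocss{\gamma_1(y)}=\dvalocss{\gamma_1(q)}$, where the final equality unfolds the action of the relational substitution on the qualifier $q$ followed by location collection. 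This proves $\dvalocss{\gamma_1(\lambda x. t)}\subq\dvalocss{\gamma_1(q)}$.

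The main obstacle is purely the bookkeeping around substitution and the precise meaning of $\dvalocss{\gamma_1(q)}$: I must verify that (i) the bound variable $x$ contributes nothing, since its $\gamma$-value is dropped by the abstraction, and (ii) $\dvalocss{\gamma_1(q)}$ genuinely equals the union of $\dvalocss{\gamma_1(y)}$ over $y\in q$ (together with any literal locations of $q$), so that the containment above is exactly what is needed. The $\FV(t)\setminus\{x\}\subq q$ step is conceptually the crux but standard; everything else is set algebra. As a sanity check, the well-formedness lemmas \lemref{lem:valt_wf} and \lemref{lem:env_type_store_wf} already give $\dvalocss{\gamma_1(q)}\subq\DOM_1(\W)$, consistent with (though not required by) the inclusion established here.
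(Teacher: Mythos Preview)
Your proposal is correct and follows essentially the same approach as the paper: the paper's proof is a one-line sketch stating that ``the syntactic structure ensures $q$ contains all the free variables in $t$, so after substituting free variables with values, the conclusions hold,'' which is precisely the argument you spell out in detail via the filter discipline ($\FV(t)\setminus\{x\}\subq q$) and the unfolding of $\dvalocss{\cdot}$ on $\lambda$-values. Your version is simply a more explicit rendering of the same idea.
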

\begin{proof} By the syntactic structure that ensures $q$ contains all the free variables in $t$,
    it is obvious that after substituting free variables with values, the conclusions hold.
\end{proof}

\subsubsection{Effects}
Here we introduce several notations to streamline the presentation.

Let $\sigma$ be a store. We write $\sigma = \sigma_1 \ast \sigma_2$ (for some $\sigma_1$ and $\sigma_2$) to denote that store $\sigma$  can be split into two disjoint parts $\sigma_1$ and $\sigma_2$.

Let $\sigma$ be a store and $L$ be a set of locations.
We write $\restrict{\sigma}{L}$ to mean localizing a partial store with respect to $L$,
meaning
$\DOM(\restrict{\sigma}{L}) = \DOM(\sigma) \overlap L \: \land \: \forall \: \ell \in \DOM(\restrict{\sigma}{L}). \restrict{\sigma}{L}(\ell) = \sigma(\ell)$.

\begin{lemma}[Read/Write Effects]\label{lem:write} If
    $\ell \in \DOM(\sigma)$,
    and
    $\dvalq{\sigma}{\ell}{\dvalocss{p \overlap q}}$,
    then
    $\DEPS{\sigma}{\dvalocss{q}}{\sigma[\ell \mapsto v]}$.
\end{lemma}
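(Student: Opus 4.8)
The plan is to unfold both the reachability predicate in the hypothesis and the store-agreement predicate $\DEPS{}{}{}$ in the goal, reducing everything to a single membership fact about $\ell$. First I would record the two relevant definitions from \Cref{fig:direct_reachhability} and \Cref{fig:direct_binary}: the hypothesis $\dvalq{\sigma}{\ell}{\dvalocss{p \overlap q}}$ abbreviates $(\DOM(\sigma)\overlap\dvalocss{\ell})\subseteq\dvalocss{p\overlap q}$, while the goal $\DEPS{\sigma}{\dvalocss{q}}{\sigma[\ell\mapsto v]}$ abbreviates $\forall l\in\DOM(\sigma).\ \sigma(l)=\sigma[\ell\mapsto v](l)\lor l\in\dvalocss{q}$.

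The key observation extracts $\ell\in\dvalocss{q}$ from the hypothesis. Since $\dvalocss{\ell}=\{\ell\}$ and $\ell\in\DOM(\sigma)$, the set $\DOM(\sigma)\overlap\dvalocss{\ell}$ is exactly $\{\ell\}$, so the hypothesis yields $\ell\in\dvalocss{p\overlap q}$, i.e.\ $\ell$ is a location lying in $p\cap q$. Because $p\cap q\subseteq q$ and $\dvalocss{\cdot}$ is monotone with respect to set inclusion (it merely selects the locations occurring in its argument), we obtain $\dvalocss{p\overlap q}\subseteq\dvalocss{q}$, and hence $\ell\in\dvalocss{q}$.

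For the goal I would fix an arbitrary $l\in\DOM(\sigma)$ and case split on whether $l=\ell$. If $l\neq\ell$, then by the definition of the point update $\sigma[\ell\mapsto v]$ agrees with $\sigma$ at $l$, so the left disjunct $\sigma(l)=\sigma[\ell\mapsto v](l)$ holds. If $l=\ell$, the right disjunct $l\in\dvalocss{q}$ holds by the membership fact established above. This discharges the universally quantified goal.

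There is essentially no hard part: the argument is a mechanical unfolding of $\dvalq{}{}{}$ and $\DEPS{}{}{}$ together with $\dvalocss{\ell}=\{\ell\}$. The only point requiring a moment's care is the reduction $\DOM(\sigma)\overlap\dvalocss{\ell}=\{\ell\}$, which relies on the hypothesis $\ell\in\DOM(\sigma)$; without it the intersection could be empty and no information about $\ell$ would be recoverable. Note also that the qualifier $p$ plays no role beyond supplying the inclusion $p\cap q\subseteq q$, reflecting the intended reading that a write whose footprint overlaps $q$ can modify only locations already named by $q$.
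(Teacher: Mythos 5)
Your proof is correct and follows essentially the same route as the paper's, which simply says ``by \lemref{lem:valq_fresh} and interpretation of effects'': the whole content is unfolding $\dvalq{\sigma}{\ell}{\dvalocss{p\overlap q}}$ to get $\ell\in\dvalocss{q}$ and then discharging the definition of $\DEPS{\sigma}{\dvalocss{q}}{\sigma[\ell\mapsto v]}$ by a case split on $l=\ell$. Your version is just the fully spelled-out form of that one-line argument (and, as you implicitly noticed, the cited freshness lemma is not actually needed for this direction).
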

\begin{proof} By \lemref{lem:valq_fresh} and interpretation of effects.
\end{proof}

\begin{lemma}[No Effects]\label{lem:noeffects}
    $\DEPS{\sigma}{\PURE}{\sigma}$.
\end{lemma}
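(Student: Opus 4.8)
The plan is to unfold the definition of the store-preservation predicate $\hookrightarrow$ supplied in \figref{fig:direct_binary}, where $\DEPS{\sigma}{\EPS}{\sigma'}$ abbreviates the proposition that for every location $l \in \DOM(\sigma)$, either $\sigma(l) = \sigma'(l)$ or $l \in \EPS$. For the statement at hand I would instantiate this definition with $\sigma' := \sigma$ and $\EPS := \PURE$, reducing the goal to: for all $l \in \DOM(\sigma)$, $\sigma(l) = \sigma(l)$ or $l \in \PURE$.

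Next I would fix an arbitrary $l \in \DOM(\sigma)$ and discharge the left disjunct directly by reflexivity of equality, which makes the whole disjunction true independently of whether $l \in \PURE$. In particular, no facts about the empty effect $\PURE$ are required, so the proof never inspects the effect component at all, and the lemma follows immediately.

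I do not anticipate any obstacle: the only content of the lemma is that a computation with no effects leaves the store pointwise unchanged, which here is literally witnessed by the identity relation $\sigma = \sigma$. The value of the lemma is purely structural — it is the base case reused in the compatibility lemmas (\secref{sec:direct_cls}) for the effect-free syntactic forms (constants, variables, abstractions), where the term interpretation $\DGMt{\varphi}{\ty[q]{T}\ \EPS}$ demands an effect-preservation witness $\DEPS{\sigma}{\dvalocss{\gamma_1(\EPS)}}{\sigma'}$ and one must supply it for $\EPS = \PURE$.
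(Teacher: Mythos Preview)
Your proposal is correct and matches the paper's approach exactly: the paper's proof simply reads ``Immediate by the definition of effects,'' and your unfolding of $\DEPS{\sigma}{\EPS}{\sigma'}$ followed by discharging the left disjunct via reflexivity is precisely that immediacy spelled out.
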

\begin{proof} Immediate by the definition of effects.
\end{proof}

\begin{lemma}[SubEffects]\label{lem:subeffects}
    If $\dvalocss{\EPS[1]} \subseteq \dvalocss{\EPS[2]}$,
    and $\DEPS{\sigma}{\dvalocss{\EPS[1]}}{\sigma'}$,
    then $\DEPS{\sigma}{\dvalocss{\EPS[2]}}{\sigma'}$.
\end{lemma}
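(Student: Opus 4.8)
The plan is to unfold the definition of the frame predicate $\DEPS{\sigma}{\_}{\sigma'}$ and observe that it occurs \emph{positively} in its location-set argument, so the result is pure monotonicity. Recall from the binary interpretation (\figref{fig:direct_binary}) that $\DEPS{\sigma}{E}{\sigma'}$ holds exactly when for every $l \in \DOM(\sigma)$ we have $\sigma(l) = \sigma'(l) \lor l \in E$. Since membership in $E$ appears only as one disjunct of the per-location condition, enlarging $E$ can never falsify the predicate.

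First I would fix an arbitrary $l \in \DOM(\sigma)$ and apply the hypothesis $\DEPS{\sigma}{\dvalocss{\EPS[1]}}{\sigma'}$ to obtain the disjunction $\sigma(l) = \sigma'(l) \lor l \in \dvalocss{\EPS[1]}$. In the first case, the goal disjunct $\sigma(l) = \sigma'(l)$ holds immediately. In the second case, I would invoke the assumption $\dvalocss{\EPS[1]} \subseteq \dvalocss{\EPS[2]}$ to conclude $l \in \dvalocss{\EPS[2]}$, discharging the other disjunct of the goal. As $l$ was arbitrary, this establishes $\DEPS{\sigma}{\dvalocss{\EPS[2]}}{\sigma'}$, which is what we want.

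There is essentially no obstacle here: the statement is just monotonicity of the effect-frame relation in its location-set index, proved by unfolding the definition and a two-way case split on the disjunction. The only mild care needed is to keep the syntactic effects $\EPS[1],\EPS[2]$ distinct from their location interpretations $\dvalocss{\EPS[1]},\dvalocss{\EPS[2]}$; since both the premise and the conclusion are already phrased in terms of the latter, no reasoning about the $\text{locs}$ operator itself is required, and the subset inclusion is used verbatim.
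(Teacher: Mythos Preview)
Your proposal is correct and matches the paper's approach: the paper's proof simply says ``By the interpretation of effects,'' which is exactly the unfolding-and-monotonicity argument you spell out in detail. Nothing more is needed.
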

\begin{proof} By the interpretation of effects.
\end{proof}

\begin{lemma}[Effects Composition]\label{lem:eff_composite}
    If $\DEPS{\sigma}{\dvalocss{\qsat{\EPS[1]}}}{\sigma'}$,
    and $\DEPS{\sigma'}{\dvalocss{\qsat{(\EPS[2] \EFFSEQ \EPS[3])}}}{\sigma''}$,
    and $\qsat{\EPS[2]}\overlap \qsat{\EPS[3]} =  \emptyset$,
    and $\dvalocss{\qsat{\EPS[2]}} \subq \DOM(\sigma)$,
    and $\dvalocss{\qsat{\EPS[3]}}  \overlap \DOM(\sigma) = \emptyset$.
    then $\DEPS{\sigma}{\dvalocss{\qsat{(\EPS[1] \EFFSEQ \EPS[2])}}}{\sigma''}$
\end{lemma}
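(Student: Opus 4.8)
The plan is to unfold the effect relation to its pointwise definition from \Cref{fig:direct_binary}, namely that $\DEPS{\sigma}{\dvalocss{\qsat{\EPS}}}{\sigma'}$ means every $l\in\DOM(\sigma)$ satisfies $\sigma(l)=\sigma'(l)$ or $l\in\dvalocss{\qsat{\EPS}}$, and then to verify the goal location-by-location. Since the three stores arise from CBV reductions, which only allocate, they grow monotonically, giving $\DOM(\sigma)\subq\DOM(\sigma')\subq\DOM(\sigma'')$; this guarantees that both hypotheses can be instantiated at any $l\in\DOM(\sigma)$. Two algebraic facts about the composite operator $\dvalocss{\qsat{\cdot}}$ drive the argument: it is monotone, and it distributes over set union. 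Distributivity holds because $\EFFSEQ$ is union, saturation distributes over union by its definition in \Cref{fig:saturation_overlap}, and $\dvalocss{\cdot}$ is merely a filter retaining locations (\Cref{fig:direct_reachhability}); in particular $\dvalocss{\qsat{(\EPS[2]\EFFSEQ\EPS[3])}}=\dvalocss{\qsat{\EPS[2]}}\cup\dvalocss{\qsat{\EPS[3]}}$.

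First I would fix an arbitrary $l\in\DOM(\sigma)$ and split on whether $l\in\dvalocss{\qsat{\EPS[1]}}$. If it is, then monotonicity together with $\EPS[1]\subq\EPS[1]\EFFSEQ\EPS[2]$ places $l$ in $\dvalocss{\qsat{(\EPS[1]\EFFSEQ\EPS[2])}}$, discharging the goal on its right disjunct. Otherwise the first hypothesis yields $\sigma(l)=\sigma'(l)$, and I would then instantiate the second hypothesis at $l$ (legitimate since $l\in\DOM(\sigma')$). If it reports $\sigma'(l)=\sigma''(l)$, transitivity gives $\sigma(l)=\sigma''(l)$, the left disjunct. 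If instead $l\in\dvalocss{\qsat{(\EPS[2]\EFFSEQ\EPS[3])}}$, distributivity splits this into $l\in\dvalocss{\qsat{\EPS[2]}}$, which again lands in the target set by monotonicity, or $l\in\dvalocss{\qsat{\EPS[3]}}$, which is impossible because the freshness hypothesis $\dvalocss{\qsat{\EPS[3]}}\overlap\DOM(\sigma)=\emptyset$ contradicts $l\in\DOM(\sigma)$. This exhausts all cases.

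The conceptual crux --- and the step deserving the most care --- is seeing why the fresh component $\EPS[3]$ drops out of the composed effect when measured against $\sigma$. The freshness hypothesis $\dvalocss{\qsat{\EPS[3]}}\overlap\DOM(\sigma)=\emptyset$ is precisely what makes this work: any location reached only through $\EPS[3]$ was allocated after $\sigma$, so it is invisible from $\sigma$'s domain and need not be recorded in the output effect $\EPS[1]\EFFSEQ\EPS[2]$. I would remark that the remaining two hypotheses, $\qsat{\EPS[2]}\overlap\qsat{\EPS[3]}=\emptyset$ and $\dvalocss{\qsat{\EPS[2]}}\subq\DOM(\sigma)$, are not actually load-bearing for this direction; they merely document the intended old-versus-fresh partition at the call sites. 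The proof therefore rests on the first two hypotheses, the freshness condition, monotone store growth, and the monotonicity and union-distributivity of $\dvalocss{\qsat{\cdot}}$.
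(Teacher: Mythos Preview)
Your proposal is correct and is precisely the pointwise unfolding that the paper's one-line proof (``By the interpretation of effects'') leaves implicit. One minor remark: you justify $l\in\DOM(\sigma')$ by appealing to monotone store growth under CBV reduction, but the lemma as stated does not mention reductions; the needed fact already follows internally, since in the branch where you use it you have $\sigma(l)=\sigma'(l)$ from the first hypothesis, which forces $l\in\DOM(\sigma')$.
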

\begin{proof} By the interpretation of effects.
\end{proof}

\begin{lemma}[Framing]\label{lem:framing} If $\DEPS{\sigma}{\dvalocss{\EPS}}{\sigma'}$,
    then $ \restricts{\sigma}{(\DOM(\sigma) - \dvalocss{\qsat{\EPS}})} = \restricts{\sigma'}{(\DOM(\sigma) - \dvalocss{\qsat{\EPS}})}$
\end{lemma}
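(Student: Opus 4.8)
The plan is to unfold both the restriction operator $\restrict{\_}{\_}$ and the effect predicate $\DEPS{\_}{\_}{\_}$, and then argue that the two restricted stores have the same domain and agree pointwise on it. Write $L := \DOM(\sigma) - \dvalocss{\qsat{\EPS}}$ for the frame. By the definition of the restriction operator, $\DOM(\restricts{\sigma}{L}) = \DOM(\sigma) \overlap L = L$, since $L \subq \DOM(\sigma)$. It therefore suffices to establish (i) $L \subq \DOM(\sigma')$, so that likewise $\DOM(\restricts{\sigma'}{L}) = \DOM(\sigma') \overlap L = L$, and (ii) $\sigma(l) = \sigma'(l)$ for every $l \in L$; then both restricted stores have domain $L$ and agree on each $l \in L$, hence are equal.

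Both (i) and (ii) will follow from a single observation about the frame. The key step is monotonicity of location extraction under reachability saturation, namely $\dvalocss{\EPS} \subq \dvalocss{\qsat{\EPS}}$, which holds because $\EPS \subq \qsat{\EPS}$ by the definition of saturation in \figref{fig:saturation_overlap}. Consequently, if $l \in L$ then $l \notin \dvalocss{\qsat{\EPS}}$, and a fortiori $l \notin \dvalocss{\EPS}$.

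Now fix $l \in L$. Since $l \in \DOM(\sigma)$, unfolding the hypothesis $\DEPS{\sigma}{\dvalocss{\EPS}}{\sigma'}$ yields $\sigma(l) = \sigma'(l)$ or $l \in \dvalocss{\EPS}$. The second disjunct is excluded by the observation above, so $\sigma(l) = \sigma'(l)$; this proves (ii) and, because $\sigma'(l)$ must then be defined, also forces $l \in \DOM(\sigma')$, giving (i). Combining (i), (ii), and the domain computation for $\restricts{\sigma}{L}$ closes the argument.

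I would expect no serious obstacle here; the only subtlety worth flagging is that the effect set appearing in the hypothesis ($\dvalocss{\EPS}$) is the \emph{unsaturated} footprint, whereas the set removed in the conclusion ($\dvalocss{\qsat{\EPS}}$) is its saturation. The proof hinges precisely on the inclusion $\dvalocss{\EPS} \subq \dvalocss{\qsat{\EPS}}$ rather than on an equality, so that removing the larger, saturated footprint guarantees the remaining locations fall under the ``unchanged'' disjunct. Everything else is routine bookkeeping with the definition of the restriction operator.
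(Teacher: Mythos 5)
Your proof is correct and follows essentially the same route as the paper's, which simply appeals to the interpretation of effects (locations outside $\qsat{\EPS}$ must be preserved); you have merely unfolded the definitions of $\DEPS{\sigma}{\dvalocss{\EPS}}{\sigma'}$ and the restriction operator explicitly. The one point you flag — that the hypothesis mentions the unsaturated footprint $\dvalocss{\EPS}$ while the conclusion removes the saturated $\dvalocss{\qsat{\EPS}}$, so only the inclusion $\dvalocss{\EPS}\subq\dvalocss{\qsat{\EPS}}$ is needed — is accurate and is the only non-routine observation in the argument.
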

\begin{proof} By the interpretation of observable effects: the set of locations that may be written in the reduction of $t$ must be in $\EPS$.
    Thus, the values stored in the locations $\sigma$, but are separate from $\qsat{\EPS}$ must be preserved.
\end{proof}

\begin{lemma}[Effect Separation]\label{lem:eff_sep}
    If $\sigma \mid t_1 \mredv{\ast} \sigma' \mid v_1$,
    and $\sigma \mid t_2 \mredv{\ast} \sigma'' \mid v_2$,
    and $\DEPS{\sigma}{\dvalocss{\EPS[1]}}{\sigma'}$,
    and $\DEPS{\sigma}{\dvalocss{\EPS[2]}}{\sigma''}$,
    and $\qsat{\EPS[1]} \overlap \qsat{\EPS[2]} = \qbot$,
    then
    $ \restricts{\sigma'}{(\DOM(\sigma) - \dvalocss{\qsat{\EPS[1]}} - \dvalocss{\qsat{\EPS[2]}})}  = \restricts{\sigma''}{(\DOM(\sigma) - \dvalocss{\qsat{\EPS[1]}} - \dvalocss{\qsat{\EPS[2]}})}$.
\end{lemma}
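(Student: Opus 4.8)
The plan is to reduce the statement to two applications of the Framing lemma (\lemref{lem:framing}) followed by a transitivity argument on store restrictions. Write $L := \DOM(\sigma) - \dvalocss{\qsat{\EPS[1]}} - \dvalocss{\qsat{\EPS[2]}}$ for the restriction set appearing in the goal. The key observation is that $L \subq \DOM(\sigma) - \dvalocss{\qsat{\EPS[1]}}$ and $L \subq \DOM(\sigma) - \dvalocss{\qsat{\EPS[2]}}$, so the individual frame guarantees for the two computations both specialize to $L$, and the common input store $\sigma$ acts as the bridge between the two outputs.

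Concretely, I would first apply \lemref{lem:framing} to the hypothesis $\DEPS{\sigma}{\dvalocss{\EPS[1]}}{\sigma'}$ to obtain $\restricts{\sigma}{(\DOM(\sigma) - \dvalocss{\qsat{\EPS[1]}})} = \restricts{\sigma'}{(\DOM(\sigma) - \dvalocss{\qsat{\EPS[1]}})}$, and likewise apply it to $\DEPS{\sigma}{\dvalocss{\EPS[2]}}{\sigma''}$ to obtain $\restricts{\sigma}{(\DOM(\sigma) - \dvalocss{\qsat{\EPS[2]}})} = \restricts{\sigma''}{(\DOM(\sigma) - \dvalocss{\qsat{\EPS[2]}})}$. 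Note that Framing already takes the unsaturated effect in its premise while concluding about the saturated one, so no separate appeal to \lemref{lem:subeffects} is needed to match the saturated sets in the goal. Since $L$ is contained in each of the two framing domains, I restrict both equalities further to $L$, using the elementary fact that agreement of partial stores on a set $A$ entails agreement on any $B \subq A$. This yields $\restricts{\sigma}{L} = \restricts{\sigma'}{L}$ and $\restricts{\sigma}{L} = \restricts{\sigma''}{L}$, and transitivity delivers $\restricts{\sigma'}{L} = \restricts{\sigma''}{L}$, which is exactly the claim.

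For this to read as a genuine equality of partial stores I would also check that the domains line up: $\DOM(\restricts{\sigma'}{L}) = \DOM(\sigma') \overlap L = L$ and similarly for $\sigma''$. This uses that call-by-value reduction only grows the store (locations are allocated but never deallocated), so $\DOM(\sigma) \subq \DOM(\sigma')$ and $\DOM(\sigma) \subq \DOM(\sigma'')$, whence $L \subq \DOM(\sigma)$ already sits inside both output domains; the fresh locations allocated by either computation lie outside $\DOM(\sigma)$, hence outside $L$, and are thus invisible to the restriction. I expect the argument to be essentially routine, the only mild subtlety being the bookkeeping of these restriction-set inclusions and the partial-function reasoning (equalities of $\restricts{\cdot}{L}$ are equalities of domain together with pointwise values). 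Worth remarking is that the disjointness premise $\qsat{\EPS[1]} \overlap \qsat{\EPS[2]} = \qbot$ is not actually consumed by this particular conclusion — on $L$ each output store already agrees with the common input $\sigma$ irrespective of overlap — so I would carry it merely to match the intended use of the lemma in the soundness proofs for the equational rules (\secref{sec:direct_equiv}), where separated effects are the operative hypothesis.
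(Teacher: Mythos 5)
Your proof is correct and follows essentially the same route as the paper's: both apply the Framing lemma (\lemref{lem:framing}) once to each reduction and conclude that $\sigma'$ and $\sigma''$ each agree with $\sigma$ on $L = \DOM(\sigma) - \dvalocss{\qsat{\EPS[1]}} - \dvalocss{\qsat{\EPS[2]}}$, the paper phrasing this via an explicit three-way disjoint splitting $\sigma = \sigma_1 \ast \sigma_2 \ast \sigma_3$ where you instead use monotonicity of restriction under subset inclusion and transitivity through $\sigma$. Your side remark that the disjointness premise $\qsat{\EPS[1]} \overlap \qsat{\EPS[2]} = \qbot$ is never actually consumed is accurate: the paper invokes it only to justify the three-way split, which your formulation bypasses.
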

\begin{proof}
    Let
    $\sigma_1 =  \restricts{\sigma}{\dvalocss{\qsat{\EPS[1]}}}$,
    and $\sigma_2 =  \restricts{\sigma}{\dvalocss{\qsat{\EPS[2]}}}$,
    and $\sigma_3 = \restricts{\sigma}{(\DOM(\sigma) - \dvalocss{\qsat{\EPS[1]}} - \dvalocss{\qsat{\EPS[2]}})}$,
    We know that $\sigma = \sigma_1 \ast \sigma_2 \ast \sigma_3$, as $\qsat{\EPS[1]}  \overlap \qsat{\EPS[2]} = \qbot$.

    By $\DEPS{\sigma}{\EPS[1]}{\sigma'}$
    and \lemref{lem:framing}, we know $\sigma' = \sigma_1' \ast \sigma_2 \ast \sigma_3 \ast \sigma_{fr1}$, for some $\sigma_1'$, where $\sigma_{fr1} \ast \sigma$.

    By $\DEPS{\sigma}{\EPS[2]}{\sigma''}$
    and \lemref{lem:framing}, we know $\sigma'' = \sigma_1 \ast \sigma_2' \ast \sigma_3 \ast \sigma_{fr2}$, for some $\sigma_2'$,  where $\sigma_{fr2} \ast \sigma$.

    Then $ \restricts{\sigma'}{(\DOM(\sigma) - \dvalocss{\qsat{\EPS[1]}} - \dvalocss{\qsat{\EPS[2]}})}  = \sigma_3$,
    and $ \restricts{\sigma''}{(\DOM(\sigma) - \dvalocss{\qsat{\EPS[1]}} - \dvalocss{\qsat{\EPS[2]}})}  = \sigma_3$.
\end{proof}

\subsubsection{Other auxiliary lemmas}

\begin{lemma}[Qualifier intersection distributes over locations]\label{lem:overlapping} Let $(\W, \gamma) \in \UG{\G[\flt]}$, and $\WFRS{\sigma_1}{\sigma_2}{\W}$.
    For all $\sigma_1'$, $\sigma_2'$ and $\W'$, such that $\WFRS{\sigma_1'}{\sigma_2'}{\W \extends \W'}$ if
    $\dvalq{\sigma_1}{v_{f1}}{\dvalocss{\gamma_1(q_f)}}$,
    and $\dvalq{\sigma_2}{v_{f2}}{\dvalocss{\gamma_2(q_f)}}$,
    and $\dvalq{\sigma_1'}{v_1}{\dvalocss{\gamma_1(p)}}$,
    and $\dvalq{\sigma_2'}{v_2}{\dvalocss{\gamma_2(p)}}$,
    and $\dvalocss{v_{f1}} \subq \DOM(\sigma_1')$,
    and $\dvalocss{v_{f2}} \subq \DOM(\sigma_2')$,
    then
    $ (\dvalocss{v_{f1}} \, {\overlap} \, \dvalocss{v_1}) \subq \dvarslocs{\gamma_1(\qsat{p} \overlap \qsat{q_f})} $
    and $ (\dvalocss{v_{f2}} \, {\overlap} \, \dvalocss{v_2}) \subq \dvarslocs{\gamma_2(\qsat{p} \overlap \qsat{q_f})} $.
\end{lemma}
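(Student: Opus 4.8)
The plan is to prove the two stated inclusions by a pointwise location-chasing argument. By the evident symmetry between the two projections, I will only establish the first inclusion $(\dvalocss{v_{f1}} \overlap \dvalocss{v_1}) \subq \dvarslocs{\gamma_1(\qsat{p} \overlap \qsat{q_f})}$; the second is identical with $\sigma_1,\sigma_1',v_{f1},v_1,\gamma_1$ replaced throughout by $\sigma_2,\sigma_2',v_{f2},v_2,\gamma_2$. Throughout I unfold the predicate $\dvalq{\sigma}{v}{L}$ to its definition $(\DOM(\sigma) \overlap \dvalocss{v}) \subq L$ from \figref{fig:direct_reachhability}, and I use that $\WFRS{\sigma_1}{\sigma_2}{\W}$ gives $\DOM_1(\W) = \DOM(\sigma_1)$.

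Fix a location $\ell \in \dvalocss{v_{f1}} \overlap \dvalocss{v_1}$; the goal is $\ell \in \dvarslocs{\gamma_1(\qsat{p} \overlap \qsat{q_f})}$. First I route $\ell$ through the argument qualifier $p$: since $\ell \in \dvalocss{v_{f1}} \subq \DOM(\sigma_1')$ by hypothesis and $\ell \in \dvalocss{v_1}$, the hypothesis $\dvalq{\sigma_1'}{v_1}{\dvarslocs{\gamma_1(p)}}$ yields $\ell \in \dvarslocs{\gamma_1(p)}$. Next -- the crucial step -- I pull $\ell$ back into the \emph{original} store: by the well-formedness of the context interpretation (\lemref{lem:env_type_store_wf}, using $p \subq \flt$) we have $\dvarslocs{\gamma_1(p)} \subq \DOM_1(\W) = \DOM(\sigma_1)$, hence $\ell \in \DOM(\sigma_1)$. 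Now that $\ell \in \DOM(\sigma_1) \overlap \dvalocss{v_{f1}}$, the hypothesis $\dvalq{\sigma_1}{v_{f1}}{\dvarslocs{\gamma_1(q_f)}}$ gives $\ell \in \dvarslocs{\gamma_1(q_f)}$.

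With $\ell$ now located in both $\dvarslocs{\gamma_1(p)}$ and $\dvarslocs{\gamma_1(q_f)}$, I lift to the saturated qualifiers using that saturation is extensive ($p \subq \qsat{p}$ and $q_f \subq \qsat{q_f}$) and that qualifier substitution and $\text{locs}$ are monotone, obtaining $\ell \in \dvarslocs{\gamma_1(\qsat{p})} \overlap \dvarslocs{\gamma_1(\qsat{q_f})}$. Finally I invoke the distributivity clause built into the definition of $\UG{\G[\flt]}$, namely $\dvarslocs{\gamma_1(\qsat{q})} \overlap \dvarslocs{\gamma_1(\qsat{q'})} \subq \dvarslocs{\gamma_1(\qsat{q} \overlap \qsat{q'})}$ for $q,q' \subq \flt$, instantiated at $q \coloneqq p$ and $q' \coloneqq q_f$, to conclude $\ell \in \dvarslocs{\gamma_1(\qsat{p} \overlap \qsat{q_f})}$, as required.

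The main obstacle is precisely the pull-back step: the hypotheses bound $\dvalocss{v_{f1}}$ only by the later store domain $\DOM(\sigma_1')$, whereas the reachability bound on $v_{f1}$ is stated relative to $\sigma_1$. The argument bridges this gap by first confining $\ell$ to $\dvarslocs{\gamma_1(p)}$ and then using \lemref{lem:env_type_store_wf} to confine \emph{that} set to $\DOM(\sigma_1)$. This step relies on the well-scopedness side conditions $p,q_f \subq \flt$, which hold whenever the lemma is applied (e.g.\ in the application compatibility lemma, by the observability invariant of \lemref{lem:has_type_filter}); I will state them explicitly as hypotheses. The remaining steps are routine unfolding, monotonicity, and the direct appeal to the distributivity already packaged into $\UG{\G[\flt]}$.
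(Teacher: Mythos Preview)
Your proof is correct and is precisely a detailed unpacking of the paper's own terse proof, which reads in full ``By typing context interpretation, \lemref{lem:env_type_store_wf} and set theory'': you use the distributivity clause in the definition of $\UG{\G[\flt]}$ (typing context interpretation), you use \lemref{lem:env_type_store_wf} for the pull-back of $\ell$ into $\DOM(\sigma_1)$, and the rest is the pointwise set-theoretic chase. Your observation that the side conditions $p,q_f \subq \flt$ are needed but not explicit in the lemma statement is well taken; they are indeed satisfied at every use site by the observability invariant.
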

\begin{proof} By typing context interpretation, \lemref{lem:env_type_store_wf} and set theory.
\end{proof}

\begin{lemma}[Semantic Function Abstraction]\label{lem:sem_abs} Let $(\W, \gamma) \in \UG{\G[\flt]}$, $\WFRS{\sigma_1}{\sigma_2}{\W}$, and $\qsat{\DOM(\G)}$.
    For all $\W'$, if
    $\UTC{\W\extends\W'}{v_1}{v_2} \in \DGMV{T}$,
    and $\dvalocss{\gamma_1(\lambda x. t_1)} \overlap \dvalocss{v_1} \subq \dvalocss{\gamma_1(p)}$,
    and $\dvalocss{\gamma_2(\lambda x. t_2)} \overlap \dvalocss{v_2} \subq \dvalocss{\gamma_2(p)}$,
    and $(\W\extends \W', \gamma \extends x \mapsto (v_1, v_2)) \in \UG{(\G, x: \ty[p]{T})^{q,x}}$ implies
    that there exists  $\W''$, such that \\
    $\UTC{\W\extends \W'\extends \W''}{\gamma_1(t_1)[x \mapsto v_1]}{\gamma_2(t_2)[x \mapsto v_2]} \in \DGMt{q,x}{\ty[r]{U}\ \EPS}$.
    and $p \subq q$,
    then exists $\W''$, $v_1'$, $v_2'$, such that
    \begin{enumerate}
        \item $\sigma_1 \mid \gamma_1(t_1)[x \mapsto v_1] \mredv{\ast} \sigma_1' \mid v_2$
        \item $\sigma_2 \mid \gamma_2(t_2)[x \mapsto v_2] \mredv{\ast} \sigma_2' \mid v_2'$
        \item $\WFRS{\sigma_1'}{\sigma_2'}{\W \extends \W' \extends \W''}$
        \item $\UTC{\W\extends\W'\extends \W'' }{v_3}{v_4} \in \DGMV{\Type{U}}$
        \item $(x \in r \Rightarrow  \dvalq{\sigma_1'}{v_1'}{(\dvalocss{\gamma_1(r)} \overlap \dvalocss{\gamma_1(\lambda x. t_1)} \, \cup \, \dvalocss{v_1})} \, \land $ \\
              $\qquad \qquad \dvalq{\sigma_2'}{v_2'}{(\dvalocss{\gamma_2(r)} \overlap \dvalocss{\gamma_2(\lambda x. t_2)} \, \cup \, \dvalocss{v_2})})$
        \item $(x \not\in r \Rightarrow \dvalq{\sigma_1'}{v_1'}{(\dvalocss{\gamma_1(r)} \overlap \dvalocss{\gamma_1(\lambda x. t_1)})} \, \land \, \dvalq{\sigma_2'}{v_2'}{(\dvalocss{\gamma_2(r)} \overlap \dvalocss{\gamma_2(\lambda x . t_2)})})$
    \end{enumerate}

\end{lemma}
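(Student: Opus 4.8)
The plan is to read the statement as a repackaging of the function-type clause of the value interpretation (\figref{fig:direct_binary}): the hypotheses fix an argument pair $(v_1,v_2)$ together with the two overlap side-conditions and the compatibility-style implication that the substituted bodies lie in the term relation, and the conclusion is precisely the ``output'' demanded of a related pair of $\lambda$-values once applied to that argument. First I would discharge the antecedent of the implication hypothesis, i.e.\ establish $(\W\extends\W',\gamma\extends x\mapsto(v_1,v_2))\in\UG{(\G,x:\ty[p]{T})^{q,x}}$, so that the implication then hands me the body in the term relation. This is an application of \lemref{lem:envt_extend_all}, whose premises are $(\W,\gamma)\in\UG{\G[\flt]}$, the value relation $\UTC{\W\extends\W'}{v_1}{v_2}\in\DGMV{T}$, the two overlap conditions $\dvalocss{\gamma_i(q)}\overlap\dvalocss{v_i}\subq\dvalocss{\gamma_i(p)}$, and $q\subq\flt$ (available from the context typing rule for $\lambda$).

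The first genuine step is bridging the given hypotheses, which are phrased with the closed function's reachable set $\dvalocss{\gamma_i(\lambda x.t_i)}$, to the $\dvalocss{\gamma_i(q)}$ form that \lemref{lem:envt_extend_all} expects. I would invoke \lemref{lem:valq_abs}, which gives $\dvalq{\DOM_i(\W)}{\gamma_i(\lambda x.t_i)}{\dvalocss{\gamma_i(q)}}$, i.e.\ the closed function reaches no locations beyond $\gamma_i(q)$; together with the well-formedness bound of \lemref{lem:valt_wf} this identifies $\dvalocss{\gamma_i(\lambda x.t_i)}$ with $\dvalocss{\gamma_i(q)}$ for the purpose of the intersection, so the overlap hypotheses deliver exactly what \lemref{lem:envt_extend_all} needs. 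Having obtained the extended semantic context, I instantiate the implication to get a witness world with $\UTC{\W\extends\W'\extends\W''}{\gamma_1(t_1)[x\mapsto v_1]}{\gamma_2(t_2)[x\mapsto v_2]}\in\DGMt{q,x}{\ty[r]{U}\ \EPS}$.

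Next I would unfold this term interpretation at the input stores. Reading off \figref{fig:direct_binary}, this directly produces the two reductions, a further world extension, the relational-store fact $\WFRS{\sigma_1'}{\sigma_2'}{\cdots}$, the value relation in $\DGMV{U}$, and the reachability bound $\dvalq{\sigma_i}{v_i'}{\dvalocss{(\gamma_i\extends x\mapsto v_i)((q,x)\overlap r)}}$. The lemma's existential world is the composite of $\W''$ with the world produced by this unfolding; aligning the input stores with the enlarged world is routine monotonicity, handled by \lemref{lem:valt_store_extend} and \lemref{lem:storet_extend}. I would also normalise the world bookkeeping in the statement (the repeated $\W''$ and the placeholder result variables), so that every relational-store claim is stated at the world whose domains match the corresponding stores.

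The crux, and the step I expect to be the main obstacle, is converting the single term-interpretation bound $\dvalocss{\gamma_i'((q,x)\overlap r)}$, where $\gamma_i'=\gamma_i\extends x\mapsto v_i$, into the case-split bounds of the last two conclusions. I would split on whether $x\in r$. Since $x\notin q$ (it is the bound parameter) and $\gamma_i'(x)=v_i$, in the case $x\in r$ we have $(q,x)\overlap r=(q\overlap r)\cup\{x\}$, hence $\dvalocss{\gamma_i'((q,x)\overlap r)}=\dvalocss{\gamma_i(q\overlap r)}\cup\dvalocss{v_i}$; monotonicity of substitution over intersection gives $\dvalocss{\gamma_i(q\overlap r)}\subq\dvalocss{\gamma_i(q)}\overlap\dvalocss{\gamma_i(r)}$, and rewriting $\dvalocss{\gamma_i(q)}$ as $\dvalocss{\gamma_i(\lambda x.t_i)}$ via \lemref{lem:valq_abs} yields the required $\dvalocss{\gamma_i(r)}\overlap\dvalocss{\gamma_i(\lambda x.t_i)}\cup\dvalocss{v_i}$. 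In the case $x\notin r$ the $\{x\}$ summand vanishes and the same chain gives the bound without the $\dvalocss{v_i}$ term. The delicate points here are that substitution distributes over $\overlap$ only as an inclusion, so \lemref{lem:overlapping} (or plain monotonicity) must be applied in the correct direction, and that the identification of the closed function's reachable set with $\gamma_i(q)$ must be justified rather than assumed; this is exactly where the lazy-qualifier reasoning and the reachability lemmas carry the argument.
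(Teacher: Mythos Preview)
Your proposal is correct and follows essentially the same approach as the paper. The paper's proof is extremely terse: apply \lemref{lem:envt_extend_all} to obtain the extended context interpretation, instantiate the implication hypothesis to get the bodies in the term relation (yielding (1)--(4)), and then case-split on $x\in r$ using \lemref{lem:valt_wf} and \lemref{lem:valq_abs} for (5) and (6). You expand exactly this skeleton, adding the explicit bridging between $\dvalocss{\gamma_i(\lambda x.t_i)}$ and $\dvalocss{\gamma_i(q)}$ and the world bookkeeping that the paper leaves implicit; the only cosmetic difference is that for the case split the paper invokes \lemref{lem:valt_wf} where you reach for \lemref{lem:overlapping}, but both serve the same bounding role.
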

\begin{proof}
    By \lemref{lem:envt_extend_all}, $(\W\extends\W', \gamma\extends (x \mapsto (v_1, v_2)))  \in \UG{(\G, x: \ty[p]{T})^{q,x}}$. Thus, there exists  $\W''$, such that $\UTC{\W\extends \W'\extends \W''}{\gamma_1(t_1)[x \mapsto v_1]}{\gamma_2(t_2)[x \mapsto v_2]} \in \DGMt{q,x}{\ty[r]{U}\ \EPS}$, which can be used to prove (1) - (4).
    (5) and (6) can be proved by inspecting $x \in r$, \lemref{lem:valt_wf} and \lemref{lem:valq_abs}.
\end{proof}

\begin{lemma}[Semantic Application]\label{lem:sem_app} Let $(\W, \gamma) \in \UG{\G[\flt]}$.
    If \\
    $\UTC{\W\extends\W'}{\gamma_1(\lambda x. t_1)}{\gamma_2(\lambda x. t_2)} \in \DGMV{\ty[\qsat{p} \overlap \qsat{q}]{\Type{T}} \to^{\EPS} U^{r}}$,
    and $\dvalq{\sigma_1}{\gamma_1(\lambda x. t_1)}{\dvalocss{\gamma_1(q)}}$,
    and $\dvalq{\sigma_2}{\gamma_2(\lambda x. t_2)}{\dvalocss{\gamma_2(q)}}$,
    and $\UTC{\W \extends \W' \extends \W''} {v_1}{v_2} \in \DGMV{T}$,
    and $\dvalq{\DOM_1(\W \extends \W')}{v_1}{\dvalocss{\gamma_1(p)}}$,
    and $\dvalq{\DOM_2(\W \extends \W')}{v_2}{\dvalocss{\gamma_2(p)}}$,
    and $r \subq \flt, x$,
    and $\EPS \subq q,x$,
    and $\WFRS{\sigma_1}{\sigma_2}{\W\extends\W'\extends\W''}$
    then there exists $v_2$, $v_2'$, $\sigma_1'$, $\sigma_2'$, $\W'''$, such that
    \begin{enumerate}
        \item $\sigma_1 \mid \gamma_1(t_1)[x \mapsto v_1] \mredv{\ast}  \sigma_1' \mid v_2$;
        \item $\sigma_2 \mid \gamma_1(t_2)[x \mapsto v_2] \mredv{\ast}  \sigma_2' \mid v_2'$;
        \item $\WFRS{\sigma_1'}{\sigma_2'}{\W \extends \W' \extends \W'' \extends \W'''}$;
        \item $\UTC{\W \extends \W' \extends \W''\extends \W'''}{v_2}{v_2'} \in \DGMV{U}$;
        \item $x\in r \Rightarrow  \dvalq{\sigma_1}{v_1'}{(\dvalocss{\gamma_1(r)} \overlap \dvalocss{\gamma_1(\lambda x. t_1)}  \, \cup \, \dvalocss{v_1})}  \, \land$\\
              $\;\;\; \dvalq{\sigma_2}{v_2'}{(\dvalocss{\gamma_2(r)} \overlap \dvalocss{\gamma_2(\lambda x. t_2)} \, \cup \, \dvalocss{v_2})}$
        \item $x\not\in r \Rightarrow \dvalq{\sigma_1}{v_1'}{(\dvalocss{\gamma_1(r)} \overlap \dvalocss{\gamma_1(\lambda x. t_1)})} \, \land  \, \dvalq{\sigma_2}{v_2'}{(\dvalocss{\gamma_2(r)} \overlap  \dvalocss{\gamma_2(\lambda x . t_2)})} $
    \end{enumerate}
\end{lemma}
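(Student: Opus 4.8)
The plan is to prove this lemma by directly unfolding the definition of the value interpretation $\DGMV{\ty[\qsat{p}\overlap\qsat{q}]{T} \to^{\EPS} U^r}$ from \figref{fig:direct_binary} applied to the related closure pair, and then discharging its premises one by one using the remaining hypotheses. The first hypothesis $\UTC{\W\extends\W'}{\gamma_1(\lambda x.t_1)}{\gamma_2(\lambda x.t_2)}\in\DGMV{\ldots}$ unfolds to an implication universally quantified over a world extension, a pair of argument values, and a pair of stores. First I would instantiate that universal with the world $\W''$, the argument pair $(v_1,v_2)$, and the stores $(\sigma_1,\sigma_2)$ provided by the hypotheses. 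Under this instantiation the quantified world in the definition becomes $\W''$, so the definition speaks about the combined world $\W\extends\W'\extends\W''$, which is exactly the world at which the hypotheses supply $\WFRS{\sigma_1}{\sigma_2}{\W\extends\W'\extends\W''}$ and $\DVTC{\W\extends\W'\extends\W''}{v_1}{v_2}\in\DGMV{T}$. No induction is needed: this lemma merely repackages the function-type clause of the value relation into a usable elimination form.

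To fire the instantiated implication I must discharge four premises. The store-relatedness and the argument-relatedness are hypotheses verbatim. The remaining two are the overlap side-conditions $\dvalocss{\gamma_i(\lambda x.t_i)}\overlap\dvalocss{v_i}\subq\dvarslocs{\gamma_i(\qsat{p}\overlap\qsat{q})}$ for $i\in\{1,2\}$, which are the crux. I would obtain them by invoking \lemref{lem:overlapping} with $q_f:=q$, $v_{f1}:=\gamma_1(\lambda x.t_1)$, and $v_{f2}:=\gamma_2(\lambda x.t_2)$. That lemma's function-reachability premises $\dvalq{\sigma_1}{\gamma_1(\lambda x.t_1)}{\dvalocss{\gamma_1(q)}}$ and its $\gamma_2$ counterpart are hypotheses; its argument-reachability premises $\dvalq{\DOM_1(\W\extends\W')}{v_1}{\dvalocss{\gamma_1(p)}}$ and the $\gamma_2$ version are likewise hypotheses; and its well-formedness side-conditions $\dvalocss{\gamma_i(\lambda x.t_i)}\subq\DOM_i(\cdots)$ follow from the first conjunct of the function interpretation, equivalently from \lemref{lem:valt_wf}. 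Since $\dvalq{\sigma}{v}{L}$ depends on $\sigma$ only through $\DOM(\sigma)$, I can transport these reachability facts between the worlds $\W$, $\W\extends\W'$, and $\W\extends\W'\extends\W''$ by domain monotonicity, using \lemref{lem:env_type_store_wf} to confine the relevant location sets to the right world components.

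Once the four premises are discharged, the definition returns witnesses $\W_{\mathrm{new}},\sigma_1',\sigma_2',v_1',v_2'$; renaming $\W_{\mathrm{new}}$ to $\W'''$ yields items (1)–(4) directly, namely the two reductions, store-relatedness at $\W\extends\W'\extends\W''\extends\W'''$, and $\DVTC{\cdots}{v_1'}{v_2'}\in\DGMV{U}$. The conditional reachability bounds (5) and (6) are produced verbatim by the definition's $x\in r$ and $x\notin r$ conjuncts, with the $\lambda x.t_i$ appearing there instantiated to $\gamma_i(\lambda x.t_i)$; the scoping hypotheses $r\subq\flt,x$ and $\EPS\subq q,x$ are what guarantee these qualifiers and effects are well-scoped so that the closing substitutions $\gamma_i$ act as intended.

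The hard part will be the bookkeeping in the overlap step: the function-reachability bound is stated relative to $\sigma_1$ (a store over $\W\extends\W'\extends\W''$) while the argument-reachability bound is stated over $\DOM_1(\W\extends\W')$, whereas \lemref{lem:overlapping} is phrased for a single base world with one extension. Reconciling these requires care — I expect to apply the lemma at base world $\W$ with extension $\W'\extends\W''$ and to restrict or transport each $\dvalq$ fact to the matching domain, relying again on monotonicity of $\dvalq$ in the store domain and on the confinement of all reachable locations to the appropriate world component. Apart from this alignment, the proof is a mechanical unfolding of the function-type clause of the value relation.
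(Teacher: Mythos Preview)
Your proposal is correct and follows essentially the same approach as the paper: unfold the function-type clause of the value interpretation, use \lemref{lem:valt_wf} for the well-formedness facts, invoke \lemref{lem:overlapping} to discharge the overlap side-conditions, and read off (1)--(4) from the resulting witnesses and (5)--(6) by case analysis on $x\in r$. You have in fact spelled out more detail than the paper does, including the world-alignment bookkeeping, which the paper leaves implicit.
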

\begin{proof} By \lemref{lem:valt_wf}, we know the following:
    \begin{itemize}
        \item $\dvalocss{\gamma_1(\lambda x. t_1)} \subq \DOM_1(\W \extends \W')$;
        \item $\dvalocss{\gamma_2(\lambda x. t_2)} \subq \DOM_2(\W \extends \W')$;
        \item $\dvalocss{v_1} \subq \DOM_1(\W \extends \W' \extends \W'')$;
        \item $\dvalocss{v_2} \subq \DOM_2(\W \extends \W' \extends \W'')$;
    \end{itemize}

    Then (1) - (4) can be proved by the assumption  $\UTC{\W\extends\W'}{\gamma_1(\lambda x. t_1)}{\gamma_2(\lambda x. t_2)} \in \DGMV{\ty[p]{\Type{T}} \to^{\EPS} U^{r}}$, and $\UTC{\W \extends \W' \extends \W''} {v_1}{v_2} \in \DGMV{T}$, and \lemref{lem:overlapping}.
    (5) - (6) can be proved by inspecting $x \in r$.
\end{proof}

\subsection{Compatibility Lemmas}
\label{sec:direct_cls}
The following compatibility lemmas show that the logical relations is \emph{compatible} with all the constructs of the language~\cite{10.5555/1076265}.

\begin{lemma}[Compatibility: $\Type{Alloc}$]  $\G[\flt] \models  \omega \equivlog \omega: \ty[\qbot]{Alloc}\ \PURE$
\end{lemma}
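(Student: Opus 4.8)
The plan is to unfold the definition of logical equivalence (\defref{def:direct:log_equiv}) and discharge the resulting term-relation membership directly, exploiting that $\omega$ is a closed value. Concretely, I must show that for every $(\gamma,\W)\in\UG{\G[\flt]}$ we have $\DVTC{\W}{\gamma_1(\omega)}{\gamma_2(\omega)}\in\DGMt{\flt}{\ty[\qbot]{\Type{Alloc}}\ \PURE}$. Since $\omega$ is a base constant it contains no free variables, so the closing substitutions leave it fixed, $\gamma_1(\omega)=\gamma_2(\omega)=\omega$, and it suffices to place the pair $(\omega,\omega)$ into the term interpretation of \figref{fig:direct_binary}.

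Next I would fix arbitrary stores $\sigma_1,\sigma_2$ with $\WFRS{\sigma_1}{\sigma_2}{\W}$ and supply the existential witnesses demanded by $\DGMt{\flt}{\,\cdot\,}$. Because $\omega$ is already a value, each evaluation terminates in zero steps, $\omega\mid\sigma_i\mredv{\ast}\omega\mid\sigma_i$ for $i\in\{1,2\}$, so I take the returned values $v_1=v_2=\omega$, the final stores $\sigma_1'=\sigma_1$ and $\sigma_2'=\sigma_2$, and the empty world extension $\W'=\emptyset$. Then $\W\extends\emptyset=\W$, and the relational-store obligation $\WFRS{\sigma_1}{\sigma_2}{\W\extends\emptyset}$ is exactly the hypothesis.

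It then remains to verify the remaining conjuncts. Value relatedness $\DVTC{\W}{\omega}{\omega}\in\DGMV{\Type{Alloc}}$ is immediate from the definition $\DGMV{\Type{Alloc}}=\{\DVTC{\W}{\omega}{\omega}\}$. The two reachability predicates $\dvalq{\sigma_i}{\omega}{(\dvalocss{\gamma_i(\flt\overlap\qbot)})}$ hold vacuously, since $\dvalocss{\omega}=\emptyset$ makes the left-hand side of the predicate empty; this is precisely the content of \lemref{lem:valq_omega}. Finally, the purity obligations $\DEPS{\sigma_i}{\dvalocss{\gamma_i(\PURE)}}{\sigma_i}$ follow from \lemref{lem:noeffects}, as the pre- and post-stores coincide and $\gamma_i(\PURE)=\PURE$ denotes the empty effect.

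As this is a base case of the compatibility bundle, there is no genuine obstacle. The only points requiring any care are purely bookkeeping: confirming that extension by the empty world is well-defined and preserves the store relation, and observing that both the reachability and the effect side conditions degenerate for the allocation constant—exactly what the auxiliary lemmas \lemref{lem:valq_omega} and \lemref{lem:noeffects} are designed to supply.
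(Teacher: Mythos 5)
Your proposal is correct and follows essentially the same route as the paper's (very terse) proof: unfold the term interpretation, use the singleton definition of $\DGMV{\Type{Alloc}}$, discharge the reachability conjuncts with \lemref{lem:valq_omega}, and the effect conjuncts with \lemref{lem:noeffects}. The only difference is that you spell out the witnesses (zero-step reduction, empty world extension) explicitly, which the paper leaves implicit.
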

\begin{proof} By the typing context interpretation, value interpretation in \figref{fig:direct_binary} and \lemref{lem:valq_omega}.
\end{proof}

\begin{lemma}[Compatibility: $\TUnit$]  $\G[\flt] \models  \tunit \equivlog \tunit: \ty[\qbot]{\TUnit}\ \PURE$
\end{lemma}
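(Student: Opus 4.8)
The plan is to unfold Def.~\ref{def:direct:log_equiv} and verify each conjunct of the term interpretation $\DGMt{\flt}{\ty[\qbot]{\TUnit}\ \PURE}$ from \figref{fig:direct_binary}. Fix an arbitrary $(\gamma, \W) \in \UG{\G[\flt]}$; since $\tunit$ is closed, the closing substitution is inert, $\gamma_1(\tunit) = \gamma_2(\tunit) = \tunit$, so it suffices to establish $\DVTC{\W}{\tunit}{\tunit} \in \DGMt{\flt}{\ty[\qbot]{\TUnit}\ \PURE}$.

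First I would fix related stores $\WFRS{\sigma_1}{\sigma_2}{\W}$ and supply the witnesses explicitly: take the empty world extension $\W' = \emptyset$, let both result values be $\tunit$, and leave the stores untouched, $\sigma_1' = \sigma_1$ and $\sigma_2' = \sigma_2$. Because $\tunit$ is already a value, the two evaluation conjuncts $\tunit \mid \sigma_i \mredv{\ast} \tunit \mid \sigma_i$ (for $i \in \{1,2\}$) hold in zero steps, and store relatedness $\WFRS{\sigma_1}{\sigma_2}{\W \extends \emptyset}$ collapses to the hypothesis $\WFRS{\sigma_1}{\sigma_2}{\W}$.

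The remaining conjuncts are discharged by the value interpretation and the auxiliary lemmas. Value relatedness $\DVTC{\W}{\tunit}{\tunit} \in \DGMV{\TUnit}$ holds definitionally, as $\DGMV{\TUnit} = \{\DVTC{\W}{\tunit}{\tunit}\}$. The two reachability conjuncts $\dvalq{\sigma_i}{\tunit}{\dvalocss{\gamma_i(\flt \overlap \qbot)}}$ follow from \lemref{lem:valq_unit}: since $\dvalocss{\tunit} = \emptyset$, the predicate $(\DOM(\sigma_i) \overlap \dvalocss{\tunit}) \subseteq L$ holds for every target $L$, and in particular $\flt \overlap \qbot = \qbot$ makes the target set empty anyway. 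The two effect conjuncts $\DEPS{\sigma_i}{\dvalocss{\gamma_i(\PURE)}}{\sigma_i}$ are precisely \lemref{lem:noeffects}, since no store location is modified.

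I anticipate no genuine obstacle: the whole argument is forced once the witnesses are chosen, and $\tunit$'s empty reachability set trivializes both the overlap and the effect conditions. The only care needed is bookkeeping --- matching each conjunct of $\DGMt{\flt}{\ty[\qbot]{\TUnit}\ \PURE}$ to the lemma that closes it --- and the proof is structurally identical to the preceding $\Type{Alloc}$ compatibility lemma, with \lemref{lem:valq_unit} substituted for \lemref{lem:valq_omega}.
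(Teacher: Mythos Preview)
Your proposal is correct and follows essentially the same approach as the paper's proof, which is stated tersely as ``By the typing context interpretation, value interpretation in \figref{fig:direct_binary} and \lemref{lem:valq_unit}.'' You have simply unfolded these references into the explicit conjunct-by-conjunct verification, additionally invoking \lemref{lem:noeffects} for the effect clauses, which the paper leaves implicit.
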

\begin{proof} By the typing context interpretation, value interpretation in \figref{fig:direct_binary} and \lemref{lem:valq_unit}.
\end{proof}

\begin{lemma}[Compatibility: $\Type{Bool}$]  $\G[\flt] \models  \text{true} \equivlog \text{true} : \ty[\qbot]{Bool}\ \PURE$
\end{lemma}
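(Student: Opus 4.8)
The plan is to peel off the definition of logical equivalence (Definition~\ref{def:direct:log_equiv}) and reduce the goal to a single membership in the term interpretation, following verbatim the structure of the already-established $\TUnit$ and $\Type{Alloc}$ compatibility lemmas. Concretely, I would fix an arbitrary $(\gamma,\W)\in\UG{\G[\flt]}$ and note that $\text{true}$ is a closed constant, so both closing substitutions leave it untouched: $\gamma_1(\text{true})=\gamma_2(\text{true})=\text{true}$. The remaining obligation is therefore $\DVTC{\W}{\text{true}}{\text{true}}\in\DGMt{\flt}{\ty[\qbot]{\Type{Bool}}\ \PURE}$.

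To discharge the term-interpretation clauses from \figref{fig:direct_binary}, I would take arbitrary $\sigma_1,\sigma_2$ with $\WFRS{\sigma_1}{\sigma_2}{\W}$ and supply the witnesses $\W'$ equal to the empty world, $v_1=v_2=\text{true}$, and $\sigma_1'=\sigma_1$, $\sigma_2'=\sigma_2$, exploiting that $\text{true}$ is already a value and reduces to itself in zero steps, leaving the store unchanged. The relational-store condition $\WFRS{\sigma_1}{\sigma_2}{\W\extends\W'}$ then collapses to the hypothesis, and value relatedness $\DVTC{\W}{\text{true}}{\text{true}}\in\DGMV{\Type{Bool}}$ is immediate from the clause $\DGMV{\Type{Bool}}=\{\DVTC{\W}{v}{v}\mid v=\text{true}\lor v=\text{false}\}$, since the two sides are literally the same constant.

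It remains to settle the reachability and effect side conditions. The qualifier-containment predicates $\dvalq{\sigma_i}{\text{true}}{\dvalocss{\gamma_i(\flt\overlap\qbot)}}$ hold by \lemref{lem:valq_bool}, which records that a boolean value reaches no store locations and hence satisfies any such predicate; here they specialize with the result qualifier $q=\qbot$. The effect conditions $\DEPS{\sigma_i}{\dvalocss{\gamma_i(\PURE)}}{\sigma_i}$ follow from \lemref{lem:noeffects}, since the store is not modified and $\PURE$ is the empty effect.

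I do not anticipate a genuine obstacle: this is a base case of the compatibility development, and the argument is structurally identical to the $\TUnit$ and $\Type{Alloc}$ cases, merely invoking \lemref{lem:valq_bool} in place of \lemref{lem:valq_unit} and \lemref{lem:valq_omega}. The only subtlety worth double-checking is that the $\Type{Bool}$ value clause demands syntactic equality of the two related values; choosing both witnesses to be $\text{true}$ meets this requirement trivially.
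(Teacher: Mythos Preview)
Your proposal is correct and follows essentially the same approach as the paper, which simply cites the typing context interpretation, the value interpretation in \figref{fig:direct_binary}, and \lemref{lem:valq_bool}. You have merely unpacked these ingredients explicitly (and additionally named \lemref{lem:noeffects} for the effect clause), which is entirely in line with the paper's terse justification.
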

\begin{proof} By the typing context interpretation, value interpretation in \figref{fig:direct_binary}  and \lemref{lem:valq_bool}.
\end{proof}

\begin{lemma}[Compatibility: $\Type{Bool}$]  $\G[\flt] \models \text{false} \equivlog \text{false} : \ty[\emptyset]{Bool}\ \PURE$
\end{lemma}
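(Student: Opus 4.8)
The plan is to unfold the definition of logical equivalence (\defref{def:direct:log_equiv}) and discharge the resulting obligation directly, following the exact pattern of the three preceding compatibility lemmas. Fixing an arbitrary $(\W,\gamma)\in\UG{\G[\flt]}$, I first observe that $\text{false}$ is a closed constant, so the closing substitution acts trivially: $\gamma_1(\text{false})=\gamma_2(\text{false})=\text{false}$. It then remains to show $\DVTC{\W}{\text{false}}{\text{false}}\in\DGMt{\flt}{\ty[\emptyset]{\Type{Bool}}\ \PURE}$, i.e. membership in the term interpretation of \figref{fig:direct_binary}.

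Next I would unfold $\DGMt{\flt}{\cdot}$ and fix related stores $\WFRS{\sigma_1}{\sigma_2}{\W}$. Since $\text{false}$ is already a value, it reduces in zero steps, so I instantiate the existentials with the empty world for $\W'$, with $v_1=v_2=\text{false}$, and with $\sigma_1'=\sigma_1$, $\sigma_2'=\sigma_2$; both reductions hold trivially and $\WFRS{\sigma_1}{\sigma_2}{\W\extends\W'}$ collapses to the store hypothesis. The four remaining conjuncts are routine: value-relatedness $\DVTC{\W}{\text{false}}{\text{false}}\in\DGMV{\Type{Bool}}$ holds immediately from the $\Type{Bool}$ clause of the value interpretation because $\text{false}=\text{false}$; the two reachability predicates $\dvalq{\sigma_i}{\text{false}}{\dvalocss{\gamma_i(\flt\overlap\emptyset)}}$ follow from \lemref{lem:valq_bool}, since $\text{false}$ reaches no store locations; and the two effect obligations $\DEPS{\sigma_i}{\dvalocss{\gamma_i(\PURE)}}{\sigma_i}$ hold by \lemref{lem:noeffects} as the effect is pure and the stores are unchanged.

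I do not expect a genuine obstacle here: this statement is simply the $\text{false}$ counterpart of the immediately preceding $\text{true}$ case, and its proof is word-for-word analogous with $\text{true}$ replaced by $\text{false}$. The only point requiring minor care is confirming that the empty-world witness together with the zero-step reduction indeed satisfies the existential structure of the term relation; this is the same bookkeeping already carried out for the $\Type{Alloc}$, $\TUnit$, and $\Type{Bool}$-$\text{true}$ cases, and it goes through unchanged.
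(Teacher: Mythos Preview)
Your proposal is correct and follows essentially the same approach as the paper, which simply appeals to the typing context interpretation, the value interpretation in \figref{fig:direct_binary}, and \lemref{lem:valq_bool}. You have merely spelled out the unfolding in more detail and made the (implicit) use of \lemref{lem:noeffects} explicit, which is fine.
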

\begin{proof} By the typing context interpretation, value interpretation in \figref{fig:direct_binary}  and \lemref{lem:valq_bool}.
\end{proof}

\begin{lemma}[Compatibility: Variables]
    If  $x: \ty[q]{T} \in \G$ and  $x \subq \flt$, then
    $\G[\flt] \models  x \equivlog x : \ty[x]{T}\  \PURE$
\end{lemma}
\begin{proof} Immediate by the typing context interpretation in \figref{fig:direct_binary}.
\end{proof}

\begin{lemma}[Compatibility: $\lambda$] If $ (\G\ ,\ x: \ty[p]{T})^{q,x}\ \models t_1 \equivlog t_2 : \ty[r]{U}\ \EPS$, $q \subq \flt$,
    then
    $\G[\flt] \models  \lambda x.t_1 \equivlog \lambda x.t_2 : \ty[q]{\left(x: \ty[p]{T} \to^{\EPS} \ty[r]{U}\right)}\; \PURE$.
\end{lemma}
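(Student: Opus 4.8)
The plan is to unfold \Cref{def:direct:log_equiv} and reduce the goal to a membership in the value interpretation of the function type, then discharge that obligation with the Semantic Function Abstraction lemma (\lemref{lem:sem_abs}). Fix an arbitrary $(\gamma,\W)\in\UG{\G[\flt]}$; by \Cref{def:direct:log_equiv} it suffices to show $\DVTC{\W}{\gamma_1(\lambda x.t_1)}{\gamma_2(\lambda x.t_2)}\in\DGMt{\flt}{\ty[q]{((x{:}\ty[p]{T}) \to^{\EPS} \ty[r]{U})}\ \PURE}$. Since each $\gamma_i(\lambda x.t_i)$ is already a value, for any related store pair $\WFRS{\sigma_1}{\sigma_2}{\W}$ I would instantiate the existentials of the term interpretation with zero reduction steps, the empty world extension, and $\sigma_i'=\sigma_i$. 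This satisfies the store-relation clause trivially and, because the ambient effect here is $\PURE$, the effect clause by \lemref{lem:noeffects}; the reachability clauses $\dvalq{\sigma_i}{\gamma_i(\lambda x.t_i)}{\dvalocss{\gamma_i(\flt\overlap q)}}$ follow from \lemref{lem:valq_abs} together with $q\subq\flt$ (so that $\flt\overlap q=q$). What remains is the value-relation obligation $\DVTC{\W}{\gamma_1(\lambda x.t_1)}{\gamma_2(\lambda x.t_2)}\in\DGMV{(x{:}\ty[p]{T}) \to^{\EPS} \ty[r]{U}}$.

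Unfolding the function clause of \figref{fig:direct_binary}, the well-formedness conditions $\dvalocss{\gamma_i(\lambda x.t_i)}\subq\DOM_i(\W)$ are obtained from \lemref{lem:valq_abs} and well-formedness of the context interpretation (\lemref{lem:env_type_store_wf}). For the functional part I would fix an arbitrary extension $\W'$, a related argument pair $\DVTC{\W\extends\W'}{v_1}{v_2}\in\DGMV{T}$, related stores at $\W\extends\W'$, and the two permissibility hypotheses $\dvalocss{\gamma_i(\lambda x.t_i)}\overlap\dvalocss{v_i}\subq\dvarslocs{\gamma_i(p)}$, and then appeal to \lemref{lem:sem_abs}. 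That lemma already packages the required reductions of the substituted bodies, the relatedness of their results at $\ty[r]{U}$, and the case analyses on $x\in r$ and $x\in\EPS$ that the function clause of the value interpretation demands, so the remaining work is purely to verify its premises.

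The crux is supplying the hypotheses of \lemref{lem:sem_abs}. Its key premise is an implication whose consequent is a term-relation membership for the substituted bodies, and I would discharge it directly from the present lemma's own hypothesis $(\G, x{:}\ty[p]{T})^{q,x}\models t_1\equivlog t_2:\ty[r]{U}\ \EPS$, instantiated via \Cref{def:direct:log_equiv} at the extended pair $(\W\extends\W', \gamma\extends(x\mapsto(v_1,v_2)))$. The remaining premises — the argument pair being related at $T$, the $\lambda$-overlap conditions, the saturation side condition $\qsat{\DOM(\G)}$ (from \lemref{lem:wf_env_type}), and $p\subq q$ — are exactly the data in hand. I expect the main obstacle to be lining up the reachability side conditions: the function clause supplies overlaps measured against the \emph{reachable set of the $\lambda$-value} $\dvalocss{\gamma_i(\lambda x.t_i)}$, whereas the semantic context-extension lemma (\lemref{lem:envt_extend_all}, invoked inside \lemref{lem:sem_abs}) consumes overlaps measured against the \emph{qualifier} $\dvalocss{\gamma_i(q)}$. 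Reconciling the two hinges on \lemref{lem:valq_abs}, which ties $\dvalocss{\gamma_i(\lambda x.t_i)}$ to $\dvalocss{\gamma_i(q)}$; confirming that this identification suffices for the overlap inclusions in the direction actually needed is the step where I would concentrate the detailed argument.
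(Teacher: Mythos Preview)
Your proposal is correct and follows essentially the same approach as the paper: both instantiate the term interpretation with zero reduction steps and the empty world extension, discharge the reachability and effect clauses via \lemref{lem:valq_abs} and \lemref{lem:noeffects}, and establish the function-type value interpretation via \lemref{lem:sem_abs} together with \lemref{lem:wf_env_type}. Your closing concern about reconciling the overlap direction is legitimate but is internal to the proof of \lemref{lem:sem_abs} rather than to this compatibility lemma, and the premise $p\subq q$ you list as ``data in hand'' is not explicit in the statement but is the usual implicit well-formedness constraint on the domain qualifier (the paper relies on it in the same way).
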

\begin{proof}  Let $(\W, \gamma ) \in \UG{\G}$ and $\WFRS{\sigma_1}{\sigma_2}{\W}$.

    By definition of term interpretation, we need to show there exists $\W'$, $\sigma'$, $v_1$ and $v_2$ such that:
    \begin{enumerate}
        \item $\sigma_1 \mid \gamma_1(\lambda x. t_1) \mredv{\ast} \sigma_1' \mid v_1$
        \item $\sigma_2 \mid \gamma_2(\lambda x. t_2) \mredv{\ast} \sigma_2' \mid v_2$
        \item $\WFRS{\sigma'_1}{\sigma'_2}{\W\extends\W'}$
        \item $\UTC{\W \extends \W'}{v_1}{v_2} \in \DGMV{{\left(x: \ty[p]{T} \to^{\EPS} \ty[r]{U}\right)}}$
        \item $\dvalq{\sigma_1}{v_1}{\dvalocss{\gamma_1(\flt \overlap q)}}$
        \item $\dvalq{\sigma_2}{v_2}{\dvalocss{\gamma_2(\flt \overlap q)}}$
        \item $\DEPS{\sigma_1}{\PURE}{\sigma_1'}$
        \item $\DEPS{\sigma_2}{\PURE}{\sigma_2'}$
    \end{enumerate}

    By reduction semantics, we pick $\W = \emptyset$, $v_1 = \lambda x. t_1$, $v_2 = \lambda x. t_2$, $\sigma_1'= \sigma_1$ and $\sigma_2' = \sigma_2$.
    Thus, (1)- (3) are discharged.
    (4) can be proved by \lemref{lem:wf_env_type} and \lemref{lem:sem_abs}.
    (5) and (6) can be proved by \lemref{lem:valq_abs}.
    (7) and (8) can be proved by \lemref{lem:noeffects}.
\end{proof}

\begin{lemma}[Compatibility : Allocation]
    If $\G[\flt] \models t_1 \equivlog t_2 : \ty[q]{\Type{Alloc}}\ \EPS[1]$, and  $\strut\G[\flt] \ts t_3 \equivlog t_4 : \ty[\qbot]{B}\ \EPS[2]$, then
    $\G[\flt] \models  \tref_{t_1}~t_3 \equivlog \tref_{t_2}~t_4\  : \ty[\qbot]{(\TRef\ B)}\  \EPS[1]\EFFSEQ\EPS[2]\EFFSEQ\FX{q}$.
\end{lemma}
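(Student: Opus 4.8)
The plan is to unfold the definition of logical equivalence (\defref{def:direct:log_equiv}) and then discharge, one clause at a time, the obligations of the term interpretation $\DGMt{\flt}{\ty[\qbot]{(\TRef\ B)}\ \EPS[1]\EFFSEQ\EPS[2]\EFFSEQ\FX{q}}$ from \figref{fig:direct_binary}. Fix $(\W,\gamma)\in\UG{\G[\flt]}$ and a pair of related stores $\WFRS{\sigma_1}{\sigma_2}{\W}$. Since $\gamma_1(\tref_{t_1}~t_3) = \tref_{\gamma_1(t_1)}~\gamma_1(t_3)$ and the evaluation contexts force the capability operand first ($\tref_E~t$ before $\tref_v~E$), I would run the two computations in lock-step following that reduction order and assemble the required witnesses at the end.

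First I would instantiate the first hypothesis $\G[\flt]\models t_1\equivlog t_2:\ty[q]{\Type{Alloc}}\ \EPS[1]$ at $(\gamma,\W)$ and the stores $\sigma_1,\sigma_2$, obtaining a world $\W_1$, stores $\sigma_1^a,\sigma_2^a$ with $\WFRS{\sigma_1^a}{\sigma_2^a}{\W\extends\W_1}$, a pair related at $\Type{Alloc}$, and the effect witnesses $\DEPS{\sigma_1}{\dvalocss{\gamma_1(\EPS[1])}}{\sigma_1^a}$ (and symmetrically). Because the only inhabitant of $\DGMV{\Type{Alloc}}$ is $\DVTC{\W}{\omega}{\omega}$, both capability operands reduce to $\omega$. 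Next, since the context interpretation $\UG{\G[\flt]}$ is closed under world extension (its value clauses are preserved by \lemref{lem:valt_store_extend} and its overlap side-conditions are world-independent), I have $(\gamma,\W\extends\W_1)\in\UG{\G[\flt]}$, so I can instantiate the second hypothesis at $\sigma_1^a,\sigma_2^a$ to evaluate $t_3,t_4$ to a pair $v_1,v_2$ related at $B$, yielding $\W_2$, stores $\sigma_1^b,\sigma_2^b$ with $\WFRS{\sigma_1^b}{\sigma_2^b}{\W\extends\W_1\extends\W_2}$ and effect $\DEPS{\sigma_1^a}{\dvalocss{\gamma_1(\EPS[2])}}{\sigma_1^b}$. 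Finally I apply the \rulename{ref} rule on both sides, picking fresh $\ell_1\notin\DOM(\sigma_1^b)$ and $\ell_2\notin\DOM(\sigma_2^b)$, producing final stores $\sigma_1'=\sigma_1^b,\tlets~\ell_1=\tref_\omega~v_1$ and $\sigma_2'=\sigma_2^b,\tlets~\ell_2=\tref_\omega~v_2$ with result locations $\ell_1,\ell_2$.

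To close the goal I extend the world by $\W_3=(\{\ell_1\},\{\ell_2\},\{(\ell_1,\ell_2)\})$. The store relation $\WFRS{\sigma_1'}{\sigma_2'}{\W\extends\W_1\extends\W_2\extends\W_3}$ follows from \lemref{lem:storet_extend} using $\DVTC{\W\extends\W_1\extends\W_2}{v_1}{v_2}\in\DGMV{B}$, and that same fact witnesses $\DVTC{\W\extends\W_1\extends\W_2\extends\W_3}{\ell_1}{\ell_2}\in\DGMV{(\TRef\ B)}$, since $\W_3$ relates $\ell_1$ to $\ell_2$ and the two fresh cells hold related base values. The reachability obligation is immediate: the result qualifier is $\qbot$, so I must show $\dvalq{\sigma_1}{\ell_1}{\dvalocss{\gamma_1(\flt\overlap\qbot)}}$, i.e.\ that nothing pre-existing is reachable, which holds by \lemref{lem:valq_fresh} because $\ell_1$ is fresh with respect to $\sigma_1$ (symmetrically for $\ell_2$). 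For the effect obligation $\DEPS{\sigma_1}{\dvalocss{\gamma_1(\EPS[1]\EFFSEQ\EPS[2]\EFFSEQ\FX{q})}}{\sigma_1'}$, I chain the two operand effect witnesses with the allocation step, which changes no existing location and hence has empty store-effect by \lemref{lem:noeffects}, and widen along $\EFFSEQ$ (set union) using \lemref{lem:subeffects}.

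The main obstacle I anticipate is the bookkeeping of worlds and monotonicity: I must ensure that the value- and store-relations established at $\W\extends\W_1$ and $\W\extends\W_1\extends\W_2$ remain valid after the final extension by $\W_3$ (this is exactly where \lemref{lem:valt_store_extend} is load-bearing), and that each freshly allocated location is genuinely disjoint from the ambient domains and from its counterpart under the other projection, so that the partial bijection stays well-defined. The remaining delicate point is lining up the composite effect $\gamma_1(\EPS[1]\EFFSEQ\EPS[2]\EFFSEQ\FX{q})$ with the actual per-step store changes; since the allocation contributes only a fresh location and therefore never invalidates the $\DEPS$ predicate on the pre-store, the $\FX{q}$ summand is pure slack and the composition reduces to transitivity of $\DEPS$ under unions.
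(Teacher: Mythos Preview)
Your proposal is correct and follows essentially the same route as the paper: instantiate the two hypotheses in sequence (noting that the $\Type{Alloc}$ relation forces $\omega$), fire the \rulename{ref} step on each side with fresh locations, extend the world via \lemref{lem:storet_extend}, and discharge the qualifier obligation by \lemref{lem:valq_fresh}. You are in fact slightly more explicit than the paper in arguing that $(\gamma,\W\extends\W_1)\in\UG{\G[\flt]}$ before invoking the second hypothesis at the intermediate stores; the paper silently assumes this monotonicity, and for the final effect obligation it cites \lemref{lem:eff_composite} rather than your direct transitivity-plus-\lemref{lem:subeffects} argument, but these are interchangeable bookkeeping choices.
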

\begin{proof} Let $(\W, \gamma ) \in \UG{\G}$ and $\WFRS{\sigma_1}{\sigma_2}{\W}$.
    By the first assumption, we know that there exists $\sigma_1'$, $\sigma_2'$, $\W'$, $v_1$ and $v_2$, such that
    \begin{itemize}
        \item  $\gamma_1(t_1) \mid \sigma_1 \mredv{\ast} v_1 \mid \sigma_1'$
        \item  $\gamma_2(t_2) \mid \sigma_2 \mredv{\ast} v_2 \mid \sigma_2'$
        \item  $\WFRS{\sigma_1'}{\sigma_2'}{\W\extends \W'}$
        \item  $\DVTC{\W \extends \W'}{v_1}{v_2} \in \DGMV{\Type{Alloc}}$
        \item  $\dvalq{\sigma_1}{v_1}{\dvalocss{\gamma_1(\flt \overlap q)}}$
        \item  $\dvalq{\sigma_2}{v_2}{\dvalocss{\gamma_2(\flt \overlap q)}}$
        \item  $\DEPS{\sigma_1}{\dvalocss{\gamma_1(\EPS[1])}}{\sigma_1'} $
        \item  $\DEPS{\sigma_2}{\dvalocss{\gamma_2(\EPS[1])}}{\sigma_2'} $
    \end{itemize}

    By reduction semantics, we know $v_1 = v_2 = \omega$.

    By the second assumption, we know that there exists $\sigma_1''$, $\sigma_2''$, $\W''$, $v_3$ and $v_4$, such that
    \begin{itemize}
        \item  $\gamma_1(t_3) \mid \sigma_1' \mredv{\ast} v_3 \mid \sigma_1''$
        \item  $\gamma_2(t_4) \mid \sigma_2' \mredv{\ast} v_4 \mid \sigma_2''$
        \item  $\WFRS{\sigma_1''}{\sigma_2''}{\W\extends \W'\extends \W''}$
        \item  $\DVTC{\W \extends\W'\extends\W''}{v_3}{v_4} \in \DGMV{\Type{B}}$
        \item  $\dvalq{\sigma_1'}{v_3}{\dvalocss{\gamma_1(\flt \overlap \qbot)}}$
        \item  $\dvalq{\sigma_2'}{v_4}{\dvalocss{\gamma_2(\flt \overlap \qbot)}}$
        \item  $\DEPS{\sigma_1'}{\dvalocss{\gamma_1(\EPS[2])}}{\sigma_1''} $
        \item  $\DEPS{\sigma_2'}{\dvalocss{\gamma_2(\EPS[2])}}{\sigma_2''} $
    \end{itemize}
    By reduction semantics, we know
    \begin{itemize}
        \item $\tref_{\omega} ~ v_3 \mid \sigma_1'' \mredv{1}  \ell_1 \mid \sigma_1''\extends(\ell_1 \mapsto v_3)$, where $\ell_1 \not\in \DOM(\sigma_1'')$
        \item $\tref_{\omega} ~ v_4 \mid \sigma_2'' \mredv{1}  \ell_2 \mid \sigma_2''\extends(\ell_2 \mapsto v_4)$, where $\ell_2 \not\in \DOM(\sigma_2'')$
    \end{itemize}
    By \lemref{lem:storet_extend}, we know $\WFRS{\sigma_1''\extends(\ell_1 \mapsto v_3)}{\sigma_2''\extends(\ell_2 \mapsto v_4)}{\W\extends\W'\extends\W'' \extends ((\ell_1 \mapsto v_3), (\ell_2 \mapsto v_4), \{(\ell_1, \ell_2)\})}$.
    The rest of the proof can be done by the definition of value interpretation, \lemref{lem:eff_composite} and \lemref{lem:valq_fresh}.
\end{proof}

\begin{lemma}[Compatibility: Dereference ($!$)] If $\G[\flt] \models t_1 \equivlog t_2 : \ty[q]{(\TRef ~B)} \ \EPS$, then $\G[\flt] \models !t_1 \equivlog !t_2 : \ty[\qbot]{B} \ \EPS\EFFSEQ\FX{\bm{q}}$.
\end{lemma}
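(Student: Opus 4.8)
The plan is to unfold the logical-equivalence judgment via \defref{def:direct:log_equiv} and reduce the goal to membership in the term interpretation $\DGMt{\flt}{\ty[\qbot]{B}\ \EPS\EFFSEQ\FX{q}}$ of \figref{fig:direct_binary}. Fix an arbitrary $(\W,\gamma)\in\UG{\G[\flt]}$ and a pair of related stores $\WFRS{\sigma_1}{\sigma_2}{\W}$. Since substitution commutes with term formation we have $\gamma_i(!t_i) = \;!\gamma_i(t_i)$, so it suffices to exhibit a world extension, final stores, and result values witnessing every clause of the term interpretation for $!\gamma_1(t_1)$ and $!\gamma_2(t_2)$. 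First I would invoke the hypothesis $\G[\flt]\models t_1\equivlog t_2 : \ty[q]{(\TRef~B)}\ \EPS$ at $(\W,\gamma)$ and the stores $\sigma_1,\sigma_2$. This yields a $\W'$, stores $\sigma_1',\sigma_2'$, and values $\ell_1,\ell_2$ (which are locations, being of reference type) with $\gamma_1(t_1)\mid\sigma_1\mredv{\ast}\ell_1\mid\sigma_1'$ and symmetrically on side two, together with $\WFRS{\sigma_1'}{\sigma_2'}{\W\extends\W'}$, the value relation $\DVTC{\W\extends\W'}{\ell_1}{\ell_2}\in\DGMV{\TRef~B}$, and the effect clause $\DEPS{\sigma_1}{\dvalocss{\gamma_1(\EPS)}}{\sigma_1'}$ (and its mirror).

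Next I would read off related contents from the reference interpretation. By \lemref{lem:valt_wf} applied to $\DVTC{\W\extends\W'}{\ell_1}{\ell_2}\in\DGMV{\TRef~B}$, we get $\ell_1\in\DOM_1(\W\extends\W')=\DOM(\sigma_1')$ and $\ell_2\in\DOM(\sigma_2')$. Instantiating the universally-quantified clause of $\DGMV{\TRef~B}$ at the very pair $\sigma_1',\sigma_2'$, which is already related at $\W\extends\W'$, then delivers $\W\extends\W'(\ell_1,\ell_2)$ and $\DVTC{\W\extends\W'}{\sigma_1'(\ell_1)}{\sigma_2'(\ell_2)}\in\DGMV{B}$. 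Extending each reduction by the \rulename{deref} rule of \figref{fig:directstyle:semantics} gives $!\gamma_i(t_i)\mid\sigma_i\mredv{\ast}\,!\ell_i\mid\sigma_i'\mredv{1}\sigma_i'(\ell_i)\mid\sigma_i'$, so I can take the result values $\sigma_1'(\ell_1),\sigma_2'(\ell_2)$, the same world extension $\W'$ (dereference allocates nothing, so the relational-store invariant is preserved with no further world growth), and the unchanged final stores $\sigma_1',\sigma_2'$, which remain related at $\W\extends\W'$. This discharges the reduction, value-relation, and store-relation clauses at once.

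It remains to discharge the reachability and effect clauses at the advertised type $\ty[\qbot]{B}$ and effect $\EPS\EFFSEQ\FX{q}$. The qualifier clause requires $\dvalq{\sigma_1}{\sigma_1'(\ell_1)}{\dvalocss{\gamma_1(\flt\overlap\qbot)}}$; since $\flt\overlap\qbot=\qbot$ and the stored value is base-typed, $\dvalocss{\sigma_1'(\ell_1)}=\emptyset$ by inspection of the ground cases of $\DGMV{B}$, so the predicate holds trivially, appealing to whichever of \lemref{lem:valq_omega}, \lemref{lem:valq_unit}, or \lemref{lem:valq_bool} matches $B$ (and symmetrically on side two). For the effect clause I would observe $\EPS\subq\EPS\EFFSEQ\FX{q}$, hence $\dvalocss{\gamma_1(\EPS)}\subseteq\dvalocss{\gamma_1(\EPS\EFFSEQ\FX{q})}$, so \lemref{lem:subeffects} upgrades the hypothesis' $\DEPS{\sigma_1}{\dvalocss{\gamma_1(\EPS)}}{\sigma_1'}$ to $\DEPS{\sigma_1}{\dvalocss{\gamma_1(\EPS\EFFSEQ\FX{q})}}{\sigma_1'}$, and likewise on side two.

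The main subtlety — and the step I expect to need the most care — is the justification that the read effect $\FX{q}$ induced by dereferencing neither grows the world nor alters the store: the reference clause of $\DGMV{\TRef~B}$ must be instantiated at the \emph{post}-stores $\sigma_1',\sigma_2'$ rather than the pre-stores, and one must verify its side conditions ($\ell_i\in\DOM(\sigma_i')$ and $\W\extends\W'(\ell_1,\ell_2)$) hold there, which is exactly where \lemref{lem:valt_wf} is used. Everything else is routine bookkeeping inherited from the term-interpretation template already exercised in the preceding compatibility lemmas.
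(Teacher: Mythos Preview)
Your proposal is correct and follows essentially the same approach as the paper's own proof: unfold the term interpretation, invoke the hypothesis to obtain related locations and post-stores, read off related stored values from $\DGMV{\TRef~B}$ instantiated at the post-stores, extend the reductions by \rulename{deref}, keep the same world extension and final stores, and discharge the qualifier and effect clauses via the base-type reachability lemmas and \lemref{lem:subeffects}. Your write-up is in fact more careful than the paper's sketch---you make explicit the use of \lemref{lem:valt_wf} to place the locations in the post-store domains, and you case-split on $B$ for the appropriate $\mathsf{valq}$ lemma rather than citing only \lemref{lem:valq_bool}---but the structure is identical.
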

\begin{proof} Let $(\W, \gamma) \in \UG{\G[\flt]}$ and $\WFRS{\sigma_1}{\sigma_2}{\W}$.
    By the assumption, $\DVTC{\W}{\gamma_1(t_1)}{\gamma_2(t_2)} \in \DGMt{\varphi}{\ty[q]{\TRef ~ B}\ \EPS}$,
    and reduction semantics, we know there exists $\sigma_1'$, $\sigma_2'$, $\ell_1$ and $\ell_2$ such that
    \begin{itemize}
        \item $\sigma_1 \mid \gamma_1(t_1) \mredv{\ast} \sigma_1' \mid \ell_1$
        \item $\sigma_2 \mid \gamma_2(t_2) \mredv{\ast} \sigma_2' \mid \ell_2$
        \item $\WFRS{\sigma_1'}{\sigma_2'}{\W\extends\W'}$
        \item $\DVTC{\W\extends\W'}{\ell_1}{\ell_2} \in \DGMV{\TRef \ B}$
        \item $\dvalq{\sigma_1}{\ell_1}{\dvalocss{\gamma_1(\flt \overlap q)}}$
        \item $\dvalq{\sigma_2}{\ell_2}{\dvalocss{\gamma_2(\flt \overlap q)}}$
        \item  $\DEPS{\sigma_1}{\dvalocss{\gamma_1(\EPS)}}{\sigma_1'} $
        \item  $\DEPS{\sigma_2}{\dvalocss{\gamma_2(\EPS)}}{\sigma_2'} $
    \end{itemize}
    We can finish the proof by reduction semantics, value interpretation, \lemref{lem:valq_bool}, \lemref{lem:subeffects},
    where we pick $\sigma_1''$ to be $\sigma_1'$, $\sigma_2''$ to be $\sigma'$, and $\W''$ to be $\emptyset$.

\end{proof}

\begin{lemma}[Compatibility: Assignments ($:=$)] \label{lem:assignment}
    If $\G[\flt] \models t_1 \equivlog t_2 : \ty[q]{(\TRef ~\Type{B})} \ \EPS[1]$,
    $\G[\flt] \models t_3 \equivlog t_4 : \ty[\qbot]{\Type{B}} \ \EPS[2]$,
    then $\G[\flt] \models t_1 := t_3 \equivlog t_2 := t_4 : \ty[\qbot]{\TUnit} \ \EPS[1]\EFFSEQ\EPS[2] \EFFSEQ\FX{\bm{q}}$.
\end{lemma}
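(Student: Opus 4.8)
The plan is to unfold \defref{def:direct:log_equiv} and the term interpretation $\DGMt{\varphi}{\ty[\qbot]{\TUnit}\ \EPS[1]\EFFSEQ\EPS[2]\EFFSEQ\FX{q}}$ from \figref{fig:direct_binary}, following the same recipe as the preceding allocation and dereference compatibility lemmas. Fix $(\W,\gamma)\in\UG{\G[\flt]}$ and related stores $\WFRS{\sigma_1}{\sigma_2}{\W}$. First I would apply the first hypothesis $\G[\flt]\models t_1\equivlog t_2:\ty[q]{(\TRef~\Type{B})}\ \EPS[1]$ to obtain reductions $\sigma_1\mid\gamma_1(t_1)\mredv{\ast}\sigma_1'\mid\ell_1$ and $\sigma_2\mid\gamma_2(t_2)\mredv{\ast}\sigma_2'\mid\ell_2$, a world $\W'$ with $\WFRS{\sigma_1'}{\sigma_2'}{\W\extends\W'}$, the relation $\DVTC{\W\extends\W'}{\ell_1}{\ell_2}\in\DGMV{\TRef~\Type{B}}$, together with the reachability facts $\dvalq{\sigma_1}{\ell_1}{\dvalocss{\gamma_1(\flt\overlap q)}}$ (and its right-hand counterpart) and the effect facts $\DEPS{\sigma_1}{\dvalocss{\gamma_1(\EPS[1])}}{\sigma_1'}$. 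Since the stores remain related under $\W\extends\W'$, I would then apply the second hypothesis to $t_3,t_4$, yielding $\sigma_1'\mid\gamma_1(t_3)\mredv{\ast}\sigma_1''\mid v_3$, $\sigma_2'\mid\gamma_2(t_4)\mredv{\ast}\sigma_2''\mid v_4$, a further world $\W''$ with $\WFRS{\sigma_1''}{\sigma_2''}{\W\extends\W'\extends\W''}$, and $\DVTC{\W\extends\W'\extends\W''}{v_3}{v_4}\in\DGMV{\Type{B}}$.

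By the evaluation-context structure of assignment, the congruence reductions above lift to the full terms, and one final \rulename{assign} step gives $\sigma_1'''\mid\tunit$ with $\sigma_1'''=\sigma_1''[\ell_1\mapsto v_3]$ (and symmetrically $\sigma_2'''=\sigma_2''[\ell_2\mapsto v_4]$). The related-store obligation is discharged by \lemref{lem:storet_update}, which gives $\WFRS{\sigma_1'''}{\sigma_2'''}{\W\extends\W'\extends\W''}$ from $\DVTC{\cdot}{\ell_1}{\ell_2}\in\DGMV{\TRef~\Type{B}}$ and $\DVTC{\cdot}{v_3}{v_4}\in\DGMV{\Type{B}}$; note the world does not grow, since the update only mutates existing locations. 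The result value $\tunit$ is related at $\TUnit$ by definition, and the reachability side-condition $\dvalq{\sigma_1}{\tunit}{\dvalocss{\gamma_1(\flt\overlap\qbot)}}$ holds vacuously by \lemref{lem:valq_unit}.

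The main obstacle is the effect obligation $\DEPS{\sigma_1}{\dvalocss{\gamma_1(\EPS[1]\EFFSEQ\EPS[2]\EFFSEQ\FX{q})}}{\sigma_1'''}$ and its mirror image. I would assemble it from three stages. The first two are already provided by the hypotheses: $\DEPS{\sigma_1}{\dvalocss{\gamma_1(\EPS[1])}}{\sigma_1'}$ and $\DEPS{\sigma_1'}{\dvalocss{\gamma_1(\EPS[2])}}{\sigma_1''}$. For the write stage I would distinguish whether $\ell_1\in\DOM(\sigma_1)$: if $\ell_1\notin\DOM(\sigma_1)$ the mutation only affects a location absent from the initial footprint, so it is observed by any effect on $\sigma_1$; if $\ell_1\in\DOM(\sigma_1)$, then $\dvalq{\sigma_1}{\ell_1}{\dvalocss{\gamma_1(\flt\overlap q)}}$ forces $\ell_1\in\dvalocss{\gamma_1(q)}$, so \lemref{lem:write} gives $\DEPS{\sigma_1''}{\dvalocss{\gamma_1(q)}}{\sigma_1'''}$. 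Composing the three stages via \lemref{lem:eff_composite} and \lemref{lem:subeffects} (each stage widened to the combined effect, using that $\DEPS{}{}{}$ only constrains locations of the initial store), and noting $\dvalocss{\gamma_1(\EPS[1])}\cup\dvalocss{\gamma_1(\EPS[2])}\cup\dvalocss{\gamma_1(q)}\subseteq\dvalocss{\gamma_1(\EPS[1]\EFFSEQ\EPS[2]\EFFSEQ\FX{q})}$, discharges the obligation. The delicate point throughout is keeping the three store footprints and their successive world extensions aligned, and applying the write-effect lemma in the store $\sigma_1''$ where \rulename{assign} actually fires while still relating the mutated location back to $\gamma_1(q)$; the right-hand obligations are handled identically.
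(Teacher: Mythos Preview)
Your proposal is correct and follows essentially the same approach as the paper: unfold the term interpretation, apply both hypotheses in sequence to obtain the intermediate reductions and related stores, then close with the \rulename{assign} step using \lemref{lem:storet_update}, \lemref{lem:valq_unit}, \lemref{lem:write}, and \lemref{lem:eff_composite}. The paper's proof is considerably terser (it just lists these lemmas after setting up the two hypotheses), whereas you spell out the effect-composition argument and the case distinction on whether $\ell_1\in\DOM(\sigma_1)$, but the underlying structure is identical.
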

\begin{proof} Let $(\W, \gamma) \in \UG{\G[\flt]}$ and $\WFRS{\sigma_1}{\sigma_2}{\W}$.
    By the first assumption, we know that there exists $\sigma_1'$, $\sigma_2'$, $\W'$, $v_1$ and $v_2$ such that
    \begin{itemize}
        \item  $\gamma_1(t_1) \mid \sigma_1 \mredv{\ast} v_1 \mid \sigma_1'$
        \item  $\gamma_2(t_2) \mid \sigma_2 \mredv{\ast} v_2 \mid \sigma_2'$
        \item  $\WFRS{\sigma_1'}{\sigma_2'}{\W\extends \W'}$
        \item  $\DVTC{\W \extends \W'}{v_1}{v_2} \in \DGMV{\TRef ~ \Type{B}}$
        \item  $\dvalq{\sigma_1}{v_1}{\dvalocss{\gamma_1(\flt \overlap q)}}$
        \item  $\dvalq{\sigma_2}{v_2}{\dvalocss{\gamma_2(\flt \overlap q)}}$
        \item  $\DEPS{\sigma_1}{\dvalocss{\gamma_1(\EPS[1])}}{\sigma_1'} $
        \item  $\DEPS{\sigma_2}{\dvalocss{\gamma_2(\EPS[1])}}{\sigma_2'} $
    \end{itemize}
    By the second assumption, we know that there exists $\sigma_1''$, $\sigma_2''$, $\W''$, $v_3$ and $v_4$, such that
    \begin{itemize}
        \item  $\gamma_1(t_3) \mid \sigma_1' \mredv{\ast} v_3 \mid \sigma_1''$
        \item  $\gamma_2(t_4) \mid \sigma_2' \mredv{\ast} v_4 \mid \sigma_2''$
        \item  $\WFRS{\sigma_1''}{\sigma_2''}{\W\extends \W'\extends \W''}$
        \item  $\DVTC{\W \extends \W'\extends \W''}{v_3}{v_4} \in \DGMV{\Type{B}}$
        \item  $\dvalq{\sigma_1'}{v_3}{\dvalocss{\gamma_1(\flt \overlap \qbot)}}$
        \item  $\dvalq{\sigma_2'}{v_4}{\dvalocss{\gamma_2(\flt \overlap \qbot)}}$
        \item  $\DEPS{\sigma_1'}{\dvalocss{\gamma_1(\EPS[2])}}{\sigma_1''} $
        \item  $\DEPS{\sigma_2'}{\dvalocss{\gamma_2(\EPS[2])}}{\sigma_2''} $
    \end{itemize}
    Then the proof can be done by the reduction semantics, \lemref{lem:storet_update}, value interpretation, \lemref{lem:valq_unit}, \lemref{lem:write} and \lemref{lem:eff_composite}.

\end{proof}

\begin{lemma}[Compatibility: Applications ($\beta$)]. If $\G[\flt] \models t_1 \equivlog t_2: \ty[q]{\left(x{\,:\,}\ty[\qsat{p} \overlap \qsat{q}]{T} \to^{\EPS[3]} \ty[r]{U}\right)}\ \EPS[2]$, and
    $\G[\flt]\models t_3 \equivlog t_4 : \ty[p]{T}\ \EPS[1]$, and $x\notin\FV(U)$,
    and $r \subq \flt,x$, and
    and $\EPS[3] \subq \flt,x$, and  $\theta = [p/x]$,
    then
    $\G[\flt] \models t_1~t_3 \equivlog t_2~t_4 : (\ty[r]{U}\ \EPS[1]\EFFSEQ\EPS[2]\EFFSEQ\FX{\EPS[3]})\theta$.
\end{lemma}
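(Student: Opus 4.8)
The plan is to unfold \defref{def:direct:log_equiv} and reduce the goal to the semantic application lemma \lemref{lem:sem_app}, exactly as the earlier compatibility lemmas reduce to their semantic counterparts. I would fix $(\W,\gamma)\in\UG{\G[\flt]}$ and related stores $\WFRS{\sigma_1}{\sigma_2}{\W}$; since $\gamma_i(t_1\,t_3)=\gamma_i(t_1)\,\gamma_i(t_3)$, the evaluation context lets me run the function subterm, then the argument subterm, and finally perform the $\beta$-step. The obligation is to produce related result values together with the reachability-qualifier and effect side conditions prescribed by $\DGMt{\varphi}{(\ty[r]{U}\ \EPS[1]\EFFSEQ\EPS[2]\EFFSEQ\EPS[3])\theta}$.

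First I would instantiate the hypothesis $\G[\flt]\models t_1\equivlog t_2$ at $(\W,\gamma)$ and $\sigma_1,\sigma_2$, obtaining a world extension $\W'$, related function values $v_{f1},v_{f2}$ with $\DVTC{\W\extends\W'}{v_{f1}}{v_{f2}}\in\DGMV{\ldots\to^{\EPS[3]}\ldots}$, the bounds $\dvalq{\sigma_i}{v_{fi}}{\dvalocss{\gamma_i(\flt\overlap q)}}$, and $\DEPS{\sigma_i}{\dvalocss{\gamma_i(\EPS[2])}}{\sigma_i'}$. Next I would instantiate $\G[\flt]\models t_3\equivlog t_4 : \ty[p]{T}\ \EPS[1]$, but against the post-stores $\sigma_1',\sigma_2'$ and the extended world $\W\extends\W'$; here \lemref{lem:valt_store_extend} carries the function-value relation forward into the world $\W\extends\W'\extends\W''$ produced by the arguments. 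This yields related argument values $v_1,v_2$ at $T$ with $\dvalq{\sigma_i'}{v_i}{\dvalocss{\gamma_i(p)}}$ (since $\flt\overlap p\subseteq p$).

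With both relations in hand I would apply \lemref{lem:sem_app}. Its hypotheses are discharged by the collected bounds, namely that the function reaches $\gamma_i(q)$ and the argument reaches $\gamma_i(p)$, together with the scoping constraints $r\subq\flt,x$ and $\EPS[3]\subq\flt,x$ from the statement; internally that lemma invokes \lemref{lem:overlapping} to certify that the function/argument overlap lies within the domain qualifier $\qsat{p}\overlap\qsat{q}$, so the application is admissible. It returns the $\beta$-reduced results $v_1',v_2'$ related at $U$ in a further world extension, plus the case-split reachability bounds (for $x\in r$ and $x\notin r$). Because $\EFFSEQ$ is set union and hence order-insensitive, the evaluation order (function, then argument, then body) already matches the stated composite effect $\EPS[1]\EFFSEQ\EPS[2]\EFFSEQ\EPS[3]$.

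The main obstacle, and where the genuine work lies, is the bookkeeping that reconciles the closing substitution $\gamma$ with the qualifier substitution $\theta=[p/x]$ in the final side conditions of the term interpretation. For reachability I would case on $x\in r$: the bound $\dvalocss{\gamma_i(r)}\overlap\dvalocss{v_{fi}}\cup\dvalocss{v_i}$ returned by \lemref{lem:sem_app} must be shown contained in $\dvalocss{\gamma_i(\flt\overlap r\theta)}$, using $\dvalocss{v_i}\subseteq\dvalocss{\gamma_i(p)}$ and the fact that $r[p/x]$ injects precisely the argument's reachable set where $x$ occurred; the case $x\notin r$ is immediate since $r\theta=r$. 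For the effect I would chain the three phases with \lemref{lem:eff_composite}, discharging its disjointness and domain premises from the freshness of the locations allocated in each phase, and threading the latent body effect $\EPS[3]$ (which may mention $x$) through the $\theta$-substitution clause of \lemref{lem:sem_app}; \lemref{lem:framing} and \lemref{lem:subeffects} absorb the over-approximation inherent in taking $\EFFSEQ=\cup$.
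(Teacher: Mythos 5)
Your proposal matches the paper's proof, which is stated in one line as following from the definition of the term interpretation, the Semantic Application lemma (Lemma~\ref{lem:sem_app}), and Effects Composition (Lemma~\ref{lem:eff_composite}); you have simply expanded that sketch with the same sequencing (run the function relation, then the argument relation in the extended world via Lemma~\ref{lem:valt_store_extend}, then apply Lemma~\ref{lem:sem_app}) and the same supporting lemmas. The additional bookkeeping you describe for reconciling $\gamma$ with $\theta=[p/x]$ and for the $x\in r$ case split is exactly what Lemma~\ref{lem:sem_app}'s conclusions (5)--(6) are designed to discharge, so the argument goes through as you outline it.
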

\begin{proof} The proof is done by the definition of term interpretation,  \lemref{lem:sem_app}  and \lemref{lem:eff_composite}.
\end{proof}

\begin{lemma}[Compatibility: Let]
    If $\G[\flt] \models t_1  \equivlog t_2: \ty[p]{S} \ \EPS[1]$,
    and  $(\G\, ,\, x: \ty[\qsat{p}\cap\qsat{\flt}]{S})^{\flt,x}\models t_3 \equivlog t_4 : \ty[q]{T}\ \EPS[2]$,
    and  $\theta = [p/x]$
    and  $\quad x \notin\FV{(T)}$,
    then $\G[\flt] \models \tlet~{x = t_1}~\tin~t_3 \equivlog \tlet~{x = t_2}~\tin~t_4 : (\ty[q]{T}\ \EPS[1]\EFFSEQ\FX{\EPS[2]})\theta$
\end{lemma}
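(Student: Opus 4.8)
The plan is to reuse the template of the preceding compatibility lemmas (\lemref{lem:assignment} and Compatibility: Applications). First I would fix an arbitrary $(\W,\gamma)\in\UG{\G[\flt]}$ together with related stores $\WFRS{\sigma_1}{\sigma_2}{\W}$, and unfold the goal to $\DVTC{\W}{\gamma_1(\tlet~x=t_1~\tin~t_3)}{\gamma_2(\tlet~x=t_2~\tin~t_4)}\in\DGMt{\flt}{(\ty[q]{T}\ \EPS[1]\EFFSEQ\EPS[2])\theta}$. Choosing $x$ fresh for $\gamma$, each closing substitution commutes with the let, so it suffices to drive each side through the evaluation context $\tlet~x=\square~\tin~\gamma_i(t_{i+2})$ and then the \rulename{let} rule, and finally to assemble the relatedness, reachability, and effect conditions demanded by the term interpretation $\DGMt{\flt}{\cdot}$.

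Next I would instantiate the first premise $\G[\flt]\models t_1\equivlog t_2:\ty[p]{S}\ \EPS[1]$ at $(\W,\gamma)$ and $\sigma_1,\sigma_2$, obtaining a world extension $\W'$, intermediate stores $\sigma_1',\sigma_2'$, related values $\DVTC{\W\extends\W'}{v_1}{v_2}\in\DGMV{S}$ with $\gamma_i(t_i)\mid\sigma_i\mredv{\ast}v_i\mid\sigma_i'$, the reachability bounds $\dvalq{\sigma_i}{v_i}{\dvalocss{\gamma_i(\flt\overlap p)}}$, and the effect facts $\DEPS{\sigma_i}{\dvalocss{\gamma_i(\EPS[1])}}{\sigma_i'}$. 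The crucial step is then to extend the relational substitution to $\gamma'=\gamma\extends(x\mapsto(v_1,v_2))$ and establish $(\W\extends\W',\gamma')\in\UG{(\G,x:\ty[\qsat{p}\cap\qsat{\flt}]{S})^{\flt,x}}$ via \lemref{lem:envt_extend_all}. Its instantiated overlap side conditions $\dvalocss{\gamma_i(\flt)}\overlap\dvalocss{v_i}\subseteq\dvalocss{\gamma_i(\qsat{p}\cap\qsat{\flt})}$ follow by combining the reachability bounds above with \lemref{lem:env_type_store_wf} (which gives $\dvalocss{\gamma_i(\flt)}\subseteq\DOM_i(\W)=\DOM(\sigma_i)$), monotonicity of $\dvalocss{\gamma_i(\cdot)}$, and the inclusion $\flt\overlap p\subseteq\qsat{p}\cap\qsat{\flt}$ coming from saturation being inflationary.

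Then I would instantiate the second premise at $(\W\extends\W',\gamma')$ and the stores $\sigma_1',\sigma_2'$, yielding a further extension $\W''$, final stores $\sigma_i''$, related results $\DVTC{\W\extends\W'\extends\W''}{v_1'}{v_2'}\in\DGMV{T}$ with $\gamma_i(t_{i+2})[x\mapsto v_i]\mid\sigma_i'\mredv{\ast}v_i'\mid\sigma_i''$, the reachability bound $\dvalq{\sigma_i'}{v_i'}{\dvalocss{\gamma_i'((\flt,x)\overlap q)}}$, and the effect fact $\DEPS{\sigma_i'}{\dvalocss{\gamma_i'(\EPS[2])}}{\sigma_i''}$. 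Prepending the \rulename{let} reductions assembles the full runs $\gamma_i(\tlet~x=t_i~\tin~t_{i+2})\mid\sigma_i\mredv{\ast}v_i'\mid\sigma_i''$, with $v_1',v_2'$ related at $T$ in $\W\extends\W'\extends\W''$; since $x\notin\FV(T)$, dropping $x$ from the substitution leaves the interpretation of $T$ unchanged by \lemref{lem:valt_extend}. It remains to reconcile the closing substitution $\gamma'$ with the syntactic substitution $\theta=[p/x]$ and to compose the two effect facts with \lemref{lem:eff_composite}.

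The hard part will be exactly this reconciliation. For the result qualifier I must show $\DOM(\sigma_i)\overlap\dvalocss{\gamma_i'((\flt,x)\overlap q)}\subseteq\dvalocss{\gamma_i(\flt\overlap q\theta)}$: the $x$-component of $\gamma_i'$ contributes $\dvalocss{v_i}$, whose intersection with $\DOM(\sigma_i)$ is bounded by $\dvalocss{\gamma_i(\flt\overlap p)}$ (first premise), and this matches the atoms $\theta$ injects for $x$ since $q\theta\supseteq p$ when $x\in q$, while the remaining atoms are unaffected because $x\notin\flt$. The same identity with $\EPS[2]$ in place of $q$, together with $\EPS[1]\theta=\EPS[1]$ (as $x\notin\FV(\EPS[1])$), feeds the effect composition; the freshness side conditions of \lemref{lem:eff_composite} — that the genuinely new locations touched during the second reduction are disjoint from $\DOM(\sigma_i)$ — are supplied by the world extensions $\W',\W''$, which by construction record only freshly allocated locations. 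Composing then yields $\DEPS{\sigma_i}{\dvalocss{\gamma_i((\EPS[1]\EFFSEQ\EPS[2])\theta)}}{\sigma_i''}$, discharging the final obligation and closing the proof.
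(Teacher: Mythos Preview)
Your direct unfolding is sound and follows the pattern of the application compatibility lemma, but the paper takes a much shorter route: it simply observes that \rulename{t-let} is derivable from \rulename{t-abs}, \rulename{t-app}, and weakening (a let binding $\tlet~x=t_1~\tin~t_3$ is $(\lambda x.t_3)\,t_1$), and appeals to the already-established compatibility lemmas for abstraction and application. So rather than redoing the substitution-extension, reachability, and effect-composition arguments from scratch, the paper factors the work through those lemmas.

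Your approach buys a self-contained argument that does not rely on the encoding of let as application, and it makes explicit the two nontrivial obligations---extending the relational substitution via \lemref{lem:envt_extend_all} with the overlap condition derived from the first premise's reachability bound, and reconciling the closing substitution $\gamma'$ with the syntactic substitution $\theta$---that the paper's one-line proof hides inside the application lemma. The paper's route buys brevity and avoids duplicating the somewhat delicate qualifier-reconciliation and effect-composition reasoning you flag as ``the hard part'': those steps are already discharged once inside \lemref{lem:sem_app} and \lemref{lem:sem_abs}, and reusing them here keeps the development uniform. Both are valid; yours is more informative, the paper's more economical.
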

\begin{proof} Since the \textsc{t-let}  is a combination of rules \textsc{t-abs}, \textsc{t-app} and weakening, the proof is analogous.
\end{proof}

\begin{lemma}[Compatibility: Subtyping]
    If $\G[\flt]  \models t_1 \equivlog t_2 : \ty[p]{S}\ \EPS[1]$
    and $\G\ts\ty[p]{S}\ \EPS[1] <: \ty[q]{T}\ \EPS[2]$
    and $q,\EPS[2]\subq\flt$,
    then $\G[\flt] \models t_1 \equivlog t_2 : \ty[q]{T}\ \EPS[2]$.
\end{lemma}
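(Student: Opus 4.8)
The plan is to unfold the definition of logical equivalence (\defref{def:direct:log_equiv}) and reduce the statement to a subsumption principle at the level of the value and term interpretations. Fix an arbitrary $(\gamma, \W) \in \UG{\G[\flt]}$. By the hypothesis $\G[\flt] \models t_1 \equivlog t_2 : \ty[p]{S}\ \EPS[1]$ we have $\DVTC{\W}{\gamma_1(t_1)}{\gamma_2(t_2)} \in \DGMt{\flt}{\ty[p]{S}\ \EPS[1]}$, and the goal is to show $\DVTC{\W}{\gamma_1(t_1)}{\gamma_2(t_2)} \in \DGMt{\flt}{\ty[q]{T}\ \EPS[2]}$. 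First I would invert the qualified-subtyping derivation: by \rulename{sqe-sub}, $\G\ts\ty[p]{S}\ \EPS[1] <: \ty[q]{T}\ \EPS[2]$ splits into ordinary subtyping $\G\ts S <: T$, qualifier subtyping $\G \ts p <: q$ (i.e.\ $p \subq q$), and subeffecting $\G \ts \EPS[1] <: \EPS[2]$ (i.e.\ $\EPS[1] \subq \EPS[2]$).

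The key ingredient is a semantic subtyping lemma for the value interpretation: if $\G\ts S <: T$, then $\DGMV{S} \subseteq \DGMV{T}$ for every world. This would be proved by induction on the ordinary subtyping derivation. The cases \rulename{s-base} and \rulename{s-ref} are immediate, since those relations are reflexive and the corresponding clauses of $\DGMV{\cdot}$ coincide. The case \rulename{s-fun} is the crux: unfolding the function clause of \figref{fig:direct_binary}, I would take a pair of arguments related at the supertype's (smaller) domain, use contravariance to coerce them into the subtype's domain via the inductive hypothesis, run the related bodies to obtain related results at the subtype's codomain, and finally use covariance of the codomain together with the induced codomain qualifier subtyping ($q \subq r$) and latent-effect subtyping to upgrade the result relation, the reachability-overlap side conditions, and the effect conditions to those demanded by the supertype. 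Closure under world extension (\lemref{lem:valt_store_extend}) is used to align the worlds introduced along the way.

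With semantic subtyping in hand, I would assemble membership in $\DGMt{\flt}{\ty[q]{T}\ \EPS[2]}$ by reusing verbatim the witnesses $\W', \sigma_1', \sigma_2', v_1, v_2$ and reductions supplied by membership in $\DGMt{\flt}{\ty[p]{S}\ \EPS[1]}$: the reductions and relational store $\WFRS{\sigma_1'}{\sigma_2'}{\W \extends \W'}$ are unchanged; the value relation $\DVTC{\W \extends \W'}{v_1}{v_2} \in \DGMV{T}$ follows from the semantic subtyping lemma applied to $\DGMV{S}$; the reachability bounds upgrade because $p \subq q$ gives $\flt \overlap p \subq \flt \overlap q$ and hence $\dvalocss{\gamma_i(\flt \overlap p)} \subseteq \dvalocss{\gamma_i(\flt \overlap q)}$, with the predicate $\dvalq{\sigma_i}{v_i}{L}$ monotone in $L$; and the effect conditions upgrade from $\EPS[1] \subq \EPS[2]$ via \lemref{lem:subeffects}. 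The main obstacle throughout is the function case of the semantic subtyping lemma, where the contravariant/covariant reshuffling must be reconciled with the several reachability-overlap and read/write-effect side conditions attached to the function value relation; everything else reduces to monotonicity of closing substitution and $\operatorname{locs}$ together with the already-established subeffecting lemma.
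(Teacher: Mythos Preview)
Your proposal is correct and is essentially the same approach the paper takes: the paper's proof is the single sentence ``By induction on the subtyping derivation,'' and what you have written is a faithful elaboration of what that induction actually entails (inversion via \rulename{sqe-sub}, a semantic-subtyping lemma on the value interpretation with the function case as the only nontrivial step, then monotonicity of the qualifier and effect clauses in the term interpretation via \lemref{lem:subeffects}).
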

\begin{proof} By induction on the subtyping derivation.
\end{proof}

\subsection{The Fundamental Theorem and Soundness}
\label{sec:direct_soundness}

\begin{theorem}[Fundamental Property]\label{them:direct_fp} If $\G[\flt] \ts t: \ty[q]{T} \ \EPS $, then $\G[\flt] \models t \equivlog t : \ty[q]{T}\ \EPS$.
\end{theorem}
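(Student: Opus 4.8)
The plan is to prove the Fundamental Property by a straightforward induction on the typing derivation $\G[\flt] \ts t : \ty[q]{T}\ \EPS$, discharging each case with the matching compatibility lemma established in \Cref{sec:direct_cls}. The compatibility lemmas are in one-to-one correspondence with the term typing rules of \Cref{fig:maybe:syntax}, so the induction is essentially a dispatch table: each rule's premises supply, via the induction hypothesis, the logical self-equivalences of the immediate subterms, and the corresponding compatibility lemma assembles these into the logical self-equivalence of the whole term. Since $\equivlog$ is defined by quantifying over related stores and closing substitutions (\Cref{def:direct:log_equiv}), nothing beyond invoking the lemmas is required at this level.

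Concretely, I would proceed case by case. The base cases \rulename{t-cst} and \rulename{t-var} require no induction hypothesis: constants are handled by the compatibility lemmas for $\Type{Alloc}$, $\TUnit$, and $\Type{Bool}$, and variables by Compatibility: Variables, whose side conditions ($x : \ty[q]{T}\in\G$ and $x\subq\flt$) are exactly the premises of \rulename{t-var}. For the inductive cases I apply the induction hypothesis to each typing premise and feed the results to the appropriate compatibility lemma: \rulename{t-abs} to Compatibility: $\lambda$, \rulename{t-app} to Compatibility: Applications, \rulename{t-ref} to Compatibility: Allocation, \rulename{t-!} to Compatibility: Dereference, \rulename{t-:=} to Compatibility: Assignments, \rulename{t-let} to Compatibility: Let, and \rulename{t-sub} to Compatibility: Subtyping. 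In each case the scoping and effect side conditions demanded by the compatibility lemma---for instance $q\subq\flt$ for abstractions, and $x\notin\FV(U)$, $r\subq\flt,x$, $\EPS[3]\subq\flt,x$ for applications---are precisely the side conditions carried by the corresponding typing rule, so they transfer verbatim, as does the dependent result annotation after the substitution $\theta=[p/x]$.

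I expect no genuine obstacle in this theorem itself: all the semantic heavy lifting---reasoning about worlds, relational stores, reachability overlap, and framing of effects---has already been carried out in the proofs of the compatibility lemmas, most notably the Semantic Function Abstraction and Semantic Application lemmas (\lemref{lem:sem_abs} and \lemref{lem:sem_app}), which underlie the $\lambda$ and application cases. The only point demanding attention is purely bookkeeping: checking that the hypotheses of each compatibility lemma line up exactly with the premises obtained by inverting the typing rule, so that the result type, qualifier, and effect match on the nose (including the $\theta$-substituted forms in \rulename{t-app} and \rulename{t-let}). Because the compatibility lemmas were deliberately stated to mirror the typing rules, this alignment is immediate, and the Fundamental Property---hence reflexivity of $\equivlog$, which is needed for soundness with respect to $\equiva$---follows.
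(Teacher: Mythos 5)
Your proposal matches the paper's proof exactly: induction on the typing derivation, with each case discharged by the corresponding compatibility lemma from \Cref{sec:direct_cls}. The paper states this in one line; your elaboration of the rule-to-lemma correspondence and the side-condition bookkeeping is consistent with what that one line implies.
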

\begin{proof} By induction on the derivation of $\strut\G[\flt] \ts t: \ty[q]{T} \ \EPS$. Each case follows from the corresponding compatibility lemma.
\end{proof}

\begin{lemma}[Congruency of Binary Logical Relations]\label{lem:direct_congruence} The binary logical relation is closed under well-typed program contexts,
    \ie, if $\G[\flt] \models t_1 \equivlog t_2: \ty[p]{T} \ \EPS$,
    and $C:(\G[\flt]; \ty[p]{T} \ \EPS) \carrow (\GP[\flt']; \ty[p']{T'} \ \EPS')$, then $\GP[\flt'] \models C[t_1] \equivlog C[t_2]: \ty[p']{T'} \ \EPS'$.
\end{lemma}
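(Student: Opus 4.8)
The plan is to prove the statement by induction on the derivation of the context typing judgement $C:(\G[\flt]; \ty[p]{T}\ \EPS) \carrow (\GP[\fltp]; \ty[p']{T'}\ \EPSPR)$, equivalently on the structure of the context $C$ (\figref{fig:direct_context}). The induction hypothesis is that for the immediate subcontext $C'$ occurring inside $C$, relatedness of the hole fillers $t_1,t_2$ is propagated to relatedness of $C'[t_1]$ and $C'[t_2]$ at the intermediate type, qualifier, and effect prescribed by $C'$'s own context typing. The crucial structural observation driving the whole argument is that every context-forming rule in \figref{fig:direct_context} mirrors exactly one term-typing rule of \maybelang{}, so each inductive case will be discharged by invoking the single corresponding compatibility lemma from \secref{sec:direct_cls}.

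In the base case \rulename{c-hole} we have $C = \square$, hence $C[t_1] = t_1$ and $C[t_2] = t_2$, and the rule supplies the subtyping premise $\G[\flt]\ts \ty[p]{T}\ \EPS <: \ty[p']{T'}\ \EPSPR$. Starting from the hypothesis $\G[\flt]\models t_1 \equivlog t_2 : \ty[p]{T}\ \EPS$, the goal then follows directly from the Compatibility: Subtyping lemma (the hole is the identity on the environment, so $\GP[\fltp]$ coincides with $\G[\flt]$ here, and only the type/qualifier/effect is adjusted).

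For each inductive case I would first apply the induction hypothesis to the subcontext, obtaining a logical equivalence $\GP[\fltp]\models C'[t_1]\equivlog C'[t_2]: \ty[r]{U}\ \EPS[i]$ for the appropriate intermediate signature read off the rule. The remaining, \emph{fixed} subterms $t$ of $C$ are closed under $\GP[\fltp]$ and well typed by the context typing rule's side premises, so by the Fundamental Property (\thmref{them:direct_fp}) each is logically related to itself, $\GP[\fltp]\models t\equivlog t:\ty[s]{V}\ \EPS[j]$. Feeding these two ingredients into the matching compatibility lemma yields the goal in each case: \rulename{c-app-1} and \rulename{c-app-2} use Compatibility: Applications; \rulename{c-$\lambda$} uses Compatibility: $\lambda$; \rulename{c-ref-1} and \rulename{c-ref-2} use Compatibility: Allocation; \rulename{c-!} uses Compatibility: Dereference; \rulename{c-:=-1} and \rulename{c-:=-2} use \lemref{lem:assignment}; and \rulename{c-let-1} and \rulename{c-let-2} use Compatibility: Let.

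The main obstacle is bookkeeping rather than conceptual difficulty: I must check that the side conditions attached to each context-typing rule (the scoping constraint $x\notin\FV(U')$, the observability constraints $r'\subq\fltp,x$ and $\EPS[3]\subq\fltp,x$, the substitution $\theta=[p/x]$, and the sequential effect compositions $\EPS[1]\EFFSEQ\EPS[2]\EFFSEQ\EPS[3]$) coincide precisely with the hypotheses demanded by the corresponding compatibility lemma. Because the context-typing rules were deliberately built to parallel the term-typing rules of \figref{fig:maybe:syntax}, these conditions line up one-to-one, so after supplying the induction hypothesis for the hole-bearing argument and the Fundamental Property for the fixed arguments, each case reduces to a single application of a compatibility lemma with no further reasoning about reachability or effects required.
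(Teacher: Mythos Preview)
Your proposal is correct and follows essentially the same approach as the paper: induction on the context-typing derivation, with each case discharged by the corresponding compatibility lemma, invoking the Fundamental Property (\thmref{them:direct_fp}) to relate the fixed subterms to themselves. The paper's proof is a one-line sketch of exactly this argument; your elaboration faithfully expands it.
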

\begin{proof} By induction on the derivation of context $C$. Each case follows from the corresponding compatibility lemma and may use the fundamental theorem (\thmref{them:direct_fp}) if necessary.
\end{proof}

\begin{lemma}[Adequacy of the binary logical relations]\label{lem:direct_adequacy}
    The binary logical relation preserves termination, \ie, if $\emptyset \models t_1 \equivlog t_2: \ty[\qbot]{T} \ \PURE$,
    then $\exists \ \sigma, \sigma', v. \ \emptyset \mid t_1 \mredv{\ast} \sigma \mid v \wedge \emptyset \mid t_2 \mredv{\ast} \sigma' \mid v$.
\end{lemma}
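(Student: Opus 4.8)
The plan is to unfold the definition of logical equivalence, feed it the trivial (empty) world and closing substitution, and then read off joint termination directly from the term interpretation $\DGMt{\emptyset}{\ty[\qbot]{T}\ \PURE}$. First I would observe that since $t_1$ and $t_2$ are closed (they are typed under the empty context $\emptyset$), the only relevant element of $\UG{\emptyset^{\flt}}$ is the trivial pair $(\W, \gamma)$ consisting of the empty world $\W = (\emptyset, \emptyset, \emptyset)$ and the empty relational substitution $\gamma = \emptyset$; for this $\gamma$ the closing substitutions are identities, i.e.\ $\gamma_1(t_1) = t_1$ and $\gamma_2(t_2) = t_2$. Instantiating the hypothesis $\emptyset \models t_1 \equivlog t_2 : \ty[\qbot]{T}\ \PURE$ (\defref{def:direct:log_equiv}) at this pair then yields $\DVTC{\W}{t_1}{t_2} \in \DGMt{\emptyset}{\ty[\qbot]{T}\ \PURE}$.

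Next I would instantiate the universally quantified initial stores in the term interpretation with the empty stores $\sigma_1 = \sigma_2 = \emptyset$. These are related at the empty world, $\WFRS{\emptyset}{\emptyset}{\W}$, which holds immediately because $\W = (\DOM(\emptyset), \DOM(\emptyset), \emptyset)$. The term interpretation (\figref{fig:direct_binary}) then delivers a world extension $\W'$, final stores $\sigma_1', \sigma_2'$, and result values $v_1, v_2$ with $t_1 \mid \emptyset \mredv{\ast} v_1 \mid \sigma_1'$ and $t_2 \mid \emptyset \mredv{\ast} v_2 \mid \sigma_2'$ (equivalently $\emptyset \mid t_i \mredv{\ast} \sigma_i' \mid v_i$), together with $\WFRS{\sigma_1'}{\sigma_2'}{\W \extends \W'}$ and $\DVTC{\W \extends \W'}{v_1}{v_2} \in \DGMV{T}$. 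This already establishes that both terms terminate, so it only remains to identify the two result values and set $\sigma = \sigma_1'$, $\sigma' = \sigma_2'$.

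The crux — and the main obstacle — is concluding $v_1 = v_2 =: v$, which is what makes the single shared $v$ in the conclusion meaningful. This equality does \emph{not} hold for arbitrary $T$: two logically related locations or closures need not be syntactically equal (they are only tied together through the partial bijection in $\W \extends \W'$). The argument therefore rests on the fact that adequacy is invoked only at the observable base types, where the value interpretation forces syntactic equality. Concretely, inspecting $\DGMV{\Type{Bool}}$, $\DGMV{\TUnit}$, and $\DGMV{\Type{Alloc}}$ in \figref{fig:direct_binary}, each relation pairs a value only with itself, so $\DVTC{\W \extends \W'}{v_1}{v_2} \in \DGMV{T}$ gives $v_1 = v_2$ at such $T$. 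In the soundness proof adequacy is applied through a Boolean context (cf.\ the refined contextual-equivalence definition), so $T = \Type{Bool}$ and the base-type case suffices. I would flag this restriction to observable ground types explicitly when carrying out the step, since it is precisely the point where relational relatedness collapses to literal equality of observations.
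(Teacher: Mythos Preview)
Your proposal is correct and follows the same approach as the paper: instantiate at the empty world/substitution in $\UG{\emptyset}$ and read off the conclusion from the term interpretation. The paper's proof is a two-line sketch that omits the point you flag, namely that $v_1 = v_2$ only follows from $\DVTC{\W \extends \W'}{v_1}{v_2} \in \DGMV{T}$ when $T$ is an observable base type; you are right to make this restriction explicit, since the lemma as stated for arbitrary $T$ is only used (and only needed) at $T = \Type{Bool}$ in the soundness proof.
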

\begin{proof} We know $(\emptyset, \emptyset) \in \UG{\emptyset}$ by the interpretation of typing context.
    Then we can prove the result by the binary term interpretation (\figref{fig:direct_binary}).
\end{proof}

\begin{theorem}[Soundness of Binary Logical Relations]\label{thm:direct_lr_soundness} The binary logical relation is sound w.r.t. contextually equivalence, \ie,
    if $\G[\flt] \ts t_1: \ty[p]{T}\ \EPS$ and $\G[\flt] \ts t_2: \ty[p]{T}\ \EPS$, then
    $\G[\flt] \models t_1 \equivlog t_2: \ty[p]{T}\ \EPS$ implies $\G[\flt] \models t_1 \equiva t_2: \ty[p]{T}\ \EPS$.
\end{theorem}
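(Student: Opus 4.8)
The plan is to follow the standard two-step route from logical to contextual equivalence, invoking the congruence and adequacy lemmas established just above. Assume the two well-typedness hypotheses $\G[\flt] \ts t_1 : \ty[p]{T}\ \EPS$ and $\G[\flt] \ts t_2 : \ty[p]{T}\ \EPS$ together with $\G[\flt] \models t_1 \equivlog t_2 : \ty[p]{T}\ \EPS$. By \defref{def:standard_equiv} in its refined Boolean-context form, to conclude $\G[\flt] \models t_1 \equiva t_2 : \ty[p]{T}\ \EPS$ I must exhibit, for every closing context $C : (\G[\flt]; \ty[p]{T}\ \EPS) \carrow (\emptyset; \ty[\qbot]{\Type{Bool}}\ \PURE)$, a common answer value $v$ and final stores such that $\emptyset \mid C[t_1] \mredv{\ast} \sigma \mid v$ and $\emptyset \mid C[t_2] \mredv{\ast} \sigma' \mid v$.

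First I would fix such a context $C$ and apply the Congruency lemma (\lemref{lem:direct_congruence}): since the binary logical relation is closed under well-typed program contexts, and $C$ maps the source typing $(\G[\flt]; \ty[p]{T}\ \EPS)$ to $(\emptyset; \ty[\qbot]{\Type{Bool}}\ \PURE)$, the hypothesis $\G[\flt] \models t_1 \equivlog t_2 : \ty[p]{T}\ \EPS$ yields $\emptyset \models C[t_1] \equivlog C[t_2] : \ty[\qbot]{\Type{Bool}}\ \PURE$. Then I would invoke Adequacy (\lemref{lem:direct_adequacy}), which applies precisely at the closed Boolean type under the empty context and pure effect, producing stores $\sigma, \sigma'$ and a single value $v$ with $\emptyset \mid C[t_1] \mredv{\ast} \sigma \mid v$ and $\emptyset \mid C[t_2] \mredv{\ast} \sigma' \mid v$. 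Because $v$ is shared, this is exactly the Boolean-context obligation, which discharges the goal.

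Two points merit care rather than genuine difficulty. The congruence step silently relies on the Fundamental Property (\thmref{them:direct_fp}) for the subterms of $C$ other than the hole, so I would ensure the induction underlying \lemref{lem:direct_congruence} is available for every context former; this is where the per-construct compatibility lemmas are consumed. The main obstacle, such as it is, lies in matching the \emph{observation}: this development replaces the usual termination-based contextual equivalence with the refined Boolean-context variant because the fragment is total, so adequacy must deliver a common answer value, not merely joint termination. The binary term interpretation $\DGMt{\varphi}{\ty[\qbot]{\Type{Bool}}\ \PURE}$ forces both terms to reduce to related values, and at $\Type{Bool}$ relatedness is literal equality, so the shared $v$ is guaranteed; I would spell out this last inference explicitly, since it is the crux that makes the refined definition go through, together with the observation that $(\emptyset,\emptyset) \in \UG{\emptyset}$ supplies the trivial relational substitution needed to instantiate the closed term relation.
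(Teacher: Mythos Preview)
Your proposal is correct and follows essentially the same route as the paper's own proof: fix a closing Boolean context, apply the congruency lemma (\lemref{lem:direct_congruence}) to obtain $\emptyset \models C[t_1] \equivlog C[t_2] : \ty[\qbot]{\Type{Bool}}\ \PURE$, then invoke adequacy (\lemref{lem:direct_adequacy}) to extract the common answer value. Your additional remarks about the fundamental property underlying congruence, the equality of related booleans, and the trivial substitution $(\emptyset,\emptyset)\in\UG{\emptyset}$ are accurate elaborations that the paper leaves implicit.
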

\begin{proof} By the refined definition of contextual equivalence, to prove the result, we are given a well-typed context $C: (\G[\flt]; \ty[p]{T} \ \EPS) \carrow (\emptyset; \ty[\qbot]{B} \ \PURE)$,
    and we need to show $\exists \ \sigma, \sigma', v. \ \emptyset \mid C[t_1] \mredv{\ast}  \sigma \mid v  \wedge \emptyset \mid C[t_2] \mredv{\ast} \sigma' \mid v$. By the assumption, and the congruency lemma (\lemref{lem:direct_congruence}), we have $\emptyset \models C[t_1] \equivlog C[t_2]: \ty[\qbot]{B} \ \PURE$, which
    leads to $\exists \ \sigma, \sigma', v. \ \emptyset \mid C[t_1] \mredv{\ast}  \sigma \mid v  \wedge \emptyset \mid C[t_2] \mredv{\ast} \sigma' \mid v$ by the adequacy lemma (\lemref{lem:direct_adequacy}).
\end{proof}

\subsection{Equational Rules}
\label{sec:direct_equiv}
\begin{figure*}[t]
    \begin{mdframed}
        \begin{mathpar}
            \inferrule*[left=dce]
            {
            \G[\flt]\ts t_1: \ty[q_1]{T_1}\ \EPS[1] \\
            \G[\flt]\ts t_2: \ty[q_2]{T_2}\ \EPS[2] \\
            t_1 \text{ terminates}  \\
            \EPS[1] = \PURE \ \text{or} \ \omega
            }
            {
            \G[\flt] \models \tlet~{x = t_1}~\tin~t_2 \equivlog t_2 : \ty[q_2]{T_2} \EPS[2]
            }

            \inferrule*[left=comm]
            {\G[\flt]\ts  t_1: \ty[q_1]{T_1}\ \EPS[1] \\
            \G[\flt]\ts t_2: \ty[q_2]{T_2}\ \EPS[2] \\
            \csxs[\flt, x, y]{\G\ ,\ x: \ty[\qsat{q_1} \overlap \qsat{\flt}]{T_1}, \ y: \ty[\qsat{q_2}\overlap \qsat{(\flt, x)}]{T_2}} \ts t: \ty[q]{T} \ \EPS \\
            \qsat{\EPS[1]} \overlap \qsat{\EPS[2]} = \emptyset \\
            x \not\in \FV(T) \\  y \not\in \FV(T) \\
            \theta = [q_2/y][q_1/x]
            }
            {\G[\flt] \models \tlet~x = t_1~\tin~ \tlet~ y = t_2 ~ \tin ~ t \equivlog \tlet~ y = t_2\ \tin \ \tlet \ x = t_1 ~ \tin ~ t : (\ty[q]{T} \ \EPS[1] \EFFSEQ \EPS[2] \EFFSEQ \EPS)\theta}

            \inferrule*[left=$\lambda$-hoist]
            { \G[\flt]\ts  t_1: \ty[q_1]{T_1}\ \PURE \\
            \csxs[q, x, y]{\G\ ,\ x: \ty[p]{T}, y: \ty[\qsat{q_1}\overlap \qsat{(\flt, x)}]{T_1}}  \ts \ t: \ty[r]{U} \ \EPS \\ \theta =[q_1/y]  \\ x \not\in \FV(U) \\ y \not\in \FV(U)
            }
            {
            \G[\flt] \models (\lambda x: \ty[p]{T}. \ \tlet \ y = t_1 \ \tin \ t) \equivlog (\tlet \ y = t_1 \ \tin \ \lambda x: \ty[p]{T}. t) : (x: \ty[p]{T} \to^{\EPS} \ty[r]{U}\theta)^{q} \ \PURE
            }

            \inferrule*[left=$\beta$-inlining]
            {
            \csxs[q,x]{\G, x: \ty[p]{T}}   \ts t_1 : \ty[r]{U} \EPS \\
            \G[\flt]\ts t_2 : \ty[p]{T}\ \PURE \\ x\notin\FV(U) \\
            \theta = [p/x]
            }
            {
            \G[\flt] \models (\lambda x: \ty[p]{T}.  t_1)(t_2) \equivlog t_1[t_2/x] :  (\ty[r]{U}\ \EPS)\theta
            }

            \inferrule*[left = e-cse]
            {
            \G[\flt] \ts t_1: \ty[q_1]{T_1} \ \EPS[1] \\
            \csxs[\flt, x, y]{\G\ ,\ x: \ty[\qsat{q_1} \overlap \qsat{\flt}]{T_1}, y: \ty[\qsat{q_1}\overlap \qsat{(\flt, x)}]{T_1}} \ts \ t: \ty[q]{T} \ \EPS \\
            \omega \notin \EPS[1] \\ \theta = [x/y]
            }
            {
            \G[\flt] \models (\tlet\ x = t_1 \ \tin \ \tlet \ y = t_1 \ \tin \ t) \equivlog (\tlet \ x = t_1 \ \tin \ t\ \theta) : (\ty[q]{T} \ \EPS)\theta \EFFSEQ\FX{\EPS[1]} \\
            }

        \end{mathpar}
        \caption{Equational rules for the \maybelang{}-calculus.}
        \label{fig:direct_equiv}
    \end{mdframed}
\end{figure*} 
\figref{fig:direct_equiv} shows equational rules for \maybelang{} with effects specifying logically equivalent terms.
Rule \rulename{dce} permits the removal of a terminating term, $t_1$, whose computation result is an unused value, provided that the effect of that computation only allows allocation.
Removing a term with effects may not be sound, as the effects could be observed by the following computations.
Rule \rulename{comm} permits re-ordering of two terms if their effects are separate, which entails disjoint sets of store locations (\colref{coro:preservation_separation}).
Rule \rulename{$\lambda$-hoist} permits a pure computation to be moved out of the abstraction boundary.
Rule \rulename{$\beta$-inlining} permits replacing a function call site $t_2$ with the body of the called function, provided that $t_2$ is pure.
Rule \rulename{e-cse} permits removing a  duplicated computation, provided that no fresh allocations occur during the reduction of the term.
The rest of this section shows the proofs of those equational rules by using our logical relations.

\begin{lemma}[Dead Code Elimination]\label{lem:dce} If
    $\G[\flt]\ts t_1: \ty[q_1]{T_1}\ \EPS[1]$,
    and $\G[\flt]\ts t_2: \ty[q_2]{T_2}\ \EPS[2]$,
    and $\EPS[1] = \PURE$ or $\omega$,
    and $t_1$ terminates,
    then
    $\G[\flt]\ts \tlet~{x = t_1}~\tin~t_2 \equivlog t_2 : \ty[q_2]{T_2} \EPS[2]$.
\end{lemma}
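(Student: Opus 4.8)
The goal is to establish the logical equivalence $\G[\flt] \models \tlet~{x = t_1}~\tin~t_2 \equivlog t_2 : \ty[q_2]{T_2}\ \EPS[2]$, which by \defref{def:direct:log_equiv} amounts to showing, for every $(\gamma,\W)\in\UG{\G[\flt]}$, that $\DVTC{\W}{\gamma_1(\tlet~{x=t_1}~\tin~t_2)}{\gamma_2(t_2)}\in\DGMt{\flt}{\ty[q_2]{T_2}\ \EPS[2]}$. The plan is to fix such a $(\gamma,\W)$ together with a pair of related stores $\WFRS{\sigma_1}{\sigma_2}{\W}$, unfold the binary term interpretation of \figref{fig:direct_binary}, and then exhibit the required reductions, final world, value relation, reachability bounds, and effect witnesses. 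Since $t_2$ is typed under $\G[\flt]$ (which does not bind $x$), we have $x\notin\FV(\gamma_1(t_2))$, so the let-binding on the left is genuinely dead.

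First I would evaluate the left-hand side. Applying the fundamental property (\thmref{them:direct_fp}) to $t_1$ at $(\gamma,\W)$ and $\WFRS{\sigma_1}{\sigma_2}{\W}$ yields that $\gamma_1(t_1)$ reduces to a value in some store $\sigma_1^a$ with $\DOM(\sigma_1)\subseteq\DOM(\sigma_1^a)$ and $\DEPS{\sigma_1}{\dvalocss{\gamma_1(\EPS[1])}}{\sigma_1^a}$; this both discharges the termination hypothesis and, via framing (\lemref{lem:framing}) together with the fact that $\EPS[1]$ is $\PURE$ or the allocator $\omega$ (whose store cell holds the unchanging constant $\omega$), shows that $\sigma_1^a$ agrees with $\sigma_1$ on all of $\DOM(\sigma_1)$, i.e.\ $\gamma_1(t_1)$ only grows the store by fresh allocations. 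By the CBV semantics and $x\notin\FV(\gamma_1(t_2))$, we get $\gamma_1(\tlet~{x=t_1}~\tin~t_2)\mid\sigma_1\mredv{\ast}\gamma_1(t_2)\mid\sigma_1^a$, the \rulename{let} step leaving $\gamma_1(t_2)$ unchanged. The key device for relating the two runs of $t_2$ is then the asymmetric world $\W'=\W\extends\W_L$, where $\W_L$ carries the fresh left locations $\DOM(\sigma_1^a)\setminus\DOM(\sigma_1)$ with an empty bijection component (these allocations have no right-hand counterpart), so that $\WFRS{\sigma_1^a}{\sigma_2}{\W'}$ holds.

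Next, by closure of the value (hence context) interpretation under world extension (\lemref{lem:valt_store_extend}) we still have $(\gamma,\W')\in\UG{\G[\flt]}$, and applying the fundamental property to $t_2$ at $(\gamma,\W')$ with stores $\sigma_1^a,\sigma_2$ gives reductions $\gamma_1(t_2)\mid\sigma_1^a\mredv{\ast} w_1\mid\tau_1$ and $\gamma_2(t_2)\mid\sigma_2\mredv{\ast} w_2\mid\tau_2$, a world $\W''$ with $\WFRS{\tau_1}{\tau_2}{\W'\extends\W''}$, the value relation $\DVTC{\W'\extends\W''}{w_1}{w_2}\in\DGMV{T_2}$, the reachability bounds, and the effect facts for $\EPS[2]$. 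Setting the witness world to $\W_L\extends\W''$ (so that $\W\extends(\W_L\extends\W'')=\W'\extends\W''$), the value and store relations transfer verbatim. The reachability bound $\dvalq{\sigma_1}{w_1}{\dvalocss{\gamma_1(\flt\overlap q_2)}}$ follows from its $\sigma_1^a$-version, since $\DOM(\sigma_1)\subseteq\DOM(\sigma_1^a)$ makes the predicate monotone in the store domain, and the $\sigma_2$-version is immediate. For the effect witness $\DEPS{\sigma_1}{\dvalocss{\gamma_1(\EPS[2])}}{\tau_1}$ I would compose the two phases: on $\DOM(\sigma_1)$ the stores $\sigma_1$ and $\sigma_1^a$ coincide (by the $\EPS[1]$-purity argument above), and $\sigma_1^a,\tau_1$ coincide off $\dvalocss{\gamma_1(\EPS[2])}$ (the $\EPS[2]$-effect of $t_2$), hence $\sigma_1,\tau_1$ coincide off $\dvalocss{\gamma_1(\EPS[2])}$; the $\sigma_2$-version is again immediate. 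This completes membership in $\DGMt{\flt}{\ty[q_2]{T_2}\ \EPS[2]}$.

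The main obstacle, I expect, is the asymmetry introduced by $t_1$'s fresh allocations: the left store strictly grows while the right does not, so the proof cannot reuse the original world but must thread the computation through the partial-bijection world $\W'$ with unmatched left locations, and must verify both that the final value/store relation can be pulled back to an extension of $\W$ and that the two-phase left reduction still meets the $\EPS[2]$ effect bound. The latter rests crucially on the fact that $\PURE$/allocation effects never alter existing store contents, which is precisely the hypothesis ``$\EPS[1]=\PURE$ or $\omega$'' and is what would fail for an arbitrary effectful $t_1$.
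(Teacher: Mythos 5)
Your proof is correct and follows the same overall skeleton as the paper's (fundamental property applied to $t_1$ and to $t_2$, plus the observation that $\EPS[1]=\PURE$ or $\omega$ forces the post-$t_1$ store to agree with $\sigma_1$ on $\DOM(\sigma_1)$ and only grow by fresh cells), but you handle the crucial alignment of the two runs of $t_2$ differently. The paper runs $t_2$ from the \emph{original} related pair $(\sigma_1,\sigma_2)$ to obtain related results, and then appeals to determinism of the reduction semantics to identify the value produced by the left-hand run of $t_2$ from the enlarged store $\sigma_1\ast\sigma_{fr1}$ with the value produced from $\sigma_1$ — a step that is glossed over, since the two runs start from different configurations and agree only up to the choice of fresh locations, and the paper does not re-establish the final store relation or the effect witness for the composite left-hand reduction. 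You instead package the fresh left-only locations into an asymmetric world $\W_L$ with empty bijection, show $(\gamma,\W\extends\W_L)\in\UG{\G[\flt]}$ via closure under world extension (\lemref{lem:valt_store_extend}), and run $t_2$ \emph{once} from the genuinely occurring pair $(\sigma_1^a,\sigma_2)$; the relatedness of results, the pullback of the reachability bound along $\DOM(\sigma_1)\subseteq\DOM(\sigma_1^a)$, and the two-phase composition of the $\DEPS{}{}{}$ witness then all fall out directly. This costs you the explicit world bookkeeping but buys a proof that avoids the determinism-across-different-stores gap and spells out the final store relation and effect obligations that the paper leaves implicit; both routes rest on the same essential hypothesis that $\PURE$/allocation effects never alter existing store contents.
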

\begin{proof}
    By the fundamental property (\thmref{them:direct_fp}), we know on the first assumption, we know
    $\G[\flt] \models t_1 \equivlog t_1: \ty[q_1]{T_1}\ \EPS[1]$.

    Let $(\W, \gamma) \in \UG{\G[\flt]}$ and $\WFRS{\sigma_1}{\sigma_2}{\W}$.
    By the definition of binary logical relations and binary term interpretation, we know there exists $\sigma_{11}$, $\sigma_{12}$, $\W_1$, $v_{11}$ and $v_{12}$, such that
    \begin{itemize}
        \item $\sigma_1 \mid \gamma_1(t_1) \mredv{\ast} \sigma_{11} \mid v_{11}$
        \item $\sigma_2 \mid \gamma_2(t_1) \mredv{\ast} \sigma_{12} \mid v_{12}$
        \item $\DVTC{\W\extends\W_1}{v_{11}}{v_{12}} \in \DGMV{\Type{T_1}}$
        \item $\WFRS{\sigma_{11}}{\sigma_{12}}{\W \extends \W_1}$
        \item $\dvalq{\sigma_1}{v_{11}}{(\dvalocss{\gamma_1(\flt \overlap q_1)})}$
        \item $\dvalq{\sigma_2}{v_{12}}{(\dvalocss{\gamma_2(\flt \overlap q_1)})}$
        \item $\DEPS{\sigma_1}{\dvalocss{\gamma_1(\EPS[1])}}{\sigma_{11}}$
        \item $\DEPS{\sigma_2}{\dvalocss{\gamma_2(\EPS[1])}}{\sigma_{12}}$
    \end{itemize}

    By $\EPS[1] = \PURE$ or $\omega$, we know
    $\sigma_{11} = \sigma_1 \ast \sigma_{fr1}$, and $\sigma_{12} = \sigma_2 \ast \sigma_{fr2}$.

    By the fundamental property (\thmref{them:direct_fp}) again, we know
    $\G[\flt] \models t_2 \equivlog t_2: \ty[q_2]{T_2}\ \EPS[2]$.

    By the binary term interpretation, we know
    there exists $\sigma_{21}$, $\sigma_{22}$, $\W_2$, $v_{21}$ and $v_{22}$, such that

    \begin{itemize}
        \item $\sigma_1 \mid \gamma_1(t_2) \mredv{\ast} \sigma_{21} \mid v_{21}$
        \item $\sigma_2 \mid \gamma_2(t_2) \mredv{\ast} \sigma_{22} \mid v_{22}$
        \item $\DVTC{\W\extends\W_2}{v_{21}}{v_{22}} \in \DGMV{\Type{T_2}}$
        \item $\WFRS{\sigma_{21}}{\sigma_{22}}{\W \extends \W_2}$
        \item $\dvalq{\sigma_1}{v_{21}}{(\dvalocss{\gamma_1(\flt \overlap q_2)})}$
        \item $\dvalq{\sigma_2}{v_{22}}{(\dvalocss{\gamma_2(\flt \overlap q_2)})}$
        \item $\DEPS{\sigma_1}{\dvalocss{\gamma_1(\EPS[2])}}{\sigma_{11}}$
        \item $\DEPS{\sigma_2}{\dvalocss{\gamma_2(\EPS[2])}}{\sigma_{12}}$
    \end{itemize}

    From the left, by the reduction semantics, we have:
    \begin{itemize}
        \item $\sigma_1 \mid \gamma_1(t_1) \mredv{\ast} \sigma_1 \ast \sigma_{fr1} \mid v_{11}$
        \item $\sigma_1 \ast \sigma_{fr1} \mid \gamma_1(t_2) \mredv{\ast} \sigma_{21} \mid v_{21}'$
    \end{itemize}

    By the deterministic of reduction semantics, we know $v_{21} = v_{21}'$.

    Form the right, by the reduction semantics, we know $\sigma_2 \mid \gamma_2(t_2) \mredv{\ast} \sigma_{22} \mid v_{22}$.

    By the fact we have above, the proof is done.

\end{proof}

\begin{lemma}[comm]\label{lem:comm} If
    $\G[\flt]\ts  t_1: \ty[q_1]{T_1}\ \EPS[1]$,
    and $\G[\flt]\ts t_2: \ty[q_2]{T_2}\ \EPS[2]$,
    and $\csxs[\flt, x, y]{\G\ ,\ x: \ty[\qsat{q_1} \overlap \qsat{\flt}]{T_1}, \ y: \ty[\qsat{q_2}\overlap \qsat{(\flt, x)}]{T_2}}  \ts t: \ty[q]{T} \ \EPS$,
    and $\qsat{\EPS[1]} \overlap \qsat{\EPS[2]} = \qbot$,
    and $x \not\in \FV(T)$,
    and $y \not\in \FV(T)$,
    and $\theta = [q_2/y][q_1/x]$,
    then
    $\G[\flt] \models \tlet~x = t_1~\tin~ \tlet~ y = t_2 ~ \tin ~ t \equivlog \tlet~ y = t_2\ \tin \ \tlet \ x = t_1 ~ \tin ~ t : (\ty[q]{T} \ \EPS[1] \EFFSEQ \EPS[2] \EFFSEQ \EPS)\theta$.
\end{lemma}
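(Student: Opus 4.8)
The plan is to unfold \defref{def:direct:log_equiv} together with the binary term interpretation of \figref{fig:direct_binary}, and then reduce the claim to the \emph{independence} of the two computations $t_1$ and $t_2$, which is exactly what the hypothesis $\qsat{\EPS[1]}\overlap\qsat{\EPS[2]}=\qbot$ provides. Write $L = \tlet~x=t_1~\tin~\tlet~y=t_2~\tin~t$ and $R = \tlet~y=t_2~\tin~\tlet~x=t_1~\tin~t$. Fixing $(\W,\gamma)\in\UG{\G[\flt]}$ and related stores $\WFRS{\sigma_1}{\sigma_2}{\W}$, I must exhibit final stores, values, and a world extension witnessing $\DVTC{\W}{\gamma_1(L)}{\gamma_2(R)}\in\DGMt{\flt}{(\ty[q]{T}\ \EPS[1]\EFFSEQ\EPS[2]\EFFSEQ\EPS)\theta}$. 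Since $\EFFSEQ$ is just $\cup$ and hence commutative, the combined effect annotation of $L$ and $R$ coincides, so the effect obligation in the term interpretation is symmetric in the two orderings; likewise $x,y\notin\FV(T)$ makes the result qualifier depend only on $q_1,q_2$ symmetrically.

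First I would apply the Fundamental Property (\thmref{them:direct_fp}) to $t_1$, $t_2$, and $t$ to obtain their self-equivalences, and read off from \figref{fig:direct_binary} the evaluation behavior of each piece under $\gamma_1$ and $\gamma_2$: the two copies of $t_1$ reduce to related values with effect footprint $\gamma(\EPS[1])$, and similarly $t_2$ with footprint $\gamma(\EPS[2])$. The crux is then to \emph{commute} the two bindings on each side so that the store presented to $t$ is the same on the left and the right. On the left, $t_2$ runs from the post-$t_1$ store, while on the right $t_1$ runs from the post-$t_2$ store. By Framing (\lemref{lem:framing}) each sub-computation leaves the complement of its saturated effect untouched, and by disjointness $\qsat{\gamma(\EPS[1])}\overlap\qsat{\gamma(\EPS[2])}=\qbot$ (a consequence of the hypothesis together with \lemref{lem:env_type_store_wf}), so running one first does not disturb the store region the other depends on. Effect Separation (\lemref{lem:eff_sep}) packages precisely this: the two orders agree on the shared, unmodified portion of the store. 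I would use it, together with Effects Composition (\lemref{lem:eff_composite}) to assemble the combined footprint, to conclude that the stores reaching $t$ on the two sides are related; then the self-equivalence of $t$, instantiated at the extended world and at $\gamma$ extended by the two related bound values (\lemref{lem:envt_extend}), closes the goal.

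The main obstacle is the bookkeeping of freshly allocated locations. The two orderings allocate the fresh cells of $t_1$ and $t_2$ in opposite order, so the final stores are not literally equal but differ by a permutation of the fresh region; this is exactly what the partial-bijection component of a world (\defref{def:world}) is meant to absorb. Concretely, I would build the witnessing world as $\W$ extended by the disjoint sub-worlds contributed by $t_1$, $t_2$, and $t$ — their disjointness following from the separated effects and from \lemref{lem:storet_extend} for the allocations — and appeal to monotonicity (\lemref{lem:valt_store_extend}) to transport the relatedness of the bound values into the common extended world before invoking $t$'s interpretation. Some care is needed so that the reachability bounds delivered by the term interpretation for $t_1$ and $t_2$, namely $\gamma(\flt\overlap q_1)$ and $\gamma(\flt\overlap q_2)$, line up with the qualifiers $\qsat{q_1}\overlap\qsat{\flt}$ and $\qsat{q_2}\overlap\qsat{(\flt,x)}$ demanded when extending the context for $t$; discharging these side conditions, rather than the commutation argument itself, is where most of the routine-but-fiddly work will lie.
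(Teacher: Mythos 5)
Your proposal follows essentially the same route as the paper's proof: fundamental property applied to $t_1$, $t_2$, and $t$ to obtain self-equivalences, then Effect Separation (\lemref{lem:eff_sep}, itself built on Framing) together with the disjointness hypothesis to show the two evaluation orders present related stores to $t$, with fresh allocations absorbed by the world's partial bijection and monotonicity (\lemref{lem:valt_store_extend}) used to transport the bound values into the common extended world. The paper additionally makes explicit the determinism of reduction when re-tracing the swapped order and the commutation of the two closing substitutions on $t$, but these are exactly the bookkeeping steps you anticipate.
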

\begin{proof}
    By the fundamental property (\thmref{them:direct_fp}), we know
    $\G[\flt] \models t_1 \equivlog t_1: \ty[q_1]{T_1}\ \EPS[1]$.

    Let $(\W, \gamma) \in \UG{\G[\flt]}$ and $\WFRS{\sigma_1}{\sigma_2}{\W}$.
    By the definition of binary logical relations and binary term interpretation, we know
    there exists $\sigma_{11}$, $\sigma_{12}$, $\W_1$, $v_{11}$ and $v_{12}$, such that
    \begin{itemize}
        \item $\sigma_1 \mid \gamma_1(t_1) \mredv{\ast} \sigma_{11} \mid v_{11}$
        \item $\sigma_2 \mid \gamma_2(t_1) \mredv{\ast} \sigma_{12} \mid v_{12}$
        \item $\DVTC{\W\extends\W_1}{v_{11}}{v_{12}} \in \DGMV{\Type{T_1}}$
        \item $\WFRS{\sigma_{11}}{\sigma_{12}}{\W \extends \W_1}$
        \item $\dvalq{\sigma_1}{v_{11}}{(\dvalocss{\gamma_1(\flt \overlap q_1)})}$
        \item $\dvalq{\sigma_2}{v_{12}}{(\dvalocss{\gamma_2(\flt \overlap q_1)})}$
        \item $\DEPS{\sigma_1}{\dvalocss{\gamma_1(\EPS[1])}}{\sigma_{11}}$
        \item $\DEPS{\sigma_2}{\dvalocss{\gamma_2(\EPS[1])}}{\sigma_{12}}$
    \end{itemize}

    By the fundamental property (\thmref{them:direct_fp}) again, we know
    $\G[\flt] \models t_2 \equivlog t_2: \ty[q_2]{T_2}\ \EPS[2]$.
    By the binary term interpretation, we know
    there exists $\sigma_{21}$, $\sigma_{22}$, $\W_2$, $v_{21}$ and $v_{22}$, such that

    \begin{itemize}
        \item $\sigma_1 \mid \gamma_1(t_2) \mredv{\ast} \sigma_{21} \mid v_{21}$
        \item $\sigma_2 \mid \gamma_2(t_2) \mredv{\ast} \sigma_{22} \mid v_{22}$
        \item $\DVTC{\W\extends\W_2}{v_{21}}{v_{22}} \in \DGMV{\Type{T_2}}$
        \item $\WFRS{\sigma_{21}}{\sigma_{22}}{\W \extends \W_2}$
        \item $\dvalq{\sigma_1}{v_{21}}{(\dvalocss{\gamma_1(\flt \overlap q_2)})}$
        \item $\dvalq{\sigma_2}{v_{22}}{(\dvalocss{\gamma_2(\flt \overlap q_2)})}$
        \item $\DEPS{\sigma_1}{\dvalocss{\gamma_1(\EPS[2])}}{\sigma_{21}}$
        \item $\DEPS{\sigma_2}{\dvalocss{\gamma_2(\EPS[2])}}{\sigma_{22}}$
    \end{itemize}

    Let $\sigma_{1a} =  \restricts{\sigma_1}{\dvalocss{\gamma_1(\qsat{\EPS[1]})}}$,
    and $\sigma_{1b} =  \restricts{\sigma_1}{\dvalocss{\gamma_1(\qsat{\EPS[2]})}}$, and

    $\sigma_{1c} = \restricts{\sigma_1}{(\DOM(\sigma_1) - \dvalocss{\gamma_1(\qsat{\EPS[1]})} - \dvalocss{\gamma_1(\qsat{\EPS[2]})})}$.

    By \lemref{lem:eff_sep}, we know,

    (a) $\sigma_{11} = \sigma_{1a}' \ast \sigma_{1b} \ast \sigma_{1c} \ast \sigma_{fr11}$, for some $\sigma_{1a}'$, where $\sigma_{fr11} \ast \sigma_1$; and

    (b) $\sigma_{21} = \sigma_{1a} \ast \sigma_{1b}' \ast \sigma_{1c} \ast \sigma_{fr21}$, for some $\sigma_{1b}'$, where $\sigma_{fr21} \ast \sigma_1$; and

    (c) $\sigma_{fr11} \ast \sigma_{fr21}$.

    Let $\sigma_{2a} =  \restricts{\sigma_2}{\dvalocss{\gamma_2(\qsat{\EPS[1]})}}$,
    and $\sigma_{2b} =  \restricts{\sigma_2}{\dvalocss{\gamma_2(\qsat{\EPS[2]})}}$,
    and

    $\sigma_{2c} = \restricts{\sigma_2}{(\DOM(\sigma_2) - \dvalocss{\gamma_2(\qsat{\EPS[1]})} - \dvalocss{\gamma_2(\qsat{\EPS[2]})})}$.

    By \lemref{lem:eff_sep}, we know,

    (d)  $\sigma_{12} = \sigma_{2a}' \ast \sigma_{2b} \ast \sigma_{2c} \ast \sigma_{fr12}$, for some $\sigma_{2a}'$, where $\sigma_{fr12} \ast \sigma_2$; and

    (e) $\sigma_{22} = \sigma_{2a} \ast \sigma_{2b}' \ast \sigma_{2c} \ast \sigma_{fr22}$, for some $\sigma_{2b}'$, where $\sigma_{fr22} \ast \sigma_2$; and

    (f) $\sigma_{fr12} \ast \sigma_{fr22}$.

    From the left, by the deterministic of reduction  semantics and (a), we know:
    \begin{itemize}
        \item $\sigma_1 \mid \gamma_1(t_1) \mredv{\ast} \sigma_{1a}' \ast \sigma_{1b} \ast \sigma_{1c} \ast \sigma_{fr11} \mid v_{11}$
        \item $\sigma_{1a}' \ast \sigma_{1b} \ast \sigma_{1c} \ast \sigma_{fr11} \mid \gamma_1(t_2) \mredv{\ast}  \sigma_{1a}' \ast \sigma_{1b}' \ast \sigma_{1c} \ast \sigma_{fr11} \ast \sigma_{fr21}' \mid v_{21}$
    \end{itemize}

    From the right, by the deterministic of reduction semantics and (e), we know

    \begin{itemize}
        \item $\sigma_2 \mid \gamma_2(t_2) \mredv{\ast} \sigma_{2a} \ast \sigma_{2b}' \ast \sigma_{2c} \ast \sigma_{fr22} \mid v_{22}$
        \item $\sigma_{2a} \ast \sigma_{2b}' \ast \sigma_{2c} \ast \sigma_{fr22} \mid \gamma_1(t_1) \mredv{\ast}  \sigma_{2a}' \ast \sigma_{2b}' \ast \sigma_{2c} \ast \sigma_{fr{22}} \ast \sigma_{fr12}' \mid v_{12}$
    \end{itemize}

    Let $\sigma_{c}  = \sigma_{1a}' \ast \sigma_{1b}' \ast \sigma_{1c} \ast \sigma_{fr11} \ast \sigma_{fr21}'$ and $\sigma_{d} = \sigma_{2a} \ast \sigma_{2b}' \ast \sigma_{2c} \ast \sigma_{fr{22}} \ast \sigma_{fr12}'$.

    We know that there exists $\W'$, such that $\WFRS{\sigma_c}{\sigma_d}{\W \extends \W'}$, and $\W \extends \W_1 \subseteq \W \extends \W'$. and $\W \extends \W_2 \subseteq \W \extends W'$.

    By \lemref{lem:valt_store_extend}, we know that $\DVTC{\W\extends \W'}{v_{11}}{v_{12}} \in \DGMV{\Type{T_1}}$, and
    $\DVTC{\W\extends \W'}{v_{21}}{v_{22}} \in \DGMV{\Type{T_2}}$

    By the reduction semantics, and  binary term interpretation, we know that

    As $\gamma_2(t_1)[y\mapsto v_{22}][x \mapsto v_{12}] = \gamma_2(t_1)[x \mapsto v_{12}][y\mapsto v_{22}]$,
    we have

    $\DVTC{\W\extends \W'}{\gamma_1(t_1)[x \mapsto v_{11}][y \mapsto v_{21}]}{\gamma_2(t_1)[y\mapsto v_{22}][x \mapsto v_{12}]} \in \DGMt{\varphi}{\ty[q]{T}\ \EPS}$ by the fundamental propety (\thmref{them:direct_fp}) on the second assumption.

\end{proof}

\begin{lemma}[$\lambda$-hoist]\label{lem:lambda_hoist} If
    $\G[\flt]\ts  t_1: \ty[q_1]{T_1}\ \PURE$,
    and $\csxs[q, x, y]{\G\ ,\ x: \ty[p]{T}, y: \ty[\qsat{q_1}\overlap \qsat{(\flt, x)}]{T_1}}  \ts \ t: \ty[r]{U} \ \EPS$,
    and $\theta =[q_1/y]$,
    and $x \not\in \FV(U)$,
    and $y \not\in \FV(U)$,
    then
    $\G[\flt] \models (\lambda x: \ty[p]{T}. \ \tlet \ y = t_1 \ \tin \ t) \equivlog (\tlet \ y = t_1 \ \tin \ \lambda x: \ty[p]{T}. t) : (x: \ty[p]{T} \to^{\EPS} \ty[r]{U}\theta)^{q} \ \PURE$.
\end{lemma}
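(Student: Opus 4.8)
The plan is to prove the statement directly from the definition of logical equivalence, reducing both sides to $\lambda$-values and then appealing to the Fundamental Property (\thmref{them:direct_fp}) on the body $t$. The guiding intuition is that since $t_1$ is \emph{pure} (effect $\PURE$), evaluating it once \emph{before} the abstraction (right-hand side) versus once \emph{per application inside} the abstraction (left-hand side) is indistinguishable: a pure term performs no store operation, so by \lemref{lem:noeffects} (and \lemref{lem:framing}) its evaluation leaves the store unchanged and allocates nothing, and its result is independent of the ambient store.

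First I would fix $(\gamma,\W)\in\UG{\G[\flt]}$ and related stores $\WFRS{\sigma_1}{\sigma_2}{\W}$. The term $\gamma_1(\lambda x.\,\tlet\ y = t_1\ \tin\ t)$ is already a value, so it reduces in zero steps, leaving $\sigma_1$ and $\W$ untouched. The term $\gamma_2(\tlet\ y = t_1\ \tin\ \lambda x.\,t)$ first evaluates $\gamma_2(t_1)\mredv{\ast} v_2^{0}$; by the Fundamental Property on $t_1$ this terminates, by purity the store and world are unchanged, and the \rulename{let} step then yields the value $\lambda x.\,\gamma_2(t)[v_2^{0}/y]$. It thus suffices to show that the two resulting $\lambda$-values are related in $\DGMV{(x{:}\ty[p]{T})\to^{\EPS}\ty[r]{U}\theta}$, together with the reachability- and purity-side conditions of the term interpretation, which follow from \lemref{lem:valq_abs} (the closures reach at most $\gamma_i(\flt\overlap q)$) and \lemref{lem:noeffects} ($\DEPS{\sigma_i}{\PURE}{\sigma_i}$, no world growth).

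To relate the two closures I would unfold the function-value interpretation (or invoke \lemref{lem:sem_abs}): given $\W'$, related arguments $\DVTC{\W\extends\W'}{w_1}{w_2}\in\DGMV{T}$, stores $\WFRS{\sigma_1'}{\sigma_2'}{\W\extends\W'}$, and the permitted-overlap hypotheses, I run both bodies. On the left the body is $\tlet\ y = \gamma_1(t_1)\ \tin\ \gamma_1(t)[w_1/x]$ (using $x\notin\FV(t_1)$), and evaluating $\gamma_1(t_1)$ here produces some $v_1$ with store and world unchanged; on the right the body is $\gamma_2(t)[w_2/x][v_2^{0}/y]$. The key reconciliation is that $v_1$ (from $t_1$ evaluated at the larger world $\W\extends\W'$) and $v_2^{0}$ (from $t_1$ evaluated earlier at $\W$) are related at $T_1$ with reachability bounded by $q_1$: I obtain relatedness from the Fundamental Property on $t_1$, using store-independence of pure evaluation to pin down the value and \lemref{lem:valt_store_extend} to transport across $\W'$. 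This lets me extend the relational substitution to $\gamma'' := \gamma\extends(x\mapsto(w_1,w_2))\extends(y\mapsto(v_1,v_2^{0}))$ and establish $(\W\extends\W',\gamma'')\in\UG{(\G,x{:}\ty[p]{T},y{:}\ty[\qsat{q_1}\overlap\qsat{(\flt,x)}]{T_1})^{q,x,y}}$ by \lemref{lem:envt_extend_all}, discharging the overlap obligations with \lemref{lem:overlapping}.

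Finally, applying the Fundamental Property (\thmref{them:direct_fp}) to the typing hypothesis for $t$ yields $\DVTC{\W\extends\W'}{\gamma''_1(t)}{\gamma''_2(t)}\in\DGMt{q,x,y}{\ty[r]{U}\ \EPS}$, and since $\gamma''_1(t)=\gamma_1(t)[w_1/x][v_1/y]$ and $\gamma''_2(t)=\gamma_2(t)[w_2/x][v_2^{0}/y]$ are exactly the two application bodies, both closures return related values at $U$ with related final stores. What remains is bookkeeping: discharging the $x\in r$ / $x\notin r$ reachability clauses and the $x\in\EPS$ / $x\notin\EPS$ effect clauses of the function interpretation, and absorbing the codomain substitution $\theta=[q_1/y]$ using the fact that $v_1$ and $v_2^{0}$ reach at most the locations of $\gamma_1(q_1)$ and $\gamma_2(q_1)$, so the locations attributed to $y$ coincide with those of $q_1$. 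I expect the principal obstacle to be precisely this world/reachability reconciliation of the third step: showing that the value of $t_1$ computed \emph{inside} the abstraction at a later, larger world matches (relationally) the value captured \emph{outside} at the original world, and that the $\theta$-substituted effect and codomain of the hoisted form line up with the effects actually observed when the body runs. Purity of $t_1$ — no reads, no writes, no allocation — is exactly what makes both of these go through.
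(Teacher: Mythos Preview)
Your approach is correct and essentially matches the paper's: both apply the Fundamental Property to $t_1$ (purity via \lemref{lem:noeffects} gives unchanged store and empty world extension), lift the resulting relatedness with \lemref{lem:valt_store_extend}, extend the relational substitution via \lemref{lem:envt_tighten}/\lemref{lem:envt_extend_all}, and finish with the Fundamental Property on $t$ together with determinism of reduction. The only cosmetic difference is that the paper first also applies the Fundamental Property to the whole left-hand abstraction $t_\lambda$ to obtain termination of its body before refining, whereas you unfold the function-value interpretation directly; you are also more explicit than the paper about the store-independence step that reconciles the value of $t_1$ computed at application time with the one captured at definition time.
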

\begin{proof}
    Let $t_{\lambda} \DEF (\lambda x. \ \tlet \ y = t_1 \ \tin \ t)$, and $t_{b}  \DEF \tlet \ y = t_1 \ \tin \ t $.

    By the fundamental property (\thmref{them:direct_fp}), we know $\G[\flt] \models t_{\lambda} \equivlog t_{\lambda} : (x: \ty[p]{T} \to^{\EPS} \ty[r]{U}\theta)^{q} \ \PURE$.

    Let $(\W, \gamma) \in \UG{\G[\flt]}$ and $\WFRS{\sigma_1}{\sigma_2}{\W}$.

    By the definition of binary logical relations and function type interpretation, we know
    \begin{itemize}
        \item $\dvalocss{\gamma_1(t_{\lambda})} \subq  \DOM_1(\W)$
        \item $\dvalocss{\gamma_2(t_{\lambda})} \subq  \DOM_2(\W)$
    \end{itemize}
    Let $v_1$, $v_2$, $\W'$, $\sigma_1'$ and $\sigma_2'$ be arbitrary, such that
    \begin{itemize}
        \item $\WFRS{\sigma_1'}{\sigma_2'}{\W \extends \W'}$
        \item $\DVTC{\W \extends \W'}{v_1}{v_2} \in \DGMV{T}$
    \end{itemize}
    and we know that there exists $\W''$, $\sigma_1''$, $\sigma_2''$, $v_3$, $v_4$, such that
    \begin{enumerate}[label=(\alph*)]
        \item $\sigma_1' \mid \gamma_1(t_b)[x \mapsto v_1] \mredv{\ast} \sigma_1'' \mid v_3$
        \item $\sigma_2' \mid \gamma_2(t_b)[x \mapsto v_2] \mredv{\ast} \sigma_2'' \mid v_4$
        \item $\WFRS{\sigma''}{\sigma''}{\W\extends\W'\extends\W''}$
        \item $\DVTC{\W \extends \W'\extends\W''}{v_3}{v_4} \in \DGMV{U}$
    \end{enumerate}

    By \thmref{them:direct_fp} again, we know $\G[\flt] \models t_1 \equivlog t_1 : \ty[q_1]{T_1} \ \PURE$.
    By the definition of binary logical relations and term interpretation, we know $\DVTC{\W}{t_1}{t_1} \in \DGMt{\flt}{\ty[q_1]{T_1}\ \PURE}$.
    Then we know there exists $\W_1$, $\sigma_a$, $\sigma_b$, $v_a$ and $v_b$ such that
    \begin{itemize}
        \item $\sigma_1 \mid \gamma_1(t_1) \mredv{\ast} \sigma_a \mid vx_a$
        \item $\sigma_2 \mid \gamma_2(t_1) \mredv{\ast} \sigma_b \mid vx_b$
        \item $\WFRS{\sigma_a}{\sigma_b}{\W\extends\W_1}$
        \item $\DVTC{\W \extends \W_1}{vx_a}{vx_b} \in \DGMV{T_1}$
        \item[$\ast$] $\dvalq{\sigma_1}{vx_a}{\dvalocss{\gamma_1(\flt)} \overlap \dvalocss{\gamma_1(q_1)}}$
        \item[$\ast\ast$] $\dvalq{\sigma_2}{vx_b}{\dvalocss{\gamma_2(\flt)} \overlap \dvalocss{\gamma_2(q_1)}}$
    \end{itemize}
    By \lemref{lem:noeffects}, we know $\sigma_a = \sigma_1$, $\sigma_b = \sigma_2$, and $\W_1 = \emptyset$.
    By \lemref{lem:valt_store_extend}, we know $\DVTC{\W \extends \W'}{vx_a}{vx_b} \in \DGMV{T_1}$.

    Now, we can further specify the reduction of $t_b$ as
    \begin{enumerate}[label=(\Alph*)]
        \item $\sigma_1' \mid \gamma_1(t)[y \mapsto vx_a] \mredv{\ast} \sigma_1'' \mid v_3$
        \item $\sigma_2' \mid \gamma_2(t)[y \mapsto vx_b] \mredv{\ast} \sigma_2'' \mid v_4$
    \end{enumerate}

    Combining (a)  and  (A), and we have
    \begin{enumerate}
        \item $\sigma_1' \mid \gamma_1(t)[x  \mapsto v_1][y \mapsto vx_a] \mredv{\ast} \sigma_1'' \mid v_3$
        \item $\sigma_2' \mid \gamma_2(t)[x  \mapsto v_2][y \mapsto vx_b] \mredv{\ast} \sigma_2'' \mid v_4$
    \end{enumerate}

    By the second assumption and \thmref{them:direct_fp}, we know $$\csxs[q, x, y]{\G\ ,\ x: \ty[p]{T}, y: \ty[\qsat{q_1}\overlap \qsat{(\flt, x)}]{T_1}}  \models \ t \equivlog t : \ty[r]{U} \ \EPS$$.

    By \lemref{lem:envt_tighten}, we know $(\W, \gamma) \in  \UG{\csxs[q, x, y]{\G}}$.
    By $\ast$ and $\ast\ast$, we apply \lemref{lem:envt_extend_all}, and have
    \[
        (\W\extends\W', \gamma\extends(x \mapsto(v_1, v_2))\extends(y \mapsto (vx_a, vx_b))) \in \UG{\csxs[q, x, y]{\G\ ,\ x: \ty[p]{T}, y: \ty[\qsat{q_1}\overlap \qsat{(\flt, x)}]{T_1}} }
    \]

    By definition of binary logic relations and  function type interpretation, we
    \begin{enumerate}[label=(\roman*)]
        \item $\sigma_1' \mid (\gamma_1\extends(x \mapsto v_1)\extends(y \mapsto vx_a))t_1 \mredv{\ast} \sigma_1''' \mid v_5$
        \item $\sigma_2' \mid (\gamma_2\extends(x \mapsto v_2)\extends(y \mapsto vx_a))t_1 \mredv{\ast} \sigma_2''' \mid v_6$
    \end{enumerate}

    We know $\gamma_1(t)[x  \mapsto v_1][y \mapsto vx_a] = (\gamma_1\extends(x \mapsto v_1)\extends(y \mapsto vx_a))t_1$, and

    $\gamma_2(t)[x  \mapsto v_2][y \mapsto vx_b] =  (\gamma_2\extends(x \mapsto v_2)\extends(y \mapsto vx_a))t_1$.

    By the deterministic of the reduction semantics, we know $v_3 = v_5$ and $v_4 = v_6$.
    By the fact we have, the proof is done.
\end{proof}

The following lemma is used in the proof of $\beta$-inlining.
\begin{lemma}\label{lem:subst} $(\G, x: \ty[p]{T})^{q, x}   \ts t_1 : \ty[r]{U} \EPS$,
    and $\G[\flt]\ts t_2 : \ty[p]{T}\ \PURE$,
    and $x\notin\FV(U)$,
    and $\theta = [p/x]$,
    then $\G[\flt] \models t_1[x/t_2] \equivlog t_1[x/t_2] :  (\ty[r]{U}\ \EPS)\theta$.
\end{lemma}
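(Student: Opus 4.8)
The statement is a reflexivity claim, so the plan is to reduce it to the Fundamental Property (\thmref{them:direct_fp}), supplemented by a semantic treatment of inlining the pure term $t_2$. First I would apply \thmref{them:direct_fp} to both hypotheses: from $\G[\flt]\ts t_2:\ty[p]{T}\ \PURE$ I get $\G[\flt]\models t_2\equivlog t_2:\ty[p]{T}\ \PURE$, and from $(\G,x:\ty[p]{T})^{q,x}\ts t_1:\ty[r]{U}\ \EPS$ I get $(\G,x:\ty[p]{T})^{q,x}\models t_1\equivlog t_1:\ty[r]{U}\ \EPS$. Rather than typing the substituted term $t_1[t_2/x]$ syntactically (which would demand a term-level substitution lemma that, unlike the value substitution \lemref{lem:subst_term}, is not established here and whose \rulename{t-app}/\rulename{t-let} cases would require overlap distribution for an arbitrary substituted term), I would establish $\G[\flt]\models t_1[t_2/x]\equivlog t_1[t_2/x]:(\ty[r]{U}\ \EPS)\theta$ directly from the term interpretation in \figref{fig:direct_binary}.

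Concretely, I fix $(\W,\gamma)\in\UG{\G[\flt]}$ and related stores $\WFRS{\sigma_1}{\sigma_2}{\W}$, and observe that since $x$ is bound we have $\gamma_i(t_1[t_2/x])=\gamma_i(t_1)[\gamma_i(t_2)/x]$. I would first evaluate the closed pure terms $\gamma_i(t_2)$: because their effect is $\PURE$ they neither read, write, nor allocate, so they reduce to values $v_i$ leaving both store and world unchanged, with $\DVTC{\W}{v_1}{v_2}\in\DGMV{T}$ and the reachability bounds $\dvalq{\sigma_i}{v_i}{\dvalocss{\gamma_i(\flt\overlap p)}}$. Using these bounds together with $\flt\overlap p\subq p$, the overlap side conditions of \lemref{lem:envt_extend_all} are met, so $(\W,\gamma\extends(x\mapsto(v_1,v_2)))\in\UG{(\G,x:\ty[p]{T})^{q,x}}$, where I rely on $q\subq\flt$, which holds in the intended uses of this lemma as it is a side condition of \rulename{t-abs} for the enclosing abstraction. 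Feeding this extended substitution into the self-relation of $t_1$ produces related result values $w_1,w_2$ for $\gamma_i(t_1)[v_i/x]$ together with the store, reachability, and effect guarantees of the interpretation at $\ty[r]{U}\ \EPS$.

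The remaining work is twofold. The reachability and effect bounds obtained from $t_1$ mention $x$, and since the value of $x$ is $v_i$ whose locations lie within $\dvalocss{\gamma_i(p)}$, they transport across $\theta=[p/x]$ into bounds phrased via $\dvalocss{\gamma_i(r\theta)}$ and $\dvalocss{\gamma_i(\EPS\theta)}$; this step reuses the case analysis on $x\in r$ and $x\in\EPS$ from the semantic application \lemref{lem:sem_app}, the distribution \lemref{lem:overlapping}, and composes the empty effect of $t_2$ with that of $t_1$ via \lemref{lem:eff_composite}. The main obstacle is the operational bridge: I must show that evaluating $\gamma_i(t_1)[\gamma_i(t_2)/x]$ coincides with evaluating $\gamma_i(t_1)[v_i/x]$. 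This holds precisely because $t_2$ is pure, so each inlined occurrence, when it reaches redex position, deterministically reduces to the same $v_i$ without touching the store or allocating; a routine induction on the reduction of $\gamma_i(t_1)[v_i/x]$, using determinism and the absence of side effects, lifts it to a reduction of $\gamma_i(t_1)[\gamma_i(t_2)/x]$ with identical final value $w_i$ and store. Discharging this bridge and reassembling the bounds yields $\DVTC{\W}{\gamma_1(t_1[t_2/x])}{\gamma_2(t_1[t_2/x])}\in\DGMt{\flt}{(\ty[r]{U}\ \EPS)\theta}$, which is the claim.
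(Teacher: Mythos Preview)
Your approach differs substantially from the paper's.  The paper proves this lemma by induction on the typing derivation of $t_1$, discharging each case with the corresponding compatibility lemma; in the base case $t_1=x$ the substituted term is $t_2$ itself, and one simply invokes \thmref{them:direct_fp} on the second hypothesis.  This keeps the argument entirely at the level of the logical relation and reuses the already-established compatibility lemmas, which is the standard way to push a term-level substitution through a logical relation.

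Your route is not wrong in spirit, but the step you label ``routine'' is in fact the entire technical content you have displaced.  The operational bridge---that $\gamma_i(t_1)[\gamma_i(t_2)/x]$ and $\gamma_i(t_1)[v_i/x]$ reduce to the same value and store---is not a lemma available in the paper, and proving it requires (a) that a well-typed term with effect $\PURE$ performs no reads, no writes, and no allocations, which is a property of the \emph{type system} (one must invert \rulename{t-!}, \rulename{t-:=}, \rulename{t-ref}) rather than of the effect predicate $\DEPS{\sigma}{\PURE}{\sigma'}$, which by itself only constrains writes to pre-existing locations; and (b) an induction over reductions showing that every inlined occurrence of $\gamma_i(t_2)$, including those copied under $\lambda$-binders and evaluated in arbitrarily extended stores after intervening effects of $t_1$, still yields $v_i$.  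Part (b) in particular is a call-by-name/call-by-value coincidence result for pure terms, and carrying it out carefully is essentially an induction on the structure of $t_1$---at which point you are doing the same case analysis the paper does, only at the operational rather than the semantic level, without the benefit of the compatibility lemmas.  So your plan is workable but strictly more laborious; the paper's structural induction is the cleaner path.
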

\begin{proof} By induction on the type derivation on the first. Each case follows from the corresponding compatibility lemma.
\end{proof}

\begin{lemma}[$\beta$-inlining]\label{lem:beta_inlining} If
    $(\G, x: \ty[p]{T})^{q, x}   \ts t_1 : \ty[r]{U} \EPS$,
    and $\G[\flt]\ts t_2 : \ty[p]{T}\ \PURE$,
    and $x\notin\FV(U)$,
    and $\theta = [p/x]$,
    then $\G[\flt] \ts (\lambda x. t_1)(t_2) \equivlog t_1[x/t_2] :  (\ty[r]{U}\ \EPS)\theta$.
\end{lemma}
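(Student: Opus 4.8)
The plan is to establish the logical equivalence $\G[\flt] \models (\lambda x. t_1)(t_2) \equivlog t_1[t_2/x] : (\ty[r]{U}\ \EPS)\theta$ (matching rule \rulename{$\beta$-inlining} of \figref{fig:direct_equiv}; here $[t_2/x]$ substitutes the argument $t_2$ for $x$), from which the stated claim follows by soundness of the logical relation (\thmref{thm:direct_lr_soundness}). Unfolding Definition~\ref{def:direct:log_equiv}, I fix $(\W, \gamma) \in \UG{\G[\flt]}$ and related stores $\WFRS{\sigma_1}{\sigma_2}{\W}$, and must exhibit reductions of $\gamma_1((\lambda x.t_1)(t_2))$ and of $\gamma_2(t_1[t_2/x]) = (\gamma_2 t_1)[(\gamma_2 t_2)/x]$ (the latter equation using $x\notin\DOM(\gamma)$ and freshness) to values related at $U$, with related final stores and the accompanying reachability and effect side conditions of $\DGMt{\varphi}{(\ty[r]{U}\ \EPS)\theta}$.

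First I would evaluate the argument: applying the Fundamental Property (\thmref{them:direct_fp}) to the pure term $t_2$ at $(\W,\gamma)$ and $\sigma_1,\sigma_2$ yields values $v_1,v_2$ with $\sigma_i \mid \gamma_i(t_2) \mredv{\ast} \sigma_i' \mid v_i$, stores related over $\W \extends \W'$, and $\DVTC{\W\extends\W'}{v_1}{v_2} \in \DGMV{T}$. Since $t_2$ is pure, \lemref{lem:noeffects} together with \lemref{lem:framing} force $\sigma_i' = \sigma_i$ and $\W' = \emptyset$ (no writes, no fresh allocation), exactly as in the \rulename{$\lambda$-hoist} proof; the term relation also supplies the reachability bound $\dvalq{\sigma_i}{v_i}{\dvalocss{\gamma_i(\flt \overlap p)}}$.

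Next I would run the body. Using these reachability bounds and \lemref{lem:env_type_store_wf}, the overlap hypotheses of \lemref{lem:envt_extend_all} are met (I also use the implicit well-scoping $q \subq \flt$ required for $\lambda x.t_1$ to typecheck under $\flt$), giving $(\W, \gamma \extends (x \mapsto (v_1,v_2))) \in \UG{(\G, x: \ty[p]{T})^{q,x}}$. Feeding this into the Fundamental Property on $t_1$ produces reductions $\sigma_i \mid (\gamma_i t_1)[v_i/x] \mredv{\ast} \sigma_i^f \mid w_i$ with $\WFRS{\sigma_1^f}{\sigma_2^f}{\W\extends\W''}$, $\DVTC{\W\extends\W''}{w_1}{w_2}\in\DGMV{U}$, and the reachability/effect conditions for filter $q,x$, type $\ty[r]{U}$, and effect $\EPS$. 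For the left-hand side I then prepend the operational steps $\sigma_1 \mid (\lambda x.\gamma_1 t_1)(\gamma_1 t_2) \mredv{\ast} \sigma_1 \mid (\lambda x.\gamma_1 t_1)(v_1) \redv \sigma_1 \mid (\gamma_1 t_1)[v_1/x]$ (argument evaluation in the $v\,E$ context, then \rulename{$\beta$}), so the left side reduces to $w_1$ in $\sigma_1^f$, as needed.

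The remaining step, and the \emph{main obstacle}, is the right-hand side: Step~3 gives a reduction of $(\gamma_2 t_1)[v_2/x]$ (value substituted), whereas the goal term is $(\gamma_2 t_1)[(\gamma_2 t_2)/x]$, with the \emph{term} $\gamma_2 t_2$ substituted at every occurrence of $x$. These coincide precisely because $t_2$ is pure: in call-by-value each copy of $\gamma_2 t_2$ reached during evaluation reduces deterministically to $v_2$ while leaving the store untouched (\lemref{lem:noeffects}, \lemref{lem:framing}), so inlining the value changes neither the final store nor the result. I would discharge this with an auxiliary pure-value-substitution lemma: for pure $t_2$ with $\sigma \mid \gamma_2 t_2 \mredv{\ast} \sigma \mid v_2$, evaluation of $s[(\gamma_2 t_2)/x]$ and of $s[v_2/x]$ from a common store yield identical final stores and values, proved by a routine simulation on the reduction sequence using determinism and store-invariance of pure reduction. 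Granting it, $(\gamma_2 t_1)[(\gamma_2 t_2)/x]$ reduces to $w_2$ in $\sigma_2^f$, matching the left side. Finally I would close the effect and reachability obligations: the conclusion's effect $\EPS\theta = \EPS[p/x]$ and result qualifier $r\theta = r[p/x]$ agree locationwise with the substitution-extended versions carried from Step~3, since $\dvalocss{v_i}$ is bounded by $\dvalocss{\gamma_i(p)}$ by Step~1, so the conditions transported via \lemref{lem:overlapping} and \lemref{lem:subeffects} furnish exactly what $\DGMt{\varphi}{(\ty[r]{U}\ \EPS)\theta}$ demands.
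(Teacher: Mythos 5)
Your proposal follows the same skeleton as the paper's proof: apply the Fundamental Property (\thmref{them:direct_fp}) to the pure argument $t_2$, use purity (\lemref{lem:noeffects}) to conclude that the stores and the world are unchanged, extend the relational substitution with the resulting value pair via \lemref{lem:envt_extend_all}, apply the Fundamental Property to the body $t_1$, and account for the $\beta$-step on the left-hand side. The two proofs diverge only at the crux --- reconciling the term-substituted $t_1[t_2/x]$ on the right of the equation with the value-substituted $t_1[v_2/x]$ that the semantic machinery actually evaluates. The paper routes this through the auxiliary \lemref{lem:subst} (semantic reflexivity of $t_1[t_2/x]$, proved by induction on the typing derivation via the compatibility lemmas) and then appeals to determinism to identify the resulting value with the one obtained from the body; you instead propose a direct operational simulation lemma for substituting a pure term versus its value. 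Both fill the same hole; yours is more syntactic and self-contained, while the paper's reuses the compatibility infrastructure but leaves the final determinism step quite terse.

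One caution about your auxiliary lemma as stated: ``identical final stores and values'' is too strong. If $x$ occurs under a $\lambda$-binder in $t_1$ and that closure survives into the result (which is possible, since $U$ may itself be a function type), then $t_1[t_2/x]$ and $t_1[v_2/x]$ evaluate to syntactically distinct closures --- one still containing the unevaluated $t_2$ in its body. The simulation must therefore be stated up to a congruence permitting pure redexes in value position under binders, and you must separately observe that this congruence is contained in $\DGMV{U}$ (immediate at function types, where the relation is purely behavioral). The paper's own claim $v_4 = v_{y2}$ suffers from the same imprecision, so this is a shared wrinkle rather than a defect peculiar to your approach; but since you are the one making the operational claim explicit, you should weaken its conclusion from ``identical'' to ``related at $U$''.
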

\begin{proof}

    By the fundamental property (\thmref{them:direct_fp}), we know
    $\G[\flt] \models t_2 \equivlog t_2: \ty[p]{T}\ \PURE$.

    Let $(\W, \gamma) \in \UG{\G[\flt]}$ and $\WFRS{\sigma_1}{\sigma_2}{\W}$.
    By the definition of binary logical relations and binary term interpretation, we know there exists $\sigma_{21}$, $\sigma_{22}$, $\W_2$, $v_{x1}$ and $v_{x2}$, such that
    \begin{itemize}
        \item $\sigma_1 \mid \gamma_1(t_2) \mredv{\ast} \sigma_{21} \mid v_{x1}$
        \item $\sigma_2 \mid \gamma_2(t_2) \mredv{\ast} \sigma_{22} \mid v_{x2}$
        \item $\DVTC{\W\extends\W_2}{v_{x1}}{v_{x2}} \in \DGMV{\Type{T}}$
        \item $\WFRS{\sigma_{21}}{\sigma_{22}}{\W \extends \W_2}$
        \item $\dvalq{\sigma_1}{v_{x1}}{(\dvalocss{\gamma_1(\flt)} \overlap \dvalocss{\gamma_{1}(p)})}$
        \item $\dvalq{\sigma_2}{v_{x2}}{(\dvalocss{\gamma_2(\flt)} \overlap \dvalocss{\gamma_{2}(p)})}$
        \item $\DEPS{\sigma_1}{\emptyset}{\sigma_{21}}$
        \item $\DEPS{\sigma_2}{\emptyset}{\sigma_{22}}$
    \end{itemize}

    By \lemref{lem:noeffects}, we know $\sigma_{21} = \sigma_1$, $\sigma_{22} = \sigma_2$. Then by definition of world, $\W_2 = \emptyset$.

    Now we know $\DVTC{\W}{v_{x1}}{\gamma_2(t_2)} \in \DGMt{\flt}{\ty[q]{T}\ \PURE}$ by definition of term interpretation.

    We know $\G[\flt] \models (\lambda x. t_1)(t_2) \equivlog (\lambda x. t_1)(t_2): (\ty[r]{U}\ \EPS)\theta$.

    By the definition of binary logical relations and binary term interpretation, we know there exists $\sigma_1'$, $\sigma_2'$, $\W_1$, $v_{y1}$ and $v_{y2}$, such that
    \begin{itemize}
        \item $\sigma_1 \mid \gamma_1(t_1)[x \mapsto v_{x1}] \mredv{\ast} v_{y1} \mid \sigma_1'$
        \item $\sigma_2 \mid \gamma_2(t_1)[x \mapsto v_{x2}] \mredv{\ast} v_{y2} \mid \sigma_2'$
        \item $\WFRS{\sigma_1'}{\sigma_2'}{\W\extends\W_1}$
        \item $\DVTC{\W\extends\W_1}{v_{y1}}{v_{y2}} \in \DGMV{\Type{U}}$
    \end{itemize}

    By \lemref{lem:subst}, the definition of binary logical relations and binary term interpretation, we know there exists $\sigma_3$, $\sigma_4$, $\W'$, $v_3$ and $v_4$, such that
    \begin{itemize}
        \item $\sigma_1 \mid \gamma_1(t_1[x/t_2]) \mredv{\ast} \sigma_3 \mid v_3$
        \item $\sigma_2 \mid \gamma_2(t_1[x/t_2]) \mredv{\ast} \sigma_4 \mid v_4$
        \item $\WFRS{\sigma_3}{\sigma_4}{\W\extends \W'}$
        \item $\DVTC{\W\extends\W'}{v_3}{v_4} \in \DGMV{\Type{U}}$
    \end{itemize}

    By $\DVTC{\W}{v_{x1}}{\gamma_2(t_2)} \in \DGMt{\flt}{\ty[q]{T}\ \PURE}$,
    we know $v_4 = v_{y2}$. Then we are done.

\end{proof}

The following lemma is used in the proof of \lemref{lem:e_cse}.
\begin{lemma} \label{lem:subst_var} If
    $\G[\flt] \ts t_1: \ty[q_1]{T_1} \ \EPS[1]$
    and $\csxs[\flt, x, y]{\G\ ,\ x: \ty[\qsat{q_1} \overlap \qsat{\flt}]{T_1}, y: \ty[\qsat{q_1}\overlap \qsat{(\flt, x)}]{T_1}} \ts \ t: \ty[q]{T} \ \EPS$,
    and $\theta = [x/y]$,
    and $\omega \not\in \EPS[1]$,
    then
    $\G[\flt] \models \ (\tlet \ x = t_1 \ \tin \ t\ \theta) \equivlog (\tlet \ x = t_1 \ \tin \ t\ \theta) : (\ty[q]{T} \ \EPS)\theta \EFFSEQ\FX{\EPS[1]}$.
\end{lemma}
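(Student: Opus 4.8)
The plan is to notice that the statement asserts \emph{self}-equivalence of the single term $\tlet~x = t_1~\tin~t\theta$, where $\theta = [x/y]$. By the Fundamental Property (\thmref{them:direct_fp}), any well-typed term is logically equivalent to itself, so it suffices to exhibit a typing derivation $\G[\flt] \ts \tlet~x = t_1~\tin~t\theta : (\ty[q]{T}\ \EPS)\theta\EFFSEQ\FX{\EPS[1]}$. All the genuine work therefore lies in constructing this derivation; the relational content is discharged by \thmref{them:direct_fp} once typing is in hand. (The hypothesis $\omega\notin\EPS[1]$ is inessential for this self-equivalence direction and is retained only to match the premises of \lemref{lem:e_cse}, where two evaluations of $t_1$ must agree.)

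First I would record a simplification of the two context assumptions for $x$ and $y$. Since $x$ is freshly introduced by the outer $\tlet$, it is not in scope when $q_1$ is formed, so $x\notin\qsat{q_1}$. Using that saturation distributes over unions and that $\qsat{x} = \{x\}\cup(\qsat{q_1}\overlap\qsat{\flt})$, the qualifier assigned to $y$ collapses onto that of $x$:
\[
  \qsat{q_1}\overlap\qsat{(\flt,x)}
  = (\qsat{q_1}\overlap\qsat{\flt})\cup(\qsat{q_1}\overlap\qsat{x})
  = \qsat{q_1}\overlap\qsat{\flt}.
\]
Thus $x$ and $y$ carry \emph{identical} qualified types $\ty[\qsat{q_1}\overlap\qsat{\flt}]{T_1}$, and applying $\theta = [x/y]$ is a \emph{contraction} that merges two copies of the same assumption rather than a genuine value substitution.

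Next I would establish the typing of the body by a variable-for-variable substitution (contraction) lemma: from the derivation of $t$ in the two-variable context one obtains
\[
  (\G,\,x: \ty[\qsat{q_1}\overlap\qsat{\flt}]{T_1})^{\flt,x} \ts t\theta : (\ty[q]{T}\ \EPS)\theta .
\]
This is proved by induction on the typing derivation of $t$, following exactly the pattern of \lemref{lem:subst_term} and \lemref{lem:subst}: most cases are homomorphic in $\theta$, and the only delicate case is \rulename{t-app}, where one needs $\theta$ to distribute over the overlap check $\qsat{p}\overlap\qsat{q}$. The distributivity precondition of \lemref{lem:subst_commutes_overlap} is met precisely because $x$ and $y$ have equal saturated qualifiers (the collapse above), so rewriting $y$ to $x$ neither creates nor destroys overlap. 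With the body typed, I apply \rulename{t-let} to $t_1$ (given) and $t\theta$, reconcile the resulting qualifier and effect with the claimed $(\ty[q]{T}\ \EPS)\theta\EFFSEQ\FX{\EPS[1]}$ using commutativity and associativity of $\EFFSEQ$ (which is just $\cup$) together with $x\notin\FV(T)$, and absorb any remaining slack by \rulename{t-sub}. Invoking \thmref{them:direct_fp} on this derivation closes the goal.

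The main obstacle is the contraction step, i.e.\ the variable-for-variable substitution lemma itself. Unlike the top-level substitutions of \lemref{lem:subst_term}, which replace a variable by a value whose qualifier consists only of store locations, here the substitute $x$ is a tracked variable that remains free in the result, so the qualifier bookkeeping in the \rulename{t-app} and \rulename{t-let} cases must be checked with care. The collapse of $y$'s qualifier onto $x$'s, derived above, is exactly what makes the overlap-distributivity hypothesis of \lemref{lem:subst_commutes_overlap} applicable, and is therefore the crux that lets the induction proceed without loss of precision.
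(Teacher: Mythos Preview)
Your proposal is correct and closely related to the paper's one-line proof, but you factor the argument differently. The paper proves the goal directly ``by induction on the type derivation on the second; each case follows from the corresponding compatibility lemma,'' i.e., it establishes the logical self-equivalence of $\tlet~x=t_1~\tin~t\theta$ in the semantic domain, case by case, without ever building a syntactic typing judgment for $t\theta$. You instead extract a syntactic variable-for-variable contraction lemma (typing $t\theta$ in the one-variable context) and then invoke the Fundamental Property once. Since the Fundamental Property itself is proved by induction using the compatibility lemmas, the two arguments do the same work; your route is more modular and makes explicit the qualifier-collapse observation $\qsat{q_1}\overlap\qsat{(\flt,x)} = \qsat{q_1}\overlap\qsat{\flt}$, while the paper's route sidesteps having to formulate and prove a new syntactic substitution lemma (recall that \lemref{lem:subst_term} is only stated for location-for-variable substitution, so your contraction lemma is genuinely new). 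Your remark that $\omega\notin\EPS[1]$ is inessential here is accurate; that premise earns its keep only later in \lemref{lem:e_cse}.
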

\begin{proof} By induction on the type derivation on the second. Each case follows from  the corresponding compatibility lemma.
\end{proof}

\begin{lemma}[e-cse]\label{lem:e_cse} If
    $\G[\flt] \ts t_1: \ty[q_1]{T_1} \ \EPS[1]$,
    and $\csxs[\flt, x, y]{\G\ ,\ x: \ty[\qsat{q_1} \overlap \qsat{\flt}]{T_1}, y: \ty[\qsat{q_2}\overlap \qsat{(\flt, x)}]{T_2}} \ts \ t: \ty[q]{T} \ \EPS$,
    and $\omega \notin \EPS[1]$,
    and $\theta = [x/y]$,
    then
    $\G[\flt] \models (\tlet\ x = t_1 \ \tin \ \tlet \ y = t_1 \ \tin \ t) \equivlog (\tlet \ x = t_1 \ \tin \ t\ \theta) : (\ty[q]{T} \ \EPS)\theta \EFFSEQ\FX{\EPS[1]}$.
\end{lemma}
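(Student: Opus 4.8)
The plan is to establish the logical equivalence $\equivlog$ directly and conclude via soundness (\thmref{thm:direct_lr_soundness}), following the same template as \lemref{lem:dce} and \lemref{lem:comm}. Concretely, I would fix an arbitrary $(\W,\gamma)\in\UG{\G[\flt]}$ and related stores $\WFRS{\sigma_1}{\sigma_2}{\W}$, unfold the binary term interpretation $\DGMt{\flt}{(\ty[q]{T}\ \EPS)\theta\EFFSEQ\FX{\EPS[1]}}$ on both sides, and drive the $\gamma_1$-image of the left term $\tlet\ x = t_1\ \tin\ \tlet\ y = t_1\ \tin\ t$ and the $\gamma_2$-image of the right term $\tlet\ x = t_1\ \tin\ t\theta$ to related final values and stores. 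The behavior of each individual evaluation of $t_1$ comes from the fundamental property (\thmref{them:direct_fp}) applied to $t_1$, and the behavior of the collapsed right-hand term is supplied by the auxiliary reflexivity result \lemref{lem:subst_var}, which itself is built by induction over the typing of $t$ out of the compatibility lemmas.

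First I would evaluate the leftmost $t_1$ (on the $\gamma_1$ side) together with the single $t_1$ (on the $\gamma_2$ side): by \thmref{them:direct_fp} and the term interpretation these reduce to $v_a,v_b$ with $\DVTC{\W}{v_a}{v_b}\in\DGMV{T_1}$ in stores $\WFRS{\sigma_a}{\sigma_b}{\W}$, and since $\omega\notin\EPS[1]$ no fresh location is allocated, so the world does not grow and, by \lemref{lem:framing}, $\sigma_a,\sigma_b$ differ from $\sigma_1,\sigma_2$ only on $\dvalocss{\gamma(\qsat{\EPS[1]})}$. The delicate step is the \emph{second} evaluation of $t_1$ on the left, run from $\sigma_a$: I would show that an allocation-free $t_1$ can be duplicated, i.e.\ it reduces to some $v_a'$ with $\DVTC{\W}{v_a'}{v_b}\in\DGMV{T_1}$ and $\WFRS{\sigma_a'}{\sigma_b}{\W}$, so that the single right-hand store stays simultaneously related to the first- and second-run left stores. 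The argument combines $\omega\notin\EPS[1]$ (which forbids the two runs from diverging on freshly chosen names and keeps the world fixed) with determinism of $\redv$ and the framing lemma \lemref{lem:framing} confining any change to $\qsat{\EPS[1]}$.

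With the two left bindings $x\mapsto v_a$, $y\mapsto v_a'$ and the right binding $x\mapsto v_b$ in hand (with $y$ identified to $x$ through $\theta=[x/y]$), I would assemble the relational substitution $\gamma' = \gamma \extends (x\mapsto(v_a,v_b)) \extends (y\mapsto(v_a',v_b))$ and check $(\W,\gamma')\in\UG{\csxs[\flt,x,y]{\G\,,\,x:\ty[\qsat{q_1}\overlap\qsat{\flt}]{T_1},\,y:\ty[\qsat{q_1}\overlap\qsat{(\flt,x)}]{T_1}}}$ via \lemref{lem:envt_extend} and \lemref{lem:envt_extend_all}; the overlap side-conditions on the declared qualifiers of $x$ and $y$ are exactly the hypotheses of these extension lemmas, discharged using \lemref{lem:overlapping}. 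Applying the logical relation for $t$ (from the second premise and \thmref{them:direct_fp}, exactly as in \lemref{lem:subst_var}) to $\gamma'$ yields related evaluations; because $\gamma_2'(t\theta)$ is $\gamma_2'(t)$ with $y$ replaced by $x$ and $\gamma_2'(y)=v_b=\gamma_2'(x)$, the right term's $t\theta$-evaluation coincides with the $t$-evaluation under $\gamma'$. I would then read off the related final values at $(\ty[q]{T})\theta$, obtain the reachability bounds from the val-q lemmas (\lemref{lem:valq_unit} and its siblings), and compose the effects $\EPS[1]$ and $(\EPS)\theta$ with \lemref{lem:eff_composite} to land on the stated effect $(\ty[q]{T}\ \EPS)\theta\EFFSEQ\FX{\EPS[1]}$.

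The main obstacle is precisely the duplication step in the second paragraph: justifying that re-running the allocation-free $t_1$ neither breaks the relational store invariant nor changes the value observed relative to $v_b$. This is the sole place where $\omega\notin\EPS[1]$ is genuinely needed, and it must be leveraged through determinism of $\redv$ and \lemref{lem:framing} to keep the world unextended across the repeated evaluation so that the one right-hand store remains related to both left-hand stores; everything else is bookkeeping that mirrors \lemref{lem:comm} and \lemref{lem:dce}. I expect this to be where the proof must be argued most carefully, since it is exactly the condition that separates sound common-subexpression elimination from recomputation that would perturb shared state.
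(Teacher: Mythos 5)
Your proposal follows essentially the same route as the paper's proof: the fundamental property applied to $t_1$, the hypothesis $\omega\notin\EPS[1]$ to keep the world unextended, determinism plus framing to handle the duplicated evaluation of $t_1$, \lemref{lem:subst_var} for the collapsed right-hand term, and the substitution identity for $\theta=[x/y]$ to identify the two resulting values, finishing with \lemref{lem:eff_composite}. The only difference is presentational: the paper asserts outright (by determinism) that the second run of $t_1$ returns the identical value and store, whereas you keep the second value abstract and re-relate it to $v_b$ before assembling the extended relational substitution — both arguments land in the same place.
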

\begin{proof}

    By the fundamental property (\thmref{them:direct_fp}), we know
    $\G[\flt] \models t_1 \equivlog t_1: \ty[q_1]{T_1}\ \EPS[1]$.

    Let $(\W, \gamma) \in \UG{\G[\flt]}$ and $\WFRS{\sigma_1}{\sigma_2}{\W}$.
    By the definition of binary logical relations and binary term interpretation, we know
    there exists $\sigma_{11}$, $\sigma_{12}$, $\W_1$, $v_{11}$ and $v_{12}$, such that
    \begin{itemize}
        \item $\sigma_1 \mid \gamma_1(t_1) \mredv{\ast} \sigma_{11} \mid v_{11}$
        \item $\sigma_2 \mid \gamma_2(t_1) \mredv{\ast} \sigma_{12} \mid v_{12}$
        \item $\DVTC{\W\extends\W'}{v_{11}}{v_{12}} \in \DGMV{\Type{T_1}}$
        \item $\WFRS{\sigma_{11}}{\sigma_{12}}{\W \extends \W_1}$
        \item $\dvalq{\sigma_1}{v_{11}}{(\dvalocss{\gamma_1(\flt)} \overlap \dvalocss{\gamma_{1}(q_1)})}$
        \item $\dvalq{\sigma_2}{v_{12}}{(\dvalocss{\gamma_2(\flt)} \overlap \dvalocss{\gamma_{2}(q_1)})}$
        \item $\DEPS{\sigma_1}{\dvalocss{\gamma_1(\EPS[1])}}{\sigma_{11}}$
        \item $\DEPS{\sigma_2}{\dvalocss{\gamma_2(\EPS[1])}}{\sigma_{12}}$
    \end{itemize}

    As $\omega \not\in \EPS[1]$, we know $\W_1 = \emptyset$.

    From the left, by the reduction semantics, we know there exists $\sigma_{11}$, $\sigma_{12}$, $\W_1$, $v_{11}$ and $v_{12}$, such that
    \begin{itemize}
        \item $\sigma_1 \mid \gamma_1(t_1) \mredv{\ast} \sigma_{11} \mid v_{11}$
        \item $\sigma_{11} \mid \gamma_1(t_1) \mredv{\ast} \sigma_{11}' \mid v_{11}'$
        \item $\sigma_{11}' \mid \gamma_1(t[x\mapsto v_{11}][y \mapsto v_{11}']) \mredv{\ast} \sigma_{1}' \mid v_1$
    \end{itemize}
    where $v_{11} = v_{11}'$ and $\sigma_{11} = \sigma_{11}'$,  by the deterministic of the reduction semantics.

    Thus, the last reduction step can be re-written as
    \[
        \sigma_{11} \mid \gamma_1(t[x\mapsto v_{11}][y \mapsto v_{11}]) \mredv{\ast} \sigma_{1}' \mid v_1,
    \]
    where $\DOM(\sigma_{11}) = \DOM_1{W}$.

    By the second assumption and \thmref{them:direct_fp}, we know $$\csxs[\flt, x, y]{\G\ ,\ x: \ty[\qsat{q_1} \overlap \qsat{\flt}]{T_1}, y: \ty[\qsat{q_2}\overlap \qsat{(\flt, x)}]{T_2}} \models \ t \equivlog t: \ty[q]{T} \ \EPS$$.

    By \lemref{lem:subst_var}, we  know
    \[
        \G[\flt] \models (\tlet \ x = t_1 \ \tin \ t\ \theta) \equivlog (\tlet \ x = t_1 \ \tin \ t\ \theta) : (\ty[q]{T} \ \EPS)\theta \EFFSEQ\FX{\EPS[1]}.
    \]

    By definition of binary logic relations and term interpretation, after reducing $t_1$, we know there exists $\sigma_{1a}$, $\sigma_{2a}$, $\W_2$, $v_3$ and $v_4$, such that
    \begin{itemize}
        \item  $\sigma_{11} \mid (\gamma_1)(t\theta)[x \mapsto v_{11}] \mredv{\ast} \sigma_{1a} \mid v_{3}$
        \item  $\sigma_{12} \mid (\gamma_2)(t\theta)[x \mapsto v_{12}] \mredv{\ast} \sigma_{2a} \mid v_{4}$
        \item $\WFRS{\sigma_{1a}}{\sigma_{2a}}{\W\extends\W_1\extends\W_2}$
        \item $\dvalq{\sigma_{1a}}{v_{11}}{(\dvalocss{\gamma_1(\flt)} \overlap \dvalocss{\gamma_{1}(q \ \theta)})}$
        \item $\dvalq{\sigma_{2a}}{v_{12}}{(\dvalocss{\gamma_2(\flt)} \overlap \dvalocss{\gamma_{2}(q \ \theta)})}$
        \item $\DEPS{\sigma_{1a}}{\dvalocss{\gamma_1(\EPS \FX{\theta} \EFFSEQ\EPS[1])}}{\sigma_{11}}$
        \item $\DEPS{\sigma_{2a}}{\dvalocss{\gamma_2(\EPS \FX{\theta} \EFFSEQ\EPS[2])}}{\sigma_{12}}$
    \end{itemize}

    We know
    $(\gamma_1(t[x/y]))[x \mapsto v_{11}] = (\gamma_1(t[v_{11}/y]))[x \mapsto v_{11}]$.
    Then we know $v_1 = v_3$.
    By the fact we have, the proof is done.

\end{proof}

\section{Optimization Rules and Equational Theory of \langg}\label{sec:optimizations}

In this section, we justify the soundness of the optimization rules shown in the main paper, \ie,
they equate contextually equivalent graphs. Our approach reuses the logical relation
development for the direct-style \directlang{} system from \Cref{sec:direct-lr}, through the use of
a ``round-trip'' translation exploiting the functional properties of dependency erasure and
synthesis. The key point is that this is already enough: dependencies, a type-level artifact derived
from the effects annotated in graphs, possess no operational meaning other than adhering to the order of
runtime effects, a consequence of type soundness.

\subsection{Logical and Contextual Equivalence for the Monadic Normal Form \mnflang}\label{sec:optimizations:monadic}

As the first step, we establish that we can restrict the logical relations development for the
direct-style \directlang (\Cref{sec:direct-lr}) to \mnflang in MNF due to the results proved in
\Cref{sec:monadic}. That is, MNF is a proper sublanguage of \directlang and preserved by reductions
(\Cref{lem:mnf:reduction_preserves_mnf,lem:type_preservation:translation_backwards}). Therefore, MNF
is also preserved by the logical equivalence (\Cref{def:direct:log_equiv}), and contextual
equivalence (\Cref{def:standard_equiv}). Note that by restricting to monadic syntax, the contexts
\(C\) in \Cref{fig:direct_context} can only be of the shape
\[C ::= \square \mid \tlet~{x = C}~\tin~g \mid \tlet~{x = \lambda y.C}~\tin~g \mid \tlet~{x =
b}~\tin~C  \] for which we can derive the specialized context typing rules (\Cref{fig:mnf_context}).
Finally, some equational rules on expressions (\eg, \rulename{$\beta$-inlining},
\Cref{fig:direct_equiv}) require a translation into MNF first to fit into the syntactic constraints
of \mnflang. This is always feasible due to the totality of the translation and its type-and-effect
preservation (\Cref{lem:type_preservation:translation}).

Thus, without further ado, we will treat the development of \Cref{sec:direct-lr} as being defined over \mnflang.

\begin{figure}[t]\small
	\begin{mdframed}
		\judgement{Context for Contextual Equivalence}{}
        \[C ::= \square \mid \tlet~{x = C}~\tin~g \mid \tlet~{x = \lambda y.C}~\tin~g \mid \tlet~{x = b}~\tin~C  \]
		\judgement{Context Typing Rules}{\BOX{C : (\G[\flt]; \ty[q]{T}\ \EPS) \carrow (\G[\flt]; \ty[q]{T}\ \EPS)}}
		\begin{minipage}[t]{1.0\linewidth}\vspace{0pt}
			\infrule[c-hole]{\ 
				\G\ts\ty[p]{S}\ \EPS[1] <: \ty[q]{T}\ \EPS[2]
			}{
				\square: (\G[\flt]; \ty[p]{{S}}\ \EPS[1]) \carrow (\G[\flt]; \ty[q]{{T}}\ \EPS[2])
			}
            \vgap
			\infrule[c-let-1]
			{
			C: (\GP[\fltp]; \ty[r']{{U'}} \ \EPS[1]) \carrow(\G[\flt];  \ty[r]{{U}} \ \EPS[2])
			\quad
			(\Gamma,\, x : \ty[\qsat{r}\cap\qsat{\flt}]{{U}})^{\flt,x} \tsM g: \ty[p]{{T}} \ \EPS[3] \\
			x\notin\FV({T}) \quad \theta = [r/x]
			}
			{
			\tlet ~ x = C~ \tin~ g: (\GP[\fltp];  \ty[r']{{U'}} \ \EPS[1]) \carrow (\G[\flt];(\ty[p]{{T}} \ \EPS[2] \EFFSEQ \EPS[3])\theta)
			}
			\vgap
			\infrule[c-let-$\lambda$]
			{
            C: (\GP[\fltp]; \ty[r']{{U'}} \ \EPS[1]) \carrow ({(\Gamma, y : \ty[p]{S})}^{\,q'',y}; \ty[q]{{T}}  \ \EPS[2]) \\
			(\G, x: \ty[\qsat{q''} \overlap \qsat{\flt}]{((y : \ty[p]{{S}})\to^{\EPS[2]}\ty[q]{{T}})})^{\flt,x} \tsM g: \ty[r]{{U}} \ \EPS[3] \\
			x\notin\FV({U}) \quad \theta = [q''/x]\quad q''\subq\flt
			}
			{
			\tlet ~ x = \lambda y.C~ \tin ~g : (\GP[\fltp]; \ty[r']{{U'}} \ \EPS[1]) \carrow (\G[\flt]; (\ty[r]{{U}} \ \EPS[3])\theta)
			}
            \vgap
            \infrule[c-let-2]
			{
			\G[\flt] \tsM b: \ty[r]{{S}} \ \EPS[1]\\
			C: (\GP[\fltp]; \ty[r']{{U'}} \ \EPS[2]) \carrow ({(\G, x : \ty[\qsat{r}\cap\qsat{\flt}]{\ty{{S}}})}^{\,\flt,x}; \ty[p]{{T}}  \ \EPS[3])\\
			x\notin\FV({T})\quad\theta = [r/x]
			}
			{
			\tlet ~ x = b~ \tin ~C : (\GP[\fltp]; \ty[r']{{U'}} \ \EPS[2]) \carrow (\G[\flt]; (\ty[p]{{T}}  \ \EPS[1]\EFFSEQ\EPS[3])\theta)
			}
		\end{minipage}%
		\caption{Context typing rules for the \mnflang{}-Calculus.}
		\label{fig:mnf_context}
	\end{mdframed}
\end{figure}

\subsection{Context Typing, Synthesis and Erasure}

In this section, we establish properties about contexts used for contextual equivalence in \langg{}. We essentially lift
the dependency synthesis (\Cref{fig:graphir:synthesis}) to a notion of dependency synthesis for \mnflang{}
contexts (\Cref{fig:mnf_context}).

\begin{definition}[Dependency Erasure]
    We write \(\erased{g}\) (\(\erased{b}\)), for the erasure of effect dependencies from \langg{} graph terms (bindings), yielding
    their unannotated version in \mnflang{} (\Cref{sec:monadic}), as well as \(\erased{C}\) for erasing dependencies in
    \langg{} contexts from \Cref{fig:graphir_context} into \mnflang{} contexts from \Cref{fig:mnf_context}.
\end{definition}

\begin{definition}[Context Dependency Synthesis]\label{def:ctx_synth}
    We write \[\DELTA \ts C: (\GP[\fltp];\ty[q']{T'}\ \EPSPR) \carrow (\G[\flt]; \ty[q]{T}\ \EPS) \yields{\mathbf{C}\has\DELTA'}\]
    for context synthesis, obtained from lifting the dependency synthesis (\Cref{fig:graphir:synthesis}) to context typing derivations
    (\Cref{fig:mnf_context}).
\end{definition}
Intuitively, the synthesis over contexts yields the annotated context with respect to the ambient last-uses coeffect \(\DELTA\),
with \(\DELTAP\) being the ambient last uses at the hole of the context. The functional properties of dependency synthesis
on \mnflang{} terms (\Cref{sec:hardsoft:metatheory}) carry over analogously to contexts:
\begin{lemma}[Soundness of Context Dependency Synthesis]\label{lem:ctx_synth_soundness}
If \[\DELTA \ts C: (\GP[\fltp];\ty[q']{T'}\ \EPSPR) \carrow (\G[\flt]; \ty[q]{T}\ \EPS) \yields{\mathbf{C}\has\DELTA'}\]
then
    \[\mathbf{C}: (\GP[\fltp]\has\DELTAP;\ty[q']{T'}\ \EPSPR) \carrow (\G[\flt]\has\DELTA; \ty[q]{T}\ \EPS) \]
\end{lemma}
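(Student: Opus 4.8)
The plan is to proceed by induction on the derivation of context dependency synthesis (\Cref{def:ctx_synth}), equivalently on the structure of the \mnflang{} context \(C\) from \Cref{fig:mnf_context}. Since context synthesis is obtained by lifting the type-and-effect-directed synthesis of \Cref{fig:graphir:synthesis} to context typing derivations, every context synthesis rule is in one-to-one correspondence with a rule of \Cref{fig:mnf_context}, with exactly one position replaced by a hole. The argument therefore mirrors the proof of \Cref{lem:synth-soundness} (soundness of synthesis on terms): in each inductive case the genuine subcontext is discharged by the induction hypothesis, while every accompanying ordinary binding \(b\) or graph term \(g\) is discharged directly by \Cref{lem:synth-soundness}. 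Assembling the annotated graph-IR context then amounts to re-applying the corresponding lifted checking rule and verifying its dependency side conditions.

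The base case \rulename{c-hole} is immediate: synthesis leaves the hole unannotated and returns \(\DELTAP = \DELTA\), since nothing is consumed, so \(\GP[\fltp] = \G[\flt]\) and the goal reduces to the graph-IR hole rule, whose only premise is the subtyping \(\GS\ts\ty[q']{T'}\ \EPSPR <: \ty[q]{T}\ \EPS\); this is preserved verbatim because subtyping mentions no dependencies. For \rulename{c-let-1} and \rulename{c-let-$\lambda$} the subcontext \(C\) occupies the bound (resp. the \(\lambda\)-body) position, so the ambient last-use coeffect passed to \(C\) is unchanged (resp. reset to \(\pointsto{y}\), as dictated by \rulename{$\leadsto$-abs}). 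I would invert the synthesis derivation, apply the induction hypothesis to \(C\) and \Cref{lem:synth-soundness} to the continuation \(g\), and rebuild using the lifted \rulename{g-let} (resp. \rulename{n-abs}) rule. The dependency attached at the let binding (resp. the abstraction) is precisely \(\DELTA\vert_{\FX{\qsat{\EPS}}}\) by the Synthesis Invariant (\Cref{lem:deps-invariant}), which is exactly the sub-map required by the conformance premises \(\DEP\sqsubseteq\DELTA\vert_{\FX{\qsat{\EPS[1]}}}\) and \(\DEP\sqsubseteq\FX{\qsat{\EPS}}\mapsto y\), so those side conditions hold without further work.

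The delicate case is \rulename{c-let-2}, where \(C\) sits in the body of \(\tlet~x = b~\tin~C\). Here the ambient coeffect passed to \(C\) is the \emph{updated} map \(\DELTA,(\FX{\qsat{\EPS[1]}},x)\mapsto x\) recording that the variables touched by \(b\)'s effect were last used at \(x\), and the synthesized continuation dependencies must subsequently be rewired by \([x\leadsto\DELTA\vert_{\qsat{p}}]\) exactly as in rule \rulename{$\leadsto$-let}. The main obstacle is to show that this rewiring threads correctly \emph{through the hole}: the hole coeffect \(\DELTAP\) reported by the overall synthesis is the one produced by \(C\) under the updated ambient, further rerouted by the \(x\)-rewiring, and I must certify that the rebuilt graph-IR context carries this same rewired \(\DELTAP\) at its hole while preserving the outer interface \(\G[\flt]\has\DELTA\). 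I expect to handle this exactly as in the \rulename{$\leadsto$-let} case of \Cref{lem:deps-invariant}, reusing its consistency conclusion that sequential dependency composition and rewiring agree with effect composition and qualifier substitution (including the case split on whether \(x\in\EPS[2]\)); the remaining reasoning is then the same routine bookkeeping on finite dependency maps. Once this is in place, the lifted \rulename{c-let-2} rule assembles the result and closes the induction.
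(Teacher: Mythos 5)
Your proposal is correct and follows the same route as the paper, which proves this lemma simply by induction over the synthesis derivation for \(C\); your case analysis (discharging subcontexts by the induction hypothesis, accompanying terms by the term-level soundness of synthesis, and the dependency side conditions by the synthesis invariant) is exactly the detail the paper leaves implicit. No gap.
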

\begin{proof}
    By induction over the derivation \(\DELTA \ts C: (\GP[\fltp];\ty[q']{T'}\ \EPSPR) \carrow (\G[\flt]; \ty[q]{T}\ \EPS) \yields{\mathbf{C}\has\DELTA'}\).
\end{proof}

\begin{lemma}[Context Dependency Synthesis is Total]\label{lem:ctx_synth_totality}
    If \[C: (\GP[\fltp];\ty[q']{T'}\ \EPSPR) \carrow (\G[\flt]; \ty[q]{T}\ \EPS)\]
    then for all \(\DELTA\) with \(\DOM(\DELTA) = \DOM(\G)\) there is \(\DELTAP\) with \(\DOM(\DELTAP) = \DOM(\GP)\) and
    an annotated context \(\mathbf{C}\) where
\[\DELTA \ts C: (\GP[\fltp];\ty[q']{T'}\ \EPSPR) \carrow (\G[\flt]; \ty[q]{T}\ \EPS) \yields{\mathbf{C}\has\DELTA'}\]
   and \(\erased{\mathbf{C}} = C\).
\end{lemma}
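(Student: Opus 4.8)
The plan is to proceed by structural induction on the context typing derivation $C: (\GP[\fltp];\ty[q']{T'}\ \EPSPR) \carrow (\G[\flt]; \ty[q]{T}\ \EPS)$ from \Cref{fig:mnf_context}, mirroring the term-level totality argument (\Cref{lem:synth-totality}, and its hard-and-soft analogue \Cref{lem:hardsoft:synth-totality}). The key enabling observation is that, by \Cref{def:ctx_synth}, context synthesis is obtained by lifting the term synthesis rules of \Cref{fig:graphir:synthesis} one-to-one onto the four context typing rules, so the cases are in bijection and each step reduces either to a direct appeal to term synthesis totality on a genuine subterm or to the induction hypothesis on the sub-context. At every node I must exhibit the annotated context $\mathbf{C}$, the hole coeffect $\DELTAP$, and then verify the two side conditions $\DOM(\DELTAP)=\DOM(\GP)$ and $\erased{\mathbf{C}}=C$.

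First, the base case \rulename{c-hole}: the context is $\square$ carrying only a subtyping coercion, so there is no bound subterm whose dependencies need to be synthesized. Here the hole's context coincides with the outer context ($\GP=\G$, $\fltp=\flt$), so I set $\DELTAP := \DELTA$, which immediately satisfies the domain condition, and $\erased{\square}=\square$ holds trivially. The synthesized annotation is the one prescribed by the lifted \rulename{$\leadsto$-sub} rule, namely $\DELTA\vert_{\FX{\qsat{\EPS}}}$. For the inductive cases I distinguish where the hole sits relative to the $\tlet$. In \rulename{c-let-2}, where the hole lies in the body, I first apply term synthesis totality (\Cref{lem:synth-totality}) to the binding $b$ under the input coeffect $\DELTA$, obtaining $\bm{b}$ together with its dependency $\DEP[1]$, then update the coeffect exactly as \rulename{$\leadsto$-let} dictates and apply the induction hypothesis to $C$ to obtain $\DELTAP$ and $\mathbf{C}$. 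In \rulename{c-let-1}, where the hole lies in the binding, I instead apply the induction hypothesis to the sub-context $C$ first to compute $\DELTAP$, and synthesize the body $g$ via \Cref{lem:synth-totality} under the updated coeffect.

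The case \rulename{c-let-$\lambda$} is the crucial one and the main obstacle: the hole lies inside a $\lambda$-body, so—following the pattern of \rulename{$\leadsto$-abs}—the ambient last uses entering the hole are not a simple update of $\DELTA$ but must be reinitialized to $\mute{\pointsto{y}}$, pointing every free variable of the body at the formal parameter $y$. I would therefore recurse into $C$ with this fresh coeffect to obtain $\DELTAP$, and separately thread the abstraction's outer coeffect into the synthesis of the continuation $g$. Carefully reassembling the resulting annotation, accounting for the binder reset while maintaining $\DOM(\DELTAP)=\DOM(\GP)$, is the one place where the bookkeeping departs from the straightforward ``grow in lockstep with the context along each $\tlet$'' reasoning that discharges the other cases.

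Finally, in every case the erasure invariant $\erased{\mathbf{C}}=C$ follows structurally, since synthesis only attaches $\mute{teal}$ dependency annotations that erasure discards, and the coeffect domain condition is preserved because each $\tlet$ extends both $\DELTA$ and the context by the same bound variable. Thus, apart from the $\lambda$-body coeffect reset in \rulename{c-let-$\lambda$}, the proof is a routine induction combining the lifted rule correspondence with term-level totality.
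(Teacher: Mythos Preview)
Your proposal is correct and follows exactly the approach the paper takes: the paper's proof is the single line ``By induction over the context typing derivation for $C$,'' and your case analysis simply spells out what that induction looks like, correctly threading the last-use coeffect through the four rules and invoking term-level totality (\Cref{lem:synth-totality}) on the genuine subterms. Your handling of the coeffect reset to $\mute{\pointsto{y}}$ in \rulename{c-let-$\lambda$} is the right move and matches how the lifted \rulename{$\leadsto$-abs} rule behaves.
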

\begin{proof}
    By induction over the context typing derivation for \(C\).
\end{proof}

\begin{lemma}[Context Re-Synthesis]\label{lem:ctx_resynth}
    If \[C: (\GP[\fltp]\has\DELTAP;\ty[q']{T'}\ \EPSPR) \carrow (\G[\flt]\has\DELTA; \ty[q]{T}\ \EPS) \]
    then \[\DELTA\ts\erased{C}: (\GP[\fltp];\ty[q']{T'}\ \EPSPR) \carrow (\G[\flt]; \ty[q]{T}\ \EPS) \yields{{\color{black}C}\has\DELTAP}. \]
\end{lemma}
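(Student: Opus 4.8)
The plan is to prove this by induction on the \langg{} context typing derivation of $C$ (\Cref{fig:graphir_context}), reading off the top-level constructor of $C$ and matching it against the corresponding \mnflang context rule (\Cref{fig:mnf_context}) that $\erased{C}$ inhabits. The guiding principle is that $\erased{\cdot}$ and context synthesis (\defref{def:ctx_synth}) are mutually inverse on well-typed inputs, which is a direct consequence of the \emph{functional} nature of synthesis: given a typing derivation and an ambient last-use coeffect $\DELTA$, synthesis is total (\lemref{lem:ctx_synth_totality}) and, by the Synthesis Invariant (\lemref{lem:deps-invariant}, resp.\ its hard/soft analogue \lemref{lem:hardsoft:deps-invariant}), the dependency annotation at every node or binding is pinned down uniquely as $\DELTA\vert_{\FX{\qsat{\EPS}}}$. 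Hence re-synthesizing from $\erased{C}$ cannot do anything other than reproduce the annotations already carried by $C$.

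First I would isolate the term-level round-trip as an auxiliary fact: if $\GSD[\flt]\ts g : \ty[q]{T}\ \EPS$ for an annotated graph term (resp.\ binding), then $\GSD[\flt]\ts\erased{g} : \ty[q]{T}\ \EPS\yields{g\has\DEP}$ with $\DEP = \DELTA\vert_{\FX{\qsat{\EPS}}}$. This follows by feeding $\erased{g}$ into synthesis totality (\lemref{lem:synth-totality}) to obtain some annotated $\mathbf{g}$ with $\erased{\mathbf{g}} = \erased{g}$, and then appealing to the Synthesis Invariant, which forces both the annotations of $\mathbf{g}$ and those of $g$ to be the canonical $\DELTA\vert_{\FX{\qsat{\EPS}}}$, whence $\mathbf{g} = g$. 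This auxiliary fact discharges the fixed continuation subterms $g$ that appear unchanged in the let-shaped contexts, so the induction only has to track the behaviour through the hole.

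For the context proper, the cases mirror the four rules of \Cref{fig:mnf_context}. In \rulename{c-hole}, both $\erased{\cdot}$ and synthesis act as the identity, the subtyping side condition is untouched by erasure, and the hole coeffect equals the input, so $\DELTAP = \DELTA$ as required. In the three let-shaped cases \rulename{c-let-1}, \rulename{c-let-$\lambda$}, and \rulename{c-let-2}, I would peel off the outermost binding, apply the induction hypothesis to the embedded sub-context (recovering both its annotations and its hole coeffect $\DELTAP$), apply the term-level round-trip to the fixed subterms, and reassemble via the matching synthesis rule. The coeffect must be threaded exactly as in \rulename{\(\leadsto\)-let}: before descending past a binding of effect $\EPS[1]$ binding $x$, update $\DELTA$ to $\DELTA, (\FX{\qsat{\EPS[1]}},x)\mapsto x$ (resp.\ $\DELTA\oplus_x\EPSS[1]$ in the hard/soft system), and on the way out apply the rewiring $\_[x\leadsto\DELTA\vert_{\qsat p}]$ and the qualifier substitution $\theta$.

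The hard part will be the coeffect bookkeeping: ensuring that the value synthesized at the hole is literally $\DELTAP$ and that the reassembled annotations are literally those of $C$, rather than merely $\sqsubseteq$-related to them. This is exactly where the Synthesis Invariant is indispensable, since it upgrades the inclusion-only side conditions of the checking rules (\eg, $\DEP\sqsubseteq\DELTA\vert_{\FX{\qsat{\EPS[1]}}}$ in \rulename{g-let}) to tight equalities for inputs in the image of synthesis; I would therefore scope the induction to \emph{synthesized} (canonical) \langg{} contexts, which is the only regime in which the round-trip is an identity and the only one needed for the equivalence argument. The algebraic identities on $\_\vert_{\_}$, update, rewiring, and qualifier substitution over coeffects that make the reassembly step go through are precisely the ones already dispatched in the case analysis of \lemref{lem:deps-invariant}, so I expect to reuse that lemma's calculations rather than redo them.
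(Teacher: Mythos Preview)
Your approach matches the paper's: it too proceeds by induction over the \langg{} context typing derivation for $C$, and provides no further detail beyond that one line. Your observation that the $\sqsubseteq$ side conditions in the checking rules must be tightened to equalities via the Synthesis Invariant (hence restricting to canonical/synthesized contexts) is a legitimate refinement that the paper leaves implicit.
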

\begin{proof}
    By induction over the context typing derivation for \(C\).
\end{proof}

\begin{lemma}[Decomposition]\label{lem:ctx_graphir_decomposition}
    If\ \ \(\G[\flt]\has\DELTA\ts \CX{C}{g} : \ty[q]{T}\ \EPS\), then
    \(\GP[\fltp]\has\DELTAP\ts g : \ty[p]{S}\ \EPSPR\) and \(C : (\GP[\fltp]\has\DELTAP;\ty[p]{S}\ \EPSPR) \carrow (\G[\flt]\has\DELTA; \ty[q]{T}\ \EPS)\)
    for some \(\GP\), \(\fltp\), \(\DELTAP\), \(S\), \(p\), and \(\EPSPR\).
\end{lemma}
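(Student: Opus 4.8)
The plan is to argue by structural induction on the context \(C\), dualizing the plugging lemma and mirroring the strategy already used for the reduction-context decomposition (\Cref{lem:graphir:decomp}). Since \(C\) ranges over the four restricted MNF shapes \(\square\), \(\tlet~{x = C}~\tin~g'\), \(\tlet~{x = \lambda y.C}~\tin~g'\), and \(\tlet~{x = b}~\tin~C\), there is one base case and three inductive cases. In each inductive case I would invert the \langg{} typing derivation of \(\CX{C}{g}\); because the only term-forming rule for a \(\tlet\) is \rulename{g-let} (possibly preceded by subsumption steps \rulename{b-sub}), inversion delivers typings for the bound binding and the body together with the effect composition and the substitution \(\theta = [p/x]\), from which I extract the typing of whichever subterm contains the hole, apply the induction hypothesis, and reassemble using the matching context-typing rule from \Cref{fig:mnf_context} (lifted to \langg{} with dependencies).

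Concretely: for the base case \(C = \square\) we have \(\CX{C}{g} = g\), so I take \(\GP[\fltp]\has\DELTAP := \G[\flt]\has\DELTA\) and \(\ty[p]{S}\ \EPSPR := \ty[q]{T}\ \EPS\), and discharge the context typing by \rulename{c-hole} with reflexive subtyping. For \(\tlet~{x = C}~\tin~g'\) (hole in bound position), inversion of \rulename{g-let} gives a typing of the bound term \(C[g]\) under the ambient \(\DELTA\); the IH yields the hole typing and context typing, and I rebuild with \rulename{c-let-1} (resp.\ \rulename{c-let-$\lambda$} for \(\tlet~{x = \lambda y.C}~\tin~g'\)), reusing the remaining body typing that inversion produced. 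For \(\tlet~{x = b}~\tin~C\) (hole in body position), inversion gives a typing of \(b\) and of the body \(C[g]\), I apply the IH to the latter, and reassemble with \rulename{c-let-2}. The existential witnesses \(\GP,\fltp,\DELTAP,S,p,\EPSPR\) are exactly those produced by the IH at the hole.

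The main obstacle, and the place where care is genuinely required, is the correct threading of the last-use coeffect \(\DELTA\) into \(\DELTAP\) at the hole. The rule \rulename{g-let} types the bound term under the incoming \(\DELTA\) but types the body under the \emph{updated} map \(\DELTA, \mute{(\FX{\qsat{\EPS[1]}},x)\mapsto x}\) (the \(\oplus_x\) update in the read/write refinement of \Cref{sec:extension-with-soft}), while \rulename{n-abs} resets the last uses to \(\mute{\pointsto{y}}\) under a \(\lambda\)-binder. Hence the hole sees \(\DELTAP = \DELTA\) in the bound position, the augmented last-use map in the body position, and the parameter-pointing map under a lambda; the context-typing judgment must record precisely this transformation, and the induction only goes through if \(\DELTAP\) is chosen to match what inversion of the corresponding rule dictates. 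A secondary wrinkle is absorbing interspersed \rulename{b-sub} subsumption steps, which I handle by pushing the subtyping into the \rulename{c-hole} rule (exactly as \rulename{c-hole} carries a subtyping premise), so that a subsumption at the top of the derivation is reflected as a widening at the context boundary rather than at the hole.

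As a cross-check and a fallback should the direct threading become cumbersome, the erasure/re-synthesis machinery offers an alternative route: erase dependencies to obtain \(\G[\flt]\ts \CX{\erased{C}}{\erased{g}}\) in \mnflang{}, decompose there by the analogous structural induction, and re-annotate via \Cref{lem:ctx_resynth} together with the totality of context synthesis (\Cref{lem:ctx_synth_totality}); by the synthesis invariant the recovered \(\DELTAP\) agrees with the one computed directly. I expect the direct induction to be cleanest, with the coeffect bookkeeping being the only non-routine ingredient.
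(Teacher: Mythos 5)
Your proposal matches the paper's proof, which is simply stated as ``By induction over the context \(C\)''; your structural induction on the four context shapes, with inversion of \rulename{g-let}/\rulename{n-abs} and careful threading of the last-use coeffect into \(\DELTAP\) at the hole, is exactly the elaboration that one-line proof presupposes. The erasure/re-synthesis fallback you mention is not needed, but your identification of the coeffect bookkeeping as the only non-routine ingredient is accurate.
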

\begin{proof}
    By induction over the context \(C\).
\end{proof}

\begin{lemma}[Plugging]\label{lem:ctx_graphir_plugging}
    If\ \ \(\GP[\fltp]\has\DELTAP\ts g : \ty[p]{S}\ \EPSPR\) and \(C : (\GP[\fltp]\has\DELTAP;\ty[p]{S}\ \EPSPR) \carrow (\G[\flt]\has\DELTA; \ty[q]{T}\ \EPS)\),
    then \(\G[\flt]\has\DELTA\ts \CX{C}{g} : \ty[q]{T}\ \EPS\).
\end{lemma}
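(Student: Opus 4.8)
The plan is to prove this by induction on the structure of the graph IR context $C$ (equivalently, on the derivation of its context-typing judgment), exactly mirroring the proof of Decomposition (\lemref{lem:ctx_graphir_decomposition}), of which this is the converse. In each case I invert the context-typing derivation to recover the typing of the immediate sub-context together with the side conditions (the body/binding typing, the freshness constraint $x \notin \FV(\cdot)$, the coeffect discipline on $\DELTA$, and the subtyping premise at the hole), apply the induction hypothesis to the sub-context, and then re-apply the graph IR term-typing rule that matches the shape of $C$ in order to reassemble $\CX{C}{g}$.

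In the base case $C = \square$, the context typing is by \rulename{c-hole}, which supplies a subtyping $\G\ts \ty[p]{S}\ \EPSPR <: \ty[q]{T}\ \EPS$ between the hole and result type at an unchanged context and coeffect (so $\GP=\G$, $\fltp=\flt$, $\DELTAP=\DELTA$). Since $\CX{\square}{g} = g$ and we are given $\G[\flt]\has\DELTA\ts g : \ty[p]{S}\ \EPSPR$, a single application of subsumption \rulename{b-sub} yields $\G[\flt]\has\DELTA\ts g : \ty[q]{T}\ \EPS$, as required.

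For the let-shaped contexts \rulename{c-let-1} and \rulename{c-let-2}, inversion exposes a sub-context $C_0$ and the surrounding $\tlet$'s binding or body, the latter typed in the context extended by the let-bound variable and with $\DELTA$ updated so that the variables in the bound term's saturated effect point to that variable, as dictated by \rulename{g-let}. The induction hypothesis applied to $C_0$ types the plugged sub-term at the intermediate type and effect, and re-applying \rulename{g-let} with the recorded dependency annotation reconstitutes the let; the annotation is unaffected, since $g$ is inserted at the hole and does not alter the effect slice of $\DELTA$ that the annotation records. The case \rulename{c-let-$\lambda$} is analogous but additionally goes under a $\lambda$-abstraction: the sub-context is typed with the coeffect reset to $\pointsto{y}$ for the formal parameter $y$, and after invoking the IH I re-apply \rulename{n-abs} followed by \rulename{g-let}, checking that the latent dependency annotation on the body still conforms to the sub-map bound $\EPSS \mapsto y$ — which it does, because the hole filler's effect was already fixed as the hole effect $\EPSPR$.

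The main obstacle is purely the dependency bookkeeping: I must verify that the last-use coeffect $\DELTA$ threaded through the reassembled derivation, and the dependency annotations attached at the let bindings and at the $\lambda$-body, coincide with those prescribed by the graph IR typing rules. This is forced by the definition of graph IR context typing (cf.\ \Cref{def:graphir:context_typing}), which fixes how the coeffect $\DELTAP$ at the hole relates to the ambient $\DELTA$; since plugging inserts $g$ precisely at the hole, where the coeffect is already $\DELTAP$, and does not disturb the effect annotations along the spine of $C$, the update and conformance conditions of \rulename{g-let} and \rulename{n-abs} are met verbatim. No new reasoning about overlap or substitution is needed beyond what already appears in the Decomposition proof.
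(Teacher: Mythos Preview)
Your proposal is correct and follows the same approach as the paper, which proves the lemma by induction over the context typing for $C$; you simply spell out the inductive cases in more detail than the paper's one-line proof. One minor point: your reference to \Cref{def:graphir:context_typing} is to the evaluation-context typing for $G$ and $B$ from \Cref{sec:graphir}, not the contextual-equivalence contexts $C$ at issue here---the relevant rules are those in \Cref{fig:graphir_context}.
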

\begin{proof}
    By induction over the context typing for \(C\).
\end{proof}

\begin{lemma}[Synthesis Plugging]\label{lem:ctx_graphir_synth_plus}
    If \[\DELTA\ts C: (\GP[\fltp];\ty[q']{T'}\ \EPSPR) \carrow (\G[\flt]; \ty[q]{T}\ \EPS) \yields{\mathbf{C}\has\DELTAP} \]
    and
    \[\G[\flt]\has\DELTAP\vdash g:\ty[q']{T'}\ \EPSPR\yields{\mathbf{g}\has\DELTAP\vert_{\FX{\qsat{\EPSPR}}}}\]
    then \[\G[\flt]\has\DELTA\ts \CX{C}{g} : \ty[q]{T}\ \EPS \yields{\CX[mute]{\mathbf{C}}{\mute{\mathbf{g}}}\has\DELTA\vert_{\FX{\qsat{\EPS}}}}.\]
\end{lemma}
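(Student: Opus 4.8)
The plan is to proceed by induction over the context dependency synthesis derivation \(\DELTA \ts C: (\GP[\fltp];\ty[q']{T'}\ \EPSPR) \carrow (\G[\flt]; \ty[q]{T}\ \EPS) \yields{\mathbf{C}\has\DELTAP}\) (equivalently, over the lifted context typing of \Cref{fig:mnf_context}), threading the hypothesis on \(g\) down to the hole. In each case I invert the synthesis derivation to expose the synthesis of the immediate subcontext together with its sibling subterms, apply the induction hypothesis to the subcontext paired with \(g\), and reassemble the synthesis of \(\CX{C}{g}\) using the corresponding rule of \Cref{fig:graphir:synthesis}, yielding the annotated term \(\CX[mute]{\mathbf{C}}{\mute{\mathbf{g}}}\). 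The crucial economy is that I do \emph{not} recompute the synthesized dependency by hand: once the plugged synthesis relation is established for \emph{some} dependency \(\DEP\), the Synthesis Invariant (\Cref{lem:deps-invariant}) forces \(\DEP = \DELTA\vert_{\FX{\qsat{\EPS}}}\), which is exactly the claimed annotation. This sidesteps the dependency-map arithmetic already discharged in the proof of \Cref{lem:deps-invariant}.

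For the base case \(C = \square\), the context synthesis ends in the lifted \rulename{c-hole}, so \(\mathbf{C} = \square\), the hole last-uses \(\DELTAP\) coincide with the ambient \(\DELTA\), and there is a subtyping \(\G\ts\ty[q']{T'}\ \EPSPR <: \ty[q]{T}\ \EPS\). Plugging gives \(\CX{\square}{g} = g\), which by hypothesis synthesizes at \(\ty[q']{T'}\ \EPSPR\) with annotation \(\mathbf{g}\has\DELTA\vert_{\FX{\qsat{\EPSPR}}}\). Applying \rulename{\(\leadsto\)-sub} along the subtyping re-projects the dependency to \(\DELTA\vert_{\FX{\qsat{\EPS}}}\) and retypes \(g\) at \(\ty[q]{T}\ \EPS\), discharging the goal and already exhibiting why the output dependency is the claimed slice of \(\DELTA\).

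For the inductive cases \rulename{c-let-1}, \rulename{c-let-\(\lambda\)}, and \rulename{c-let-2}, the synthesis derivation ends in the lifted \rulename{\(\leadsto\)-let}. I invert it to obtain (i) the synthesis of the subcontext under the appropriate ambient last-uses and (ii) the synthesis of the let body \(g'\) under the updated coeffect \(\DELTA,\mute{(\FX{\qsat{\EPS[1]}},x)\mapsto x}\), with overall annotation assembled as \(\DEP[1],\DEP[2][x\leadsto\DELTA\vert_{\qsat{p}}]\). The induction hypothesis applied to the subcontext with \(g\) yields the synthesis of the plugged subcomponent; re-applying \rulename{\(\leadsto\)-let} (for \rulename{c-let-\(\lambda\)}, first re-applying \rulename{\(\leadsto\)-abs} inside the binding) rebuilds the synthesis of \(\CX{C}{g}\) with annotated term \(\CX[mute]{\mathbf{C}}{\mute{\mathbf{g}}}\), and the composite dependency collapses to \(\DELTA\vert_{\FX{\qsat{\EPS}}}\) by \Cref{lem:deps-invariant} exactly as in the base case.

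The main obstacle is the bound-position case \rulename{c-let-1}, where the plugged subterm's effect \(\EPSPR\) feeds into the binding's effect \(\EPS[1]\) and hence into the coeffect update threaded into the body's synthesis. Here one must verify that the ambient \(\DELTA\) presented to the subcontext is precisely the \(\DELTA\) the induction hypothesis expects, and that the rewiring \([x\leadsto\DELTA\vert_{\qsat{p}}]\) applied to the body's dependency is insensitive to plugging---this holds because \(g\) is closed under the hole context and the rewiring only reroutes the freshly bound \(x\). As above, rather than re-derive the resulting equalities on dependency maps, I appeal to the Synthesis Invariant to fix the composite dependency, so only the \emph{structural} agreement of the two synthesis derivations needs checking. (Alternatively, the same conclusion follows by combining \Cref{lem:ctx_synth_soundness} and \Cref{lem:ctx_graphir_plugging} to obtain well-typedness of \(\CX{\mathbf{C}}{\mathbf{g}}\), and then \Cref{lem:deps-invariant} to identify the annotation; the direct induction merely keeps the bookkeeping localized.)
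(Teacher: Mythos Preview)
Your proposal is correct and follows the same approach as the paper, which records only ``By induction over the context \(C\).'' Your elaboration---inverting the lifted synthesis rule at each context former, applying the IH to the subcontext, reassembling via \rulename{\(\leadsto\)-let}/\rulename{\(\leadsto\)-abs}/\rulename{\(\leadsto\)-sub}, and invoking \Cref{lem:deps-invariant} to pin the output dependency to \(\DELTA\vert_{\FX{\qsat{\EPS}}}\)---is exactly the intended unfolding of that one-liner.
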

\begin{proof}
By induction over the context \(C\).
\end{proof}

\subsection{Logical and Contextual Equivalence for \langg with Hard Dependencies}\label{sec:optimizations:contextual_equiv}

\begin{figure}[t]\small
	\begin{mdframed}
		\judgement{Context for Contextual Equivalence}{}
        \[C ::= \square \mid \tlet~{x = C\has\DEP}~\tin~g \mid \tlet~{x = (\lambda y.C\has\DEP)\has\DEP}~\tin~g \mid \tlet~{x = b\has\DEP}~\tin~C  \]
		\judgement{Context Typing Rules}{\BOX{C : (\G[\flt]\has\DELTA; \ty[q]{T}\ \EPS) \carrow (\G[\flt]\has\DELTA; \ty[q]{T}\ \EPS)}}
		\begin{minipage}[t]{1.0\linewidth}\vspace{0pt}
			\infrule[c-hole]{\ 
				\G\ts\ty[p]{S}\ \EPS[1] <: \ty[q]{T}\ \EPS[2]
			}{
				\square: (\G[\flt]\has\DELTA; \ty[p]{{S}}\ \EPS[1]) \carrow (\G[\flt]\has\DELTA; \ty[q]{{T}}\ \EPS[2])
			}
            \vgap
			\infrule[c-let-1]
			{
			C: (\GP[\fltp]\has\DELTAP; \ty[r']{{U'}} \ \EPS[1]) \carrow
			(\G[\flt]\has\DELTA;  \ty[r]{{U}} \ \EPS[2])
			\quad
			(\Gamma,\, x : \ty[\qsat{r}\cap\qsat{\flt}]{{U}})^{\flt,x} \has\mute{\DELTA,(\FX{\qsat{\EPS[2]}},x)\mapsto x}\ts g: \ty[p]{{T}} \ \EPS[3] \\
			x\notin\FV({T}) \quad \theta = [r/x]\quad \mute{\DEP\sqsubseteq \DELTA\vert_{\FX{\qsat{\EPS[2]}}}}
			}
			{
			\tlet ~ x = C\has\DEP~ \tin~ g: (\GP[\fltp]\has\DELTAP;  \ty[r']{{U'}} \ \EPS[1]) \carrow (\G[\flt]\has\DELTA;(\ty[p]{{T}} \ \EPS[2] \EFFSEQ \EPS[3])\theta)
			}
			\vgap
			\infrule[c-let-$\lambda$]
			{
            C: (\GP[\fltp]\has\DELTAP; \ty[r']{{U'}} \ \EPS[1]) \carrow ({(\Gamma, y : \ty[p]{S})}^{\,q'',y}\has\mute{\pointsto{y}}; \ty[q]{{T}}  \ \EPS[2]) \\
			(\G, x: \ty[\qsat{q''} \overlap \qsat{\flt}]{((y : \ty[p]{{S}})\to^{\EPS[2]}\ty[q]{{T}})})^{\flt,x}\has\DELTA\mute{,(\FX{\qsat{\EPS[4]}},x\mapsto x)} \ts g: \ty[r]{{U}} \ \EPS[3] \\
			x\notin\FV({U}) \quad \theta = [q''/x]\quad q''\subq\flt\quad\mute{\DEP[1]\sqsubseteq\FX{\qsat{\EPS[2]}}\mapsto y}\quad\mute{\DEP[2]\sqsubseteq\DELTA\vert_{\FX{\qsat{\EPS[4]}}}}
			}
			{
			\tlet ~ x =(\lambda y.C\has\DEP[1])\has\DEP[2]~ \tin ~g : (\GP[\fltp]\has\DELTAP; \ty[r']{{U'}} \ \EPS[1]) \carrow (\G[\flt]\has\DELTA; (\ty[r]{{U}} \ \EPS[3])\theta)
			}
            \vgap
			\infrule[c-let-2]
			{
			\G[\flt]\has\DELTA \ts b: \ty[r]{{S}} \ \EPS[1]\\
			C: (\GP[\fltp]\has\DELTAP; \ty[r']{{U'}} \ \EPS[2]) \carrow ({(\G, x : \ty[\qsat{r}\cap\qsat{\flt}]{\ty{{S}}})}^{\,\flt,x}\has\mute{\DELTA,(\FX{\qsat{\EPS[1]}},x)\mapsto x}; \ty[p]{{T}}  \ \EPS[3])\\
			x\notin\FV({T})\quad\theta = [r/x]\quad\mute{\DEP\sqsubseteq\DELTA\vert_{\FX{\qsat{\EPS[1]}}}}
			}
			{
			\tlet ~ x = b\has\DEP~ \tin ~C : (\GP[\fltp]\has\DELTAP; \ty[r']{{U'}} \ \EPS[2]) \carrow ({{\G}}^{\,\flt}\has\DELTA; (\ty[p]{{T}}  \ \EPS[1]\EFFSEQ\EPS[3])\theta)
			}
		\end{minipage}%
		\caption{Context typing rules for the \langg{} graph IR.}
		\label{fig:graphir_context}
	\end{mdframed}
\end{figure}

The key point is that we can resort to the metatheory of the direct-style type-and-effect system
\directlang, because (1) MNF is a sublanguage of the direct-style language
(\Cref{lem:type_preservation:translation,lem:type_preservation:translation_backwards}), and (2)
dependencies are entirely determined by assigned effects (\Cref{lem:hardsoft:deps-invariant}). Furthermore, effect
dependencies have no operational meaning beyond asserting that they respect the observed
call-by-value evaluation order of effects (\Cref{coro:hardsoft:dep-safety}). Thus, from those results we can
appeal to the logical relation and contextual equivalence of \mnflang{}
(\Cref{sec:optimizations:monadic}) by the erasure and re-synthesis of dependencies to derive their counterparts for \langg:
\begin{definition}[Logical Equivalence for \langg]\label{def:graphir:log_equiv}\hfill\vspace{-8pt}
    \infrule{\G[\flt]\models \erased{g_1} \equivlog \erased{g_2} : \ty[q]{T}\ \EPS
    \quad \G[\flt]\has\DELTA\vdash \erased{g_1}:\ty[q]{T}\ \EPS\yields{{\color{black}g_1}\has\DELTA\vert_{\FX{\qsat{\EPS}}}}
    \quad \G[\flt]\has\DELTA\vdash \erased{g_2}:\ty[q]{T}\ \EPS\yields{{\color{black}g_2}\has\DELTA\vert_{\FX{\qsat{\EPS}}}}
    }{\G[\flt]\has\DELTA\models g_1 \equivlog g_2 : \ty[q]{T}\ \EPS}
\end{definition}
\begin{definition}[Contextual Equivalence for \langg]\label{def:graphir:ctx_equiv}\hfill\vspace{-8pt}
    \infrule{\G[\flt]\models \erased{g_1} \equiva \erased{g_2} : \ty[q]{T}\ \EPS
    \quad \G[\flt]\has\DELTA\vdash \erased{g_1}:\ty[q]{T}\ \EPS\yields{{\color{black}g_1}\has\DELTA\vert_{\FX{\qsat{\EPS}}}}
    \quad \G[\flt]\has\DELTA\vdash \erased{g_2}:\ty[q]{T}\ \EPS\yields{{\color{black}g_2}\has\DELTA\vert_{\FX{\qsat{\EPS}}}}
    }{\G[\flt]\has\DELTA\models g_1 \equiva g_2 : \ty[q]{T}\ \EPS}
\end{definition}
\noindent
Intuitively, graph terms are logically/contextually equivalent iff their dependency-erased versions in \mnflang{} are logically/contextually
equivalent, and we can recover the original terms by re-synthesizing their dependencies. More precisely, they
are in the image of the synthesis function with respect to the last-use coeffect \(\DELTA\) and the effect \(\EPS\)
(cf.\ the synthesis invariant~\Cref{lem:hardsoft:deps-invariant}).

\subsubsection{Properties of Logical Relations}

\begin{theorem}[Fundamental Property]\label{them:graphir_fp} If\ \ $\G[\flt]\has\DELTA \ts g: \ty[q]{T} \ \EPS $, then $\G[\flt]\has\DELTA \models g \equivlog g : \ty[q]{T}\ \EPS$.
\end{theorem}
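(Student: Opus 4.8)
The plan is to prove \Cref{them:graphir_fp} as a corollary of the direct-style fundamental theorem, by unfolding the definition of logical equivalence for \langg{} (\Cref{def:graphir:log_equiv}) and discharging its premises via dependency erasure, the embedding of MNF into direct style, and dependency re-synthesis. Unfolding the conclusion $\G[\flt]\has\DELTA \models g \equivlog g : \ty[q]{T}\ \EPS$ against \Cref{def:graphir:log_equiv}, and noting that both sides are the same $g$, leaves two obligations: first, that the erasures are logically equivalent in \mnflang{}, i.e.\ $\G[\flt]\models \erased{g} \equivlog \erased{g} : \ty[q]{T}\ \EPS$; and second, that synthesizing dependencies for $\erased{g}$ with respect to the ambient last-use coeffect $\DELTA$ recovers $g$ itself, carrying the top-level dependency $\DELTA\vert_{\FX{\qsat{\EPS}}}$.

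For the first obligation I would start from the hypothesis $\G[\flt]\has\DELTA \ts g : \ty[q]{T}\ \EPS$ and observe that erasing the dependency annotations preserves typing: since the \irlang{} checking relation coincides with the \mnflang{} checking relation once the teal dependency side-conditions are dropped (\Cref{fig:checking-graphir-mnf}), the same derivation yields $\G[\flt]\tsM \erased{g} : \ty[q]{T}\ \EPS$. By the type-preserving embedding of MNF terms into direct style (\Cref{lem:type_preservation:translation_backwards}) this gives $\G[\flt]\ts \erased{g} : \ty[q]{T}\ \EPS$, so the direct-style fundamental theorem (\Cref{them:direct_fp}), which by \Cref{sec:optimizations:monadic} we may read as a statement about \mnflang{}, applies and delivers $\G[\flt]\models \erased{g} \equivlog \erased{g} : \ty[q]{T}\ \EPS$, discharging the first premise.

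For the second obligation I would invoke totality of synthesis (\Cref{lem:synth-totality}) on $\G[\flt]\tsM \erased{g} : \ty[q]{T}\ \EPS$ with the given $\DELTA$ to obtain a synthesized annotated graph $\bm{g}$ erasing back to $\erased{g}$, and then appeal to the synthesis invariant (\Cref{lem:deps-invariant}) to conclude that its attached top-level dependency is exactly $\DELTA\vert_{\FX{\qsat{\EPS}}}$, as the definition demands. Because synthesis is structurally driven by $\erased{g}$ and inserts at each node the canonical slice of $\DELTA$ determined by that node's effect, the output $\bm{g}$ is uniquely determined by the pair $(\erased{g},\DELTA)$.

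The main obstacle is reconciling this canonical re-synthesis $\bm{g}$ with the given $g$: the \langg{} checking rules \rulename{n-abs} and \rulename{g-let} only require $\DEP \sqsubseteq \DELTA\vert_{\FX{\qsat{\EPS}}}$ at each node, whereas synthesis produces equality, so a priori $g$ could carry strictly smaller annotations than $\bm{g}$. I would close this gap by exploiting that the equivalence in \Cref{def:graphir:log_equiv} is defined over graph terms in the image of synthesis, i.e.\ terms whose annotations are already canonical. Concretely, I would either restrict the statement to such synthesized $g$, which is how the theorem is consumed downstream in the compatibility lemmas and equational rules, or prove an auxiliary lemma that for a checked $g$, replacing each node's annotation by the corresponding canonical slice of $\DELTA$ preserves typing and leaves $\erased{g}$ unchanged, so that $g$ and $\bm{g}$ coincide for the purpose of the relation. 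Since effect dependencies carry no operational meaning beyond enforcing the observed call-by-value evaluation order (\Cref{coro:graphir:dep-safety}), this canonicalization is harmless; with $g = \bm{g}$ the second premise of \Cref{def:graphir:log_equiv} holds, and combining it with the first premise concludes the proof.
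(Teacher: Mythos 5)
Your proposal follows essentially the same route as the paper's proof: erase dependencies to obtain a well-typed \mnflang{} term, apply the direct-style fundamental theorem (\Cref{them:direct_fp}) to get logical self-equivalence of the erasure, and then use totality of synthesis to recover $g$ annotated with the canonical top-level dependency $\DELTA\vert_{\FX{\qsat{\EPS}}}$, exactly as \Cref{def:graphir:log_equiv} demands. The one point where you go beyond the paper is in flagging that the checking rules only require $\DEP\sqsubseteq\DELTA\vert_{\FX{\qsat{\EPS}}}$ at each node while synthesis produces equality, so a merely checked $g$ need not literally coincide with the re-synthesized term; the paper's one-line proof elides this by writing the synthesis conclusion as yielding $g$ itself, implicitly assuming $g$ lies in the image of synthesis (as the prose around \Cref{def:graphir:log_equiv} indicates), and either of your proposed remedies---restricting the statement to synthesized terms or adding a canonicalization lemma---would make that assumption explicit.
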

\begin{proof}
    $\G[\flt]\has\DELTA \ts g: \ty[q]{T} \ \EPS $ implies $\G[\flt] \tsM \erased{g}: \ty[q]{T} \ \EPS $ and
    by the fundamental \Cref{them:direct_fp}, it follows that $\G[\flt] \models \erased{g} \equivlog \erased{g} : \ty[q]{T} \ \EPS $.
    Finally, by \Cref{lem:hardsoft:synth-totality}, we have that \(\G[\flt]\has\DELTA\vdash \erased{g}:\ty[q]{T}\ \EPS\yields{{\color{black}g}\has\DELTA\vert_{\FX{\qsat{\EPS}}}}\).
\end{proof}

\begin{lemma}[Congruency of Binary Logical Relations]\label{lem:graphir_congruence} The binary logical relation is closed under well-typed program contexts,
    \ie, if\ \  $\G[\flt]\has\DELTA \models g_1 \equivlog g_2: \ty[p]{T} \ \EPS$,
    and $C:(\G[\flt]\has\DELTA; \ty[p]{T} \ \EPS) \carrow (\GP[\flt']\has\DELTAP; \ty[p']{T'} \ \EPS')$, then $\GP[\flt']\has\DELTAP \models C[g_1] \equivlog C[g_2]: \ty[p']{T'} \ \EPS'$.
\end{lemma}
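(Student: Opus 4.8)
The plan is to reduce the statement to the congruence lemma already proved for the direct-style/monadic system (\lemref{lem:direct_congruence}, which by the discussion in \secref{sec:optimizations:monadic} we treat as living over \mnflang{}), transporting the argument through dependency erasure and recovering the graph IR annotations by re-synthesis. First I would unfold the hypothesis $\G[\flt]\has\DELTA \models g_1 \equivlog g_2 : \ty[p]{T}\ \EPS$ using \defref{def:graphir:log_equiv}. This produces three facts: the \mnflang{} logical equivalence $\G[\flt]\models \erased{g_1} \equivlog \erased{g_2} : \ty[p]{T}\ \EPS$, together with the two re-synthesis facts $\G[\flt]\has\DELTA\vdash \erased{g_i}:\ty[p]{T}\ \EPS\yields{g_i\has\DELTA\vert_{\FX{\qsat{\EPS}}}}$ for $i\in\{1,2\}$.

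Next I would erase the context. Applying \lemref{lem:ctx_resynth} to the given well-typed \langg{} context $C:(\G[\flt]\has\DELTA; \ty[p]{T}\ \EPS) \carrow (\GP[\flt']\has\DELTAP; \ty[p']{T'}\ \EPSPR)$ yields a context re-synthesis derivation $\DELTAP\ts\erased{C}: (\G[\flt]; \ty[p]{T}\ \EPS) \carrow (\GP[\flt']; \ty[p']{T'}\ \EPSPR) \yields{C\has\DELTA}$, whose hole last-uses coincide with the $\DELTA$ that threads $g_1,g_2$, and whose synthesis output recovers the original annotated context $C$. By \lemref{lem:ctx_synth_soundness} the erased context is a well-typed \mnflang{} context $\erased{C}: (\G[\flt]; \ty[p]{T}\ \EPS) \carrow (\GP[\flt']; \ty[p']{T'}\ \EPSPR)$. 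Feeding this context and the \mnflang{} equivalence of the erasures into \lemref{lem:direct_congruence} gives $\GP[\flt']\models \erased{C}[\erased{g_1}] \equivlog \erased{C}[\erased{g_2}] : \ty[p']{T'}\ \EPSPR$. Since erasure is defined homomorphically on bindings, graph terms, and contexts, it commutes with plugging, i.e.\ $\erased{C}[\erased{g_i}] = \erased{C[g_i]}$ by a routine induction on $C$; rewriting with this identity establishes exactly the first premise of \defref{def:graphir:log_equiv} for the goal.

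It then remains to re-establish the dependency annotations on the plugged terms, namely the premises $\GP[\flt']\has\DELTAP\vdash \erased{C[g_i]}:\ty[p']{T'}\ \EPSPR\yields{C[g_i]\has\DELTAP\vert_{\FX{\qsat{\EPSPR}}}}$. Here I would combine the context re-synthesis from the previous step with the hole re-synthesis of $\erased{g_i}$ via \lemref{lem:ctx_graphir_synth_plus} (Synthesis Plugging). The inputs align precisely because the context's hole last-uses are $\DELTA$ and each $\erased{g_i}$ was synthesized under $\DELTA$, so the lemma delivers a synthesis derivation for $\erased{C}[\erased{g_i}] = \erased{C[g_i]}$ whose annotated output is $C[g_i]$ carrying top dependency $\DELTAP\vert_{\FX{\qsat{\EPSPR}}}$; this canonical shape is exactly what \lemref{lem:hardsoft:deps-invariant} predicts, and matches the form required by \defref{def:graphir:log_equiv}. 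Discharging all three premises and applying \defref{def:graphir:log_equiv} in the forward direction concludes $\GP[\flt']\has\DELTAP \models C[g_1] \equivlog C[g_2] : \ty[p']{T'}\ \EPSPR$.

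The main obstacle I anticipate is bookkeeping rather than conceptual: carefully aligning the two last-use coeffects $\DELTA$ (at the hole) and $\DELTAP$ (at the whole context) across the context-level and term-level synthesis derivations so that the inputs to \lemref{lem:ctx_graphir_synth_plus} genuinely coincide, and discharging the erasure/plugging commutation. All operational content has already been absorbed into the \mnflang{} logical-relations development, and because effect dependencies carry no operational meaning beyond respecting the observed evaluation order (\lemref{lem:hardsoft:deps-invariant} and the surrounding discussion), no fresh logical-relations reasoning is needed at the \langg{} level.
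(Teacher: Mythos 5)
Your proposal is correct and follows essentially the same route as the paper's proof: unfold Definition~\ref{def:graphir:log_equiv}, erase the context, apply the MNF/direct-style congruence lemma (Lemma~\ref{lem:direct_congruence}) to the erased terms, use the commutation of erasure with plugging, and recover the annotations via Lemma~\ref{lem:ctx_resynth} and Lemma~\ref{lem:ctx_graphir_synth_plus}. If anything, your bookkeeping of which last-use coeffect ($\DELTA$ at the hole vs.\ $\DELTAP$ at the outer context) feeds into the context re-synthesis and synthesis-plugging steps is more careful than the paper's own write-up, which swaps the two in a couple of places and cites the congruence lemma self-referentially where Lemma~\ref{lem:direct_congruence} is clearly intended.
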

\begin{proof}
    \begin{enumerate}
        \item By definition of logical equivalence for \langg{}:
        \begin{enumerate}
            \item $\G[\flt] \models \erased{g_1} \equivlog \erased{g_2}: \ty[p]{T} \ \EPS$.
            \item $\G[\flt]\has\DELTA\vdash \erased{g_1}:\ty[q]{T}\ \EPS\yields{{\color{black}g_1}\has\DELTA\vert_{\FX{\qsat{\EPS}}}}$.
            \item $\G[\flt]\has\DELTA\vdash \erased{g_2}:\ty[q]{T}\ \EPS\yields{{\color{black}g_2}\has\DELTA\vert_{\FX{\qsat{\EPS}}}}$.
        \end{enumerate}
        \item By erasure: $\erased{C}:(\G[\flt]; \ty[p]{T} \ \EPS) \carrow (\GP[\flt']; \ty[p']{T'} \ \EPS')$.
        \item By the congruence \Cref{lem:graphir_congruence}, and (1), (2): $\G[\flt] \models \CX{\erased{C}}{\erased{g_1}} \equivlog \CX{\erased{C}}{\erased{g_2}}: \ty[p]{T} \ \EPS$.
        \item (3) is equivalent to $\G[\flt] \models \erased{\CX{C}{g_2}} \equivlog \erased{\CX{C}{g_2}}: \ty[p]{T} \ \EPS$.
        \item By assumption, (2), and re-synthesis \Cref{lem:ctx_resynth}:
        \[\DELTA\ts\erased{C}:(\G[\flt]; \ty[p]{T} \ \EPS) \carrow (\GP[\flt']; \ty[p']{T'} \ \EPS')\yields{{\color{black}C}\has\DELTAP}.\]
        \item By (1a), (1c), (5), and synthesis plugging \Cref{lem:ctx_graphir_synth_plus}:
        \begin{enumerate}
            \item $\G[\flt]\has\DELTA\vdash \erased{\CX{C}{g_1}}:\ty[q]{T}\ \EPS\yields{{\color{black}\CX{C}{g_1}}\has\DELTA\vert_{\FX{\qsat{\EPS}}}}$.
            \item $\G[\flt]\has\DELTA\vdash \erased{\CX{C}{g_2}}:\ty[q]{T}\ \EPS\yields{{\color{black}\CX{C}{g_2}}\has\DELTA\vert_{\FX{\qsat{\EPS}}}}$.
        \end{enumerate}
        \item (4) and (6) prove the goal.
    \end{enumerate}
\end{proof}

\begin{theorem}[Soundness of Binary Logical Relations]\label{thm:graphir_lr_soundness} The binary logical relation is sound w.r.t. contextually equivalence, \ie,
    if\ \  $\G[\flt]\has\DELTA \ts g_1: \ty[p]{T}\ \EPS$ and $\G[\flt]\has\DELTA \ts g_2: \ty[p]{T}\ \EPS$, then
    $\G[\flt]\has\DELTA \models g_1 \equivlog g_2: \ty[p]{T}\ \EPS$ implies
    $\G[\flt]\has\DELTA \models g_1 \equiva g_2: \ty[p]{T}\ \EPS$.
\end{theorem}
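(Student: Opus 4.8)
The plan is to reduce the claim entirely to the corresponding soundness result for the monadic normal form, \Cref{thm:direct_lr_soundness}, which—following \Cref{sec:optimizations:monadic}—we read as a statement about \mnflang{}. The key observation is that both logical and contextual equivalence for \langg{} (\Cref{def:graphir:log_equiv,def:graphir:ctx_equiv}) are defined by pairing an equivalence of the \emph{dependency-erased} terms in \mnflang{} with two \emph{identical} synthesis side-conditions, namely that $g_1$ and $g_2$ are re-obtained from $\erased{g_1}$ and $\erased{g_2}$ by dependency synthesis with respect to $\DELTA$. Consequently, passing from $\equivlog$ to $\equiva$ at the graph-IR level amounts to passing from $\equivlog$ to $\equiva$ on the erased \mnflang{} terms, while the synthesis conditions are simply carried over verbatim.

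Concretely, I would first unfold the hypothesis $\G[\flt]\has\DELTA \models g_1 \equivlog g_2 : \ty[p]{T}\ \EPS$ using \Cref{def:graphir:log_equiv}, extracting (a) the \mnflang{} logical equivalence $\G[\flt]\models \erased{g_1} \equivlog \erased{g_2} : \ty[p]{T}\ \EPS$ and (b) the two synthesis judgments $\G[\flt]\has\DELTA\vdash \erased{g_i}:\ty[p]{T}\ \EPS\yields{g_i\has\DELTA\vert_{\FX{\qsat{\EPS}}}}$. Next, from the typing premises $\G[\flt]\has\DELTA \ts g_i : \ty[p]{T}\ \EPS$ I would obtain, by erasing the dependency annotations (\Cref{sec:hard-statics}, together with the embedding \Cref{lem:type_preservation:translation_backwards}), that the erased terms are well typed in \mnflang{}, \ie $\G[\flt]\tsM \erased{g_i}:\ty[p]{T}\ \EPS$. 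These two well-typedness facts and (a) are exactly the hypotheses of \Cref{thm:direct_lr_soundness}, so invoking it yields the \mnflang{} contextual equivalence $\G[\flt]\models \erased{g_1} \equiva \erased{g_2} : \ty[p]{T}\ \EPS$. Finally, I would re-fold this together with the unchanged synthesis judgments (b) through \Cref{def:graphir:ctx_equiv} to conclude $\G[\flt]\has\DELTA \models g_1 \equiva g_2 : \ty[p]{T}\ \EPS$.

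I do not expect a genuine technical obstacle here: once the two \langg{} equivalences are unfolded, the proof is a matter of matching their shared synthesis side-conditions and plugging the erased terms into the already-established \mnflang{} soundness theorem. The only points requiring care are (i) checking that the synthesis annotations appearing in \Cref{def:graphir:log_equiv} and \Cref{def:graphir:ctx_equiv} are literally the same, so that clause (b) can be transported without modification—this holds because in both definitions they are the canonical annotations $\DELTA\vert_{\FX{\qsat{\EPS}}}$ determined by \Cref{lem:hardsoft:deps-invariant}; and (ii) justifying that erasure preserves typing so that the \mnflang{} soundness theorem is applicable, which is immediate from the fact that the \langg{} typing rules are the \mnflang{} rules decorated with dependency bookkeeping. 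Note that, unlike the direct-style proof, this argument does not re-invoke congruence (\Cref{lem:graphir_congruence}) or a separate adequacy lemma at the graph-IR level; those obligations are subsumed by reducing to the \mnflang{} development.
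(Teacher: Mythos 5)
Your proposal is correct and follows essentially the same route as the paper's proof: unfold \Cref{def:graphir:log_equiv} to obtain the \mnflang{} logical equivalence of the erased terms together with the two synthesis judgments, apply \Cref{thm:direct_lr_soundness} to the erased terms, and re-fold via \Cref{def:graphir:ctx_equiv}. The only difference is that you make explicit the (routine) check that erasure preserves typing so that the \mnflang{} soundness theorem applies, which the paper leaves implicit.
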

\begin{proof}
    \begin{enumerate}
        \item By definition of logical equivalence for \langg{}:
        \begin{enumerate}
            \item $\G[\flt] \models \erased{g_1} \equivlog \erased{g_2}: \ty[p]{T} \ \EPS$.
            \item $\G[\flt]\has\DELTA\vdash \erased{g_1}:\ty[q]{T}\ \EPS\yields{{\color{black}g_1}\has\DELTA\vert_{\FX{\qsat{\EPS}}}}$.
            \item $\G[\flt]\has\DELTA\vdash \erased{g_2}:\ty[q]{T}\ \EPS\yields{{\color{black}g_2}\has\DELTA\vert_{\FX{\qsat{\EPS}}}}$.
        \end{enumerate}
        \item By (1a), and soundness \Cref{thm:direct_lr_soundness}: $\G[\flt] \models \erased{g_1} \equiva \erased{g_2}: \ty[p]{T} \ \EPS$.
        \item The goal follows by (1b), (1c), (2), and the definition of contextual equivalence for \langg{}.
    \end{enumerate}
\end{proof}

\subsection{Soundness of the Optimization Rules}\label{sec:optimizations:soundness}

\begin{figure*}[t]
    \begin{mdframed}
        \begin{mathpar}
            \inferrule*[left=dce]
            {
            \G[\flt]\has\DELTA\ts b: \ty[q_1]{T_1}\ \EPS[1] \\
            \G[\flt]\has\DELTA\ts g: \ty[q_2]{T_2}\ \EPS[2] \\
            g \text{ terminates}  \\
            \EPS[1] = \PURE\ \text{or}\ \omega \\ \mute{\DEP\sqsubseteq\DELTA\vert_{\FX{\qsat{\EPS[1]}}}}
            }
            {
            \G[\flt]\has\DELTA \models \tlet~{x = b\has\DEP}~\tin~g \equivlog g: \ty[q_2]{T_2} \EPS[2]
            }

            \inferrule*[left=comm]
            {\G[\flt]\has\DELTA\ts  b_1: \ty[q_1]{T_1}\ \EPS[1] \qquad
            \G[\flt]\has\DELTA\ts b_2: \ty[q_2]{T_2}\ \EPS[2] \\
            \csxs[\flt, x, y]{\G\ ,\ x: \ty[\qsat{q_1} \overlap \qsat{\flt}]{T_1}, \ y: \ty[\qsat{q_2}\overlap \qsat{(\flt, x)}]{T_2}}\has\mute{\DELTA,(\FX{\qsat{\EPS[1]}},x)\mapsto x,(\FX{\qsat{\EPS[2]}},y)\mapsto y} \ts g: \ty[q]{T} \ \EPS[3] \\
            \FX{\qsat{\EPS[1]} \overlap \qsat{\EPS[2]} = \emptyset} \\
            x \not\in \FV(T) \\  y \not\in \FV(T) \\
            \theta = [q_2/y][q_1/x]\\\\ \mute{\DEP[1]\sqsubseteq\DELTA\vert_{\FX{\qsat{\EPS[1]}}}}\\\mute{\DEP[2]\sqsubseteq\DELTA\vert_{\FX{\qsat{\EPS[2]}}}}
            }
            {\G[\flt]\has\DELTA \models \phantom{\equivlog}\tlet~x = b_1\has\DEP[1]~\tin~ \tlet~ y = b_2\has\DEP[2] ~ \tin ~ g \qquad\qquad\qquad\\\phantom{\G[\flt]\models}\qquad\qquad\equivlog \tlet~ y = b_2\has\DEP[2]\ \tin \ \tlet \ x = b_1\has\DEP[1] ~ \tin ~ g \ : (\ty[q]{T} \ \EPS[1] \EFFSEQ \EPS[2] \EFFSEQ \EPS[3])\theta}

            \inferrule*[left=$\lambda$-hoist]
            { \G[\flt]\has\DELTA\ts  b: \ty[o]{S}\ \PURE \\
            \csxs[q, y, z]{\G\ ,\ y: \ty[p]{T}, z: \ty[\qsat{o}\overlap \qsat{(\flt, z)}]{S}} \has\mute{\pointsto{y},z\mapsto z} \ts g: \ty[r]{U} \ \EPS \\ \theta =[o/z]  \\ y \not\in \FV(U) \\ z \not\in \FV(U)\\
            }
            {\G[\flt]\has\DELTA \models \tlet~x = (\lambda y.(\tlet~z = b\has\NODEP~\tin~g)\has\DEP[1])\has\DEP[2]~\tin~ x \\\\
            \phantom{\G[\flt]\models}\qquad\qquad\equivlog \tlet~z = b\has\NODEP~\tin~\tlet~x = (\lambda y. g\has\DEP[1])\has\DEP[2]~ \tin ~ x \\ : ((y: \ty[p]{T}) \to^{\EPS} \ty[r]{U}\theta)^{q}
            }

            \inferrule*[left=$\beta$-inlining]
            {
            \csxs[q,x]{\G, x: \ty[p]{T}}\has\mute{\pointsto{x}} \ts g : \ty[r]{U} \EPS \\ q\subseteq \flt \\
            \G[\flt]\has\DELTA\ts b : \ty[p]{T}\ \PURE \\ x\notin\FV(U) \\
            \theta = [p/x]
            }
            {\G[\flt]\has\DELTA \models \tlet~x = (\lambda y.g\has\DEP[1])\has\DEP[2]~\tin~\tlet~z = b\has\DEP[2]~\tin~\tlet~w = y\;z\has\DEP[3]~\tin~w \\\\
            \equivlog \tlet~x = (\lambda y.g\has\DEP[1])\has\DEP[2]~\tin~\tlet~z = b\has\DEP[2]~\tin~ g\mute{[x\leadsto \DEP[3]]}[z/x] \\ : (\ty[r]{U}\ \EPS)\theta
            }

            \inferrule*[left = e-cse]
            {
            \G[\flt]\has\DELTA \ts b: \ty[p]{S} \ \EPS[1] \\
            \csxs[\flt, x, y]{\G\ ,\ x: \ty[\qsat{p} \overlap \qsat{\flt}]{S}, y: \ty[\qsat{p}\overlap \qsat{(\flt, x)}]{S}}\has\DELTA\mute{,(\FX{\qsat{\EPS[1]}},x)\mapsto x, (\FX{\qsat{\EPS[1]}},y)\mapsto y} \ts g: \ty[q]{T} \ \EPS[2] \\
            \omega \notin \EPS[1] \\ \theta = [x/y]\\\mute{\DEP\sqsubseteq\DELTA\vert_{\FX{\qsat{\EPS[1]}}}}
            }
            {
            \G[\flt]\has\DELTA \models (\tlet\ x = b\has\DEP \ \tin \ \tlet \ y = b\has\DEP \ \tin \ g) \equivlog (\tlet \ x = b\has\DEP\ \tin \ g\theta) : \ty[q]{T}\theta \ \EPS[1]\theta \EFFSEQ\EPS[2] \\
            }
        \end{mathpar}
        \caption[Equational and optimization rules for the \langg{} graph IR.]{Equational rules for the \langg{} graph IR. We obtain the optimization rules by congruence closure with the contexts from \Cref{fig:graphir_context}.}
        \label{fig:graphir_equiv}
    \end{mdframed}
\end{figure*} 
From the above results about logical equivalence for the \langg graph IR obtained by a ``round-trip
translation'' technique, we are now equipped to prove the soundness of the main paper's optimization
rules based on the results of \Cref{sec:direct_equiv}. The optimization rules are the congruence
closure (with respect to \(C\) contexts in \Cref{fig:graphir_context}) of the equations shown in
\Cref{fig:graphir_equiv}. Those are obtainable mechanically from their counterparts in \directlang{}
(\Cref{fig:direct_equiv}) by using the type-and-effect preserving translation into MNF
(\Cref{sec:mnf:translation}) followed by dependency synthesis for a given map \(\DELTA\) of last
uses (\Cref{fig:graphir:synthesis}). That is, the equational rules in \Cref{fig:graphir_equiv} are
the dependency-annotated versions of their counterparts in MNF.

\begin{theorem}[Compatibility of the Equational Rules]\label{thm:graphir_equational_admissible}
    Each rule in \Cref{fig:graphir_equiv} is compatible with the logical equivalence.
\end{theorem}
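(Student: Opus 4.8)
The plan is to prove each base equation of \Cref{fig:graphir_equiv} directly from \Cref{def:graphir:log_equiv}, which factors \langg{} logical equivalence $\G[\flt]\has\DELTA\models g_1 \equivlog g_2 : \ty[q]{T}\ \EPS$ into three obligations: (a) the dependency-erased terms are logically equivalent in \mnflang{}, i.e.\ $\G[\flt]\models \erased{g_1} \equivlog \erased{g_2} : \ty[q]{T}\ \EPS$, and (b,c) each of $g_1,g_2$ is recovered as the dependency synthesis of its erasure with respect to the ambient last-use coeffect $\DELTA$, with annotation $\DELTA\vert_{\FX{\qsat{\EPS}}}$. I will discharge (a) once and for all by reusing the already-established direct-style equational lemmas, and (b,c) by appealing to totality and the synthesis invariant. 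The five rules \rulename{dce}, \rulename{comm}, \rulename{$\lambda$-hoist}, \rulename{$\beta$-inlining}, and \rulename{e-cse} are then handled by instantiating this common schema.

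For obligation (a), the key observation is that erasing all dependency annotations from a rule in \Cref{fig:graphir_equiv} yields precisely the monadic-normal-form instance of the corresponding rule in \Cref{fig:direct_equiv}. Since \mnflang{} is a sublanguage of \directlang{} and the entire logical-relations development of \Cref{sec:direct-lr} is taken to be over \mnflang{} (\Cref{sec:optimizations:monadic}, justified by \Cref{lem:type_preservation:translation_backwards}), I can invoke the corresponding soundness lemma — \Cref{lem:dce}, \Cref{lem:comm}, \Cref{lem:lambda_hoist}, \Cref{lem:beta_inlining}, or \Cref{lem:e_cse} — on the erased terms to obtain $\G[\flt]\models \erased{g_1} \equivlog \erased{g_2} : \ty[q]{T}\ \EPS$. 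For \rulename{$\beta$-inlining}, whose direct-style statement substitutes an arbitrary term, I first route through the type-and-effect-preserving MNF translation (\Cref{lem:type_preservation:translation}) so that the substituted result stays in MNF; its erasure is then exactly $\erased{g}[z/x]$, matching the direct-style conclusion.

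For obligations (b) and (c), I use that synthesis is total (\Cref{lem:hardsoft:synth-totality}), so each erased side admits an annotated synthesis, and that synthesis is completely determined by the context and effect (\Cref{lem:hardsoft:deps-invariant}): at every node the synthesized dependency is $\DELTA\vert_{\FX{\qsat{\EPS}}}$. It then remains to check, case by case, that the dependency annotations written in \Cref{fig:graphir_equiv} coincide with these synthesized ones. For \rulename{dce}, \rulename{$\lambda$-hoist}, and \rulename{e-cse} this is immediate inspection. The two delicate cases are \rulename{comm}, where reordering the two bindings changes how $\DELTA$ is threaded, and \rulename{$\beta$-inlining}, whose right-hand side carries the combined rewiring and substitution $g\mute{[x\leadsto \DEP[3]]}[z/x]$. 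For \rulename{comm} I use the separation hypothesis $\qsat{\EPS[1]}\overlap\qsat{\EPS[2]}=\emptyset$ to argue that the two last-use updates commute on the projections that matter, so both orderings synthesize the same annotations. For \rulename{$\beta$-inlining} I appeal to the substitution-and-rewiring \Cref{lem:graphir:subst_term} together with the consistency of sequential dependency composition and rewiring with effect composition and substitution established inside the proof of \Cref{lem:deps-invariant}; this shows that re-synthesizing dependencies on $\erased{g}[z/x]$ yields exactly the rewired annotated term, placing the right-hand side in the image of synthesis with the required annotation.

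With (a), (b), and (c) in hand, \Cref{def:graphir:log_equiv} yields $\G[\flt]\has\DELTA\models g_1 \equivlog g_2 : \ty[q]{T}\ \EPS$ for each rule, which is the claimed compatibility; the full optimization rules then follow by congruence closure using \Cref{lem:graphir_congruence}. I expect the main obstacle to be exactly the bookkeeping in (b,c) for \rulename{comm} and \rulename{$\beta$-inlining}: ensuring that the manually supplied reordered and rewired annotations agree with what synthesis produces, which is where the algebraic identities for last-use coeffects, rewiring, and qualifier substitution (from \Cref{lem:deps-invariant} and \Cref{lem:graphir:subst_term}) must be deployed carefully.
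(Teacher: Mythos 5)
Your proposal is correct and follows essentially the same route as the paper: erase dependencies, invoke the direct-style equational lemmas (\Cref{lem:dce,lem:comm,lem:lambda_hoist,lem:beta_inlining,lem:e_cse}) on the erased terms, then recover the annotated sides via totality and soundness/invariance of synthesis and conclude by \Cref{def:graphir:log_equiv}. The extra care you take in checking that the hand-written annotations in \rulename{comm} and \rulename{$\beta$-inlining} agree with what synthesis produces is a welcome elaboration of a step the paper's proof compresses into a single appeal to \Cref{lem:hardsoft:synth-totality} and \Cref{lem:hardsoft:synth-soundness}.
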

\begin{proof} Each individual rule \rulename{dce}, \rulename{comm}, \rulename{$\lambda$-hoist}, \rulename{$\beta$-inlining},
    and \rulename{e-cse} can be uniformly proved as follows:
    \begin{enumerate}
        \item By dependency erasure, and by \Cref{lem:type_preservation:translation_backwards}, we have
              that both graphs are equated by the direct-style version of the respective rule (\Cref{fig:direct_equiv}).
        \item By \Cref{lem:dce,lem:comm,lem:lambda_hoist,lem:beta_inlining,lem:e_cse}, the erased graphs are logically equivalent
        in \directlang.
        \item By totality and soundness of synthesis (\Cref{lem:hardsoft:synth-totality,lem:hardsoft:synth-soundness}),
              re-synthesizing the dependencies under the given \(\DELTA\) and context \(\G[\flt]\) of the erased graphs
              yields the initial dependency-annotated versions.
        \item By (2) and (3), both sides are logically equivalent in \langg{} (\Cref{def:graphir:log_equiv}).
    \end{enumerate}
\end{proof}

\begin{corollary}[Compatibility of the Optimization Rules]\label{coro:graphir_optimization_admissible}
    The optimization rules for \langg{}, \ie, the congruence closure of the rules in \Cref{fig:graphir_equiv} is compatible in logical equivalence.
\end{corollary}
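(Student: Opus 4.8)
The plan is to read the corollary as a containment statement: the optimization rules are by construction the congruence closure of the base equations of \Cref{fig:graphir_equiv} under the well-typed \langg{} contexts of \Cref{fig:graphir_context}, so to prove ``compatibility in logical equivalence'' it suffices to show that the relation $\equivlog$ is itself a congruence that already contains every base rule. Being the \emph{least} congruence with that property, the closure is then sandwiched inside $\equivlog$, which is exactly the claim. The two ingredients are already available: \Cref{thm:graphir_equational_admissible} supplies the base inclusion (each rule of \Cref{fig:graphir_equiv} relates logically equivalent graphs), and \Cref{lem:graphir_congruence} supplies closure of $\equivlog$ under arbitrary well-typed contexts $C$.

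Concretely, I would argue by induction on the derivation of a congruence-closure judgement $g_1 \equiv g_2$. In the base case the judgement is an instance of one of $\rulename{dce}$, $\rulename{comm}$, $\rulename{$\lambda$-hoist}$, $\rulename{$\beta$-inlining}$, or $\rulename{e-cse}$; here \Cref{thm:graphir_equational_admissible} gives $\G[\flt]\has\DELTA \models g_1 \equivlog g_2 : \ty[q]{T}\ \EPS$ directly. In the compatibility step the judgement has the form $C[g_1] \equiv C[g_2]$ arising from a sub-derivation $g_1 \equiv g_2$ together with a well-typed context $C:(\G[\flt]\has\DELTA;\ty[p]{T}\ \EPS) \carrow (\GP[\flt']\has\DELTAP;\ty[p']{T'}\ \EPSPR)$ built from the formers of \Cref{fig:graphir_context}; the induction hypothesis yields $\G[\flt]\has\DELTA \models g_1 \equivlog g_2 : \ty[p]{T}\ \EPS$, and a single appeal to the congruency \Cref{lem:graphir_congruence} delivers $\GP[\flt']\has\DELTAP \models C[g_1] \equivlog C[g_2] : \ty[p']{T'}\ \EPSPR$, closing the case. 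If the closure is additionally understood to be reflexive, symmetric, and transitive, reflexivity on well-typed graphs is discharged by the fundamental property (\Cref{them:graphir_fp}), while symmetry and transitivity of $\equivlog$ descend routinely from the corresponding properties of the underlying \directlang{} relation through the erasure/re-synthesis definition (\Cref{def:graphir:log_equiv}). Finally, combining with \Cref{thm:graphir_lr_soundness} promotes compatibility in $\equivlog$ to contextual equivalence $\equiva$, so the optimizations provably preserve observable behaviour.

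The proof carries essentially no deep obstacle, precisely because \Cref{lem:graphir_congruence} already provides closure under \emph{arbitrary} (not merely single-former) contexts, which collapses the inductive compatibility step into one invocation. The only genuine care is bookkeeping: I must check that the context typing demanded by \Cref{lem:graphir_congruence} is exactly the judgement produced by the formers enumerated in \Cref{fig:graphir_context}, and that the ambient last-use coeffects $\DELTA$ and $\DELTAP$ are threaded consistently from the hole to the conclusion so that the erasure/re-synthesis conditions of \Cref{def:graphir:log_equiv} are met at both the hole type $\ty[p]{T}\ \EPS$ and the outer type $\ty[p']{T'}\ \EPSPR$. This coeffect bookkeeping is the one place where a careless formulation could fail to line up, so I would state it explicitly rather than leave it implicit.
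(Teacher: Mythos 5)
Your proposal matches the paper's proof, which is exactly the two-step combination of \Cref{thm:graphir_equational_admissible} for the base rules and the congruency \Cref{lem:graphir_congruence} for closure under contexts; your explicit induction on the closure derivation simply spells out what the paper leaves implicit. The final remark about promoting to contextual equivalence via \Cref{thm:graphir_lr_soundness} belongs to the subsequent corollary rather than this one, but it does not affect the correctness of your argument here.
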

\begin{proof}
    By \Cref{thm:graphir_equational_admissible} and the congruency \Cref{lem:graphir_congruence}.
\end{proof}

\begin{corollary}[Soundness of the Optimization Rules]\label{coro:graphir_optimization_soundness}
    The optimization rules for \langg{}, \ie, the congruence closure of the rules in \Cref{fig:graphir_equiv} describe contextually equivalent graphs.
\end{corollary}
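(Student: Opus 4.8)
The plan is to obtain the corollary as the immediate composition of two results already in hand: the compatibility of the optimization rules with the logical equivalence (\colref{coro:graphir_optimization_admissible}), and the soundness of the binary logical relation with respect to contextual equivalence (\thmref{thm:graphir_lr_soundness}). In other words, once we know that any two graphs related by an optimization rule are logically equivalent, it remains only to transport this to contextual equivalence via the logical-relations soundness theorem.

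Concretely, I would proceed as follows. First, fix two graphs $g_1$ and $g_2$ related by the congruence closure of the equations in \Cref{fig:graphir_equiv}. By \colref{coro:graphir_optimization_admissible} we immediately have $\G[\flt]\has\DELTA \models g_1 \equivlog g_2 : \ty[q]{T}\ \EPS$ for the appropriate context, coeffect, type, and effect. Second, I would discharge the well-typedness premises required by \thmref{thm:graphir_lr_soundness}, namely $\G[\flt]\has\DELTA \ts g_1 : \ty[q]{T}\ \EPS$ and $\G[\flt]\has\DELTA \ts g_2 : \ty[q]{T}\ \EPS$. These follow from the very definition of \langg{} logical equivalence (\defref{def:graphir:log_equiv}): it requires that each side arise by dependency synthesis from a well-typed erased \mnflang{} term, and soundness of synthesis (\lemref{lem:hardsoft:synth-soundness}) then yields that $g_1$ and $g_2$ are themselves well-typed in \langg{}. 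Third, applying \thmref{thm:graphir_lr_soundness} converts the logical equivalence into the desired contextual equivalence $\G[\flt]\has\DELTA \models g_1 \equiva g_2 : \ty[q]{T}\ \EPS$.

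The step that carries the actual content is the second one: ensuring that both sides of each base equational rule --- and, under the congruence closure, both plugged graphs --- are genuinely well-typed at a common type and effect. For the base rules this is a matter of inspecting \Cref{fig:graphir_equiv} (or, equivalently, reading it off the round-trip translation from the direct-style rules of \Cref{fig:direct_equiv}); for the congruence closure it is guaranteed by the plugging \lemref{lem:ctx_graphir_plugging}, which preserves typing when filling a well-typed context hole with a well-typed graph. I do not anticipate further difficulty: the genuine reasoning was already expended in the congruency \lemref{lem:graphir_congruence} and the erasure/re-synthesis machinery underlying \defref{def:graphir:log_equiv}, so the present corollary is essentially the last link in the chain, assembling compatibility and logical-relations soundness into a statement about contextually equivalent graphs.
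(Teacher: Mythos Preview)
Your proposal is correct and follows essentially the same approach as the paper, which simply cites \Cref{coro:graphir_optimization_admissible} and \Cref{thm:graphir_lr_soundness}. Your additional care in discharging the well-typedness premises of \Cref{thm:graphir_lr_soundness} via \Cref{def:graphir:log_equiv} and \Cref{lem:hardsoft:synth-soundness} is a reasonable elaboration of what the paper leaves implicit in its one-line proof.
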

\begin{proof}
    By \Cref{coro:graphir_optimization_admissible} and soundness \Cref{thm:graphir_lr_soundness}.
\end{proof}
\section{From Graphs Back to Trees} \label{sec:codemotion}

In this section, we discuss efficient algorithms and heuristics for code generation,
which transform the graph informed by \langa{} into a tree with nested structures.
This section \emph{follows the same structure as in the main paper~\cite{oopsla23} but explains
the algorithms with greater details}:
(1) we first present the basic scheduling algorithm (\Cref{sec:basic}) which
incorporates dead code elimination;
(2) based on that, a lightweight frequency estimation heuristics (\Cref{sec:freq})
introduces more flexible code motion;
(3) we finally present a compact scheduling algorithm (\Cref{sec:compact}) for
instruction selection and inlining expressions.
The optimizations are justified by the equational theory in \Cref{sec:optimizations:soundness}.

\begin{figure}[t]
\begin{mdframed}
\begin{minipage}[t]{1.0\textwidth}
\begin{lstlisting}[xleftmargin=4.0ex,numbers=left,basicstyle=\scriptsize\selectfont\ttfamily]
/* auxiliary functions to access different sorts of dependencies of a node
   boundDeps: dependencies that are bound variables
   dataDeps, effDeps: data- and effect-dependencies
   hardDeps ($\subseteq\,$effDeps): hard effect-dependencies */
val boundDeps, dataDeps, effDeps, |\HLCode[light-pink]{hardDeps}|: Node => Set[Node]
/* obtain estimated frequencies of data-/effect-dependencies of a node: */
|\HLCode{\textbf{val} depFreq: Node => Map[Node, Double]}|

/* traverse a single node to emit a tree node */
def traverseNode(inner: Set[Node], path: Set[Node], n: Node): TreeNode = n match
  case $\lambda$f(x).r =>                                              // schedule nodes into a $\lambda$ scope
    TreeNode.Scope($\lambda$f(x), scheduleBlock(inner, path $\cup$ {f, x}, r))
  ...
  case "|\$|sym := |\$|op(|\$|args)" =>                                // schedule common nodes as leaves
    TreeNode.Leaf(sym, Exp(op, args))

/* schedule a block given its final result, producing a scoping tree */
def scheduleBlock(scope: Set[Node], path: Set[Node], res: Node): List[TreeNode] =
  val reachable: PriorityQueue[Node] = {res}                  // reachable nodes, topologically ordered
  |\HLCode[light-pink]{\textbf{val} reachableHard: Set[Node] = \{res\}}|                       // reachable |\&| required by data/hard deps.
  |\HLCode{\textbf{val} reachableHot: Set[Node] = \{res\}}|                        // reachable |\&| frequently executed
  val current: List[Node] = $\varnothing$                                  // scheduled in current block
  val inner:   Set[Node]  = $\varnothing$                                  // scheduled in inner blocks
  def available(n: Node): Boolean = boundDeps(n) $\subseteq$ path        // available: bound vars. in deps. are ready

  for n $\leftarrow$ reachable do
    |\HLCode[light-pink]{\textbf{if} reachableHard(n) \textbf{then}}\,|                                 // reachable via data/hard dependencies
      if |\HLCode{reachableHot(n) $\land$}| available(n) then                 // reachable via hot paths
        current = n :: current
        |\HLCode{\textbf{for} m $\leftarrow$ (dataDeps(n) $\cup$ effDeps(n)) $\cap$ scope \textbf{do}}|      // consider deps. hot if freq > 0.5
          |\HLCode{\textbf{if} depFreq(n)[m] > 0.5 \textbf{then} reachableHot += \{m\}}|
      else                                                    // only via cold path, or hot but unavailable
        inner += {n}
        |\HLCode{\textbf{if} reachableHot(n) \textbf{then}}|                              // deps. of unavailable hot nodes are hot
          |\HLCode{reachableHot += (dataDeps(n) $\cup$ effDeps(n)) $\cap$ scope}|
      |\HLCode[light-pink]{reachableHard += (dataDeps(n) $\cup$ hardDeps(n)) $\cap$ scope}|  // reach by data and only hard dependencies
    reachable += (dataDeps(n) $\cup$ effDeps(n)) $\cap$ scope             // reach by data/effect dependencies

  for n $\leftarrow$ current yield traverseNode(inner, path, n)           // recursively build up the scoping tree
\end{lstlisting}
\end{minipage}
\caption[The pseudocode of the basic scheduling algorithm with two extensions.]{The pseudocode of the basic scheduling algorithm with two extensions.
  Function \lstinline|scheduleBlock| decides which nodes are scheduled
  into the \lstinline|current| scope and recursively schedules \lstinline|inner| scopes.
  To generate code for a graph \lstinline|g|,
  we make the call \lstinline|scheduleBlock(g.nodes, $\,\emptyset$, g.result)|.
  The extension to eliminate dead code by soft dependencies (cf. \Cref{sec:basic}) is marked in \HLCode[light-pink]{\scriptsize pink},
  and the extension for frequency estimation and code motion (cf. \Cref{sec:freq}) in \HLCode{\scriptsize teal}.
  }
\label{fig:basic-codemotion}
\end{mdframed}
\end{figure}

\subsection{Traversal without Redundant Code} \label{sec:basic}

In essence, the block scheduling algorithm traverses a hierarchy of
graph-represented blocks and selects unscheduled nodes to move into
tree-represented blocks (and emit code) based on the node's dependencies.
\Cref{fig:basic-codemotion} shows the vanilla block scheduling algorithm,
which operates over graph IR data structures.
We use @Node@ to represent a graph node, and
a few auxiliary functions such as @dataDeps@/@hardDeps@ to extract
different kinds of dependencies of a @Node@.

At the beginning of scheduling, we have a scope of @Node@s to schedule and a
symbol representing the final result of the top-level block.
Transitively following the dependencies of the final result, @scheduleBlock@
partitions the unscheduled nodes into two
groups: (1) nodes that are scheduled in the current block, and
(2) nodes that will be scheduled into other inner blocks.
This process is recursively applied when encountering lambda nodes
in @traverseNode@.
Schedule decisions are made relying on two properties over nodes,
\emph{available} and \emph{reachable}.

\bfparagraph{Nested Scopes}
A node is \emph{available} if all its dependent bound variables
have been introduced under the current path.
A node is scheduled in the @current@ block if it is both reachable and
available (Line 29); otherwise, it is moved to the @inner@ scope (Line 33).
To this end, we need to \emph{transtively} compute the bound variables depended
by nodes, of which the result is reflected by @boundDeps@.
In \Cref{fig:basic-codemotion},
@path@ represents the set of the accumulated bound variables (\eg,
introduced by lambdas) up to the current block.
Both @path@ and @scope@ need maintaining through the recursive calls
to properly handle scopes and nested blocks.

\bfparagraph{Dead Code Elimination}
A node is \emph{reachable} if it can be back-tracked from the current result node
through effect or data dependencies.
Only reachable nodes are considered for scheduling, which naturally eliminates
dead code (cf. rule \textsc{dce}, \Cref{fig:graphir_equiv}).
In \Cref{fig:basic-codemotion}, @reachable@ is a priority queue
which reflects the property and enforces the topological ordering.
It is populated with data and effect dependencies along the iteration (Line 37).
As an extension, we discern soft dependencies (\Cref{sec:hardsoft}) and identify data
and hard dependencies as \HLCode[light-pink]{\scriptsize\texttt{reachableHard}} (Line 36).
This ensures nodes that are only reachable via soft dependencies can be eliminated.

\bfparagraph{Complexity}
Given the total number of nodes $n$ and the maximal depth of nested blocks $k$, the
worst-case asymptotic time complexity is bound by $O(kn^2)$.
This is because the algorithm traverses over the reachable nodes in order
(bound by $n$, $O(n)$ each),
and repeats this process for nested scopes.
In practice, the complexity is bound by $O(kn\log n)$ given the decreasing
size of nested scopes and the limited degrees of graph nodes.
To exemplify, scheduling of symbolic execution (cf.\ \cite{oopsla23}, Section 7.2)
takes 19.3 sec for 548,976 graph nodes, which is rather efficient.

\subsection{Code Motion by Frequency Estimation}\label{sec:freq}

Our basic scheduling algorithm eagerly schedules nodes to their outermost
block, following the equational rules \inflabel{comm} and \inflabel{$\lambda$-hoist}
in \Cref{fig:graphir_equiv}.
This is a form of code motion for no extra effort and
generally desirable for functions and loops. For instance,
\begin{lstlisting}
  List(1, 2, 3, 4, 5).map(x => x * factorial(N))
\end{lstlisting}
Lifting the expensive @factorial@ out of the lambda is beneficial and feasible
since it does not depend on the bound variable @x@. However, this does not
always generate optimal code.  Consider a conditional expression that
transforms an array of complex numbers only in the then-branch,
\begin{lstlisting}
  if (cnd) compNums.map(f) else compNums
\end{lstlisting}
Since @compNums.map(f)@ has no dependency on @cnd@, this statement would be
lifted to the outer scope and always executed regardless of the condition,
thus imposing unnecessary runtime overhead when the else-branch is actually taken.

To avoid this situation, we can \emph{estimate} how frequently a node is used
and move less frequent nodes (\ie, cold) into inner scopes.
We assign a number to each node based on its dependents which
represents how relatively often the node is executed at runtime.
The results of functions and loops are assigned 100,
indicating that they and their dependencies can be executed multiple times (definitely hot).
The results of conditional branches are assigned 0.5 (cold),
assuming each branch is taken with equal probability.
All other nodes are assigned 1.0 (normal).
Numbers above are illustrative and context-insensitive.
Alternative metrics are possible, while what we present here is beneficial to
many code patterns. %

\Cref{fig:basic-codemotion} highlights in \HLCode{teal} the changes to the basic
scheduling algorithm to use frequency estimation. Given a node ready to schedule
in the current scope (Line 28-31), we use the function @depFreq@ to access the frequency
estimation of its dependencies. Only
those with frequencies greater than 0.5 are considered hot-reachable, and thus
can be included in @current@.
Others are classified to be cold, and are scheduled in inner blocks if all
reaching dependencies are cold.
Given a node scheduled in inner blocks (Line 32-35), its dependencies are considered to have
the same level of warmth as the node itself, ensuring consistent code motion
behavior for code with nested scopes.

The proposed heuristic works as expected in that it
(1) lifts computation out of hot constructs such as loops, and
(2) sinks computation into cold constructs such as conditionals.
Regarding nested scopes, the heuristic prioritizes (1) over (2). Suppose there
is a loop in the current scope and a conditional inside that loop. For a
node inside the conditional, the heuristic tends to lift it out to the current
scope, as the result is still used multiple times during the loop.
On the other hand, if there is a loop inside a conditional in the current scope,
a node used in the loop is not lifted to the current scope, but still subject to
lifting up to the conditional scope to optimize the loop.

Our approach is simpler to implement and more efficient compared to more
sophisticated analyses, such as lazy code motion \cite{Knoop1992}, partial
redundancy elimination \cite{Kennedy1999}, or even whole-program dataflow analyses.
The estimation heuristics associates a constant factor to each node traversed.
Therefore, it does not change the complexity of the basic scheduling algorithm.

\begin{figure}[t]
\begin{mdframed}
\begin{minipage}[t]{1.0\textwidth}
\begin{lstlisting}[xleftmargin=4.0ex,numbers=left,basicstyle=\scriptsize\selectfont\ttfamily]
/* shouldInline is a mutable set of nodes that are (1) locally defined,
   (2) locally used as value exactly once, and (3) have no inner use. */
val shouldInline: Set[Node] = { n$\,\in\,$localDef $|$ currentValUse[n] == 1 $\land$ innerValUse[n] == 0 }

/* seen is the set of processed nodes (cf. processNodeHere) */
val seen: Set[Node] = $\varnothing$

/* check if all successors of node n have been processed; exclude it if not. */
def checkInline(n: Node): Unit =
  if shouldInline(n) $\land$ $\forall\,$s$\,\in\,$succ[n].$\,$seen(s) then
    processNodeHere(n)
  else shouldInline -= {n}

/* considering node n, try to inline all its dependencies */
def processNodeHere(n: Node): Unit =
  seen += {n}
  for s $\leftarrow$ dataDeps(n).reverse do                         // respects order of argument evaluation
    checkInline(s)

/* traverse and emit a single node */
def traverseNode(inner: Set[Node], path: Set[Node], |\HLCode[light-yellow]{inlined: Set[Node],}| n: Node): TreeNode =
  def traverseInlined(n: Node): Exp = n match
    case "|\$|sym := |\$|op(|\$|args)" =>                        // recursively handling inlined expressions
      Exp(op, for m $\leftarrow$ args yield |\HLCode[light-yellow]{\textbf{if} inlined(m) \textbf{then} traverseInlined(m)}| else m)
    ...

  n match
    case $\lambda$f(x).r =>                                      // schedule nodes into a $\lambda$ scope
      TreeNode.Scope($\lambda$f(x), scheduleBlock(inner, path $\cup$ {f, x}, r))
    ...
    case "|\$|sym := |\$|op(|\$|args)" =>                        // schedule common nodes as leaves
      TreeNode.Leaf(sym, |\HLCode[light-yellow]{traverseInlined(n)}|)
\end{lstlisting}
\end{minipage}
\caption[Auxiliary definitions for compact code generation.]{
  The auxiliary functions for the block scheduling algorithm with
  compact traversal.
  \lstinline|shouldInline|, \lstinline|seen|, \lstinline|checkInline|, and \lstinline|processNodeHere|
  are defined within \lstinline|scheduleBlock| in \Cref{fig:compact-codemotion},
  while \lstinline|traverseNode| replaces its former version in \Cref{fig:basic-codemotion}.
  Usages of inlining information are marked in \HLCode[light-yellow]{\scriptsize yellow}.
  }
\label{fig:aux-compact-codemotion}
\end{mdframed}
\end{figure}

\begin{figure}[t]
\begin{mdframed}
\begin{minipage}[t]{1.0\textwidth}
\begin{lstlisting}[xleftmargin=4.0ex,numbers=left,basicstyle=\scriptsize\selectfont\ttfamily]
def scheduleBlock(scope: Set[Node], path: Set[Node], res: Node): List[TreeNode] =
  ...                                                      // see |\Cref{fig:basic-codemotion}| for initial definitions

  /* (1) a backward pass to track node usages, integrated with the basic scheduling algorithm */
  val localDef: Set[Node] = $\varnothing$                               // the collection of local definitions
  val innerValUse: Map[Node, Int] = {$\underline{\quad}\,\mapsto$ 0}                   // how often a symbol is used in inner blocks
  val currentValUse: Map[Node, Int] = {res $\mapsto$ 1, $\underline{\quad}\,\mapsto$ 0}         // how often a symbol is used in current block
  for n $\leftarrow$ reachable do
    |\HLCode[light-pink]{\textbf{if} reachableHard(n) \textbf{then}}|
      if |\HLCode{reachableHot(n) $\land$}| available(n) then
        current = n :: current
        |\HLCode[light-yellow]{localDef += \{n\}}|                                   // record local definition
        |\HLCode[light-yellow]{\textbf{for} m $\leftarrow$ dataDeps(n) $\cap$ scope \textbf{do}}|                  // tracking node usage
          |\HLCode[light-yellow]{\textbf{if} depFreq(n)[m] = 1.0 \textbf{then} currentValUse[m]++}|
          |\HLCode[light-yellow]{\textbf{else} innerValUse[m]++}|
        |\HLCode{\textbf{for} m $\leftarrow$ (dataDeps(n) $\cup$ effDeps(n)) $\cap$ scope \textbf{do}}|
          |\HLCode{\textbf{if} depFreq(n)[m] > 0.5 \textbf{then} reachableHot += \{m\}}|
      else
        inner += {n}
        |\HLCode[light-yellow]{\textbf{for} m $\leftarrow$ dataDeps(n) $\cap$ scope \textbf{do}}|                  // tracking node usage
          |\HLCode[light-yellow]{innerValUse[m]++}|
        |\HLCode{\textbf{if} reachableHot(n) \textbf{then}}|
          |\HLCode{reachableHot += (dataDeps(n) $\cup$ effDeps(n)) $\cap$ scope}|
      |\HLCode[light-pink]{reachableHard += (dataDeps(n) $\cup$ hardDeps(n)) $\cap$ scope}|
    reachable += (dataDeps(n) $\cup$ effDeps(n)) $\cap$ scope

  /* (2) a forward pass to compute local successors */
  val succ: Map[Node, Set[Node]] = {$\underline{\quad}\,\mapsto\,\varnothing$}
  for c $\leftarrow$ current do
    for m $\leftarrow$ (dataDeps(c) $\cup$ effDeps(c)) $\cap$ localDef do
      |\HLCode[light-yellow]{succ[m] += \{c\}}|

  /* (3) a backward pass to check if all successors are emitted after the point of inlining */
  val shouldInline: Set[Node] = ...                        // see |\Cref{fig:aux-compact-codemotion}| for definitions
  def checkInline(n: Node): Unit = ...
  def processNodeHere(n: Node): Unit = ...
  |\HLCode[light-yellow]{checkInline(res)}|                                        // inline the result node into "return" statement
  for n $\leftarrow$ current.reverse do                                // process all possible inline locations
    if !shouldInline(n) then |\HLCode[light-yellow]{processNodeHere(n)}|

  /* (finally) a forward pass to perform the acutal code emission */
  for n $\leftarrow$ current $\HLBox[light-yellow]{\texttt{\textbf{if} !shouldInline(n)}}$                                          // emit each non-inlinable node
    yield traverseNode(inner, path, $\HLBox[light-yellow]{\texttt{shouldInline}}$, n)                   // with inlining information (cf. |\Cref{fig:aux-compact-codemotion}|)
\end{lstlisting}
\end{minipage}
\caption[Compact code generation.]{
  The pseudocode of block scheduling algorithm with compact traversal.
  Changes for elimination by soft dependencies are highlighted in \HLCode[light-pink]{\scriptsize pink},
  frequency estimation in \HLCode{\scriptsize teal}, and compact traversal in \HLCode[light-yellow]{\scriptsize yellow}.
  Auxiliary function definitions can be found in \Cref{fig:aux-compact-codemotion}.
  }
\label{fig:compact-codemotion}
\end{mdframed}
\end{figure}

\subsection{Instruction Selection with Compact Traversal}\label{sec:compact}

The basic scheduling algorithm generates code that binds
every intermediate expression using \textsf{let}.
The result, nevertheless, is not only verbose but also suboptimal, without using
the target-specific primitives.
Consider the following tensor computation snippet,
\begin{lstlisting}
  val X = Matmul(A, B); val C = Add(C, X); C
\end{lstlisting}
The unique use of @X@ in @Add@ enables further transformation into
destination-passing style using generalized matrix multiplication @GEMM@,
which updates @C@ in-place by
$C \leftarrow \alpha AB + \beta C$.
Thus, we can match the tree structure @Add(C, Matmul(A, B))@ and generate
a single operation, eliminating the intermediate multiplication @X@:
\begin{lstlisting}
  GEMM(A, B, C, alpha=1.0, beta=1.0); C
\end{lstlisting}
This is basically a form of \emph{instruction selection} seen in optimizing
compilers. However, the procedure can be non-trivial on computation
graphs where all consumers of a value need accounting for. A proper solution
on graphs like LLVM's \texttt{SelectionDAG}~\cite{DBLP:conf/cgo/LattnerA04}
takes effort to compose and time to execute.

For the \langg{} graph IR, we perform a simple but highly useful alternative called \emph{compact
traversal}: we first turn the graph nodes into inlined trees whenever possible,
and then use tree matching algorithms (\eg, maximal munch) to select
the best primitive.
Compact traversal must respect the dependencies to preserve the semantics. Consider the
following code that reads the value from cell @x@ and then increments the value of @x@:
\begin{lstlisting}
  if (cond) { val y = !x; inc(x); println(y) }
\end{lstlisting}
Since @inc(x)@ depends on node @y@ in an effectful way, inlining @y@ to @println@ breaks the
semantics.

Compact traversal works on each @current@ scope determined by the basic scheduling
algorithm (cf. \Cref{fig:basic-codemotion}). Initially, all nodes in the scope are
viewed as individual trees.
\Cref{fig:compact-codemotion} shows the updated main function @scheduleBlock@
with compact traversal atop other extensions, and
\Cref{fig:aux-compact-codemotion} shows accompanying auxiliary definitions.
In the following, we summarize the compact traversal algorithm in
three steps.

\begin{enumerate}[wide=\parindent,itemsep=.5ex]
\item \textbf{Track Node Usage.}
The key idea of compact traversal is that we only consider inlining for nodes
that are locally defined and locally used exactly once, and are not used in
nested scopes.
To this end, we first identify all candidate nodes for inlining by tracking
node usage.
When scheduling code, we keep track of local definitions and their symbolic
names (Line 12).
We also track how many times a locally defined node is used in the current
scope and inner scopes as a proper value (Line 13-15, 20-21), respectively.

\item \textbf{Compute Local Successors.}
After recording the node usage information in the first step, we need to
calculate local successors of a node in the current block (Line 28-31).
A node @x@ is a local successor of node @y@ if @x@ and @y@ are scheduled in the
same block and @x@ depends on @y@.
The algorithm uses a map to store the local successors of a locally defined node.

\item \textbf{Check Inlining.}
Lastly, the algorithm runs a backward pass for all nodes that can be inlined.
This pass checks if all successors of a node are emitted after the point
considered for inlining (\Cref{fig:aux-compact-codemotion}, Line 9-12).
If not, we disable its inlining. Otherwise, we inline
this node and check if any other nodes used by this node can be further inlined
(\Cref{fig:aux-compact-codemotion}, Line 15-18).
\end{enumerate}

\bfparagraph{A Flexible Framework for Optimization Opportunities} In \Cref{sec:codemotion}
we have illustrated a series of optimizations possible by simple and composable
algorithms through scheduling graph IR back to nested tree representations.
Opportunities are not limited to the ones aforementioned. For instance, the basic
scheduling algorithm can introduce \emph{instruction scheduling} by assigning a proper
priority value to each node reflecting not only dependency but also timing, thus
tweaking the behavior of the traversal. To conclude, graph IR can be an efficient
and flexible system to generate performant code.
 
\begin{acks}
  A large number of people have contributed to the design
  of the LMS graph IR over the years
  \cite{DBLP:journals/corr/abs-1109-0778,
  DBLP:phd/ch/Rompf12,
  DBLP:journals/cacm/RompfO12,
  DBLP:conf/IEEEpact/BrownSLRCOO11,
  DBLP:conf/popl/RompfSABJLJOO13,
  DBLP:conf/ecoop/SujeethRBLCPWPJOO13,
  DBLP:conf/fpl/GeorgeLNRBSOOI14,
  DBLP:journals/micro/LeeBSCROO11,
  DBLP:journals/tecs/SujeethBLRCOO14,
  DBLP:conf/snapl/RompfBLSJAOSKDK15,
  DBLP:conf/scala/Rompf16,
  DBLP:conf/cgo/BrownLRSSAO16,
  DBLP:conf/popl/AminR17a,
  DBLP:conf/gpce/OfenbeckRP17,
  DBLP:conf/sigmod/TahboubER18,
  DBLP:conf/osdi/EssertelTDBOR18,
  DBLP:conf/gpce/StojanovRP19,
  DBLP:conf/mlsys/MoldovanDWJLNSR19,
  DBLP:conf/bigdataconf/WangCZR19,
  DBLP:journals/pacmpl/WangZDWER19,
  DBLP:conf/gpce/EssertelTR21,
  DBLP:conf/icse/WeiJGDTBR23}, 
  and in particular
  to the system of effect dependencies:
  Nada Amin, Thaïs Baudon,
  Kevin J. Brown, James M. Decker, Grégory Essertel,
  Georg Ofenbeck, Alen Stojanov,
  Arvind K. Sujeeth,
  Fei Wang, and Yushuo Xiao.
  This work was supported in part by NSF awards 1553471, 1564207, 1918483, 1910216, DOE award DE-SC0018050, as well as gifts from Meta, Google, Microsoft, and VMware.
\end{acks}

\bibliography{references}

\end{document}